\newcommand{\blue}[1]{{#1}}%{\color{blue} {#1}}}
\newcommand{\red}[1]{{#1}}%{\color{blue} {#1}}}
 \numberwithin{equation}{section}
\newtheorem{theorem}{Theorem}[section]
\newtheorem{hypothesis}{Hypothesis}
\newtheorem{lemma}[theorem]{Lemma}
\newtheorem{proposition}[theorem]{Proposition}
\newtheorem{definition}[theorem]{Definition}
\newtheorem{remark}[theorem]{Remark}
\newcommand{\R}{\mathbb{R}}
\newcommand{\C}{\mathbb{C}}
\newcommand{\N}{\mathbb{N}}
\newcommand{\Z}{\mathbb{Z}}
\def\Re{\mathop{\mathrm{Re}}}
\def\Im{\mathop{\mathrm{Im}}}
\def\tlambda{\tilde\lambda}
\def\tnu{\tilde\nu}
\def\abs{\mathrm{abs}}
\def\Airy{\mathrm{Airy}}
\def\osc{\mathrm{osc}}
\def\tX{\tilde{X}}
\def\tzeta{{\tilde{\zeta}}}
\newcommand{\rmd}{\mathrm{d}}
\newcommand{\rmi}{\mathrm{i}}
\newcommand{\eps}{\varepsilon}
\def\calO{\mathcal{O}}
\def\b{\underline{b}}
\def\tu{{\widetilde u}}
\def\slow{\mathrm{slow}}
\def\fast{\mathrm{fast}}
\def\ss{\mathrm{ss}}
\def\uu{\mathrm{uu}}
\def\cc{\mathrm{c}}
\def\Re{\mathrm{Re}}
\def\Im{\mathrm{Im}}
\def\tDelta{\tilde\Delta}
\def\M{\mathcal{M}}
\def\Iabs{I_\abs}
\def\slowabs{\Sigma_\abs^{\rm slow}}
\def\specpt{\Sigma_{\rm pt}}
\title{Pulse replication and accumulation of eigenvalues}
\author{Paul Carter\footnote{School of Mathematics, University of Minnesota} \and  Jens D.M.\ Rademacher\footnote{ Fachbereich 3 -- Mathematik, Universit\"at Bremen} \and Bj\"orn Sandstede \footnote{Division of Applied Mathematics, Brown University}}
\begin{document}

\maketitle
\begin{abstract}
Motivated by pulse-replication phenomena observed in the FitzHugh--Nagumo equation, we investigate traveling pulses whose slow-fast profiles exhibit canard-like transitions. We show that the spectra of the PDE linearization about such pulses may contain many point eigenvalues that accumulate onto a union of curves as the slow scale parameter approaches zero. The limit sets are related to the absolute spectrum of the homogeneous rest states involved in the canard-like transitions. Our results are formulated for general systems that admit an appropriate slow-fast structure.
\end{abstract}

%\tableofcontents

%\newpage

%\note{Memo from Snowbird 2019:
%\begin{itemize}
%\item Tell story from pulse replication in CU, etc with FHN as main example...
%\item remove convective stuff, complex abs spec case (sections removed)
%\item Paul will add section on temporal replication vs. parametric replication (see contents below)
%\item Extend discussion to broader theme of "slow absolute spectrum"
%\end{itemize}
%
%Aside from this manuscript, towards temporal splitting
%\begin{itemize}
%\item Show numerics of eigenfunction for a few unstable eigenvalues inbetween neighboring folds
%\item consider linear model for increasing and then decreasing Morse index, 
%\item assume controlled eigenspaces, and should give Lyapunov stability
%\item generalize to nonlinear
%\item ... a miracle occurs ...
%\item prove splitting
%\end{itemize}
%}
%
%\note{Old Email of Bjorn (remaining comment of this):  cite Nii}

\section{Introduction}

\begin{figure}
\hspace{.025 \textwidth}
\begin{subfigure}{.3 \textwidth}
\centering
\includegraphics[width=1\linewidth]{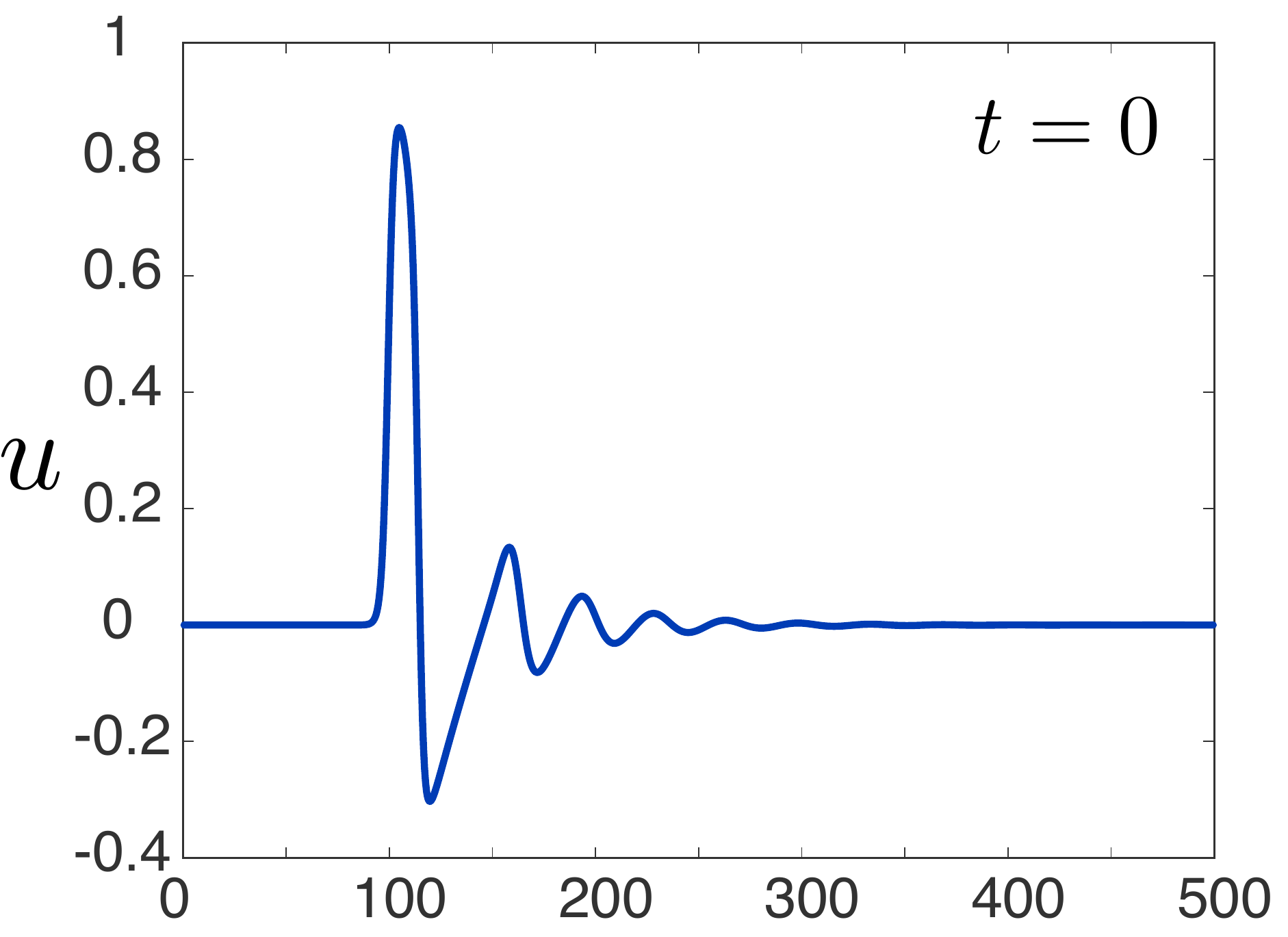}
\end{subfigure}
\hspace{.025 \textwidth}
\begin{subfigure}{.3\textwidth}
\centering
\includegraphics[width=1\linewidth]{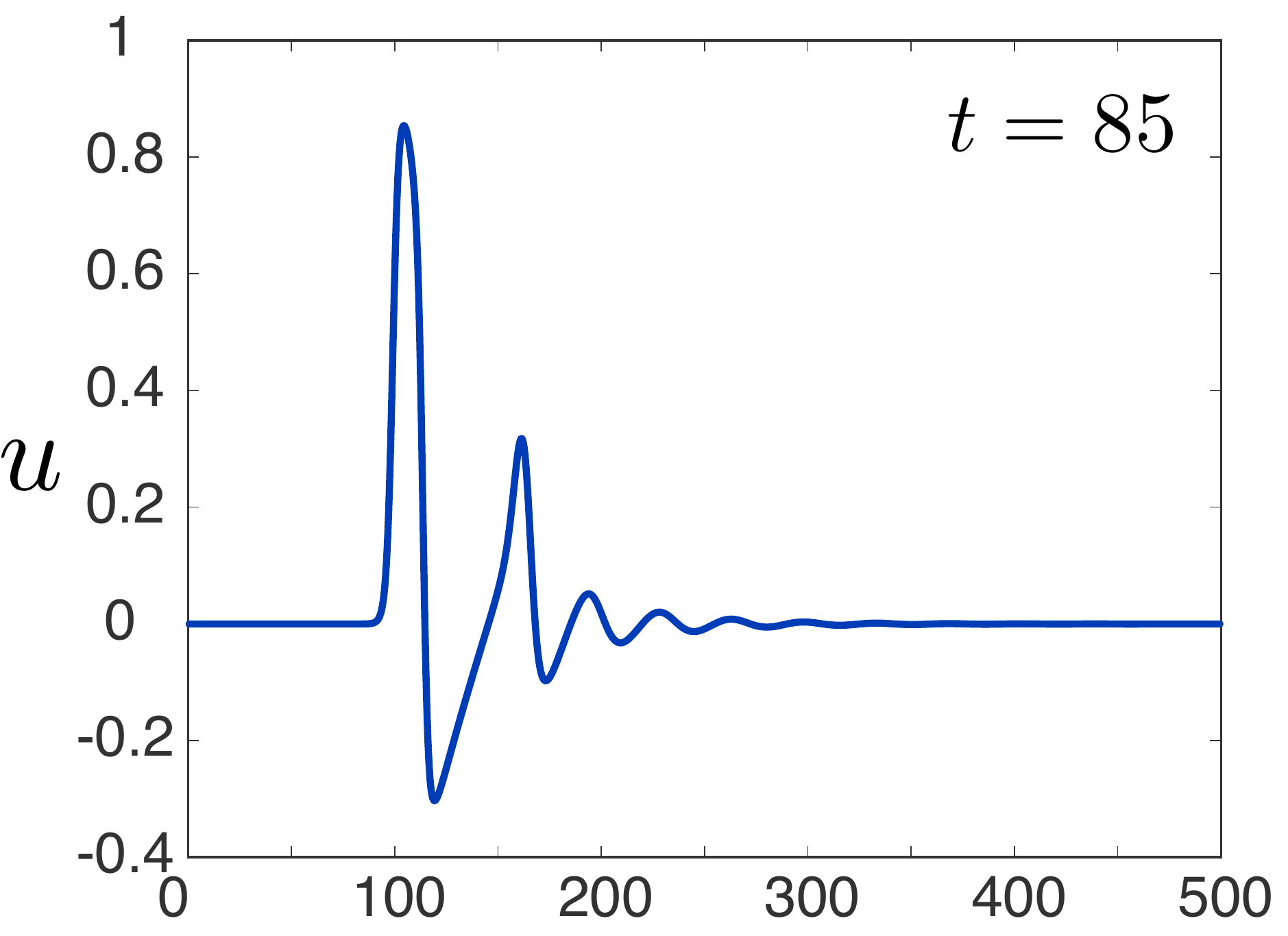}
\end{subfigure}
\hspace{.025 \textwidth}
\begin{subfigure}{.3 \textwidth}
\centering
\includegraphics[width=1\linewidth]{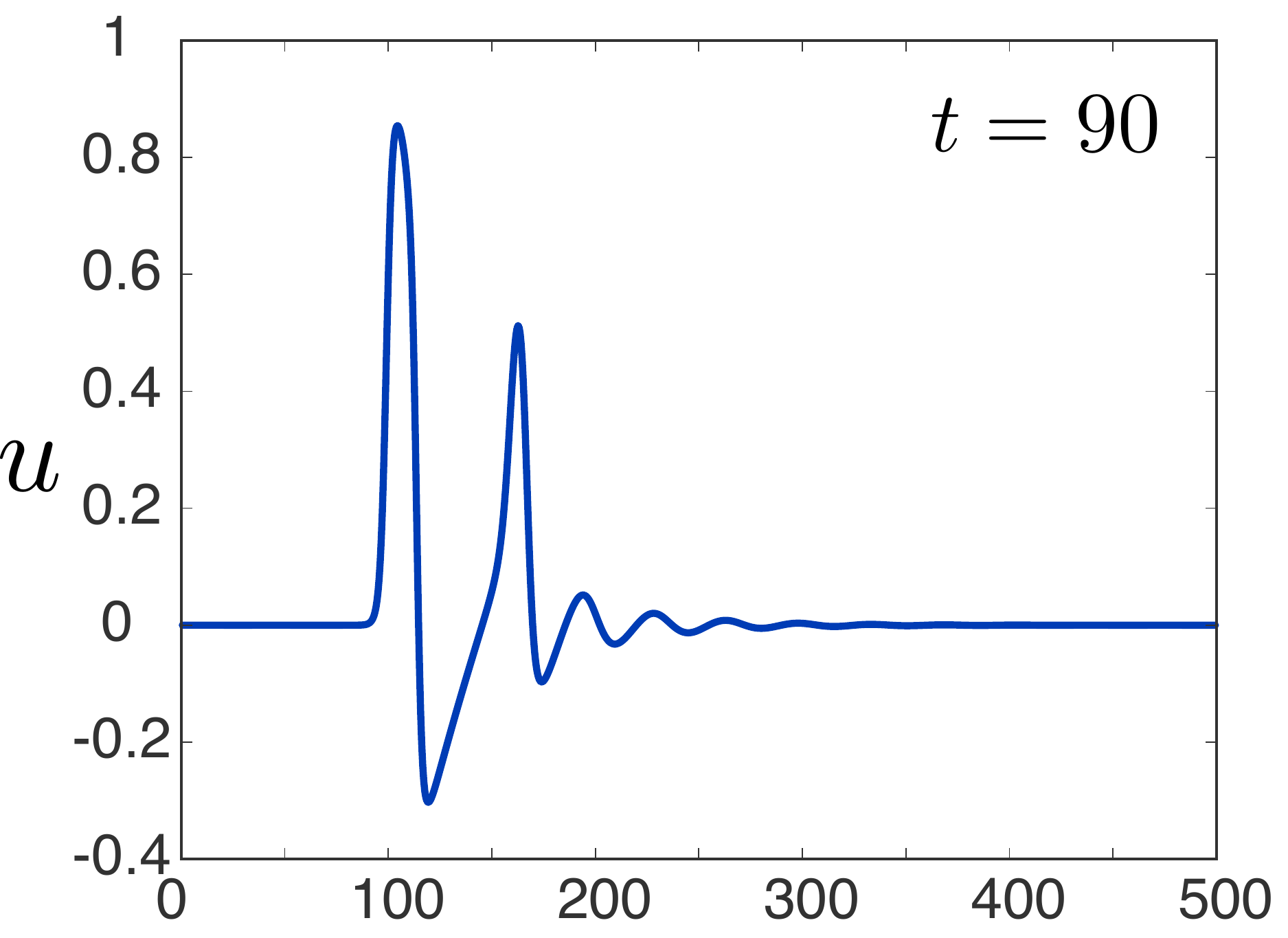}
\end{subfigure}\\ \vspace{10pt}

\hspace{.025 \textwidth}
\begin{subfigure}{.3 \textwidth}
\centering
\includegraphics[width=1\linewidth]{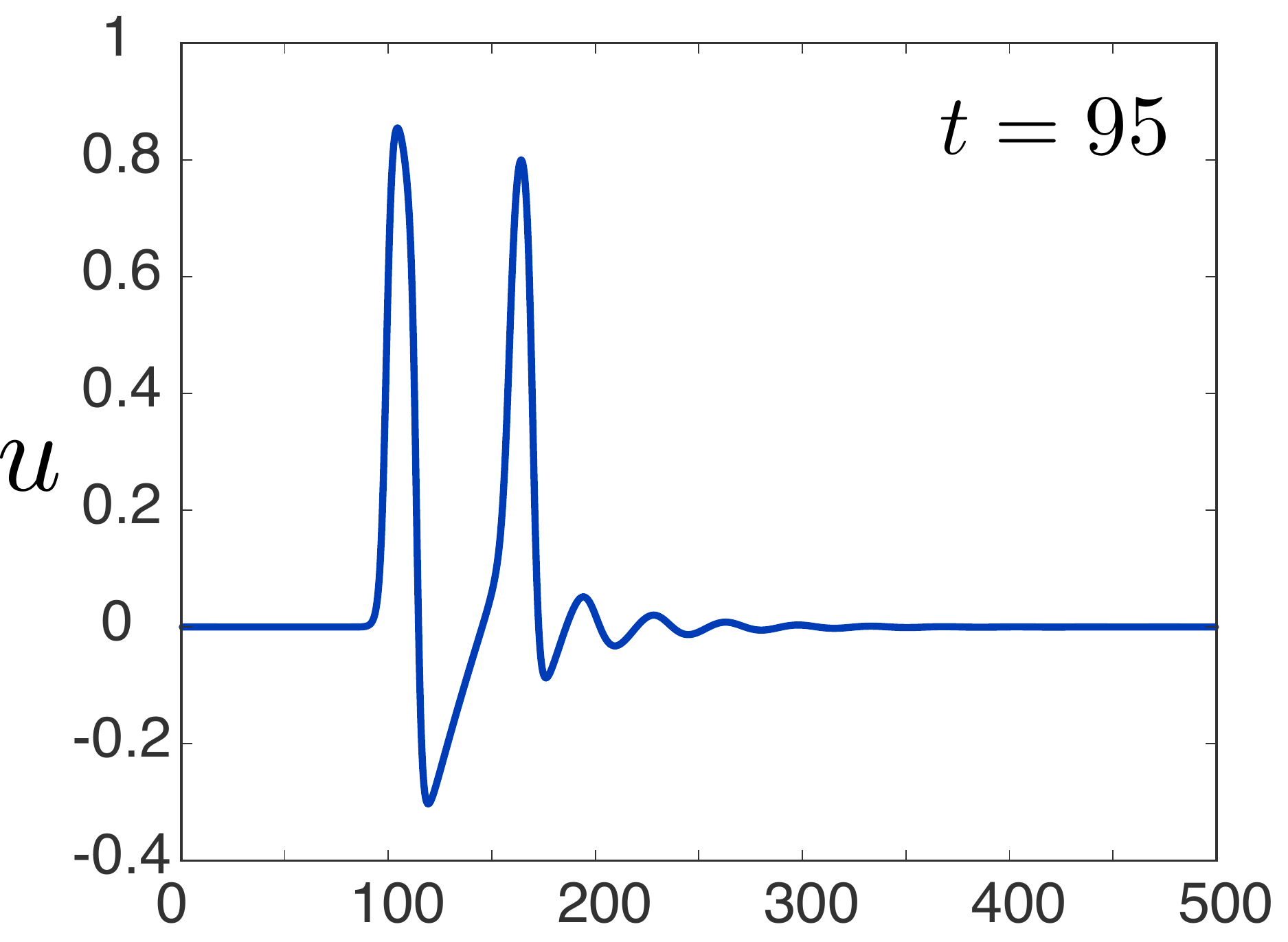}
\end{subfigure}
\hspace{.025 \textwidth}
\begin{subfigure}{.3 \textwidth}
\centering
\includegraphics[width=1\linewidth]{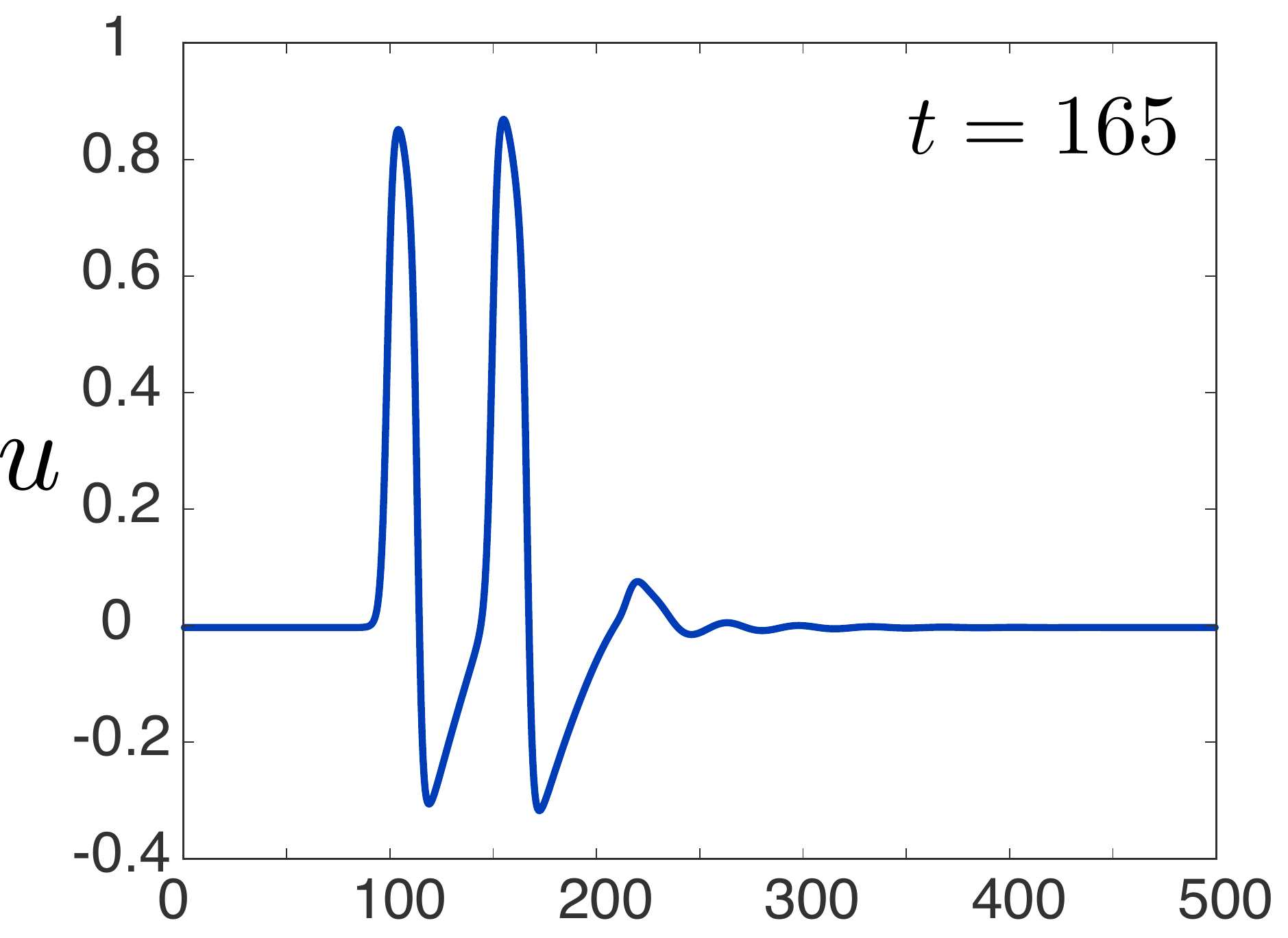}
\end{subfigure}
\hspace{.025 \textwidth}
\begin{subfigure}{.3 \textwidth}
\centering
\includegraphics[width=1\linewidth]{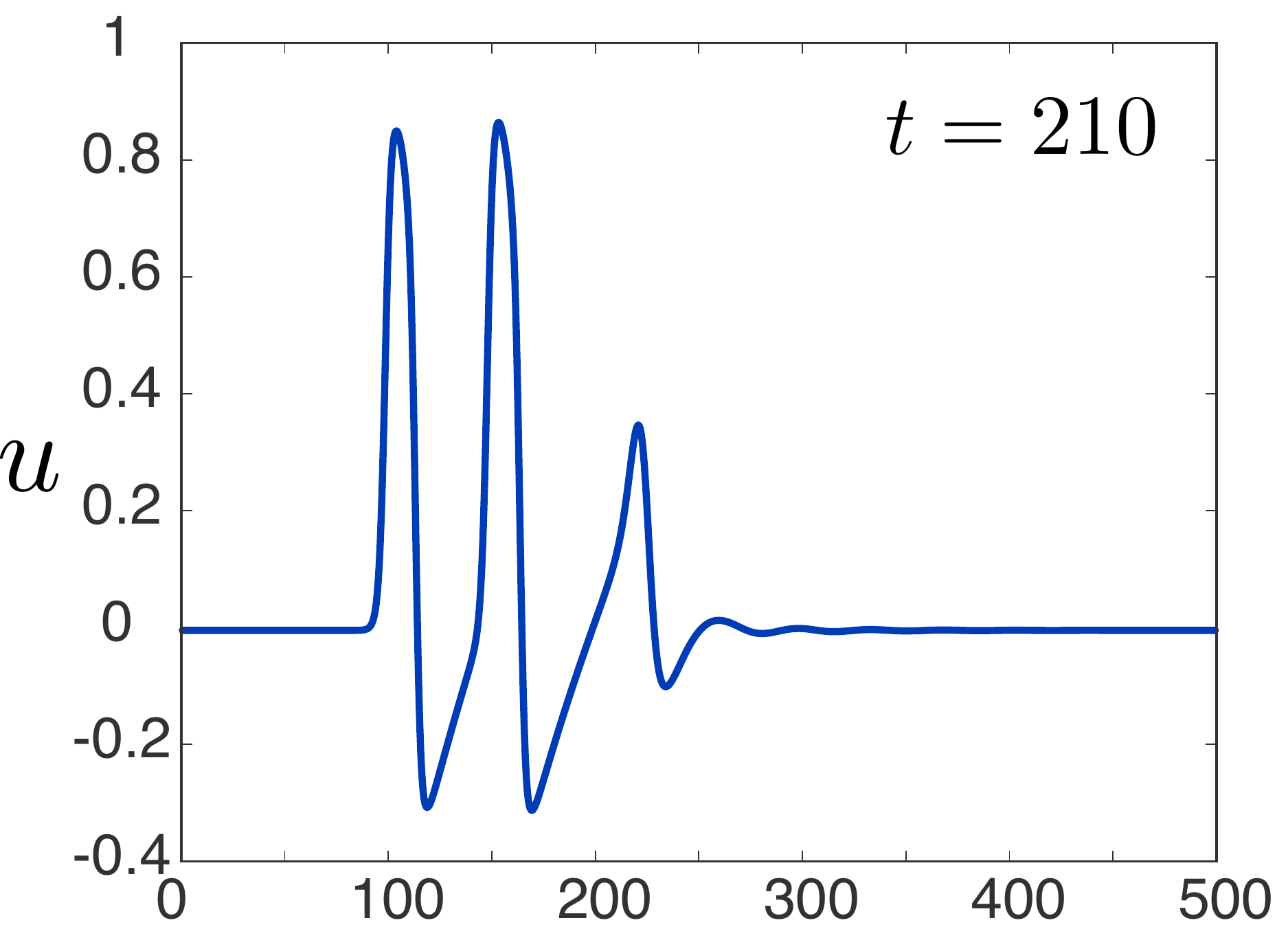}
\end{subfigure}
\\ \vspace{10pt}

\hspace{.2 \textwidth}
\begin{subfigure}{.3 \textwidth}
\centering
\includegraphics[width=1\linewidth]{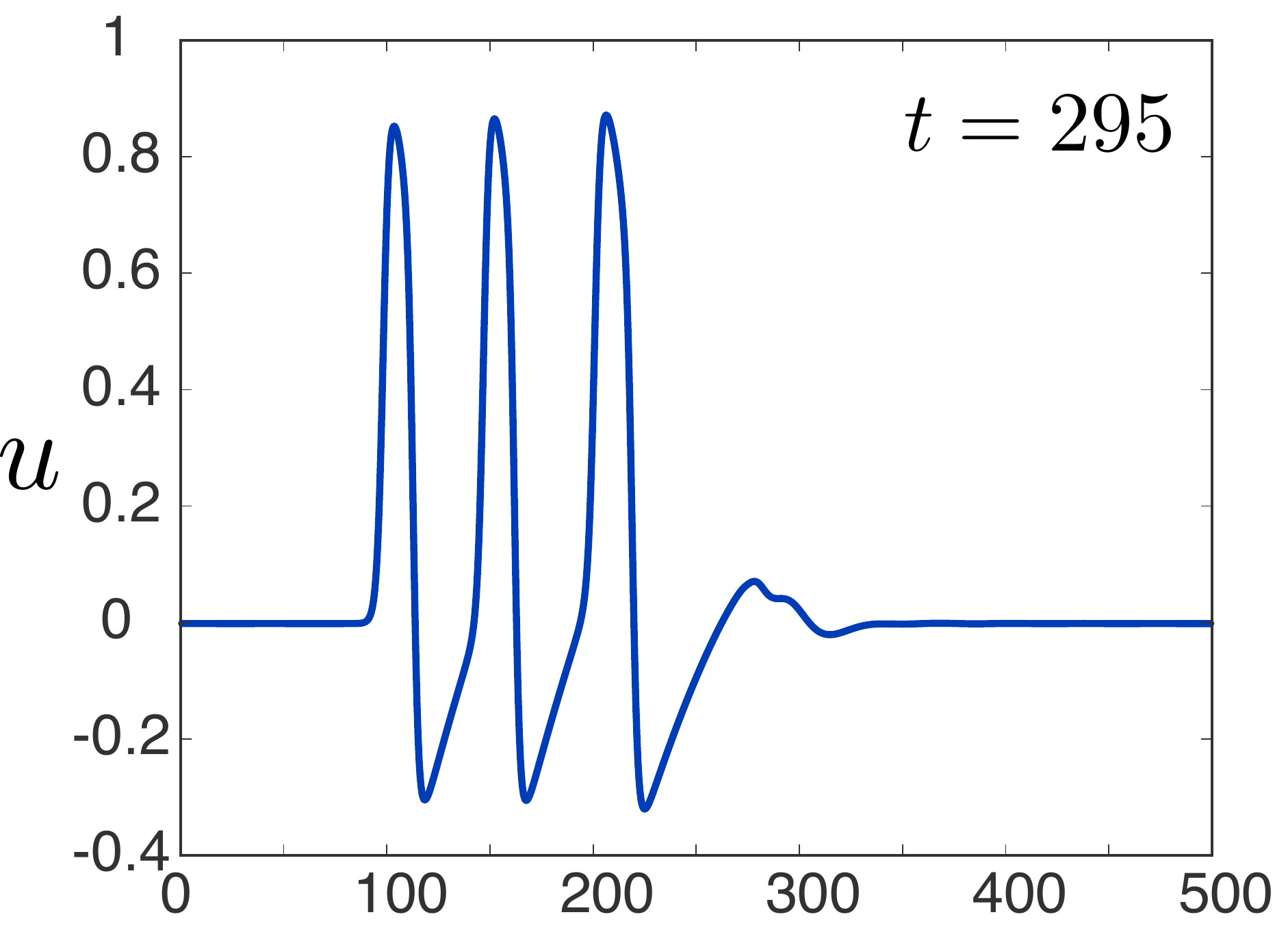}
\end{subfigure}
\hspace{.025 \textwidth}
\begin{subfigure}{.3 \textwidth}
\centering
\includegraphics[width=1\linewidth]{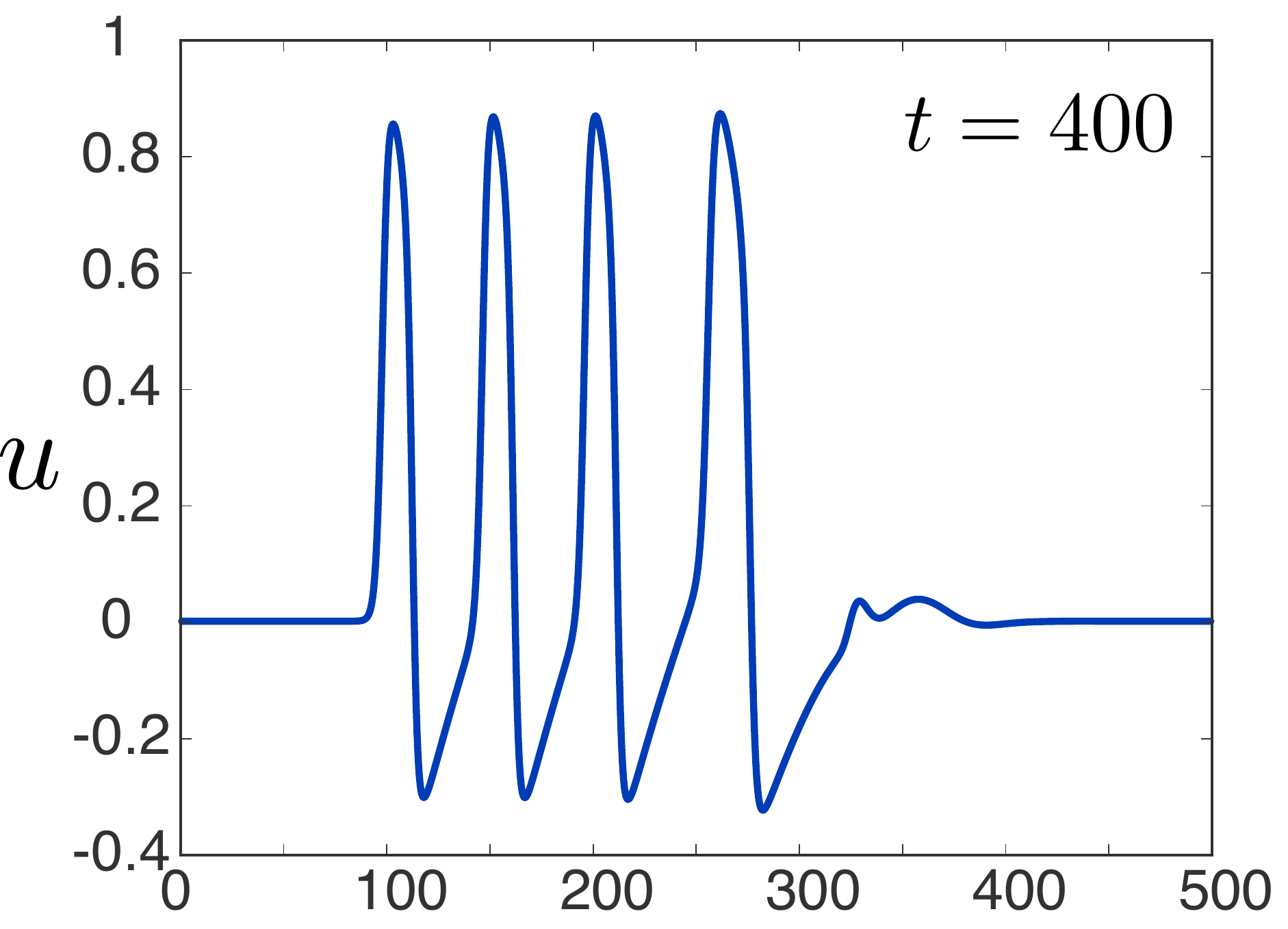}
\end{subfigure}
\caption{Shown is the time evolution of a solution to \eqref{e:1} for $a=-0.0295$ \blue{(adjacent to the single-to-double pulse transition along the homoclinic banana, cf.\ Fig.~\ref{f:ccurvebanana})}, $\gamma=2$, and $\eps=0.015$: starting with a single pulse, additional pulses emerge sequentially in the tail of the pulse.}
\label{f:temporal_replication}
\end{figure}

Spatially localized traveling waves, which we refer to as pulses, arise in many spatially extended systems. Much is known about the theoretical existence and stability properties of pulses, and these results have also been applied successfully to many different models and applications. Our work here is motivated by a phenomenon that is commonly referred to as pulse replication, or self-replication, which has been observed in at least two distinct forms in various models and experiments \cite{SMM1992,pearson1993,LeeEtAl1994,BarEtAl1994,OtterEtAl1996, DGK98,NU1999,KrishnanEtAl1999,SHM2000,SLY2000,hayase2000self,Rademacher_thesis,KWW2005,ManzSteinbock2006,chen2012simple,bauer2015,bordeu2016self}. We refer to \emph{pulse adding} as the process of sequentially adding pulses in the immediate wake of a traveling pulse as time increases: this process is illustrated in Figure~\ref{f:temporal_replication}. Alternatively, \emph{pulse splitting} may occur where traveling pulses splits into two identical but spatially separated traveling pulses as time increases. Both scenarios occur in the singularly perturbed FitzHugh--Nagumo system
\begin{eqnarray}\label{e:1}
u_t & = & u_{xx} + u (u-a)(1-u) - w \\ \nonumber
w_t & = & \varepsilon (u - \gamma w)
\end{eqnarray}
for $a$ close to zero and $0<\varepsilon\ll1$, and we refer to Figure~\ref{f:temporal_replication} for simulations of pulse-adding in this system.

\begin{figure}
\centering
\includegraphics{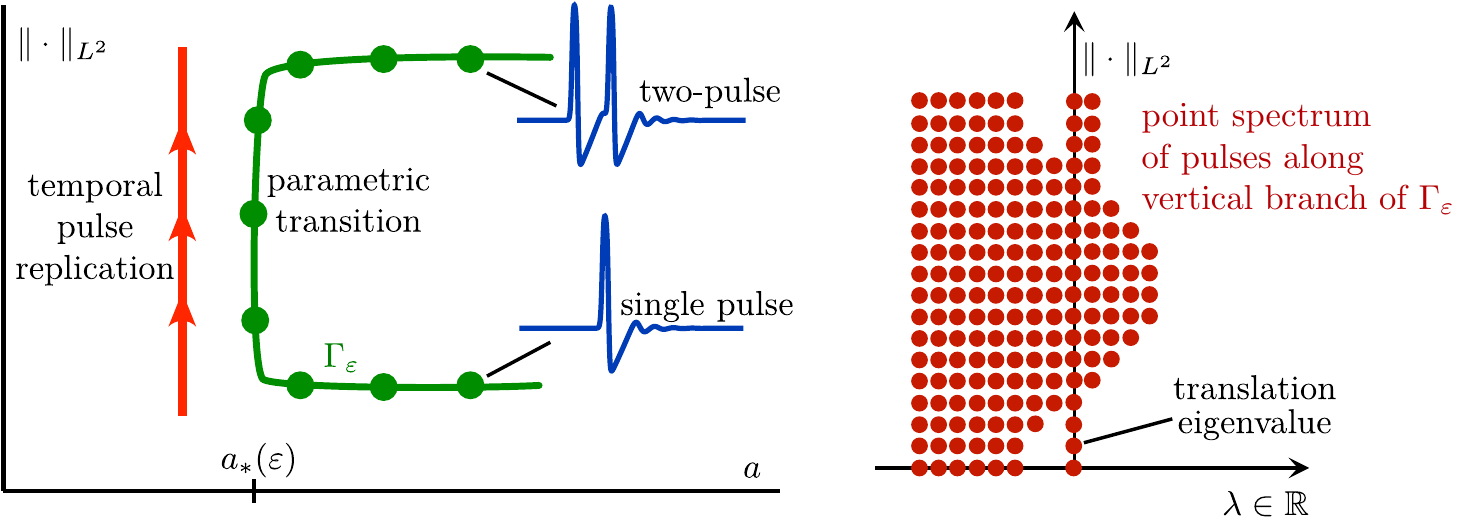}
\caption{Left: Shown is the curve $\Gamma_\varepsilon$ of traveling pulses together with a time-dependent trajectory for $a$ below $a_*(\varepsilon)$. Right: We illustrate how the rightmost real point spectra $\lambda \in \mathbb{R}$ of the traveling pulses vary along the vertical branch of $\Gamma_\varepsilon$ as the single pulse transitions to the double pulse for fixed $0<\varepsilon\ll1$. }
\label{f:2}
\end{figure}

We now outline briefly the mechanism that, we believe, is responsible for the emergence of pulse adding in the FitzHugh--Nagumo model (more details will be given in \S\ref{s:2}). As illustrated in Figure~\ref{f:2} and proved in \cite{CSbanana}, for each fixed $0<\varepsilon\ll1$, the FitzHugh--Nagumo system exhibits a one-parameter family $\Gamma_\varepsilon$ of pulses near $a=0$ that undergo a parametric transition from a single pulse to a two-pulse as the one-parameter family $\Gamma_\varepsilon$ is traversed. We emphasize that each pulse along the curve $\Gamma_\varepsilon$ is an ordinary traveling wave that does not change its profile as it propagates. The transition occurs due to canard dynamics near the asymptotic rest state and occurs in an exponentially thin region in parameter space. Thus, as $\varepsilon\to0$, the leftmost part of the curve $\Gamma_\varepsilon$ becomes more and more vertical in $(a,\|\cdot\|_{L^2})$ space, and the parametric transition therefore occurs essentially at almost a constant value $a_*(\varepsilon)$ of the parameter $a$. Numerically, we observe pulse replication for all parameters $a$ that are close to but slightly smaller than $a_*(\varepsilon)$. Furthermore, our simulations reveal that the profiles of solutions that undergo temporal pulse-adding for $a$ just below $a_*(\varepsilon)$ are similar to the profiles of the traveling pulses in the parametric-transition family $\Gamma_\varepsilon$. These observations suggest that pulse-adding solutions track the family $\Gamma_\varepsilon$ of traveling pulses, which therefore provides the backbone of the \blue{initial} pulse-adding process; see Figure~\ref{f:2}. \blue{Consistent with the ongoing adding process is our numerical finding of analogous branches for $n$- to $(n+1)$-pulse transitions as presented in \S\ref{s:disc}.}

In this paper, as a first step towards understanding pulse-adding in the FitzHugh--Nagumo system, we will determine the stability properties of the pulses that lie on $\Gamma_\varepsilon$. It is known that the single pulse along the lower horizontal branch of $\Gamma_\varepsilon$ is spectrally stable \cite{CdRS}, and theoretical results suggest that the two-pulse along the upper horizontal branch should have at most one unstable eigenvalue \cite{Sstab}. Our analysis will therefore focus on examining the stability properties of the intermediate traveling pulses that facilitate the transition from a single pulse to a two-pulse along $\Gamma_\varepsilon$.

Our main finding is that these intermediate pulses are spectrally unstable due to a specific mechanism for generating eigenvalues in systems with an  appropriate slow-fast structure. In fact, as illustrated in Figure~\ref{f:2}, we will show that the pulse's spectrum reaches a finite distance into the right-half plane regardless of the value of $\varepsilon$. The cause of this severe instability is the so-called absolute spectrum \cite{SSabs, SSgluing} of the spatially homogeneous rest states that are unstable along the middle branch of the cubic nonlinearity $w=u(u-a)(1-u)$ for $0<\varepsilon\ll1$. During the parametric transition from single to double pulses, the profile of each pulse in $\Gamma_\varepsilon$ will, as a function of the spatial variable $x$, slowly traverse the middle branch, and its spectrum will collect eigenvalues from the union of the unstable absolute spectra along the middle branch: in particular, eigenvalues accumulate along the resulting `slow' absolute spectrum. \blue{In later sections}, we will provide a precise definition of the absolute spectrum and also discuss its impact on spectral stability during the parametric transition along $\Gamma_\varepsilon$. The implications of our analysis for pulse replication \blue{in the FitzHugh--Nagumo system} will be discussed in \S\ref{s:disc}. 

\blue{To prove our results, we use spatial dynamics to frame the eigenvalue problems associated with the traveling pulses as linear slow-fast ordinary differential equations. These equations are then analysed using geometric singular perturbation theory, exponential dichotomies, and trichotomies for linear systems with slowly varying coefficients, which will allow us to track and match prospective eigenfunctions as the pulse profile traverses a slow manifold that corresponds to unstable homogeneous rest states.}

\blue{We emphasize that our results show that accumulation of unstable eigenvalues for traveling pulses occurs generically in slow-fast systems when the pulse profile exhibits long plateaus close to \red{absolutely} unstable homogeneous rest states. We also stress that we do not claim that pulse-replicating pulses necessarily possess unstable absolute spectra or that pulses with unstable absolute spectrum necessarily self-replicate; however, for the FitzHugh-Nagumo system, these two phenomena appear to be linked, which motivated our analysis of absolute spectra here.}

We briefly comment on prior mathematical work on self-replicating pulses. Early work \cite{NU1999} identified the role of (1) saddle-node bifurcations of standing single and multi-pulses and (2) solutions that connect single to double pulses at these saddle-node bifurcation points as the building blocks that enable pulse replication in, for instance, the Gray--Scott model. While, to our knowledge, the existence of the connecting solution has been established only numerically, several papers \cite{DGK98,NU1999,KWW2005,doelman2006homoclinic,chen2012simple,QiZhu2017} focused on proving the existence of saddle-node bifurcations and instability mechanisms of standing or slowly moving pulses in a range of models. Complementary to these results is work that focused on absolute instability mechanisms of traveling pulses that emerge near T-points, and we refer to \cite{Rademacher_thesis} and the references therein for details in relation to pulse replication. Our contribution is the direct link to a canard transition, where qualitatively the transition manifold $\Gamma_\eps$ discussed here plays the role of the connecting solution for the Gray--Scott model.

The remainder of the paper is organized as follows. In \S\ref{s:2}, we give a more comprehensive discussion of pulses in the FitzHugh--Nagumo system to motivate the analysis carried out in this paper. In \S\ref{s:proto}, we discuss two prototypical examples that illustrate our results. The general theory is developed in \S\ref{s:slowabsnew} and \S\ref{s:gentheory}, and our main results are then applied to the FitzHugh--Nagumo system in \S\ref{sec:applytofhn}. We end with a discussion of pulse replication in \S\ref{s:disc}.

\section{Pulses in the FitzHugh--Nagumo system}\label{s:2}

Our primary motivating example is the FitzHugh--Nagumo (FHN) equation
\begin{align}
\begin{split}
u_t &= u_{xx} + f(u) - w\\
w_t &= \eps(u-\gamma w),
\end{split}
\label{e:fhn}
\end{align}
with $f(u)=u(u-a)(1-u)$, $0<\eps\ll 1$, $\gamma>0$. Viewed in the spatial comoving frame $\zeta=x+ct$, we obtain the ODE for the spatial profiles of travelling waves
\begin{equation}
\label{e:twode}
\begin{aligned}
u_\zeta &= v\\
v_\zeta &= cv - f(u) + w\\
w_\zeta &= \frac{\eps}{c}(u-\gamma w).
\end{aligned}
\end{equation}
This system is known to admit homoclinic orbits, corresponding to traveling pulse solutions in~\eqref{e:fhn}, asymptotic to the homogeneous rest state $(u,w)=(0,0)$. For each $0<a<1/2$, and each sufficiently small $\eps>0$, it has been shown that there are two traveling $1$-pulses with distinct speeds $c_\mathrm{fast}(a,\eps)=\frac{1}{\sqrt{2}}(1-2a)+\mathcal{O}(\eps)$, and $c_\mathrm{slow}(a,\eps)=\mathcal{O}(\eps^{1/2})$, referred to as the ``fast" pulse and the ``slow" pulse, respectively. It is also known that these two distinct branches of pulses are connected near the critical value $a=1/2$, where the slow and fast pulse are no longer distinguishable~\cite{krupa1997fast}. When undergoing parameter continuation in $(a,c)$-space for fixed $\eps$, these two families of pulses form one continuous branch, which manifests in a ``backwards C-shaped" curve, when the speed $c$ of the pulses is plotted against the bifurcation parameter $a$~\cite{CSosc, champneys2007shil, guckenheimer2009homoclinic, guckenheimer2010homoclinic}; see the upper left panel of Figure~\ref{f:ccurvebanana}. Further, in the PDE~\eqref{e:fhn}, it is known that the fast pulses are spectrally and nonlinearly stable~\cite{jones1984stability,yanagida1985stability}, while the slow pulses are unstable~\cite{flores1991stability}, with an exchange of stability occurring as one transitions from the fast branch to the slow branch near $a=1/2$.

\begin{figure}
\hspace{.05 \textwidth}
\begin{subfigure}{.35 \textwidth}
\centering
\includegraphics[width=1\linewidth]{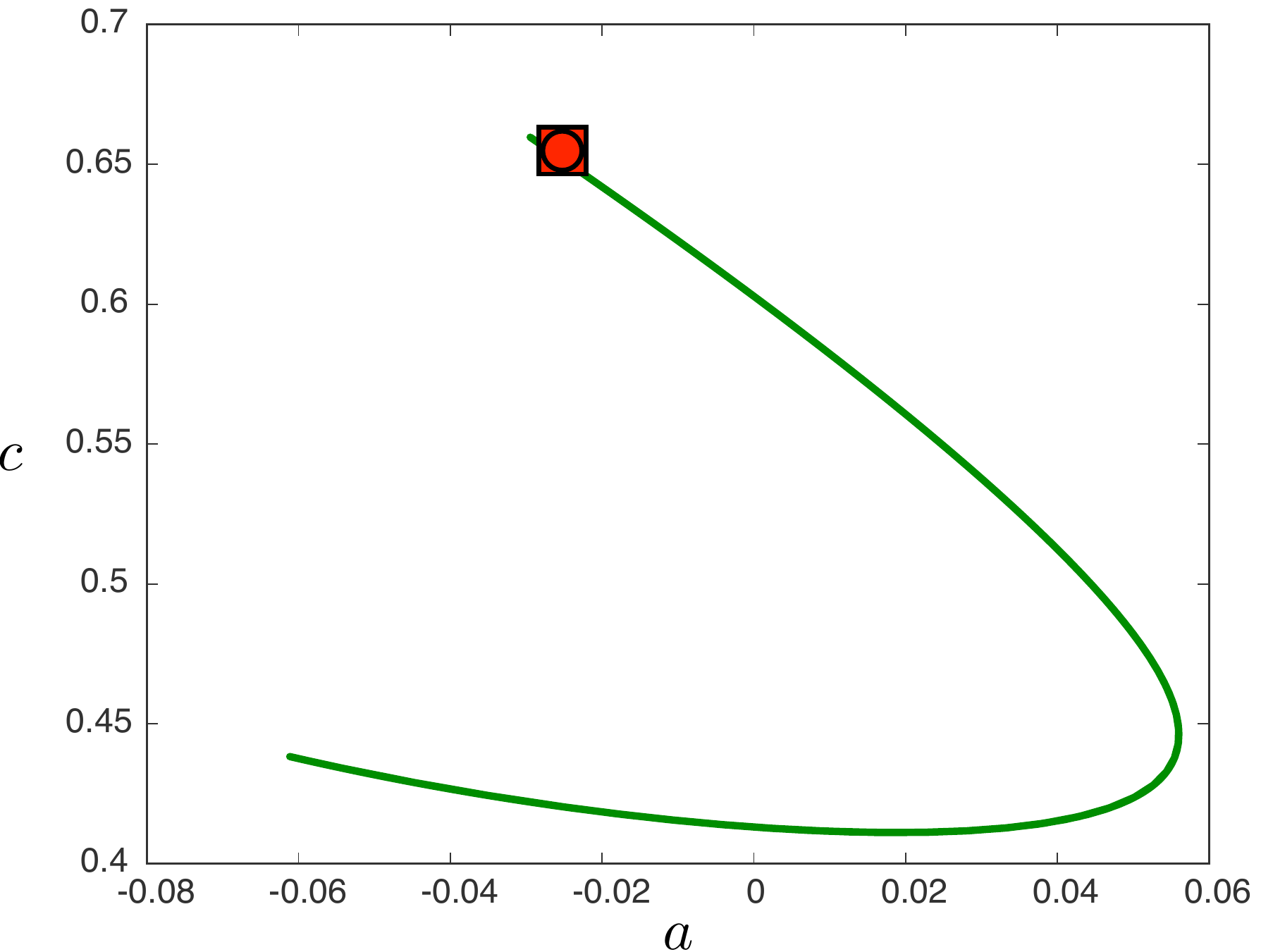}
\end{subfigure}
\hspace{.1 \textwidth}
\begin{subfigure}{.35 \textwidth}
\centering
\includegraphics[width=1\linewidth]{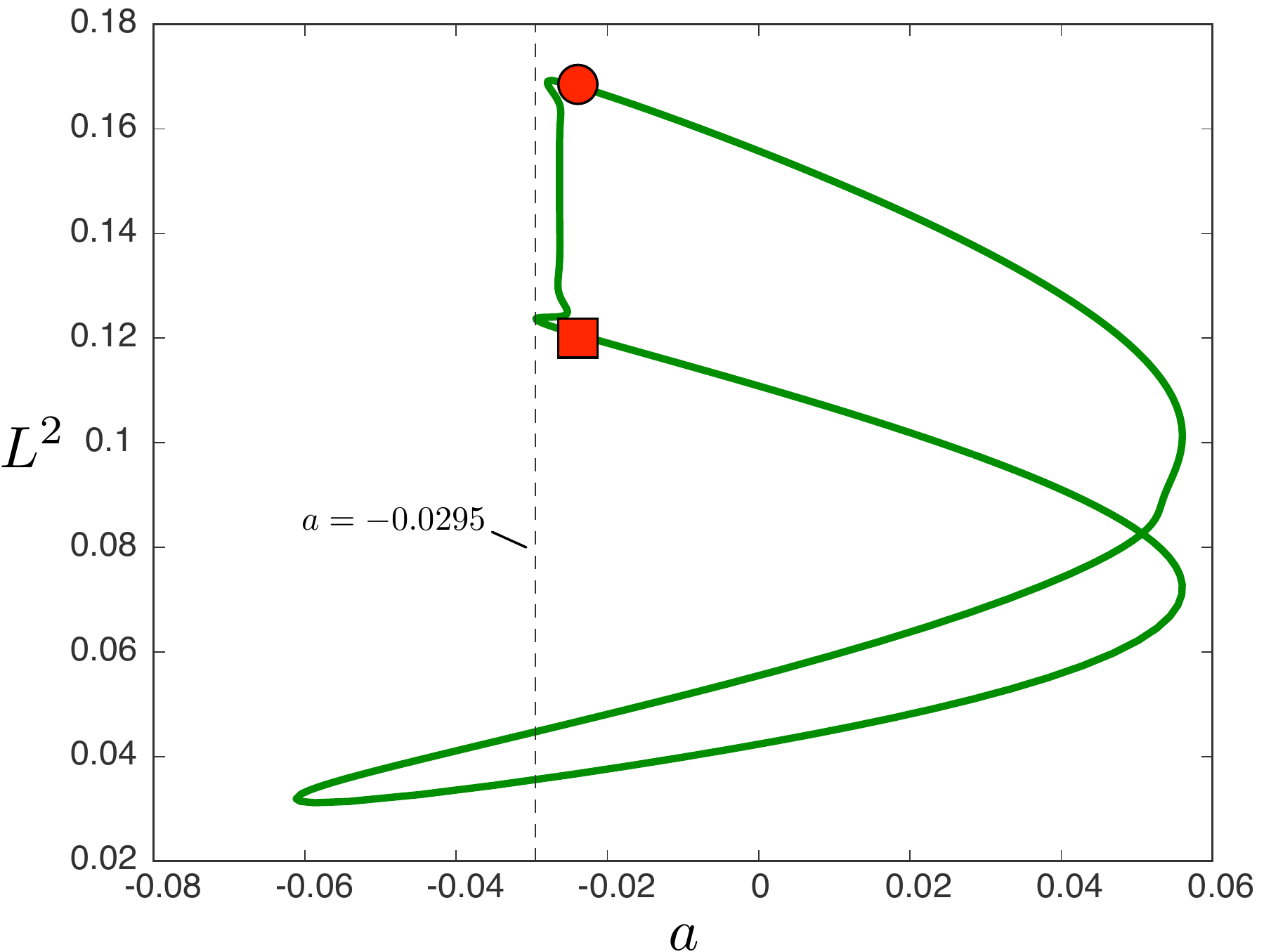}%figs/homoclinic_banana/banana}
\end{subfigure}\\

\hspace{.05 \textwidth}
\begin{subfigure}{.35 \textwidth}
\centering
\includegraphics[width=1\linewidth]{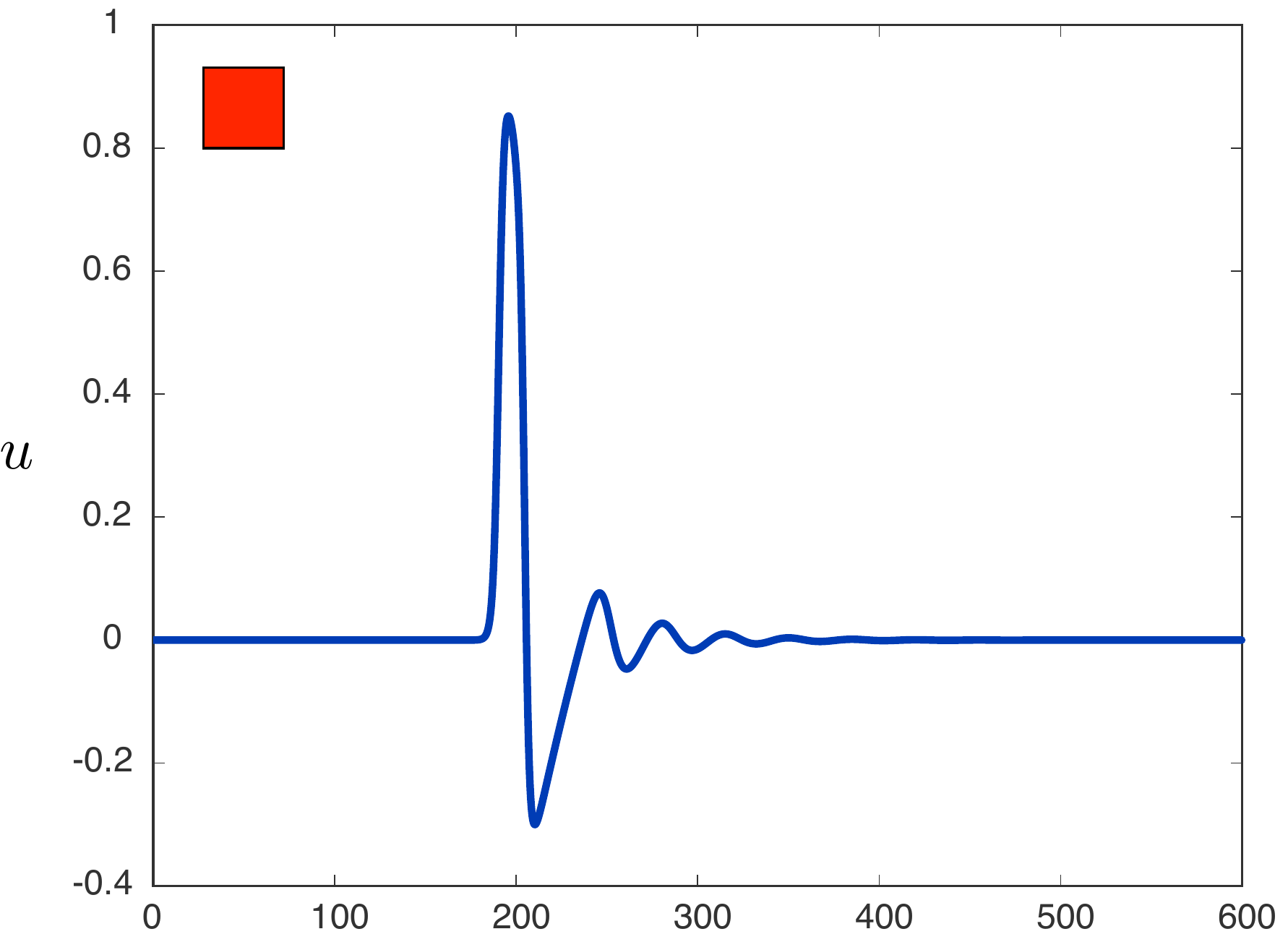}
\end{subfigure}
\hspace{.1 \textwidth}
\begin{subfigure}{.35 \textwidth}
\centering
\includegraphics[width=1\linewidth]{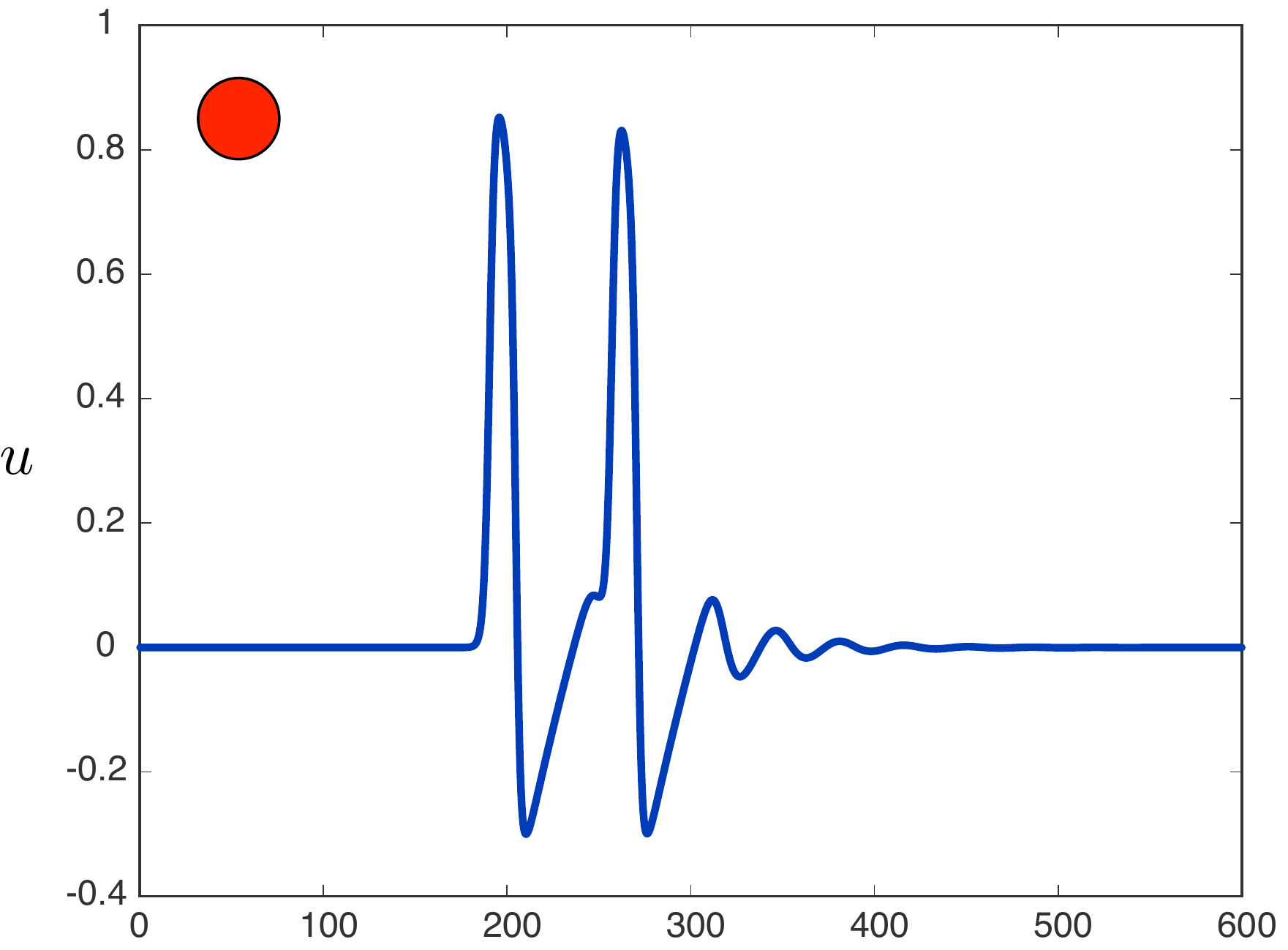}
\end{subfigure}
\caption{Plotted are the backwards C-curve (upper left panel) and homoclinic banana (upper right panel) obtained by parameter continuation in the traveling wave equation~\eqref{e:twode} in the parameters $(a,c)$ for $\eps = 0.015, \gamma=2$. The red square and circle refer to the locations of an $1$-pulse with oscillatory tail~(lower left panel) and $2$-pulse~(lower right panel), respectively, along the parametric pulse replication transition. \blue{We note that our choice of $\eps=0.015$ is not small enough to see that the branch passes close to $a=0.5$ and also not small enough for the wave speed of the slow pulse, predicted to be $\mathcal{O}(\eps^{1/2})$, to be small. We also remark that there is a second transition from slow 1- to slow 2-pulses with small amplitudes near $(c,a)=0$ for which no rigorous results are known. }}
\label{f:ccurvebanana}
\end{figure}

\subsection{Temporal and parametric pulse adding and eigenvalue accumulation}
When continuing the branch of fast pulses towards $a\approx 0$, it is known that small exponentially decaying oscillations develop in the tails of the pulses, due to a Belyakov transition which occurs when $a\sim \eps^{1/2}$, resulting in a pair of complex conjugate eigenvalues in the linearization at the equilibrium $(u,w)=(0,0)$~\cite{CSosc}. In particular this fact, along with a transversality condition, implies that the homoclinic solution corresponding to the fast pulse is a Shilnikov homoclinic orbit, and nearby one expects to find $N$-round homoclinic orbits for any $N$, composed of identical copies of the primary homoclinic~\cite{homburg2010homoclinic}. Continuing the branch of fast pulses further, into the region $a<0$, results in a phenomenon whereby the oscillations in the tails grow continuously along an almost vertical, canard-like transition into a secondary excursion which eventually becomes a nearly identical copy of the primary pulse, resulting in a $2$-round homoclinic orbit, or $2$-pulse~\cite{CSosc,CSbanana, champneys2007shil, guckenheimer2010homoclinic} (the corresponding bifurcation diagram obtained by plotting the $L^2$-norm of the solution versus $a$ forms one of the ends of the so-called ``homoclinic banana", see Figure~\ref{f:ccurvebanana}). In other words, along a continuous branch in parameter space, the fast $1$-pulses are connected to a branch of fast $2$-pulses which resemble two identical copies of the $1$-pulse; we refer to this phenomenon as \emph{parametric pulse adding}.

This $1$-to-$2$-pulse transition was analyzed extensively in~\cite{CSbanana}, and the mechanism which allows for this transition is rooted in the canard dynamics present near the equilibrium when $a\approx0$ \blue{and $c\approx1/\sqrt{2}$}. In fact, one can view this transition as a kind of canard explosion of homoclinic orbits. Much like the planar canard explosion of periodic orbits~\cite{krupaszmolyan20012}, the entire transition happens in an exponentially small region in parameter space, and thus all of the intermediate homoclinic orbits exist at nearly indistinguishable parameter values, despite the solutions themselves being well separated in norm. We (informally) restate the following from~\cite{CSbanana}.

\begin{theorem}{\cite[Theorem 2.2]{CSbanana}} \label{thm:mainexistence}
For each $0<\gamma<4$ and each sufficiently small $\eps>0$, there exists a one-parameter family $\Gamma_\eps$ of traveling pulse solutions $\phi(\cdot;s,\eps)$ with $(c,a)=(c,a)(s,\eps), s\in(0,8/27)$ to~\eqref{e:fhn} which is $C^1$ in $(s,\sqrt{\eps})$. Furthermore, for $s$ sufficiently small, the pulse solutions $\phi(\cdot;s,\eps)$ are single pulses with oscillatory tails, while the solutions $\phi(\cdot;s,\eps)$ are double pulses for $s$ sufficiently close to $8/27$. Away from the endpoints $s=0,8/27$, we have that
\begin{align}
(c,a)(s,\eps) = (c_*,a_*)(\eps)+\mathcal{O}(e^{-q/\eps})
\end{align}
for some $q>0$ where
\begin{align}
\begin{split}\label{e:maxcanard}
c_*(\eps)&=\frac{1}{\sqrt{2}}+\frac{3+2\gamma}{4}\eps +\mathcal{O}\left(\eps^{3/2}\right)\\
a_*(\eps)&=-\frac{(3+2\gamma)}{4\sqrt{2}}\eps +\mathcal{O}\left(\eps^{3/2}\right).
\end{split}
\end{align}
\end{theorem}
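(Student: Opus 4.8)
The plan is to regard \eqref{e:twode} as a slow-fast system with fast variables $(u,v)$, slow variable $w$ and singular parameter $\eps$, and to build $\Gamma_\eps$ geometrically by continuing the one-dimensional unstable manifold $W^u(0,0,0)$ of the rest state through a concatenation of fast and slow segments and intersecting it with $W^s(0,0,0)$; the delicate passage near the fold of the critical manifold is resolved by a blow-up. At $\eps=0$ the critical manifold $\M_0=\{v=0,\ w=f(u)\}$ is a cubic (``S''-shaped) curve whose lower fold, for $a$ near $0$, collides with the saddle equilibrium $(0,0,0)$; this degeneracy is exactly the canard point. One first records the hyperbolic (Fenichel) branches of $\M_0$, the reduced flow $w_\zeta=\tfrac{1}{c}(u(w)-\gamma w)$ on each, and the fast front/back connections between branches, which exist along $c=\tfrac{1}{\sqrt{2}}+\mathcal O(\eps)$. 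Concatenating front, slow descent along the far attracting branch, and back yields the singular single pulse; continuing the slow descent past the fold along the repelling branch for a prescribed ``length'' and then jumping off with a fast fiber yields a one-parameter family of singular candidates, the endpoint of which (where the secondary excursion closes up into a second full copy of the primary pulse) is the singular double pulse. One parametrizes this family by $s$, normalized --- say, by the $w$-ordinate at which the secondary excursion departs the repelling branch --- so that $s\in(0,8/27)$.

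Next I would desingularize the non-hyperbolic fold at the origin by a blow-up in the spirit of Krupa--Szmolyan and Dumortier--Roussarie, introducing rescaled coordinates in entry, rescaling and exit charts (and treating $a$ and $\eps$ as additional small quantities, consistent with $a_*=\mathcal O(\eps)$), in which the vector field becomes partially hyperbolic. In the blown-up space the extended attracting and repelling slow manifolds $\M_\eps^{a}$ and $\M_\eps^{r}$ appear as hyperbolic continuations of the Fenichel branches through the fold, and a canard is an orbit lying in $\M_\eps^{a}\cap\M_\eps^{r}$. The maximal canard, i.e.\ the transversal passage of the continuation of $\M_\eps^{a}$ into $\M_\eps^{r}$ all the way through the fold, occurs along a curve in parameter space whose expansion is obtained order by order in $\sqrt{\eps}$ from the chart equations, together with the $\mathcal O(\eps)$ Melnikov corrections to the front/back connections generated by the slow drift $u-\gamma w$; this computation should yield precisely \eqref{e:maxcanard}, with the combination $(3+2\gamma)/4$ arising from evaluating that drift along the relevant portions of $\M_0$ (note that $c_*(\eps)$ is then the front speed $c_\fast$ evaluated at $a_*$, while $a_*(\eps)$ is fixed by the canard condition itself).

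Away from the fold, $W^u(0,0,0)$ lies exponentially close to $\M_\eps^{a}$ and is carried along it by the exchange lemma, and similarly $W^s(0,0,0)$, followed backward past the back-jump, lies exponentially close to $\M_\eps^{r}$; for $s$ in the open interval one then imposes that the continued $W^u$ leaves the fold region along the fast fiber dictated by the singular candidate, follows the repelling branch the prescribed length, and closes onto $W^s$. The key quantitative input is the classical canard-explosion estimate: once one is exponentially near the maximal canard, the repelling-manifold segment may be lengthened or shortened at the cost of only an exponentially small change of $(c,a)$, because the splitting between the canard and non-canard branches is $\mathcal O(e^{-q/\eps})$. Hence the matching locus is a $C^1$ curve $(c,a)(s,\eps)=(c_*,a_*)(\eps)+\mathcal O(e^{-q/\eps})$, with $C^1$-dependence on $(s,\sqrt{\eps})$ inherited from the smoothness of the blow-up charts and of the exponential-dichotomy and exchange-lemma projections (the half-powers $\sqrt{\eps}$ entering through the fold rescaling). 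The two ends $s\to0$ and $s\to 8/27$ are read off from the singular skeleton: no canard segment produces the spiralling single pulse with oscillatory tail, the oscillations coming from the complex eigenvalues at the origin past the Belyakov transition, while the maximal segment produces the double pulse.

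The main obstacle is that the asymptotic rest state $(0,0,0)$ sits \emph{inside} the fold/canard region, so the usual clean separation between a ``Fenichel regime'' and a ``fold regime'' breaks down: one must simultaneously carry out the blow-up of the fold and describe the local stable and unstable manifolds of the origin --- including their spiralling structure after the Belyakov transition --- within the blown-up coordinates, and then show that the homoclinic matching condition is solvable for the \emph{entire} interval $s\in(0,8/27)$ with the claimed exponential flatness in $\eps$. Keeping uniform control of the exchange-lemma and dichotomy estimates through this enlarged singular region, rather than away from it, is the technical crux.
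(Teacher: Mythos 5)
This statement is not proved in the paper at all: it is an informal restatement of Theorem 2.2 of \cite{CSbanana}, so there is no internal proof to compare against, only the summary of that construction given in \S\ref{s:2} and \S\ref{sec:pulse_geometry}. Your outline follows essentially the same route as that cited construction — geometric singular perturbation theory with the pulse built from a primary excursion, a secondary (canard) excursion parametrized by the $w$-height of the jump off the repelling middle branch, and an oscillatory tail, together with a blow-up of the canard point at the origin and the exponential ($\mathcal{O}(e^{-q/\eps})$) closeness of the whole family to the maximal-canard parameter values \eqref{e:maxcanard} — and the technical crux you identify (the equilibrium sitting inside the blow-up region, and uniform tracking of the spiralling tail, which \cite{CSbanana} handles via a two-dimensional invariant manifold $\mathcal{Z}_\eps$) is precisely what the cited paper carries out in detail; your proposal is thus a correct program matching the known proof strategy rather than a self-contained proof.
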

\begin{remark}
The parameter values~\eqref{e:maxcanard} denote the location of the so-called maximal canard associated with the canard point near the equilibrium at the origin, and the bulk of the transition -- that is, for those pulses not too close to the $1$-pulse near the beginning of the transition, nor too close to the $2$-pulse near the end of the transition -- occurs in a parameter regime exponentially close to the location of the maximal canard. We refer to these intermediate pulses during this `canard-explosion' portion of the transition as \emph{transitional pulses}.
\end{remark}
\begin{remark}
In~\cite{CSbanana}, the pulse solutions of Theorem~\ref{thm:mainexistence} are constructed in three pieces: a primary excursion, a secondary excursion, and an oscillatory tail. The primary excursion and oscillatory tail are nearly identical for most of the pulses in the family $\phi(\cdot;s,\eps), s\in(0,8/27)$, while the shape of the secondary excursion varies; see Figure~\ref{f:transition}. Thus, the pulses $\phi(\cdot;s,\eps), s\in(0,8/27)$ are parameterized by the secondary excursion, which undergoes a canard-explosion-like transition in phase space. See Remark~\ref{r:parameterization} for a more precise description of the relation between the parameter $s$ and the shape of this second excursion.
\end{remark}

\begin{figure}
\begin{subfigure}{.29 \textwidth}
\centering 
\includegraphics[width=1\linewidth]{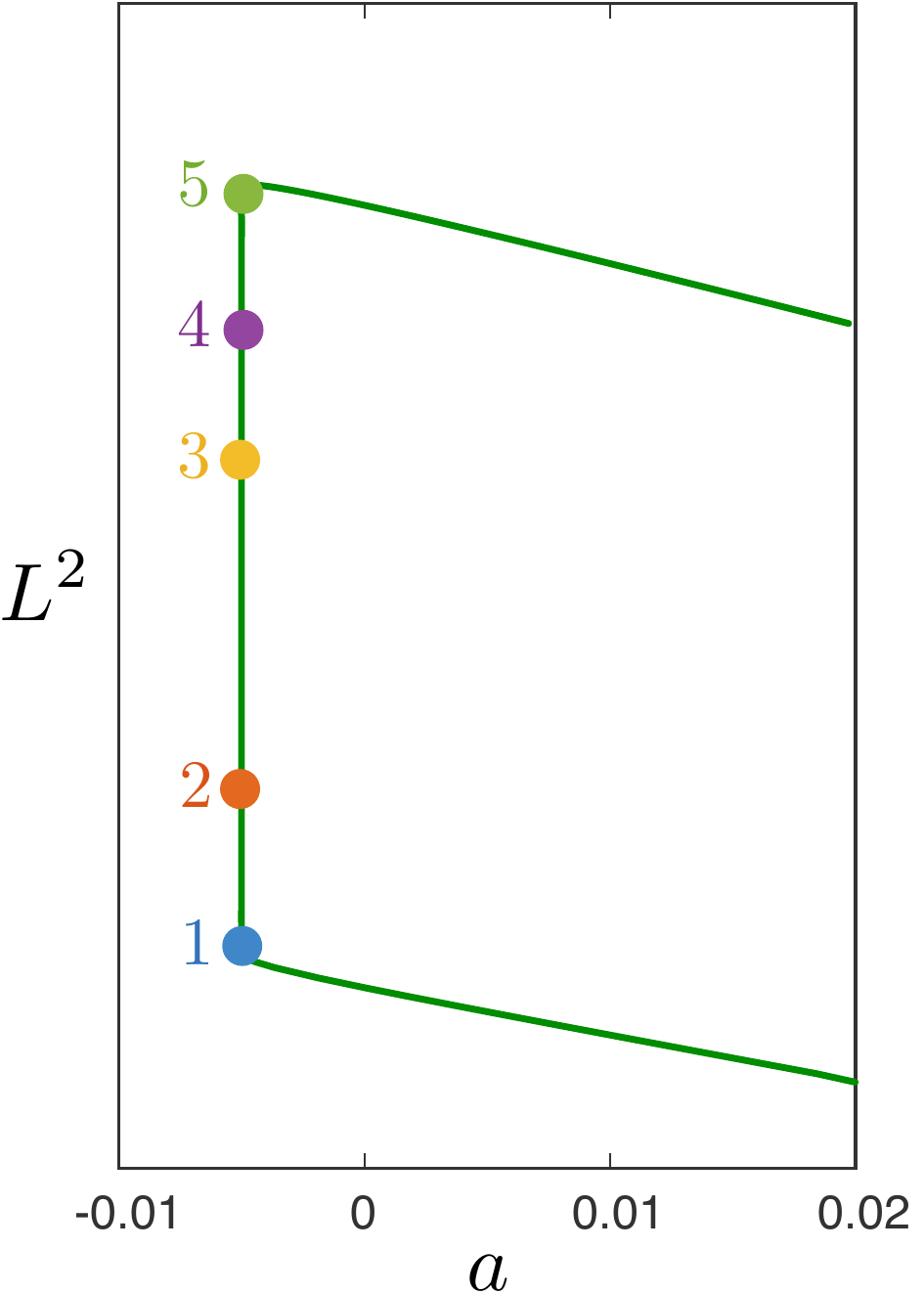}
\caption{Zoom of homoclinic banana: $L^2$-norm vs. $a$}
\label{f:bananazoom}
\end{subfigure}
\hspace{0.05 \textwidth}
%\begin{subfigure}{.4 \textwidth}
%\centering
%\includegraphics[width=1\linewidth]{figs/homoclinic_banana/full_transition_flat}
%\caption{Six pulses along the banana projected onto the $uw$-plane showing the transition from a single to a double pulse.}
%\label{f:transition_uw}
%\end{subfigure}\vspace{30pt}
\begin{subfigure}{.66\textwidth}
\centering
\includegraphics[width=1\linewidth]{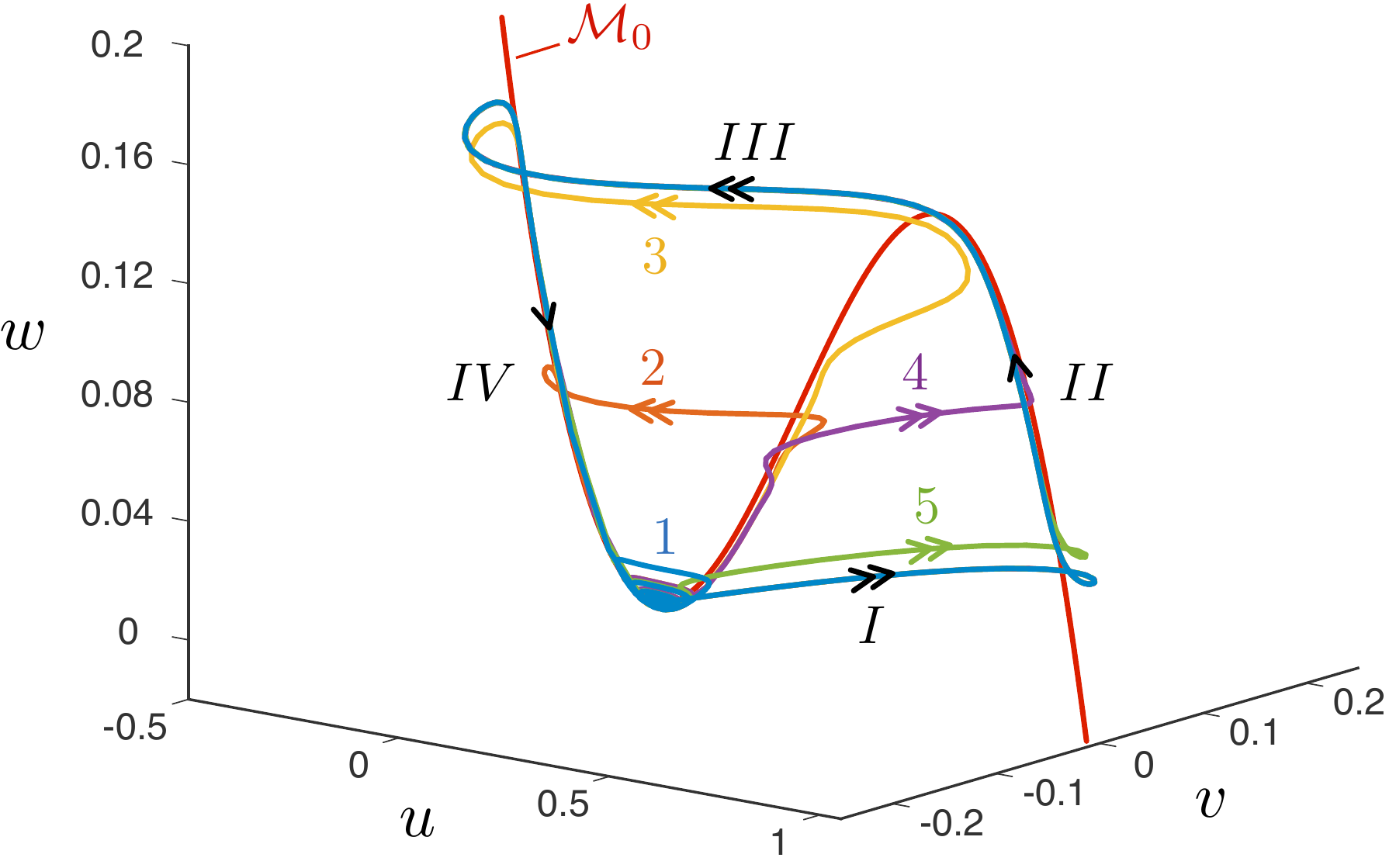}
\caption{Five pulses along the $1$-to-$2$-pulse transition in $uvw$-space. Also depicted is the cubic critical manifold $\mathcal{M}_0$ (red).}
\label{f:transition_uvw}
\end{subfigure}
\caption{Transition from single to double pulse in the top left of the homoclinic banana for $\eps=0.005$. Pulse $1$ follows the sequence $I,II,III,IV$. Pulses $2,3,4,5$ first follow the sequence $I,II,III,IV$, followed by a secondary excursion along the corresponding colored trajectory. 
}
\label{f:transition}
\end{figure}

We now turn to the behavior of the PDE~\eqref{e:fhn} in this parameter regime. As stated previously, the $1$-pulses are known to be nonlinearly stable, and it is further known~\cite{CdRS} that these pulses are also stable in the oscillatory tail regime, that is, near $a\sim \eps^{1/2}$. However, beyond this regime, in particular when $a<0$, the pulses are no longer guaranteed to be stable, and numerical evidence shows that under the temporal evolution of~\eqref{e:fhn}, an initial condition resembling a fast $1$-pulse exhibits an instability resulting in pulse-adding behavior in which additional pulses emerge dynamically from the oscillatory tail of the $1$-pulse; see Figure~\ref{f:temporal_replication} for the results of direct numerical simulation in~\eqref{e:fhn}. The additional pulses grow from the tail sequentially, and much in the way that the parametric transition occurs between the $1$ and $2$ pulse upon parameter continuation in the traveling wave equation~\eqref{e:twode} as described above. We refer to this phenomenon as \emph{temporal pulse adding}. We remark that the behavior is sensitive to parameters and initial conditions, especially in the limit $\eps \to0$, and that a variety of dynamic behavior, including both pulse adding and pulse splitting, can be observed in a narrow parameter regime.

We expect that these two phenomena, that of parametric vs. temporal pulse adding, in~\eqref{e:fhn} are linked, analogous to the observations for the Oregonator model \cite{Rademacher_thesis}. In particular, the fact that the two transitions heavily resemble each other, coupled with the fact \blue{that} all of the intermediate pulses along the parametric transition exist at computationally indistinguishable parameter values, leads one to believe that the parametric transition may guide the temporal transition in the PDE. A possible geometric explanation for this phenomenon is that the family of transitional traveling pulses comprise an invariant manifold for the PDE~\eqref{e:fhn}, along which the dynamics are characterized by a slow drift along this family.

A natural question therefore concerns the stability of the transitional pulses. As mentioned previously, the $1$-pulses with oscillatory tails near the start of the parametric transition have been shown to be stable~\cite{CdRS}, with the spectrum bounded away from the imaginary axis except for a translational eigenvalue at $\lambda=0$ and a critical real, negative eigenvalue $\lambda=\mathcal{O}(\eps^\nu)$ where $\nu\in [2/3,1]$ depends on the precise value of the parameter $a$. Furthermore, it has been demonstrated numerically~\cite[\S 7]{CSosc} (though to our knowledge not rigorously) that the $2$-pulse near the end of the parametric transition is unstable, with a single real, positive eigenvalue. The transition along the homoclinic banana (see Figure~\ref{f:ccurvebanana}, upper right panel) between the $1$ and $2$ pulse is comprised of a sequence of folds as the parameter $a$ wiggles back and forth. This leads us to posit two potential mechanisms for the appearance of the positive eigenvalue in the spectrum of the resulting $2$-pulse: 
\begin{itemize}
\item[(a)] the critical negative eigenvalue from the stable $1$-pulse crosses back-and-forth into the right half plane and finally remains in the right half plane after an odd number of such crossings. 
\item[(b)] a sequence of eigenvalues cross from the left to right half plane, with all but one of them crossing back into the left half plane by the end of the transition. 
\end{itemize}
We remark that the first case (a) appears in relation to other bifurcation phenomena, including snaking in the Swift-Hohenberg equation~\cite{burke2007homoclinic, burke2007snakes}, and in the spectrum of long wavelength periodic traveling waves which bifurcate from a localized pulse~\cite{SSlongwav}. However, in the present situation, we claim that it is the latter case (b) which occurs. In order resolve this, we continue the transitional pulses numerically while also tracking their spectra. The branch of pulse solutions is computed on a large spatial domain with periodic boundary conditions using numerical continuation with the {\tt matlab} based package {\tt pde2path} \cite{p2p}. It is well known that this provides an accurate approximation of the traveling pulses as well as their spectra \cite{SSabs, SSgluing}.  We remark that projection boundary conditions provide higher accuracy, but for our purposes this proved unnecessary. As usual for traveling waves we add a phase condition to fix the translational mode so that the associated zero eigenvalue is removed. The numerical discretization is based on one-dimensional first-order finite elements.
% (the software is mainly intended for higher space dimensions). Since {\tt pde2path} has the discretised PDE linearisation and its spectrum readily available, it is convenient for computing the spectra, which is the focus here.

We focus on the part of the homoclinic banana containing the $1$-to-$2$ pulse transition, that is, the portion between the red square and red circle in Figure~\ref{f:ccurvebanana}. Figure~\ref{f:fhn-eps0p01} depicts the results of the continuation for $\eps=0.01$, where the continuation begins and ends, approximately, at pulses analogous to the $1$-pulse and $2$-pulse solutions depicted in the lower panels of Figure~\ref{f:ccurvebanana}, obtained for $\eps=0.015$. Labelled in Figure~\ref{f:fhn-eps0p01} are the location of certain pulses along the transition, along with the number of unstable eigenvalues present in their spectra, as well as crosses depicting the location of fold points along the transition. We immediately notice that following the branch from the bottom of the parametric transition to the top, the number of \blue{unstable} eigenvalues increases at each fold point until near the middle height and then decreases at subsequent folds, until a single \blue{unstable} eigenvalue remains. Eigenfunctions corresponding to unstable eigenvalues for three of the pulses along the continuation in Figure~\ref{f:fhn-eps0p01} are depicted in Figure~\ref{f:evecs}. \blue{We remark that the dense blue parts of the eigenvalues plotted in Fig.~\ref{f:fhn-eps0p01} lie close to the essential spectrum of the background state, which is not of interest here.}

\begin{figure}
\begin{subfigure}{1 \textwidth}
\includegraphics[width=0.95\textwidth]{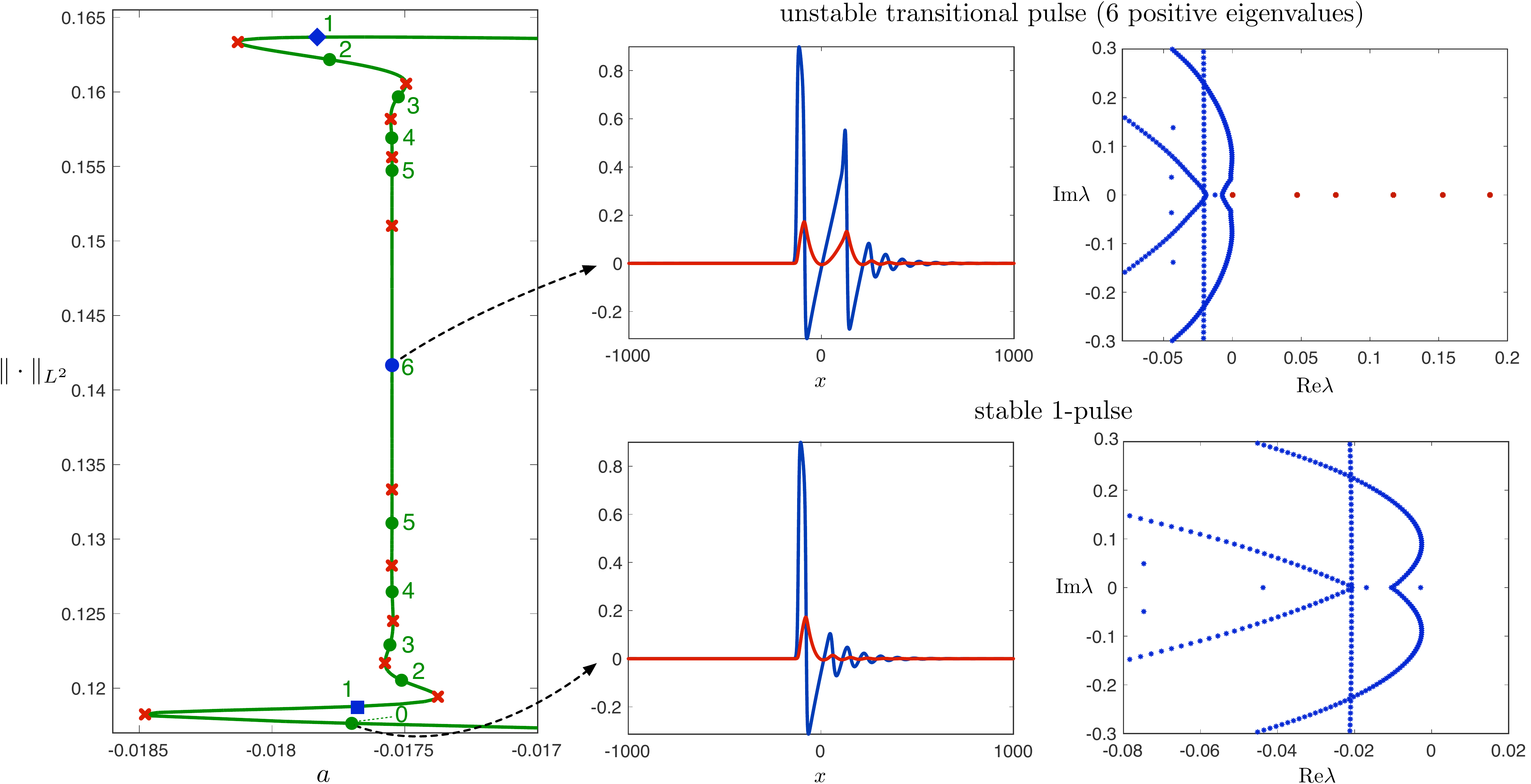}
\caption{\emph{Left panel:} Bifurcation diagram of the $1$-to-$2$ pulse transition: crosses mark fold points, and the numbers of unstable PDE eigenvalues is given for profiles along the transition marked with a colored bullet/shape. 
\emph{Middle and right panels:} Sample solution profiles ($u$ blue, $w$ red), along with their spectra (note that the translation eigenvalues at the origin are not shown).}
\label{f:fhn-eps0p01}
\end{subfigure}\\~\\
\begin{subfigure}{1 \textwidth}
\centering
\includegraphics[width=0.28 \textwidth]{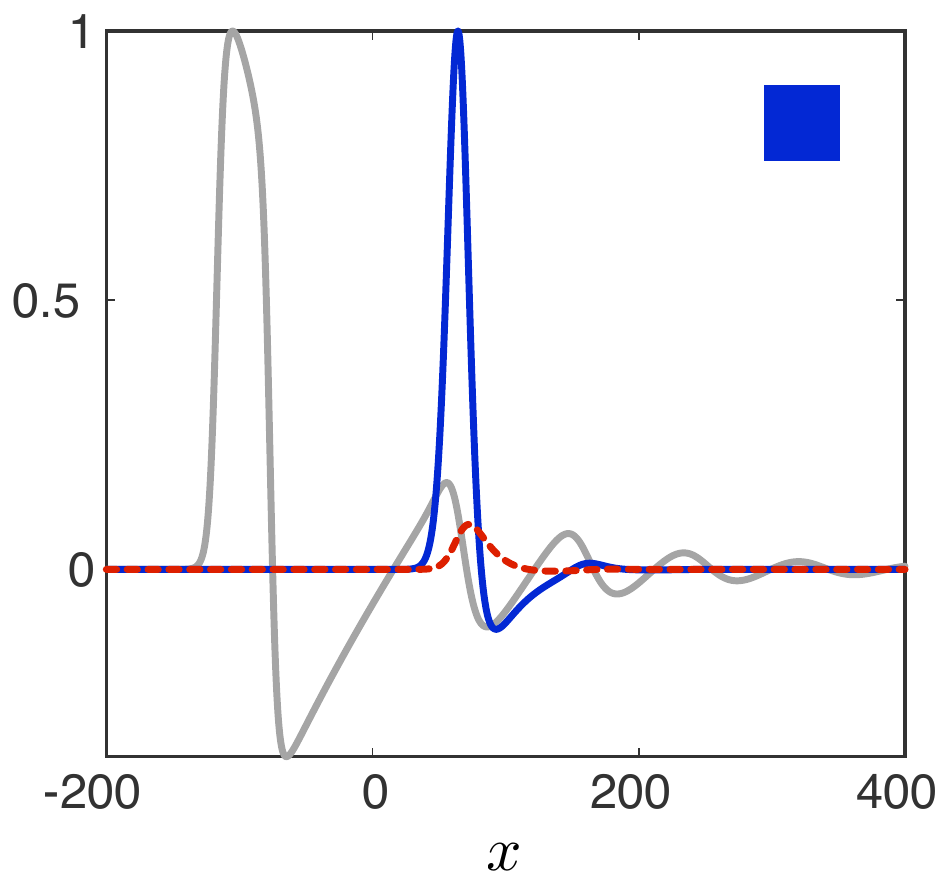} \hspace{0.02 \textwidth}
\includegraphics[width=0.28 \textwidth]{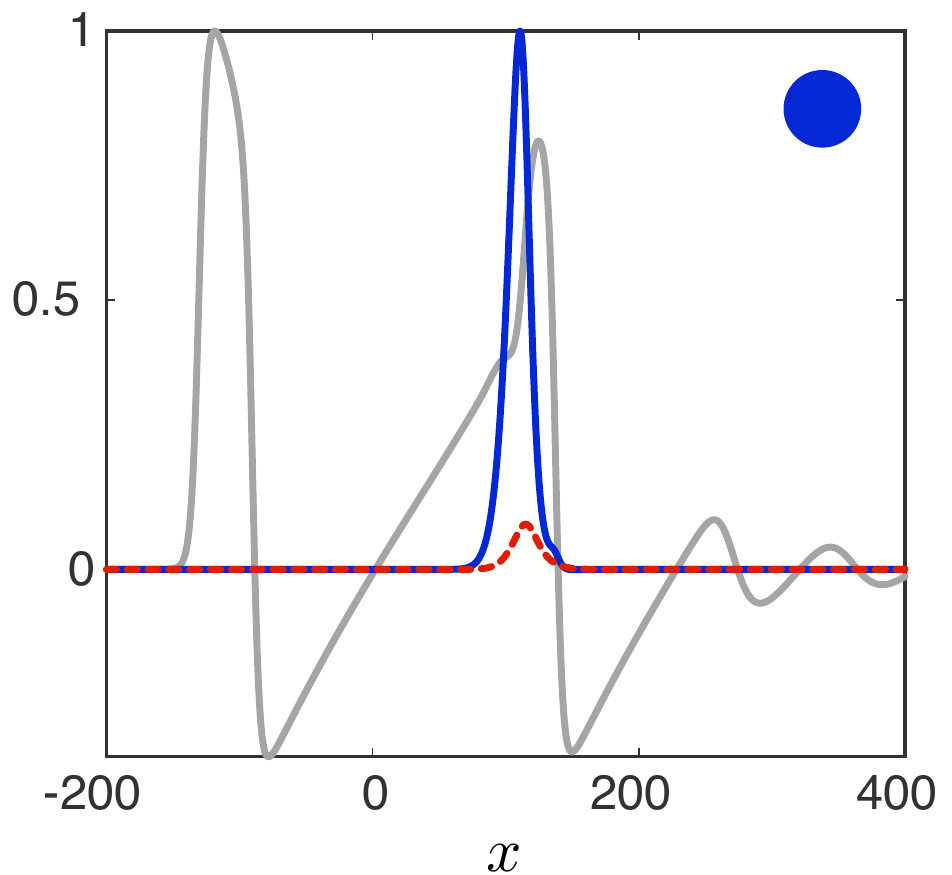}\hspace{0.02 \textwidth}
\includegraphics[width=0.28 \textwidth]{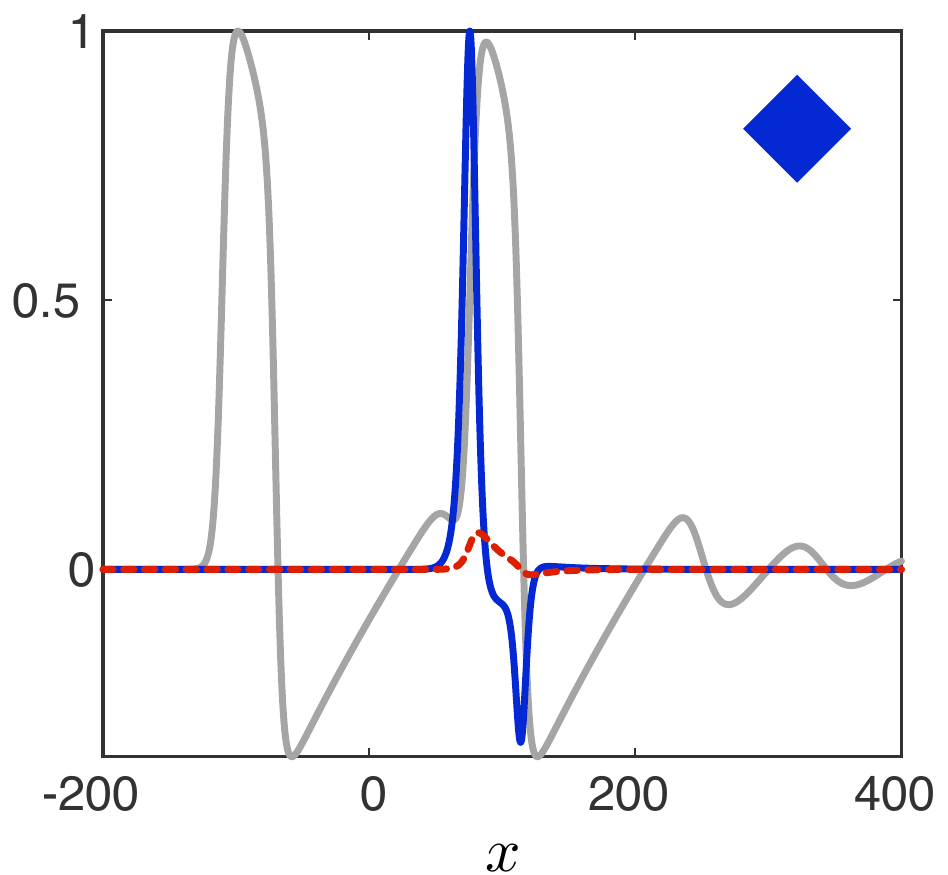}\\
\includegraphics[width=0.75\textwidth]{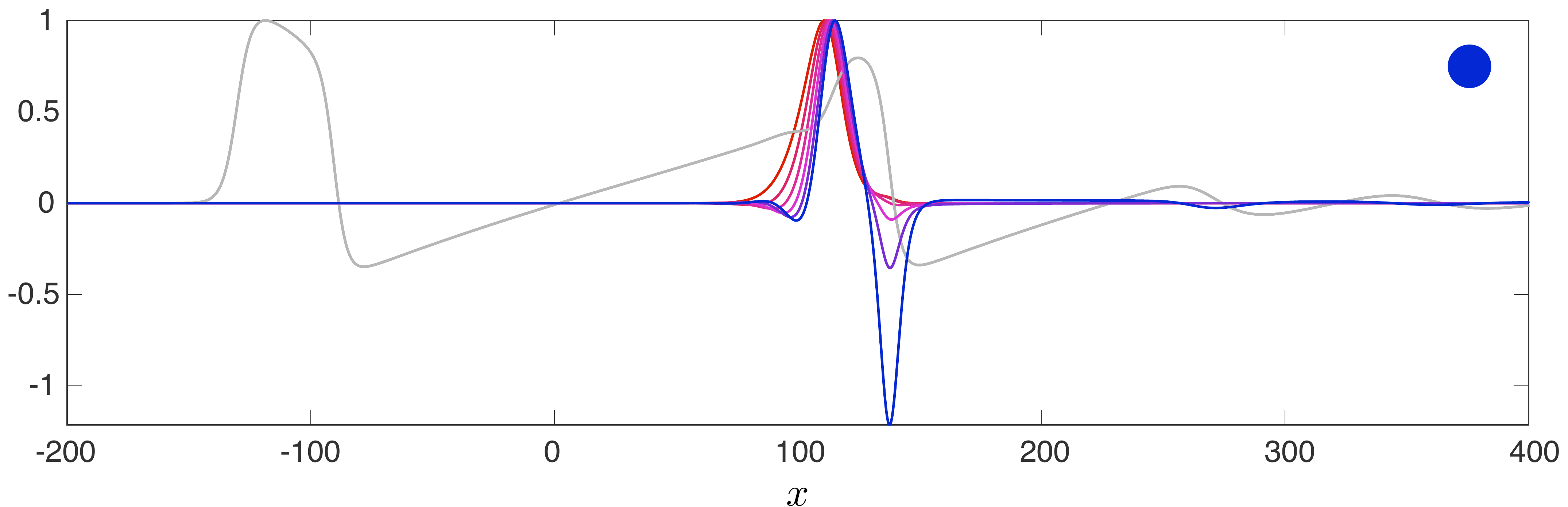}
\caption{\emph{Top row:} Plotted are the $u$-component of profiles of pulses (gray) with 1 (left), 6 (center) and 1 (right) unstable eigenvalue(s), corresponding to the solutions labelled with a blue square, circle, and diamond, respectively, along the continuation depicted in the left panel of (a). Vertical axis scale is normalised to the maximum of $u$ in each case. Also plotted are eigenfunctions for the most unstable eigenvalues (first component blue, second red dashed) in each case.
\emph{Bottom row:} The selected pulse is that with 6 unstable eigenvalues, and the first component of the eigenfunctions for all six unstable eigenvalues are plotted, ranging from blue (least unstable) to red (most unstable). }\label{f:evecs}
\end{subfigure}
\caption{Results of numerical continuation and spectral computations for~\eqref{e:fhn} for $\eps=0.01, \gamma=2$. }
\end{figure}

The goal of this work is to determine what mechanism is responsible for the appearance of this family of eigenvalues; in order to understand this, we must briefly review the existence analysis of the traveling pulses, and in particular the slow/fast geometry which allows for their construction.

\subsection{Geometry of traveling pulses}\label{sec:pulse_geometry}
The analysis of traveling pulses in the FitzHugh--Nagumo equation has a long history, and a range of techniques have been employed in their construction, including classical singular perturbation theory~\cite{hastings1976existence} and the Conley index~\cite{carpenter1977geometric}. Here we focus on the geometric singular perturbation approach~\cite{jones1991construction}, which was used extensively in~\cite{CSbanana} to construct the $1$-to-$2$-pulse parametric transition at hand. The basic idea of this construction is to separately analyze the singular fast and slow limits obtained by taking $\eps \to0$ in the traveling wave equation~\eqref{e:twode}, and its rescaled counterpart
\begin{equation}
\label{e:twode_slow}
\begin{aligned}
\eps u_\xi &= v\\
\eps v_\xi &= cv - f(u) + w\\
w_\xi &= \frac{1}{c}\left(u-\gamma w\right),
\end{aligned}
\end{equation}
respectively, where~\eqref{e:twode_slow} is obtained from~\eqref{e:twode} by rescaling $\xi=\eps \zeta$. This results in two limiting systems: The first is the layer problem, obtained by setting $\eps=0$ in~\eqref{e:twode}
\begin{equation}
\label{e:layer}
\begin{aligned}
u_\zeta &= v\\
v_\zeta &= cv - f(u) + w\\
w_\zeta &= 0,
\end{aligned}
\end{equation}
in which $w$ is no longer dynamic, and instead parameterizes the family of planar ODEs in the $(u,v)$-variables. The set of equilibria of this system, given by
\begin{align}
\mathcal{M}_0 :=\{(u,v,w):v=0, w=f(u)\}
\end{align}
is called the critical manifold. The second limiting system is the reduced problem, obtained by setting $\eps=0$ in~\eqref{e:twode_slow}
\begin{equation}
\label{e:reduced}
\begin{aligned}
0&= v\\
0&= cv - f(u) + w\\
w_\xi &= \frac{1}{c}\left(u-\gamma w\right),
\end{aligned}
\end{equation}
in which the flow is restricted to the critical manifold $\mathcal{M}_0$, and the dynamics on it are governed by the $w$-equation.

Singular solution profiles can be built by concatenating solutions from the two limiting systems, composed of portions of the critical manifold and fast jumps between different branches of the critical manifold, which arise as heteroclinic orbits in the layer problem~\eqref{e:layer}. Methods of geometric singular perturbation theory then guarantee that normally hyperbolic portions of the critical manifold perturb for $\eps>0$ as slow manifolds on which the flow is a perturbation of~\eqref{e:reduced}; homoclinic orbits in the full traveling wave equation can then be obtained for sufficiently small $\eps>0$ by building solutions which spend long times near these slow manifolds and matching these solutions across the fast heteroclinic jumps.

In fact, the entire parametric transition can be constructed in this fashion, with some technical caveats due to the fact that normal hyperbolicity is lost at several points along the critical manifold; however, while these issues add to the delicacy of the construction, they are not immediately relevant to the discussion of stability. In Figure~\ref{f:transition}, we plot several of the pulses along this transition for $\eps=0.005$. The equilibrium of the full system is located at the origin $(u,v,w)=(0,0,0)$, and all pulses are homoclinic orbits to this equilibrium. The right figure depicts five of the transitional pulses in $(u,v,w)$-space, as well as the critical manifold $\mathcal{M}_0$. At the start of the transition, we have the $1$-pulse with oscillatory tail which is composed of passage near the left and right branches of the cubic critical manifold, along with two fast jumps in between. As the transition progresses, the tail of the oscillatory pulse grows into a secondary excursion, which spends longer and longer (spatial) times along the middle branch of the critical manifold until reaching the upper right fold point. At this point, the secondary excursion then continues down the ``other side" of the middle branch, spending shorter and shorter times along the middle branch, until eventually tracing a second pulse identical to the primary pulse. The entire sequence is highly reminiscent of the classical planar canard explosion of periodic orbits.

\begin{figure}
\centering
\includegraphics[width=0.75\linewidth]{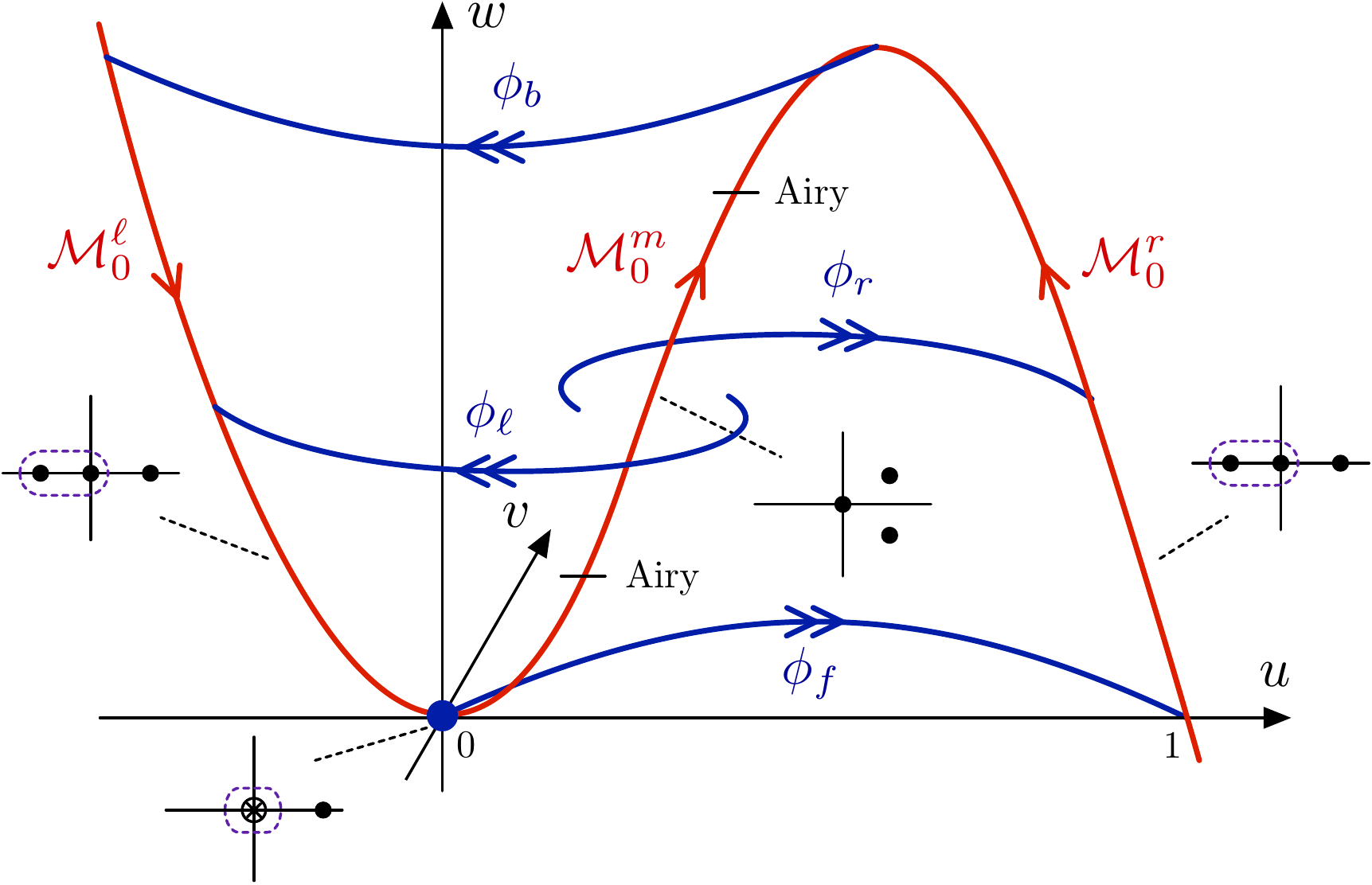}
\caption{Plotted is the singular limit associated with the traveling wave equation~\eqref{e:twode} for $(a,c,\eps)=(0,1/\sqrt{2},0)$. The critical manifold $\mathcal{M}_0$ is shown in red, with the dynamics on it governed by the reduced problem~\eqref{e:reduced}. Heteroclinic orbits, which arise as solutions to the layer problem~\eqref{e:layer} are shown in blue. Also depicted are the (spatial) eigenvalues of the linearization of~\eqref{e:twode} at various locations along portions of the slow manifolds traversed by a traveling pulse solution of~\eqref{e:twode}. We note that the middle branch $\mathcal{M}^m_0$ contains two "Airy points", at which the fast dynamics transition from node to focus behavior and vice versa.}
\label{f:singular_limit}
\end{figure}

Figure~\ref{f:singular_limit} depicts the singular limit associated with the traveling wave equation~\eqref{e:twode} for $a=\eps=0$ and $c=1/\sqrt{2}$, which is the singular limit from which all transitional pulses are constructed. The cubic critical manifold $\mathcal{M}_0$ is comprised of three normally hyperbolic branches
\begin{align}
\mathcal{M}^\ell_0 = \mathcal{M}_0\cap\{u<0\}, \quad \mathcal{M}^m_0 = \mathcal{M}_0\cap\{0<u<2/3\}, \quad \mathcal{M}^r_0 = \mathcal{M}_0\cap\{u>2/3\};
\end{align}
the left and right branches $\mathcal{M}^{\ell,r}_0$ are of saddle type, while the middle branch $\mathcal{M}^m_0$ is normally repelling. The  repelling middle branch takes the structure of a repelling focus for $u\in(u^-_{\mathrm{A},0},u^+_{\mathrm{A},0})$ and repelling node for $u\in(0,u^-_{\mathrm{A},0})\cup(u^+_{\mathrm{A},0},2/3)$, transitioning from node to focus and vice versa at so-called Airy points~\cite{CSbanana} $u=u^\pm_{\mathrm{A},0} $ where
\begin{align}
u^\pm_{\mathrm{A},0} = \frac{1}{3}\pm\frac{\sqrt{10}}{12}
\end{align}
The middle branch $\mathcal{M}^m_0$ meets the other two branches $\mathcal{M}^{\ell,r}_0$ at two nonhyperbolic fold points at $u=0,2/3$. The lower left fold point at $u=0$ is of canard type, while the upper right fold point is of generic fold type. 

Additionally, the layer problem~\eqref{e:layer} admits a collection of heteroclinic orbits for $\eps=a=0$ and $c=\frac{1}{\sqrt{2}}$. In particular, there exist fast jumps $\phi_{f,b} = (u_{f,b},u'_{f,b})$ from the upper and lower fold points of the critical manifold to its opposite branches $\mathcal{M}^{\ell,r}_0$. The orbit $\phi_f$ solves~\eqref{e:layer} for $w=0$, while $\phi_b$ solves~\eqref{e:layer} for $w=4/27$. Furthermore, for each $w\in(0,4/27)$ between the two folds, there are three equilibria in the layer problem~\eqref{e:layer}, and there exist heteroclinic orbits $\phi_{\ell,r}(\cdot; w) = (u_{\ell,r},u'_{\ell,r})(\cdot; w)$ which connect the unstable middle branch $\mathcal{M}^m_0$ to the left and right saddle-type branches $\mathcal{M}^{\ell,r}_0$. The structure of the \blue{layer} problem and the related heteroclinic orbits for different values of $w$ are depicted in Figure~\ref{f:layer}.

\begin{figure}
\centering
\includegraphics[width=0.8\linewidth]{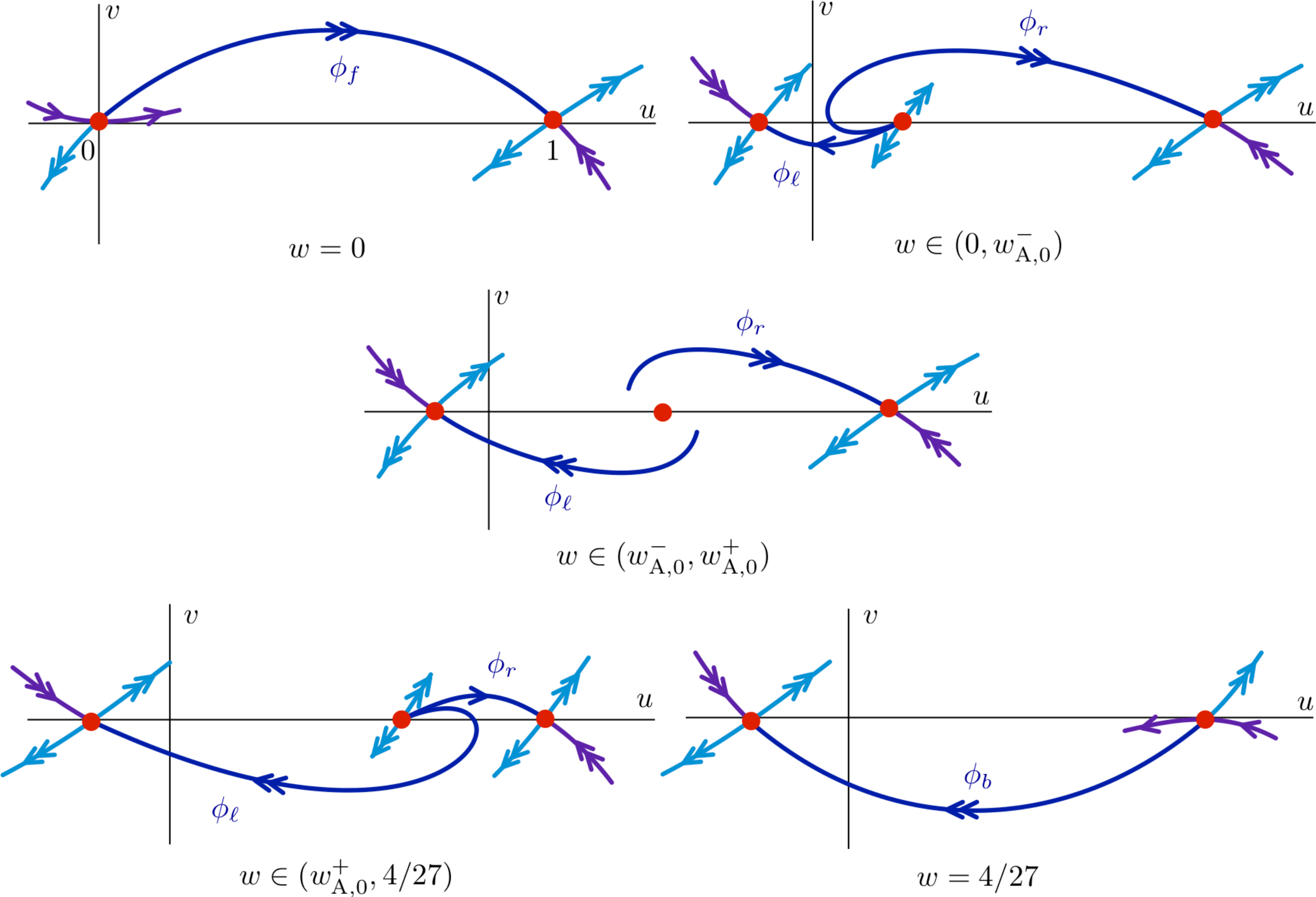}
\caption{Structure of the heteroclinic orbits $\phi_f,\phi_b,\phi_\ell,\phi_r$ present in the layer problem~\eqref{e:layer} for different values of $w\in[0,4/27]$. The structure of the repelling middle branch changes from node to focus and vice versa at the critical values $w=w^\pm_{\mathrm{A},0}=f(u^\pm_{\mathrm{A},0})$.}
\label{f:layer}
\end{figure}

% \begin{figure}
% \centering
% \includegraphics[width=0.6\linewidth]{figs/singular/singular_spec}
% \caption{Plotted is the singular limit associated with the traveling wave equation for $a=\eps=0$ and $c=1/\sqrt{2}$ along with the spectrum of the linearization of~\eqref{e:twode} at various locations along portions of the slow manifolds traversed by a traveling pulse solution of~\eqref{e:twode}. We note that there is a portion along the middle branch of the slow manifold, between two critical points dubbed "Airy points", at which the fast dynamics transition from node to focus behavior and vice versa.}
% \label{f:singular_spec}
% \end{figure}

From this, we can see how to build singular homoclinic orbits for each of the pulses along the parametric transition (see~\cite[\S2]{CSbanana}), and these singular profiles can all be constructed for the same parameter values. It is this construction that was used to prove their existence in~\cite{CSbanana}. 

\begin{remark}The construction of the pulses along this transition differs from that of the classical fast $1$-pulses in that passage near the middle portion of the critical manifold is necessary in the case of the transitional pulses; the classical fast pulses are constructed using only two fast jumps between normally hyperbolic portions of the two outer saddle-type branches of the critical manifold. We will see that passage near the middle branch is essential to understanding the appearance of the family of unstable eigenvalues along the parametric transition.
\end{remark}

\subsection{Slow absolute spectrum and summary of main results}\label{s:summary}
Following the discussion in the previous section, we see that the key to understanding the stability of traveling pulse solutions of~\eqref{e:fhn} lies in understanding the slow and fast pieces in their construction. One could expect that isolated eigenvalues arise due to the fast heteroclinic jumps (and indeed this is the case), but that eigenvalues could also arise due to passage near the slow manifolds. In Figure~\ref{f:singular_limit}, we plot the singular limit corresponding to the reduced~\eqref{e:reduced} and layer~\eqref{e:layer} problems as well as the spectrum of the three-dimensional linearization of~\eqref{e:twode} about points on the slow manifolds, which are slowly traversed by the pulse profile. While the two outer slow manifolds are of saddle-type, we draw attention to the portion of the unstable middle branch, in between the two Airy points at $u=u^\pm_{\mathrm{A},0}$, at which the layer dynamics transition from node to focus behavior. This creates a region along the middle slow manifold where the linearization of~\eqref{e:twode} possesses a pair of unstable complex conjugate spatial eigenvalues. As the dynamics near such manifolds is of course slow (in fact of $\mathcal{O}(\eps)$ on the fast timescale), passage near a slow manifold can be thought of as ``near-equilibrium" dynamics. Considering for the moment the points along slow manifolds as equilibria, we will later see that this structural change in the layer dynamics of the middle branch in between the Airy points has an influence on the PDE stability of this branch (when thought of as a family of equilibria). 

In a related context, it is known~\cite{SSgluing} that eigenvalues can accumulate for pulses, which are asymptotic to one equilibrium while spending long times near a second (different) equilibrium with a different relative Morse index (i.e. number of unstable spatial eigenvalues, computed in an appropriately weighted space). The accumulation set is the so-called absolute spectrum of that second equilibrium relative to the asymptotic equilibrium. Ignoring the precise definition of this set for a moment, we first note that in this case the number of eigenvalues increases asymptotically with $L$, where $L$ is the time spent near the second equilibrium~\cite{SSgluing}. In our context, the middle branch of the slow manifold plays the role of this second equilibrium, in that -- in an appropriate sense -- its absolute spectrum relative to the equilibrium of the full system is \emph{unstable}, and this results in the accumulation of unstable eigenvalues, where the number of unstable eigenvalues is related to the amount of time spent near this middle branch. In particular, this provides an explanation why along the 1-2-pulse transition shown in Figure~\ref{f:fhn-eps0p01} the number of unstable eigenvalues first increases, reaches a maximum at approximately the midpoint of the transition, and then decreases. Indeed, this exactly corresponds to the amount of (spatial) time the pulse profile spends near the middle branch; see Figure~\ref{f:transition}. Since the dynamics near the middle branch are on the slow timescale, the amount of time spent is of $\mathcal{O}(1/\eps)$ on the fast timescale, and hence we would expect the number of eigenvalues to increase as $\mathcal{O}(1/\eps)$. Indeed performing the same type of computation as in Figure~\ref{f:temporal_replication} when doubling the value of $\eps$, i.e., $\eps=0.02$ lends evidence to this prediction. The results are plotted in Figure~\ref{f:fhn-eps0p02}: As expected, the canard explosion is less pronounced and the number of fold points decreased. Accordingly, the maximal number of unstable eigenvalues near the absolute spectrum decreased from 6 to 3. 

\begin{figure}
\includegraphics[width=0.95 \textwidth]{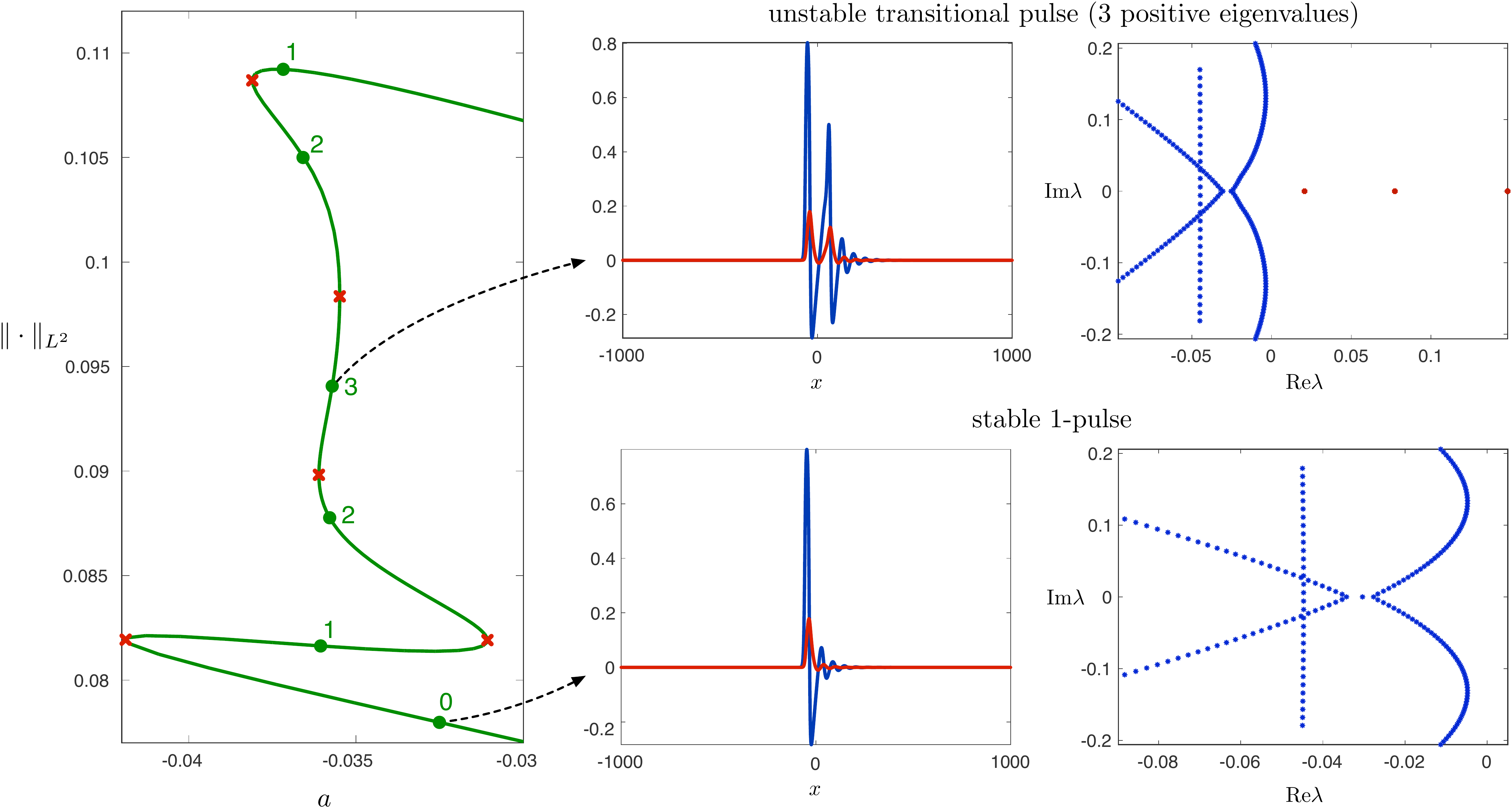}
\caption{Analogue of Figure~\ref{f:fhn-eps0p01} for $\eps=0.02$. Notably the number of folds is half of that for $\eps=0.01$ as is the largest numbers of unstable eigenvalues (translational eigenvalues are not shown). Comparing the profile of the transitional pulse with three unstable eigenvalues shown here with that of the pulse with six unstable eigenvalues from Figure~\ref{f:fhn-eps0p01}, the width of the secondary excursion and thus the spatial extent near the absolutely unstable slow manifold is larger for $\eps=0.01$.}
\label{f:fhn-eps0p02}
\end{figure}

\begin{remark}\label{r:accumulation_s_vs_eps}
We note that the implication for the family of transitional pulses $\phi(\cdot;s,\eps)$ is that there are effectively two ways to invoke the accumulation mechanism for eigenvalues from the absolute spectrum: (1) fix $s$ such that the $w$-component intersects $w\in (w_{A,0}^-,w_{A,0}^+)$ and decrease $\eps$ \blue{(leading to infinitely many unstable eigenvalues as $\eps\to0$ for fixed $s$)}, or (2) fix sufficiently small $\eps$ and traverse with $s$ through the 1-2-pulse transition \blue{(leading to finitely many unstable eigenvalues as $s$ changes for fixed $\eps$)}. However, both of these rely on modifying a sufficiently large (spatial) time that the profile spends near the absolutely unstable part of $\mathcal{M}^m_0$.
\end{remark}

The main result of our paper is an abstraction of this: for travelling waves whose profile passes near a slow manifold, we identify a set on which eigenvalues accumulate in the singular limit and give a rate for the accumulation. This result and the set is directly related to the absolute spectrum mentioned above, and we hence refer to it as \emph{slow absolute spectrum}. The precise formulation and required hypotheses will be given in \S\ref{s:slowabsnew}. 

As mentioned above for the FHN case, the key point is a relative change in (spatial) Morse index of the eigenvalue problem along the profile, that is, the number of stable spatial eigenvalues change in an appropriately weighted space. Roughly speaking, slow pieces with the same Morse index as the stable asymptotic state will not generate an accumulation of eigenvalues -- this can be viewed as an implicit result of the various known results on eigenvalues for slow-fast travelling waves. However, accumulation can occur on real eigenvalue parameters for which the Morse index changes along the profile relative to that of the asymptotic state. More specifically in our context, accumulation occurs near eigenvalue parameters for which a portion of a slow manifold traversed by the pulse has\red{, in any exponentially weighted space,} different spatial Morse index from the asymptotic state; in FHN, this occurs on the middle branch precisely between the two Airy points for $\lambda=0$, and, as we will show, on a decreasing portion of this branch for $\lambda$ on an interval of the positive real axis. 

Generalising from the FHN case, here we distinguish different types of entry/exit of the pulse into the part of the slow manifold with different relative Morse index -- see Figure~\ref{f:slowabscartoon}. For the pulses of FHN this occurs while the profile passes along the slow manifold -- for $\lambda=0$ exactly at the Airy points (see e.g. the entry of pulses $2,3,4$ into the Airy region in Figure~\ref{f:transition}) -- and we refer to this as the \emph{Airy transition case}. The other natural option is that the pulse enters/exits a region of slow manifold with a different Morse index along a fast jump; the pulses labelled $2,4$ exit the Airy region in this manner - see Figure~\ref{f:transition}. As we see from the pulses $2,4$ in Figure~\ref{f:transition}, a combination of these is possible; that is, the pulse can entry the region through an Airy transition, and then exit directly via a fast layer jump without passing through a second Airy transition.

\begin{figure}
\hspace{.025 \textwidth}
\begin{subfigure}{.45 \textwidth}
\centering
\includegraphics[width=1\linewidth]{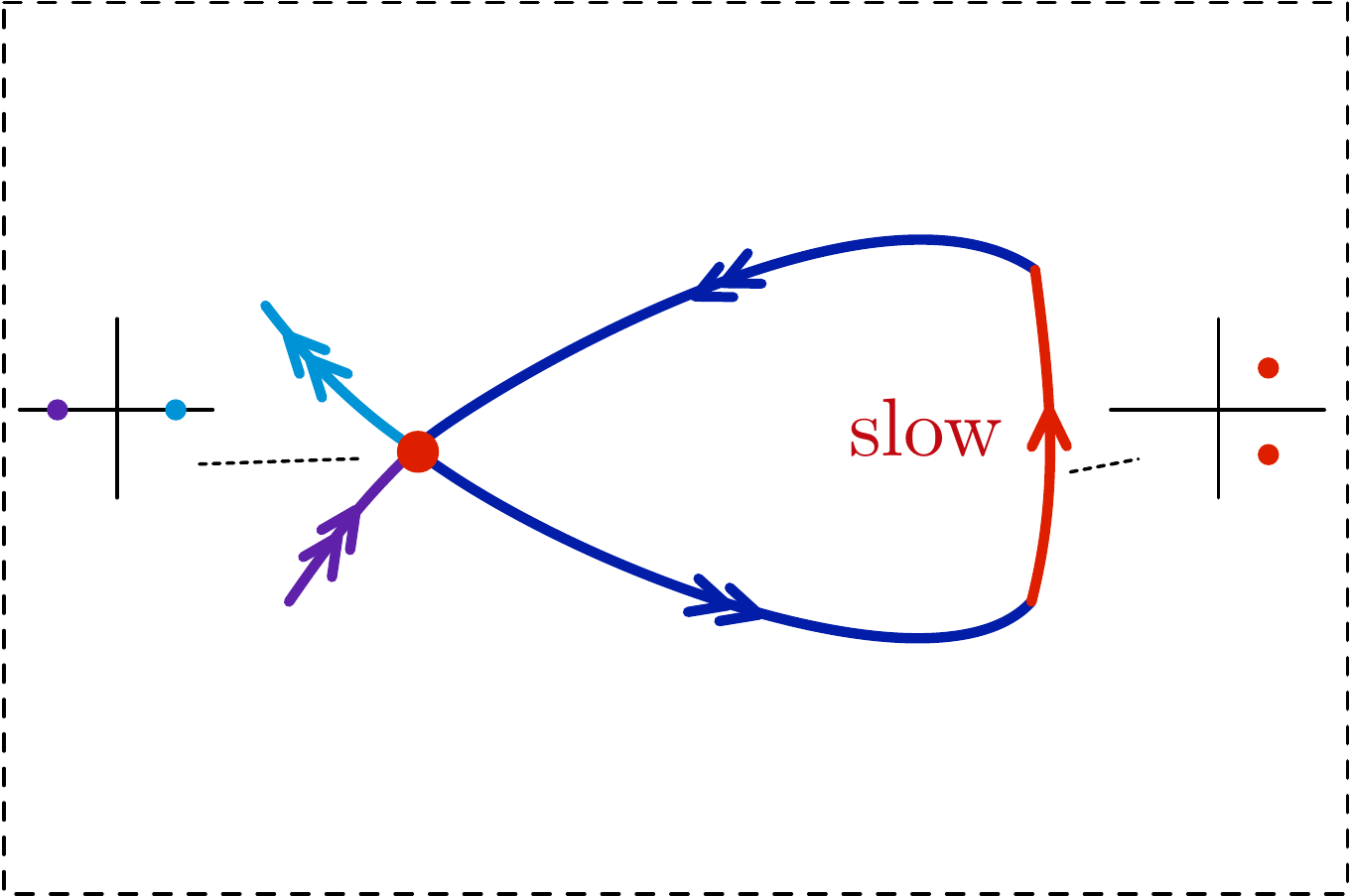}
\caption{}
\end{subfigure}
\hspace{.05 \textwidth}
\begin{subfigure}{.45 \textwidth}
\centering
\includegraphics[width=1\linewidth]{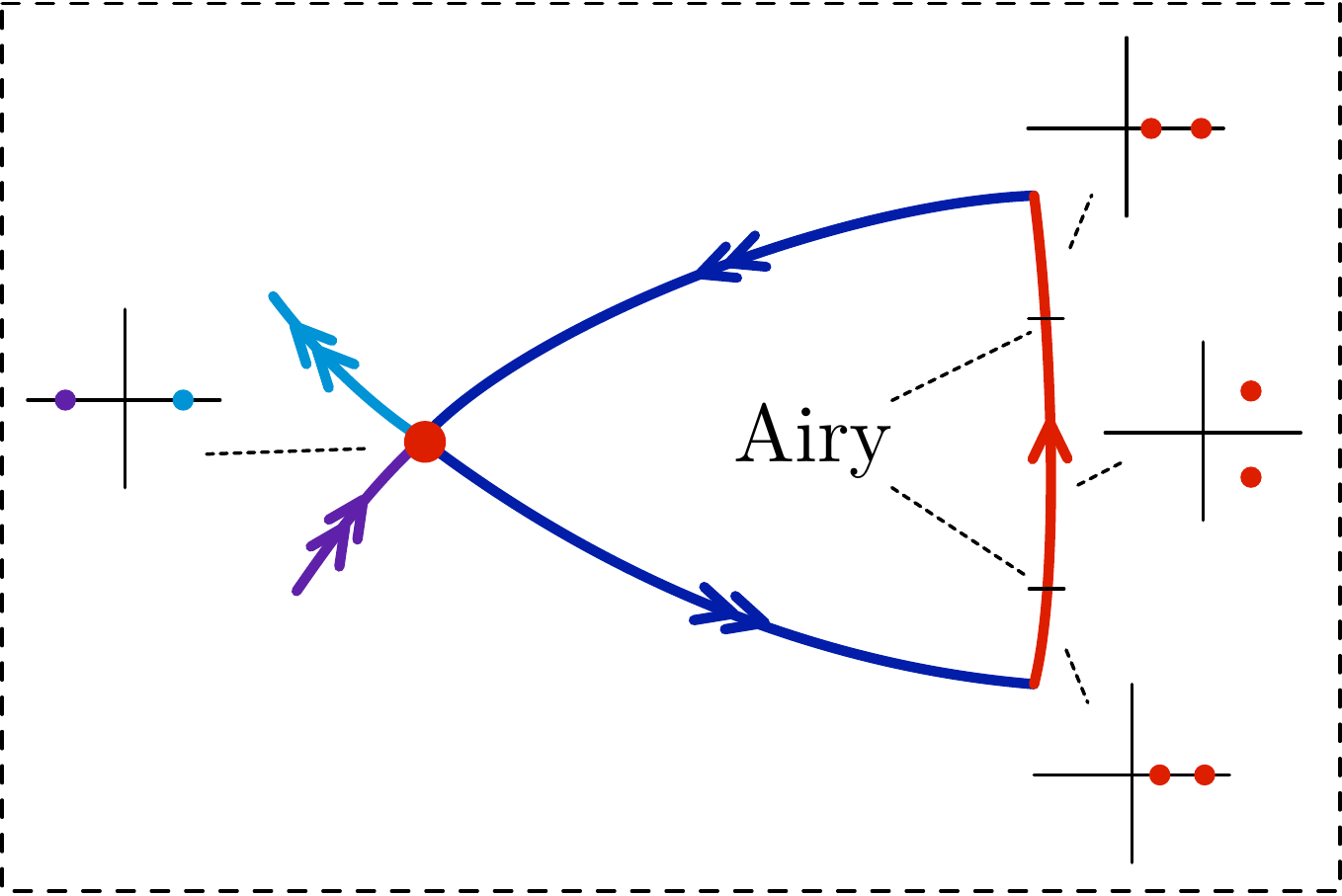}
\caption{}
\end{subfigure}
\caption{Illustrated are two distinct types of entry/exit along a slow manifold that exhibits\red{, in any exponentially weighted space,} a different Morse index from the asymptotic equilibrium. Panel~(a): Via fast jumps directly to and from the portion of the slow manifold that has a different Morse index. Panel~(b): Via a change of the Morse index during the drift along the slow manifold, through an ``Airy transition", and again fast jumps to or from the manifold.}
\label{f:slowabscartoon}
\end{figure}

\section{Spectra near slow manifolds -- prototypical examples}\label{s:proto}

In preparation of the main results and technicalities involved, we consider a few simple examples which demonstrate how the accumulation of eigenvalues can arise for eigenvalue problems with constant or slowly-varying coefficients, which are posed on increasingly large domains with fixed boundary conditions. We informally define the absolute spectrum $\Sigma_\mathrm{abs}$ in each case to be the set on which such accumulation occurs.

\subsection{Scalar equations with constant coefficients}

As a first (and perhaps simplest) example, recall the situation for the basic Sturm-Liouville problem
\[
\eps^2 u_{\xi\xi} = \lambda u,
\]
posed on $\xi\in[0,1]$ with homogeneous Dirichlet boundary conditions. The dispersion relation $\eps^2\nu^2=\lambda$ readily implies $\lambda = -(\eps n \pi)^2$, $n\in \Z_{>0}$, that is, a harmonic oscillator. Moreover, this implies an accumulation of eigenvalues as $\eps\to 0$ on any non-positive real number. Indeed, $\R_-$ is in this case the absolute spectrum $\Sigma_\abs$. Note that the corresponding eigenfunctions oscillate with increasing spatial frequency as $n\to\infty$, and that, upon rescaling $\xi=\eps \zeta$, this is equivalent to the eigenvalue problem 
\[
u_{\zeta\zeta} = \lambda u.
\]
posed on the domain $\zeta \in[0,1/\eps ]$ which grows without bound as $\eps \to0$.

\subsection{Scalar equations with slowly varying coefficients}

We now consider a slightly more general situation in which the coefficients of the eigenvalue problem are allowed to vary slowly; that is, they vary on the $\xi$-timescale, so that the eigenvalue problem now reads
\begin{equation}\label{e:proto}
u_{\zeta\zeta} = \Lambda(\eps \zeta;\lambda) u
\end{equation}
with slowly varying coefficient $\Lambda(\xi;\lambda)$ depending on both $\zeta$ and the eigenvalue parameter $\lambda$, where again we consider the domain $\zeta\in[0,1/\eps ]$ as $\eps \to0$. We also remark that an advection term $c u_\zeta$ (e.g.\ from a comoving frame) can be removed by an exponential weight, setting $\tu(\zeta) = {\rm e}^{-c\zeta/2} u(\zeta)$. The dispersion relation at a frozen value of $\xi=\xi_0$ reads $\nu^2 = \Lambda(\xi_0;\lambda)$ so that the spatial eigenvalues are
\[
\nu_\pm = \pm\sqrt{\Lambda(\xi_0;\lambda)}.
\]

Guided by the previous example (and standard Sturm-Liouville theory), one may expect that eigenvalues appear for values of $\xi_0,\lambda$ such that the quantity $\Lambda(\xi_0;\lambda)$ is non-positive, which gives a local (in $\xi$) frequency of the eigenfunction. The corresponding set of $\lambda$ on which eigenvalues accumulate corresponds to the absolute spectrum; in particular this spectrum is unstable if it extends into the region of positive $\lambda$. 

To see how eigenvalues accumulate on this set, we consider the specific example of $\Lambda(\xi;\lambda):= \Lambda_0 + \lambda -\Lambda_1 \xi$, $\Lambda_1>0$, $\lambda\in\R$, \blue{$\Lambda_0\in\R$}. Here we compute the spatial eigenvalues as
\[
\nu_\pm = \pm\sqrt{\Lambda_0 + \lambda -\Lambda_1 \xi_0},
\]
which gives the absolute spectrum as $\Sigma_\abs(\xi_0) = (-\infty,\Lambda_1 \xi_0-\Lambda_0]$. Since $\Lambda_1>0$, it is unstable for $\xi_0>\Lambda_0/\Lambda_1$ and stable otherwise, being marginal at $\xi_0=\Lambda_0/\Lambda_1$. 

In order to solve for eigenfunctions, it does not suffice to consider the system for frozen $\xi$; we solve on the full domain, taking \blue{ for simplicity again homogeneous Dirichlet } boundary conditions at $\xi=0$ and $\xi=1$, i.e.,  $\zeta$ varies over the interval $[0,1/\eps]$. Without loss of generality, we take $\Lambda_1>0$, and we note that the union of values of $\lambda$ which contribute to the absolute spectrum as the slow variable $\xi$ varies over the domain $[0,1]$ is given by 
\[
\slowabs([0,1]):=\bigcup_{\xi\in[0,1]}\Sigma_\abs(\xi)=\Sigma_\abs(1)=(-\infty,\Lambda_1-\Lambda_0],
\]
which we refer to as the associated \emph{slow absolute spectrum}. 

When solving the resulting boundary value problem, we distinguish two cases: The \emph{uniformly oscillating} case occurs if $\Lambda(\xi;\lambda)<0$ on $\xi\in[0,1]$, i.e., due to monotonicity $\lambda\in\mathring\Sigma_\abs(0)$. In this case, by Sturm-Liouville theory (and similarly to the case of constant coefficients), we find eigenfunctions which oscillate on the full domain, though now with a slowly varying frequency of oscillation. The second case, which we call the \emph{Airy transition} case, occurs if there is a sign change of $\Lambda(\xi;\lambda)$ on $[0,1]$; the corresponding eigenfunctions then oscillate only on part of the domain. Note this is necessarily such that the fixed $\lambda$ moves into the absolute spectrum. If this occurs for $\lambda=0$, the absolute spectrum changes from being stable to being unstable along $\xi$.  We discuss these cases in further detail below.

\paragraph{Uniformly oscillating case.} Let $\lambda\in \mathring\Sigma_\abs(0)$, where $\omega(\xi):= \sqrt{-\Lambda(\xi;\lambda)}\in\R$. Scaling $q:=\omega^{-1}u_\zeta$ gives 
\[
q_\zeta = \frac{u_{\zeta\zeta}}{\omega} - \frac{\omega_\zeta}{\omega^2}u_\zeta = 
-\omega u \blue{-} \frac{\eps\Lambda_1}{2\omega^2}q
\]
and thus \eqref{e:proto} can be written as the system
\begin{align}
\begin{pmatrix} u_\zeta\\ q_\zeta\end{pmatrix} = 
\begin{pmatrix}0 & \omega\\ -\omega & \blue{-}\eps \frac{\Lambda_1}{2\omega^2}\end{pmatrix} 
\begin{pmatrix}u\\q\end{pmatrix}. 
\end{align}
Polar coordinates $(u,q)=r(\cos(\theta),\sin(\theta))$ yield the angular equation
\[
\theta_\zeta = -\omega(\eps \zeta) \blue{-} \eps\frac{\Lambda_1 \sin(\theta)\cos(\theta)}{2\omega^2(\eps \zeta)} = -\omega(\eps \zeta) + \calO(\eps)
\]
since $\xi\in [0,1]$ and $\omega$ is bounded away from zero. We thus have a phase jump of $\theta$ along $\xi\in[0,1]$ given by
\begin{align}\label{e:protocond}
\theta(0)-\theta(1/\eps) = \int_{0}^{1/\eps}\omega(\eps \zeta)\rmd \zeta + \calO(1),
\end{align}
\blue{and for linear boundary conditions} $\lambda\in \mathring \Sigma_\abs(0)$ is an eigenvalue precisely \blue{ if } $\theta(0)-\theta(1/\eps)\equiv 0 \mod\pi$. 
With $\lambda$-dependence made explicit, we rescale \eqref{e:protocond} as
\[
\Delta(\lambda,\eps):= \eps (\theta(0)-\theta(1/\eps)) =  \eps \int_{0}^{1/\eps}\omega(\eps \zeta;\lambda)\rmd\zeta + \calO(\eps) 
= \int_{0}^1 \omega(\xi;\lambda)\rmd \xi + \calO(\eps)
\]
so that  the condition for an eigenvalue is
\begin{align}\label{e:protocond0}
\Delta(\lambda,\eps) \in \{\eps k\pi : k\in\Z\}.
\end{align}
Notably the right hand side forms a uniform grid with gap size $\eps\pi$.
The key observation is that $\Delta$ converges as $\eps\to 0$ to the average 
\[
\Delta(\lambda,0) = \int_{0}^1 \omega(\xi;\lambda)\rmd \xi,
\]
and if $\partial_\lambda \Delta(\lambda,0) \blue{\neq} 0$ then, for  $0<\eps \ll 1$, we have $\Delta(\lambda,\eps)$ varies linearly in $\lambda$ over a uniform interval. Hence, \eqref{e:protocond0} holds for a set of $\lambda$-values with uniformly increasing density as $\eps\to 0$ \blue{ and cardinality } $\calO(1/\eps)$. In particular, for real $\lambda$ we have
\begin{align}
\partial_\lambda \omega(\xi;\lambda) = -\blue{\frac 1 2} \omega^{-1}(\xi;\lambda)\blue{<}0,
\end{align}
which implies $\partial_\lambda \Delta(\lambda,0) \blue{<} 0$.

\medskip
\paragraph{Airy transition case.} Here we consider a sign change in $\Lambda(\xi;\lambda)$ between the boundary conditions. For simplicity \blue{ we take the boundary conditions} at $\xi=-1,1$ and perturb $\lambda= -\Lambda_0 $\blue{ , where $\Lambda(\xi;-\Lambda_0)=-\xi\Lambda_1$} so that the sign change occurs initially at $\xi=0$; we refer to this point as an Airy point. We also assume $0\blue{<}\Lambda_1\blue{<}1$ \blue{ so that $0> \Lambda(\xi;\lambda)> -1$} for $\xi\in\blue{(}0,1]$. 

Since $\Lambda$ passes through zero we cannot rescale $u_\zeta$ as in the previous case, but instead write \eqref{e:proto} in the canonical first order form with $q=u_\zeta$ as
\begin{align}
\begin{pmatrix}u_\zeta\\ q_\zeta\end{pmatrix} = 
\begin{pmatrix}0 & 1\\ \Lambda & 0\end{pmatrix} 
\begin{pmatrix}u\\q\end{pmatrix}. 
\end{align}
Note that the matrix changes from hyperbolic with eigenvalues $-\sqrt{\Lambda}<0<\sqrt{\Lambda}$ to elliptic with eigenvalues $\pm\sqrt{-\Lambda}\in \rmi \R$ as $\xi$ increases through zero. The angular equation for polar coordinates with angle $\psi$ reads
\begin{align}\label{e:psi}
\dot\psi = (1+\Lambda(\eps \zeta))\cos^2(\psi)-1 \blue{=: f(\eps\zeta, \psi).}
\end{align}
\blue{We are interested in the solution $\psi_\eps$ with $\psi_\eps(-1/\eps)=\pi/2$ at the left boundary and its phase dynamics until the right endpoint, $\psi_\eps(1/\eps)$. The scaled phase jump $\tDelta(\xi_2):= \eps (\psi_\eps(-1/\eps)-\psi_\eps(\xi_2/\eps))$ for $\xi_2\in(0,1)$ captures this over the Airy point and into a uniformly oscillating region, where the behaviour is as above. Since $\partial_\lambda f(\xi, \psi) = \cos^2(\psi)$ we have $\partial_\lambda\dot\psi_\eps(\zeta)$ is strictly increasing and $\partial_\lambda\psi_\eps(\zeta)>0$ for all $\zeta\in(-1/\eps,1/\eps)$. 
This means $\partial_\lambda\tDelta(\xi_2)<0$ for any $\xi_2\in(-1,1]$, even before the Airy point. Hence, in this case the behaviour over the Airy point amplifies that over the uniformly oscillating region as derived above. The accumulation of eigenvalues now follows  from convergence of $\partial_\lambda \tDelta(\xi_2)$ as $\eps\to 0$ to a non-zero value for any fixed $\xi_2\in(0,1)$, i.e., the left boundary condition on the uniformly oscillating region $\xi\in [\xi_2,1]$; we omit the details and refer to the more general results of \S\ref{s:gentheory}.}

\section{Absolute spectra near slow manifolds}\label{s:slowabsnew}

In the previous section, we showed how eigenvalues can accumulate when solving a linear boundary value problem with slowly varying coefficients as a timescale parameter goes to zero. We now extend this notion to PDE eigenvalue problems associated with traveling waves to singularly perturbed PDEs whose profiles pass near a slow manifold with real absolute spectrum relative to the asymptotic rest state. \blue{In this section, we set up the general assumptions and structure of the resulting eigenvalue problem, and in~\S\ref{s:gentheory} we present our results on eigenvalue accumulation in this general setting. }

\subsection{Setup}\label{s:slowabssetup}

\blue{We now describe the abstract problem that we will investigate and relate it to the PDE eigenvalue problem associated with a traveling pulse solution. In order to simplify the exposition we will focus on the abstraction of cases that occur in the FitzHugh--Nagumo system~\eqref{e:1}. 

In \S\ref{s:slowabsnew} and \S\ref{s:gentheory}, we focus on the following setup. 
Given $\eps_0,T>0$, we assume that $\xi$ lies in $[0,T]$, the eigenvalue variable $\lambda$ varies in a fixed compact domain in $\mathbb{C}$, and $\eps$ lies in the interval $[0,\eps_0]$. We assume that $A_\slow(\xi;\lambda, \eps)$ is a function with values in $\mathbb{C}^{n\times n}$ that is continuous in its arguments $(\xi,\lambda,\eps)$ and continuously differentiable in $(\xi,\lambda)$. We also assume that there is a constant  $C>0$ so that
\[
\|A_\slow(\xi;\lambda, \eps) - A_\slow(\xi;\lambda, 0) \| \leq C \eps
\]
uniformly in $(\xi,\lambda,\eps)$. We define $A(\zeta;\lambda,\eps):=A_\slow(\eps \zeta;\lambda, \eps)$ for $0\leq\zeta\leq T/\eps$ and consider the linear differential equation
\begin{align}\label{e:genslowlinear}
U_\zeta = A(\zeta;\lambda, \eps)U, \quad 0\leq\zeta\leq T/\eps,
\end{align}
with corresponding evolution operator $\Phi(\zeta,\tilde{\zeta};\lambda,\eps)$.
We assume that $Q^\pm(\lambda, \eps)$ are families of vector spaces in $\mathbb{C}^n$, which are continuous in $\eps\in(0,\eps_0]$ and continuously differentiable in $\lambda$ for $\eps\in(0,\eps_0]$. Finally, we assume that $A_\slow(\xi;\lambda, \eps)$ and $Q^\pm(\lambda, \eps)$ are real when $\lambda$ is real. Our goal is to identify, for each fixed $0<\eps\leq\eps_0$, the set of $\lambda$ for which \eqref{e:genslowlinear} has a nontrivial solution $U(\zeta)$ that satisfies $U(0)\in Q^-(\lambda,\eps)$ and $U(T/\eps)\in Q^+(\lambda,\eps)$. We will accomplish this task by comparing solutions to \eqref{e:genslowlinear} with solutions to the family
\begin{align}\label{e:genslowlinear0}
U_\zeta = A_\slow(\xi;\lambda, 0)U
\end{align}
of autonomous problems where $\xi\in I_\mathrm{slow}:=[0,T]$ enters as a parameter\red{, and where $\lambda$ lies in the slow absolute spectrum as described below}.

We now relate the abstract problem we just introduced to the eigenvalue problem associated with a traveling pulse solution $\phi_\eps(\zeta)$ in a singularly perturbed traveling wave ODE, where $\zeta=x+ct$ is the traveling wave coordinate, and $\eps$ is the timescale separation parameter. In general, the linear stability problem for the wave can be written as a first order nonautonomous linear system of the form (\ref{e:genslowlinear}); see, for instance, \cite{sandstede2002stability}, or~\S\ref{sec:applytofhn} below in the specific case of the FitzHugh--Nagumo system. Nontrivial bounded solutions of (\ref{e:genslowlinear}) for a given value of $\lambda$ that decay exponentially for $\zeta \to \pm \infty$ correspond to PDE eigenfunctions with eigenvalue $\lambda$; the set of such $\lambda$ characterizes the point spectrum associated to the pulse. 

We can characterize decaying solutions by inspecting the limiting constant coefficient system obtained by taking $\zeta \to \pm \infty$ in~\eqref{e:genslowlinear}. If the limiting matrix is hyperbolic, its stable/unstable eigenspaces can be evolved forwards/backwards to finite values of $\zeta$, resulting in $\zeta$-dependent subspaces $Q^\pm(\zeta;\lambda, \eps)$. Any eigenfunction must lie in the intersection of $Q^\pm(\zeta;\lambda, \eps)$, and evaluating these spaces at $\zeta=0,T/\eps$ results in the boundary conditions $U(0)\in Q^-(0;\lambda, \eps)$ and $U(T/\eps)\in Q^+(T/\eps;\lambda, \eps)$ mentioned in our abstract setup. 

Next, we describe the assumptions on the profile $\phi_\eps(\zeta)$ of the traveling pulse that will justify the hypotheses on the structure of the matrix $A(\zeta;\lambda,\eps)$ in \eqref{e:genslowlinear}. We assume that $\phi_\eps(\zeta)$ lies within distance $\mathcal{O}(\eps)$ of a normally hyperbolic slow manifold $\M_\eps$ for $\zeta \in [0,T/\eps]$ for some fixed $T>0$. We assume that the flow on the slow manifold is $\mathcal{O}(\eps)$, which allows us to write $\phi_\eps(\zeta)=\phi_\mathrm{slow}(\eps\zeta; \eps)+\mathcal{O}(\eps)$, where $\phi_\mathrm{slow}(\xi; \eps)$ is a solution in $\M_\eps$ formulated in the slow variable $\xi=\eps \zeta$. Finally, we assume that $\phi_\mathrm{slow}(\xi; \eps)$ is continuous in $\eps\in[0,\eps_0]$ for some $\eps_0>0$ with $|\phi_\mathrm{slow}(\xi; \eps)-\phi_\mathrm{slow}(\xi; 0)|\leq C\eps$. In other words, the base points of $\phi_\eps(\zeta)$ in the slow manifold $\M_\eps$ lie on a trajectory $S_\eps$ given by the slow orbit $\phi_\mathrm{slow}(\xi; \eps)$, and we have $S_\eps\to S_0$ as $\eps\to0$.  We remark that the endpoint $T$ may, in general, also depend on $\eps$, but we can remove this dependency through a linear rescaling of the independent variable $\zeta$ in Fenichel's normal form near the slow manifold $\M_\eps$.

In \S\ref{s:accumulation_withlayer}, we will generalize our abstract setup to the case where the traveling pulse solution $\phi_\eps(\zeta)$ leaves the $\mathcal{O}(\eps)$ neighborhood of the normally hyperbolic slow manifold $\M_\eps$ along a fast unstable fiber, thus adding a fast exit layer to its profile. The transition along the unstable fiber away from the slow manifold will occur for $\zeta$ in the interval $[T/\eps,T/\eps+L_\eps]$ of length $L_\eps=\calO\left(|\log\eps|\right)$. We will assume that the fast layer part of the profile converges to a fixed unstable fiber as $\eps\to0$. We will also need to modify our assumptions on the boundary subspace $Q^+(\lambda, \eps)$ as this subspace is now evaluated at $\zeta=T_\eps:=T/\eps+L_\eps$. The precise assumptions will be described in Hypotheses~\ref{h:bl} and~\ref{h:bl_vector} in \S\ref{s:accumulation_withlayer} and verified for the FitzHugh--Nagumo system in \S\ref{sec:applytofhn}.

Though the boundary value problem for \eqref{e:genslowlinear} can be considered independently of the travelling wave context, we will continue to make references to the family of travelling wave profiles $\phi_\eps$ and the slow manifold $\M_\eps$ as this eases the exposition.
}

\blue{In the remainder of \S\ref{s:slowabsnew}, we define the notion of slow absolute spectrum along the slow manifold $\M_\eps$ and collect some preliminary results concerning the above boundary value problem: In~\S\ref{s:slowabs} we define the slow absolute spectrum, and we outline several generic cases which arise for the resulting eigenvalue problem in~\S\ref{s:cases}. A key tool in setting up and solving this boundary value problem are exponential trichotomies, and we briefly review their definition in~\S\ref{s:exptrich}. In~\S\ref{s:exptrichslow}, we construct trichotomies along the slow manifold $\mathcal{M}_\eps$ and use them to write the relevant linear systems in block-diagonal form. In the case when $\M_\eps$ admits such slow absolute spectrum, we will use these trichotomies in \S\ref{s:gentheory} to solve the above boundary value problem (under certain regularity assumptions) and show that eigenvalues accumulate on this set as $\eps\to0$.}

\subsection{Slow absolute spectrum}\label{s:slowabs}

The key idea to explain the accumulation of eigenvalues is to view each point on the slow manifold $\M_\eps$ as a spatially homogeneous equilibrium point of the PDE. We are then led to accumulation of eigenvalues of the linearisation in such an equilibrium: it has been shown in \cite{SSabs} that, for the PDE posed on increasingly long intervals, eigenvalues accumulate precisely on the so-called absolute spectrum, which will be defined below in detail. 
In the slow-fast context, this naturally corresponds to extending the spatial interval near the slow part by decreasing the singular perturbation parameter. We may determine the absolute spectrum for a point on the slow manifold in the same way as for a homogeneous equilibrium state. However, we will consider real absolute spectrum only and only those connected components that consist of intervals. Let $\Sigma^0_\abs(u_0)\subset\R$ denote this part of the absolute spectrum for a point $u_0\in S_0$. 

We then define the \emph{slow absolute spectrum of a slow trajectory} as the union of these absolute spectra for all points on the slow trajectory in the singular limit, that is,
\begin{equation}\label{e:slowabs}
\slowabs(S_0) := \bigcup_{u_0\in S_0} \Sigma^0_\abs(u_0) \subset \R.
\end{equation}
We will show that for \eqref{e:genslowlinear}, as $\eps\to 0$, eigenvalues in the above sense for the problem on $[0,T/\eps]$ accumulate at each \red{$\lambda\in\slowabs(S_0)$} under regularity assumptions.

For the spectrum of a travelling wave on $\R$ there is a caveat: in the same way as for the case of a second equilibrium of \cite{SSgluing} mentioned above, only the `most unstable' part of the absolute spectrum is relevant. Let $\phi_\eps$ be a pulse  and $\Sigma^0_{\rm ess}$ the limit as $\eps\to 0$ of the essential spectrum of the asymptotic rest state. As in \cite{SSgluing}, the accumulation of eigenvalues is constrained to $\Sigma^0_\abs(u_0)\cap \Omega_1$, where $\Omega_1\subset\C$ is the connected component of $\C\setminus \Sigma_{\rm ess}^0$ that contains an unbounded part of $\R^+$. In particular, if the base state is stable, then $\Omega_1$ contains any unstable eigenvalues.

Towards determining the absolute spectrum following \cite{SSabs}, we first note that for well-posed PDE the number of unstable eigenvalues $i_\infty$ of the matrices $A$ is constant for $\Re(\lambda)$ above a uniform bound, cf. \cite{SSabs}, which we assume is uniform in $\eps$ as well. The corresponding condition for the boundary condition spaces is as follows.
%This also applies to the boundary condition subspaces $Q^\pm\subset\R^n$ at $\zeta=0,T_\eps$, which we for the moment assume to be independent of $\lambda$ and $\eps$. 

\begin{hypothesis}\label{h:well}
There is $\rho\in\R$ such that the number $i_\infty=\#\{\nu\in\C|\nu\in\mathrm{spec}(A_\slow(\xi;\lambda,\eps)),\; \Re(\nu)\geq 0\}$ of unstable eigenvalues is independent of $\xi,\lambda,\eps$ \blue{for $0\leq\eps\ll1$, $\xi\in I_\slow:=[0,T]$, and all $\lambda\in\C$ with $\Re(\lambda) \geq \rho$}, and the boundary subspaces satisfy
\begin{align}
\dim(Q^-(\lambda,\eps))=i_\infty, \qquad \dim(Q^+(\lambda,\eps))=n-i_\infty.
\end{align}
\end{hypothesis}

%In the following definition of absolute spectrum we allow for $\eps>0$ in order to relate directly to the absolute spectrum away from the singular limit.

\begin{definition}[\cite{SSabs}]\label{d:absspec}
For $0\leq \eps \ll 1$ we say that $\lambda\in\R$ is in the (regular real) absolute spectrum $\Sigma^\eps_\abs(\xi)$ for $\xi\in I_\slow$ of \eqref{e:genslowlinear} if the spatial eigenvalues $\nu_j$, $j=1,\ldots, n$ of $A_\slow(\xi;\lambda,\eps)$ ordered by decreasing real parts satisfy
\begin{align*}%\label{e:spatialevals}
\Re \nu_{i_\infty-1} >\Re \nu_{i_\infty}= \Re \nu_{i_\infty+1}>\Re \nu_{i_\infty+2},
\end{align*}
with $i_\infty$ from Hypothesis~\ref{h:well}, and $\lambda$ is not isolated in $\R$ with this property. If  $\nu_{i_\infty}=\nu_{i_\infty+1}$ we refer to $\lambda$ as a pinched double root. 
\end{definition}

Note that since $\lambda\in \Sigma^\eps_\abs(\xi)$ is real we have $\nu_{i_\infty} =\overline{ \nu_{i_\infty+1}}$ are complex conjugate. Note also that we switched to parameterise by $\xi\in I_\slow$ rather than by $u\in S_\eps$, using the relation $u=\phi_\slow(\xi;\eps)\in S_\eps$ in the singular limit. The slow absolute spectrum is now defined by \eqref{e:slowabs}. 

\subsection{Cases for slow passage with slow absolute spectrum}\label{s:cases}

In preparation of determining eigenvalues, we discuss the different cases for setting up the associated boundary value problem as already outlined in \S\ref{s:summary}, cf.\ Figure \ref{f:slowabscartoon}. We first disregard any fast part along $\phi_\eps$. In analogy to the examples in \S\ref{s:proto}, we refer to a given $\lambda$ that lies in the absolute spectrum along the entire slow part as a `uniformly oscillating' case, while $\lambda$ on the boundary of the absolute spectrum for some $\xi_\Airy(\lambda) \in I_\slow=[0,T]$ is a pinched double root, which occurs in an `Airy transition'. In particular, we have the following cases:

\begin{itemize}
\item \textbf{Uniformly oscillating}: Here $\lambda\in\cap_{\xi\in I_\slow}\Sigma^0_\abs(\xi)$ and there is $\delta>0$ such that $|\Im(\nu_{i_\infty}-\nu_{i_\infty+1})|\geq \delta$ on $I_\slow$ or a subinterval $I_\osc$. This case lies at the core of all our accumulation results. If appropriate boundary conditions can be prescribed at the endpoints of this interval, then we will show that this system can be solved in a straightforward manner, similar to the prototype equations in~\S\ref{s:proto}.
However, in practice it will only be possible to prescribe boundary conditions outside of this interval. In particular, the solution will enter/exit a uniformly oscillatory region by passing through either an Airy transition or a fast layer. 

\item \blue{\textbf{Airy transition entry}: Here $\lambda$ is such that there is $0<\xi_\Airy(\lambda)<T$ so that $\lambda$ a pinched double root at $\xi=\xi_\Airy(\lambda)$, $\lambda\notin\red{\slowabs}([0,\xi_\Airy))$, %$\lambda\notin\Sigma^0_\abs([0,\xi_\Airy))$, 
and for each $\xi_+>\xi_\mathrm{Airy}(\lambda)$ the uniformly oscillating case occurs on $\left[\xi_+,T\right]$. In other words, a uniformly oscillating region is encountered as the underlying traveling wave passes \emph{along the slow manifold itself}. 
We refer to the location of the pinched double root $\xi=\xi_\Airy(\lambda)$ as an Airy point.}
%Typically, $\xi_\Airy(\lambda)$ is monotone in $\lambda$. 

Since the passage through Airy transition is still on the slow timescale, the boundary value problem across this region can be solved in a similar manner as in the uniformly oscillating case. In the case of FitzHugh--Nagumo the relevant absolute spectrum destabilises and restabilises through the origin when moving along the slow manifold, i.e., as Airy transitions. Hence, any $\lambda\in(0,\lambda_*)$ will be an accumulation point with $\lambda_*$ the maximally unstable point of these absolute spectra. 

\end{itemize}
 
 \blue{In summary, fixing a value of $\lambda_0\in \mathbb{R}$ and considering values of $\lambda$ in a small neighborhood of $\lambda_0$ in $\mathbb{C}$, in the canonical case -- without fast layers -- the interval $I_\slow=[0,T]$ on the slow time scale decomposes as 
\begin{equation}\label{e:decomp}
[0,T] = I_\slow = I_\Airy \cup I_\osc,
\end{equation}
where in the `Airy case' $I_\Airy=[0,\xi_+]$ is a non-trivial compact interval that contains an Airy point in its interior, and $I_\osc=[\xi_+,T]$ is a non-trivial compact interval with uniform oscillations. In the `uniformly oscillating case', we treat the interval $I_\mathrm{Airy}$ as empty, i.e. $\xi_+=0$. 

We note that the transition point $\xi_+=I_\Airy\cap I_\osc$ is not unique and we shall make a suitable choice below.  In Figure~\ref{f:intervals} we show the notation in relation to the interval $[0,T]$ that we have introduced so far and will further develop in the following sections. We refer to~\S\ref{s:exptrichslow} for the precise hypotheses concerning the uniformly oscillating and Airy transition cases, as well as the construction of exponential trichotomies for~\eqref{e:genslowlinear} on the interval $\zeta\in\eps^{-1}I_\slow$.  We emphasize that only minor modifications to the following analysis are required to consider other configurations, e.g. the inclusion of an Airy transition exit interval.

 Lastly we will consider the addition of fast layers. We briefly mention this case here, but we delay discussion of the precise hypotheses for this case and modifications to the prescribed boundary conditions to~\S\ref{s:accumulation_withlayer}.}
 
 \begin{itemize}
\item  \textbf{Fast layers}:  As mentioned previously, in general there may be fast layers of $\phi_\eps$ entering and exiting from the slow region, on which an Airy transition and/or uniform oscillations may take place. The most relevant case for FitzHugh--Nagumo is that the solution enters the oscillatory region on the slow manifold via an Airy transition, and exits via a fast layer or a second Airy transition. Since a fast entry layer is not relevant for our consideration of the FitzHugh--Nagumo equation, and since a fast entry layer gives the same order of error terms as a fast exit layer, we only discuss the latter. In the case of a fast exit layer, the solution leaves the uniformly oscillatory region of the slow manifold by ``jumping off" along one of the hyperbolic fast directions, and the exit boundary condition is prescribed at some small, but positive, distance from the slow manifold. 
\end{itemize}

%\JR{Perhaps tone down this `most relevant' a bit since the Airy exit is also relevant, I think.}
%\PC{Yes, 'most relevant' here means fast exit layer as opposed to fast entry layer. The Airy case does indeed appear on both sides. I've adjusted this part a bit.}

%\JR{About the phrase ``In fact, this scenario of Airy transitions at the origin is natural for canard trajectories if the unstable branch is absolutely unstable.'' Should this be included here? A perhaps a worthwhile issue is to think more/explain the canard scenario: if we assume no other fold or Canard occurs, the Airy point must (?) stem from a pair of spatial eigenvalues with one from the index set $i\leq i_\infty$ and the other from $i\geq i_\infty$.}

%\PC{For the last point, is this true? There could be an Airy transition in the sense of two eigenvalues, say both from $i\geq i_\infty$ which collide and split as a complex conjugate pair? Or perhaps I've misunderstood the last comment.}

\subsection{Exponential dichotomies and trichotomies}\label{s:exptrich}
We briefly review the notion of exponential dichotomies and trichotomies. Consider the linear system
\begin{align}\label{e:expdichex}
U_\zeta = A(\zeta)U
\end{align}
and let $\Phi(\zeta,\tilde{\zeta})$ denote the corresponding evolution operator. 

\begin{definition}
The linear system~\eqref{e:expdichex} admits an \emph{exponential dichotomy} on an interval $I\subseteq \mathbb{R}$, with constants $C,\mu>0$, and projections $P^{\mathrm{u},\mathrm{s}}(\zeta), \zeta\in I$ if the following hold for all $\zeta,\tilde{\zeta}\in I$.
\begin{enumerate}[(i)]
\item $P^\mathrm{u}(\zeta)+P^\mathrm{s}(\zeta)=1$
\item $\Phi(\zeta,\tilde{\zeta})P^{\mathrm{u},\mathrm{s}}(\tilde{\zeta})=P^{\mathrm{u},\mathrm{s}}(\zeta)\Phi(\zeta,\tilde{\zeta})$
\item $|\Phi(\zeta,\tilde{\zeta})P^{\mathrm{s}}(\tilde{\zeta})|, |\Phi(\tilde{\zeta},\zeta)P^{\mathrm{u}}(\zeta)|\leq Ce^{-\mu(\zeta-\tilde{\zeta})},  \zeta\geq \tilde{\zeta}.$
\end{enumerate}
\end{definition}
It is occasionally convenient to define corresponding (un)stable evolution operators via $\Phi^{\mathrm{u},\mathrm{s}}(\zeta,\tilde{\zeta}):=\Phi(\zeta,\tilde{\zeta})P^{\mathrm{u},\mathrm{s}}(\tilde{\zeta})$.

A related notion is that of exponential trichotomies, which allow for a center subspace. 
\begin{definition}
The linear system~\eqref{e:expdichex} admits an exponential trichotomy on an interval $I\subseteq \mathbb{R}$, with constants $C>0$ and $\mu_1>\mu_2>0$ and projections $P^{\mathrm{uu},\mathrm{c}, \mathrm{ss}}(\zeta), \zeta \in I$, if the following hold for all $\zeta,\tilde{\zeta}\in I$
\begin{enumerate}[(i)]
\item $P^\mathrm{uu}(\zeta)+P^\mathrm{c}(\zeta)+P^\mathrm{ss}(\zeta)=1$
\item $\Phi(\zeta,\tilde{\zeta})P^{\mathrm{uu}, \mathrm{c}, \mathrm{ss}}(\tilde{\zeta})=P^{\mathrm{uu}, \mathrm{c}, \mathrm{ss}}(\zeta)\Phi(\zeta,\tilde{\zeta})$
\item $|\Phi(\zeta,\tilde{\zeta})P^{\mathrm{ss}}(\tilde{\zeta})|, |\Phi(\tilde{\zeta},\zeta)P^{\mathrm{uu}}(\zeta)|\leq Ce^{-\mu_1(\zeta-\tilde{\zeta})}, \zeta\geq \tilde{\zeta}$
\item $|\Phi(\zeta,\tilde{\zeta})P^{\mathrm{c}}(\tilde{\zeta})|\leq Ce^{\mu_2|\zeta-\tilde{\zeta}|}.$
\end{enumerate}
\end{definition}
We \blue{will sometimes use the corresponding evolution operators defined} via $\Phi^{\mathrm{uu},\mathrm{c},\mathrm{ss}}(\zeta,\tilde{\zeta}):=\Phi(\zeta,\tilde{\zeta})P^{\mathrm{uu},\mathrm{c},\mathrm{ss}}(\tilde{\zeta})$.

Exponential dichotomies and trichotomies are useful in the analysis of nonautonomous linear systems through the separation of the dynamics into spaces of exponentially growing or decaying solutions; furthermore, they are robust under perturbations of the underlying linear system~\eqref{e:expdichex}, a property referred to as roughness. It is known that exponential di-/trichotomies can be constructed for linear systems~\eqref{e:expdichex}, whose coefficients vary slowly, a property we will exploit below. For background on the theory of exponential di-/trichotomies and their properties, see~\cite{Cop67, Cop78, Pal82}.

This notion of exponential trichotomies in particular will be essential in solving the boundary value problem which arises for eigenvalue accumulation due to slow absolute spectrum, by showing that the dynamics restricted to an appropriate center subspace can be analyzed similarly to the prototype examples from~\S\ref{s:proto}.

\subsection{Exponential trichotomies along the slow region}\label{s:exptrichslow}

\begin{figure}
\begin{center}
\includegraphics{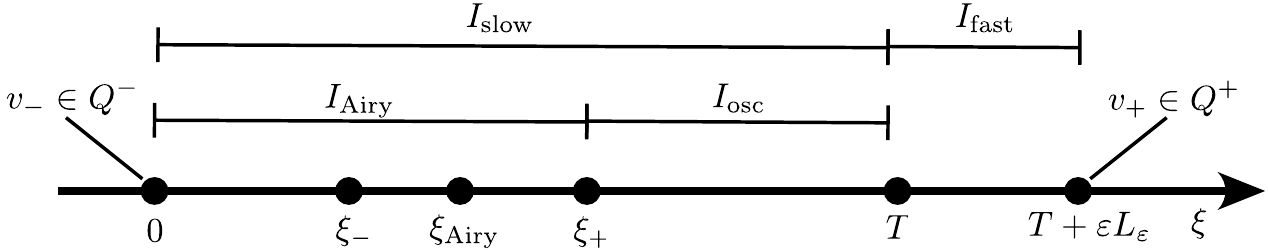}%figs/figure_10}
\caption{\blue{Sketch of the decomposition of the interval on the $\xi$-scale as in \eqref{e:decomp} and the notation used. In the case without fast layers, the boundary condition $Q^+$ space is fixed at $\xi=T$. When including a fast exit layer, interval $I_\mathrm{fast}$ is appended to the end of $I_\mathrm{slow}$ and the boundary subspace $Q^+$ is instead fixed at $\xi=T+\eps L_\eps$; see~\S\ref{s:accumulation_withlayer}.}}
\label{f:intervals}
\end{center}
\end{figure}

\blue{In this section we construct exponential trichotomies for the system~\eqref{e:genslowlinear} on the interval $[0,T/\eps]=\eps^{-1}I_\mathrm{slow}$. We begin with the uniformly oscillating regime in~\S\ref{s:exptri_osc}, and then discuss modifications to this argument when including an Airy transition interval in~\S\ref{s:exptri_airy}.}

\subsubsection{Uniformly oscillating regime \blue{$I_\mathrm{slow}=I_\osc$}}\label{s:exptri_osc}
We first consider the dynamics in the uniformly oscillatory regime $\xi\in I_\osc $

\begin{hypothesis}\label{h:osc}
%The matrices $A(\zeta;\lambda,\eps)$ are continuously differentiable in $(\zeta,\lambda,\eps)\in (\eps^{-1} I_\slow) \times\C\times[0,\eps_0)$ for some $\eps_0>0$. 
%There is $\lambda\in\R$ such that for
%There is 
$\lambda_0\in \R$ is such that for all $\xi\in I_\osc$, cf.\ \eqref{e:decomp}, and $0\leq \eps\ll1$ the eigenvalues $\nu_i(\xi;\lambda_0,\eps)$ of $A_\slow(\xi;\lambda_0,\eps)$ satisfy, %at $\eps=0$,
\begin{align}\label{e:spatialevals}
\Re \nu_{i_\infty-1} >\Re \nu_{i_\infty}= \Re \nu_{i_\infty+1}>\Re \nu_{i_\infty+2},\; 
\Im(\nu_{i_\infty}-\nu_{i_\infty+1}) \neq 0.
\end{align}
%In addition, $\frac{\rmd}{\rmd \lambda} (\nu_{i_\infty}-\nu_{i_\infty+1}) \neq 0$ for all $(\xi,\lambda)\xi\in I_\osc\times J$.
\end{hypothesis}

%\JR{Changes to the hypothesis: Since complex conjugate pairs are robust under real perturbations of the matrix, \eqref{e:spatialevals} holds for $0\leq \eps\ll1$ anyway. Hence, I removed the fix of $\eps=0$ for \eqref{e:spatialevals} and removed the additional assumption that was later used to ensure \eqref{e:spatialevals} for $\eps>0$.}
%\PC{Don't we need the transversality condition on the imaginary parts to ensure later that $\partial_\lambda \omega \neq0$?}
%\JR{I do not think this is needed here. It is related to the assumption \eqref{e:delpr}, but I did not find that $\partial_\lambda \omega \neq0$ was needed.}

\begin{remark}\label{r:Iabs}
The setting of this hypothesis is the core for eigenvalue accumulation and relates to the slow absolute spectrum as follows. Since $\nu_{i_\infty} =\overline{ \nu_{i_\infty+1}}$ is robust against perturbing $\lambda$ along $\R$, the hypothesis in fact holds for $\lambda_0$ on an interval $\Iabs\subset \R$. Then $\Iabs\subset \Sigma_\abs^\slow(S_0)$ and each $\lambda_0\in \Iabs$ satisfies the property of the uniformly oscillating case. 
\end{remark}

\begin{remark}\label{r:weight}
If  \eqref{e:spatialevals} holds, then $\nu_{i_\infty}(\xi;\lambda,\eps) \in \rmi\R$ can always be arranged by introducing the $\xi$-dependent exponential weight $\Re\nu_{i_\infty}$ in \eqref{e:genslowlinear}, which shifts all spatial eigenvalues and does not change the relevant point spectrum. %We shall therefore assume this is the case. 
As mentioned, for the spectrum of a travelling wave on $\R$ only the absolute spectrum in $\Omega_1\subset \C$ is relevant. This stems from the fact that the exponential weight should not overcompensate exponential convergence in the tails of $\phi_\eps$ to the asymptotic state, cf.\ \cite[Def. 3.4]{SSabs}. 
%
%The notation for the index $i_\infty$ relates to the motivating problem of stability of traveling pulses, in which case the boundary spaces arise from dichotomy subspaces at $\pm \infty$, and $i_\infty$ is the Morse index of its asymptotic rest state, since we assume this is stable.
\end{remark}

The oscillatory dynamics on $I_\osc$ can be made explicit in suitable coordinates, which will be the basis to solve for eigenvalues.

\begin{lemma}\label{l:diag}
Under Hypothesis~\ref{h:osc}, for $\lambda$ in a neighborhood of $\lambda_0$ in $\C$, equation \eqref{e:genslowlinear} admits exponential trichotomies for \red{$\zeta \in \eps^{-1} I_\osc$} relative to the exponential weight $\Re\nu_{i_\infty}$ \blue{with $\eps$-independent constants $C,\mu$,  stable/unstable/center trichotomy projections $P^{\ss,\uu,\cc}(\zeta;\lambda,\eps)=P^{\ss,\uu,\cc}_\mathrm{slow}(\eps\zeta;\lambda,\eps)$, subspaces $E^{\ss,\uu,\cc}(\zeta;\lambda,\eps)=E^{\ss,\uu,\cc}_\mathrm{slow}(\eps\zeta;\lambda,\eps)$, and evolutions $\Phi^{\ss,\uu,\cc}(\zeta,\tilde{\zeta};\lambda,\eps)=\Phi^{\ss,\uu,\cc}_\mathrm{slow}(\eps\zeta,\eps \tilde{\zeta};\lambda,\eps)$, satisfying the following}
\begin{enumerate}[(i)]
\item \blue{The projections are analytic in $\lambda$, and for $\xi\in I_\osc$, we have that
\begin{align}\label{e:proj}
\left| P^{\ss,\uu,\cc}_\mathrm{slow}(\xi;\lambda,\eps) - \mathcal{P}^{\ss,\uu,\cc}_\mathrm{slow}(\xi;\lambda,\eps)  \right| = \mathcal{O}(\eps) 
\end{align}
 where $\mathcal{P}^{\ss,\uu,\cc}_\mathrm{slow}(\xi;\lambda,\eps)$ denote the spectral projections onto the stable, unstable, center eigenspaces of $A_\mathrm{slow}(\xi;\lambda,\eps)$.
}

\item There exists a non-trivial compact interval $\Iabs\subset \R$ around $\lambda_0$ such that Hypothesis~\ref{h:osc} holds for all $\lambda\in \Iabs$, and a family of coordinate changes, continuously differentiable in $(\xi, \lambda) \in I_\osc\times \Iabs$, which transforms \eqref{e:genslowlinear} to the block-diagonal form
\begin{align}\label{e:genslowlinearDiag}
\begin{pmatrix}\dot{X}\\ \dot{Y}\end{pmatrix} = \begin{pmatrix}A_{11}(\eps \zeta;\lambda, \eps)&0\\ 0 & A_{22}(\eps \zeta;\lambda, \eps) \end{pmatrix} \begin{pmatrix}X\\ Y\end{pmatrix},
%\begin{pmatrix}\dot{X}\\ \dot{Y}\end{pmatrix} = \begin{pmatrix}A_{11}(\eps x;\lambda, \eps)& \eps A_{12}(\eps x;\lambda, \eps)\\ \eps A_{21}(\eps x;\lambda, \eps)& A_{22}(\eps x;\lambda, \eps) \end{pmatrix} \begin{pmatrix}X\\ Y\end{pmatrix},
\end{align}
and satisfies
\begin{align}\label{e:oscMat}
A_{11}(\xi;\lambda, \eps) = \begin{pmatrix}0&\red{-\omega(\xi;\lambda)}%JR removed eps here and below: \omega(\xi;\lambda,\eps) 
\\\red{\omega(\xi;\lambda)}& 0\end{pmatrix} +\calO(\eps).
\end{align}
%If Hypothesis~\ref{h:osc} holds for $\lambda$ on differentiable curve $C_\abs\subset\C$, then the coordinate changes and $\omega$ are continuously differentiable in $(\xi, \lambda)\in I_\osc\times C_\abs$.
\end{enumerate}
\end{lemma}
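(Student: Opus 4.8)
The plan is to treat the two assertions separately but with a common theme: everything follows from roughness of exponential di-/trichotomies together with the fact that the coefficient matrix $A(\zeta;\lambda,\eps)=A_\slow(\eps\zeta;\lambda,\eps)$ varies slowly in $\zeta$ and is an $\mathcal{O}(\eps)$ perturbation of the autonomous family $A_\slow(\xi;\lambda,0)$.

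First I would establish the trichotomy itself. Under Hypothesis~\ref{h:osc}, for each frozen $\xi\in I_\osc$ the matrix $A_\slow(\xi;\lambda_0,0)$, after the $\xi$-dependent shift by $\Re\nu_{i_\infty}$, has exactly $i_\infty-1$ eigenvalues with real part bounded uniformly above $0$ by some $\mu>0$, exactly two eigenvalues $\nu_{i_\infty},\nu_{i_\infty+1}$ on the imaginary axis separated from the rest by a spectral gap, and $n-i_\infty-1$ eigenvalues with real part bounded uniformly below $-\mu$; this spectral separation is uniform in $\xi\in I_\osc$ by compactness and, by the stated continuity of $A_\slow$ in $\eps$ and robustness of spectral gaps, persists for $0\le\eps\ll1$ and $\lambda$ near $\lambda_0$. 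The standard result on slowly-varying systems (see \cite{Cop78, Pal82}, and the construction is exactly as in the slow-manifold Fenichel setting) then yields an exponential trichotomy for the nonautonomous system \eqref{e:genslowlinear} on $\zeta\in\eps^{-1}I_\osc$ with $\eps$-independent constants $C,\mu$: the uniform spectral gap of the frozen matrices, together with the fact that $\partial_\zeta A = \eps\, \partial_\xi A_\slow = \mathcal{O}(\eps)$, makes the off-diagonal coupling terms in the block structure $\mathcal{O}(\eps)$, which is summable against the exponential gap. I would record that the resulting projections $P^{\ss,\uu,\cc}$, subspaces $E^{\ss,\uu,\cc}$, and evolutions depend on $\zeta$ only through $\xi=\eps\zeta$ since $A$ does, justifying the $P^{\ss,\uu,\cc}(\zeta;\lambda,\eps)=P^{\ss,\uu,\cc}_\slow(\eps\zeta;\lambda,\eps)$ notation. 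Analyticity in $\lambda$ follows because $A_\slow$ is analytic in $\lambda$ (it comes from a PDE eigenvalue problem) and the Riesz-projection/trichotomy construction preserves analyticity on the set where the gap persists; the estimate \eqref{e:proj} is the quantitative statement that the slowly-varying trichotomy projection differs from the frozen spectral projection by $\mathcal{O}(\eps)$, which is again the standard adiabatic estimate, proved by writing the projection as a contour integral of the resolvent and bounding the nonautonomous correction.

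For part (ii), I would first note that since $\nu_{i_\infty}=\overline{\nu_{i_\infty+1}}$ for real $\lambda$ and complex-conjugacy of the pair is an open, robust condition along $\R$, Hypothesis~\ref{h:osc} persists on a nontrivial compact interval $\Iabs\ni\lambda_0$; restricting to $\lambda\in\Iabs$ (or a complex neighborhood) keeps all gaps uniform. The coordinate change is then built in two stages. First, use the trichotomy projections $P^{\ss,\uu,\cc}_\slow$ as a (nonautonomous, $C^1$ in $(\xi,\lambda)$) change of frame putting \eqref{e:genslowlinear} into block-diagonal form with blocks $A_{cc},A_{\uu\uu}\oplus A_{\ss\ss}$; grouping the hyperbolic blocks as $Y$ and the center block as $X$ gives precisely \eqref{e:genslowlinearDiag}. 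Second, within the $2\times2$ center block, after the weight shift the frozen matrix $A_\slow(\xi;\lambda,0)\big|_{E^\cc}$ has eigenvalues $\pm\rmi\,\omega(\xi;\lambda)$ with $\omega(\xi;\lambda):=|\Im\nu_{i_\infty}(\xi;\lambda,0)|>0$; a further $(\xi,\lambda)$-dependent linear change of coordinates (the real canonical form for a real matrix with purely imaginary eigenvalues) brings the frozen $2\times2$ center block to $\begin{pmatrix}0&-\omega\\\omega&0\end{pmatrix}$. The nonautonomous and $\eps$-dependent corrections — namely the $\partial_\zeta$-term $\eps\,\partial_\xi(\text{coordinate change})$ and the $\mathcal{O}(\eps)$ discrepancy between $A_\slow(\xi;\lambda,\eps)$ and $A_\slow(\xi;\lambda,0)$ — are all $\mathcal{O}(\eps)$ and are collected into the error term in \eqref{e:oscMat}. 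Regularity: each step is $C^1$ in $(\xi,\lambda)$ on $I_\osc\times\Iabs$ because it is built from resolvent integrals of the $C^1$ family $A_\slow$ over a contour enclosing eigenvalue clusters with uniform gap, so the composition is $C^1$.

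The main obstacle, I expect, is not any single computation but handling the loss of normal hyperbolicity of the center directions cleanly while keeping all constants $\eps$-independent and uniform over $\xi\in I_\osc$ and $\lambda\in\Iabs$. Concretely, the center block grows only subexponentially (rate $\mathcal{O}(\eps)$ on the fast scale, i.e.\ $\mu_2$ can be taken arbitrarily small), so the trichotomy roughness estimates must be organized so that the $\mathcal{O}(\eps)$ off-diagonal couplings between the center block and the hyperbolic blocks are absorbed by the genuine exponential gap $\mu$ of the hyperbolic directions and do \emph{not} feed back to spoil the center estimate over the long interval of length $T/\eps$; this is where the condition $\mu_1>\mu_2>0$ with $\mu_2=\mathcal{O}(\eps)$ and $\mu_1$ bounded below is used, and where one must be careful that the accumulated error over $\zeta\in[0,T/\eps]$ stays $\mathcal{O}(1)$ (or better, that the error in $A_{11}$ stays pointwise $\mathcal{O}(\eps)$ as claimed). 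I would handle this by the usual fixed-point/variation-of-constants argument for the graph of the center subspace over the hyperbolic ones, exactly as in the construction of slow manifolds, noting that the relevant Gronwall-type bound involves $\eps\cdot(1/\eps)=\mathcal{O}(1)$ in the exponent but the prefactor multiplying the error is the small quantity $\eps/\mu_1$, keeping the correction genuinely $\mathcal{O}(\eps)$.
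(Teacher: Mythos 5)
Your proposal is correct and follows essentially the same route as the paper: the uniform spectral splitting of the frozen matrices $A_\slow(\xi;\lambda,\eps)$ together with the slow variation and the exponential weight yields the trichotomy with $\eps$-independent constants, block-diagonalization is obtained from Palmer's lemma, and the center block is brought to rotation form by a pointwise change of variables $S(\xi;\lambda,\eps)$ whose contribution $\eps(\partial_\xi S)S^{-1}$, bounded on the compact sets $I_\osc\times\Iabs$, supplies the $\mathcal{O}(\eps)$ error in \eqref{e:oscMat}. The only gloss is your appeal to analyticity of $A_\slow$ in $\lambda$ (the abstract setup assumes only $C^1$ in $\lambda$), but this matches what the lemma itself asserts and does not affect the argument.
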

\begin{proof}
By assumption $A_\slow(\xi;\lambda,\eps)$ has a uniform spectral splitting for $0\leq \eps\ll1$ and $\xi\in I_\osc$.  
Together with slowly varying coefficients in $A_\slow(\eps \zeta;\lambda, \eps)$ and the exponential weight $\Re\nu_{i_\infty}$, this implies an exponential trichotomy on $\zeta \in \eps^{-1}I_\osc$ with uniform constants for $0\leq \eps \ll1$. An interval $\Iabs$ on which Hypothesis~\ref{h:osc} holds exists as mentioned in Remark \ref{r:Iabs}, and possibly shrinking it, we may choose a fixed $\Iabs$ for the following. The coordinate changes to block-diagonal form exist based on \cite[Lemma 1]{Pal82}. 
%The non-vanishing derivative of $\nu_{i_\infty}-\nu_{i_\infty+1}$ with respect to $\lambda$ implies, with the implicit function theorem, a smooth curve $\lambda(\xi,\eps)$ with $\lambda(\xi,0)=\lambda_0$ for which \eqref{e:spatialevals} holds.  
A normal form transformation of the center block yields the claimed form as follows. Let $A_{11}$ be the upper left block whose eigenvalues form a complex conjugate pair. For each $\xi,\lambda,\eps$ there is a change of variables matrix $S(\xi;\lambda,\eps)$ with 
\[
S(\xi;\lambda,\eps) A_{11}(\xi;\lambda,\eps) S^{-1}(\xi;\lambda,\eps) = 
\begin{pmatrix}0&\red{-}\omega(\xi;\lambda,\eps)\\
\red{\omega}(\xi;\lambda,\eps)& 0\end{pmatrix}
\]
Smoothness of $A_{11}$ makes $S$ and $\nu_{i_\infty}, \nu_{i_\infty+1}$ continuously differentiable in $\xi,\lambda,\eps$. Setting $\tX = S(\xi;\lambda,\eps) X$ we thus obtain,
\[
\tX_\zeta = S A_{11} S^{-1}\tX + \eps (\partial_\xi S) S^{-1} \tX = \left(\begin{pmatrix}0&\red{-\omega}(\eps\zeta;\lambda,\eps)\\\red{\omega}(\eps\zeta;\lambda,\eps)& 0\end{pmatrix} + \eps (\partial_\xi S) S^{-1}\right) \tX,
\]
and on the compact $\Iabs,I_\osc$ we have $(\partial_\xi S) S^{-1}$ bounded\blue{, and may choose $\omega(\xi;\lambda)=\omega(\xi;\lambda,0)$ in \eqref{e:oscMat}}.
%
%The additional smoothness for $\lambda \in C_\abs$ is direct consequence of the persistent exponential separation and the resulting smooth parameter dependence of the coordinate change to block-diagonalise, as well as the normal form coordinate change for complex conjugate eigenvalues. 
\end{proof}

%Notably the form of $A_{11}$ is not of the form in the lemma for a smooth curve in $\C$ along which the continuation of the center eigenvalues do not keep the same real parts.  

\subsubsection{Airy transition regime \blue{$I_\mathrm{slow}=I_\Airy\cup I_\mathrm{osc}$}}\label{s:exptri_airy}
The analogue in the Airy transition case is as follows.

\begin{hypothesis}\label{h:Airy}
%There is 
$\lambda_0\in\R$ is such that for all $\xi\in I_\Airy$, cf.\ \eqref{e:decomp}, the eigenvalues $\nu_i(\xi;\lambda_0,\eps)$ of $A_\slow(\xi;\lambda_0,\eps)$ satisfy, at $\eps=0$, 
\begin{align}\label{e:spatialevalsAiry}
\Re \nu_{i_\infty-1} >\Re \nu_{i_\infty}\geq \Re \nu_{i_\infty+1}>\Re \nu_{i_\infty+2},
\end{align} 
with strict inequality for $\xi< \xi_\Airy$, a pinched double root at $\xi=\xi_\Airy$, which is a geometrically simple double eigenvalue, and \eqref{e:spatialevals} holds for $\xi>\xi_\Airy$. %In addition, $\frac{\rmd}{\rmd\lambda}\det(A)(\xi;\lambda_0,0)) \neq 0$, at $\lambda=\lambda_0$ for $\xi\neq \xi_\Airy$.
 \end{hypothesis}
  
Similar to Hypothesis~\ref{h:osc}, there is a non-trivial interval containing $\lambda_0$ for which the hypothesis holds. And we have the following analogous to Lemma~\ref{l:diag}.

\begin{lemma}\label{l:diagAiry}
Under Hypotheses~\ref{h:osc} and~\ref{h:Airy}, for $\lambda$ in a neighborhood of $\lambda_0$ in $\C$, equation \eqref{e:genslowlinear0} admits exponential trichotomies for \red{$\zeta \in \eps^{-1} I_\slow$} relative to the exponential weight $(\Re\nu_{i_\infty}+\Re\nu_{i_\infty+1})/2$ \blue{with $\eps$-independent constants $C,\mu$, stable/unstable/center trichotomy projections $P^{\ss,\uu,\cc}(\zeta;\lambda,\eps)=P^{\ss,\uu,\cc}_\mathrm{slow}(\eps\zeta;\lambda,\eps)$, subspaces $E^{\ss,\uu,\cc}(\zeta;\lambda,\eps)=E^{\ss,\uu,\cc}_\mathrm{slow}(\eps\zeta;\lambda,\eps)$, and evolutions $\Phi^{\ss,\uu,\cc}(\zeta,\tilde{\zeta};\lambda,\eps)=\Phi^{\ss,\uu,\cc}_\mathrm{slow}(\eps\zeta,\eps \tilde{\zeta};\lambda,\eps)$, satisfying the following}
\begin{enumerate}[(i)]
\item \blue{The projections are analytic in $\lambda$, and for $\xi\in I_\osc$, we have that
\begin{align}\label{e:proj-airy}
\left| P^{\ss,\uu,\cc}_\mathrm{slow}(\xi;\lambda,\eps) - \mathcal{P}^{\ss,\uu,\cc}_\mathrm{slow}(\xi;\lambda,\eps)  \right| = \mathcal{O}(\eps)
\end{align}
 where $\mathcal{P}^{\ss,\uu,\cc}_\mathrm{slow}(\xi;\lambda,\eps)$ denote the spectral projections onto the stable, unstable, center eigenspaces of $A_\mathrm{slow}(\xi;\lambda,\eps)$.
}
\item There exists a non-trivial compact interval $\Iabs\subset \R$ around $\lambda_0$ such that Hypothesis~\ref{h:osc} holds for all $\lambda\in \Iabs$, and a family of coordinate changes, continuously differentiable in $(\xi, \lambda) \in I_\slow\times \Iabs$, 
%There is a coordinate change, continuously differentiable in  $\xi\in I_\Airy\cup I_\osc$, 
which transforms \eqref{e:genslowlinear} to the block-diagonal form
\begin{align}\label{e:genslowlinearDiag-old}
\begin{pmatrix}\dot{X}\\ \dot{Y}\end{pmatrix} = \begin{pmatrix}B_{11}(\eps \zeta;\lambda, \eps)&0\\ 0 & B_{22}(\eps \zeta;\lambda, \eps) \end{pmatrix} \begin{pmatrix}X\\ Y\end{pmatrix},
\end{align}
and \blue{for all $\xi_\pm\approx \xi_\Airy$} with $\xi_- < \xi_\Airy < \xi_+$ \blue{there is $b_0>0$ such that for any} $\xi\in [\xi_-,\xi_+]$ we have%\JRx{placed following eqns onto one line}
\begin{align*}
B_{11}(\xi;\lambda, \eps) = \begin{pmatrix}0&-b(\xi\red{;\lambda})\\1& 0 \end{pmatrix} +\calO(\eps),
\quad b(\xi\red{;\lambda}) &= b_0(\xi - \xi_\Airy\red{(\lambda)} + \calO((\xi-\xi_\Airy\red{(\lambda)})^2).
\end{align*}
For $\xi\geq \xi_+$ we have $\blue{T B_{11} T^{-1}}= A_{11}$ from \eqref{e:oscMat}, \blue{where $T:=\frac 1{\sqrt{2}}\begin{pmatrix}1 & \omega\\ \red{-}1 & \omega\end{pmatrix}$, $\omega:=\sqrt{b}$},
 and for $\xi\leq \blue{\xi_-}$ we have \blue{$T B_{11} T^{-1} = \mathrm{diag}(-\omega,\omega)$ with $\omega:=\sqrt{-b}$ and $\pm\omega$ being} $\eps$-close to $\nu_{i_\infty}$, $\nu_{i_\infty+1}$, respectively.
%Moreover, if Hypotheses~\ref{h:osc} and \ref{h:Airy} hold for $\lambda$ on differentiable curve $C_\abs\subset\C$, then the coordinate changes are continuously differentiable in $(\xi, \lambda)\in (I_\Airy\cup I_\osc)\times C_\abs$.
\end{enumerate}
\end{lemma}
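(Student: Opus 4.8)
The plan is to follow the scheme of the proof of Lemma~\ref{l:diag} — construct an exponential trichotomy for the slowly varying system~\eqref{e:genslowlinear} on $\zeta\in\eps^{-1}I_\slow$ whose center subspace collects the two central spatial directions, block-diagonalize it, and then bring the $2\times2$ center block into the stated canonical forms on the three consecutive subintervals $[0,\xi_-]$, $[\xi_-,\xi_+]$, $[\xi_+,T]$ — the genuinely new point being that on $I_\Airy$ the two central spatial eigenvalues collide at $\xi=\xi_\Airy$, so the oscillatory normal form~\eqref{e:oscMat} degenerates there and must be replaced near $\xi_\Airy$ by the Airy (companion) form. As a preliminary, Hypotheses~\ref{h:osc} and~\ref{h:Airy} persist for $\lambda$ in a compact real interval $\Iabs\ni\lambda_0$ and in a complex neighborhood thereof; shrinking $\Iabs$ if needed and using compactness of $I_\slow\times\Iabs$, there is a uniform $\delta>0$ so that, relative to the $\xi$-dependent exponential weight $\kappa:=(\Re\nu_{i_\infty}+\Re\nu_{i_\infty+1})/2$, the shifted matrix $A_\slow-\kappa\,\id$ has $i_\infty-1$ eigenvalues of real part $\geq\delta$, $n-i_\infty-1$ of real part $\leq-\delta$, and two bounded eigenvalues $\nu_{i_\infty}-\kappa$, $\nu_{i_\infty+1}-\kappa$ separated from the rest by a spectral gap of at least $\delta$, uniformly in $(\xi,\lambda,\eps)$; these two central eigenvalues form a complex-conjugate pair on $I_\osc$ and are real on $[0,\xi_\Airy)$, colliding at $\xi=\xi_\Airy$ into a geometrically simple double eigenvalue by Hypothesis~\ref{h:Airy}. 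Since the coefficients of~\eqref{e:genslowlinear} vary on the slow scale, the standard roughness theory for slowly varying systems (\cite{Pal82}, as in Lemma~\ref{l:diag}) then produces an exponential trichotomy for~\eqref{e:genslowlinear} on $\eps^{-1}I_\slow$ relative to $\kappa$, with $\eps$-independent constants and with stable/unstable/center subspaces of the stated dimensions; the analyticity in $\lambda$ and the estimate~\eqref{e:proj-airy} on $I_\osc$ follow as in Lemma~\ref{l:diag} from the genuine hyperbolic splitting present there.

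Applying \cite[Lemma~1]{Pal82} to this trichotomy yields a coordinate change, continuously differentiable in $(\xi,\lambda)\in I_\slow\times\Iabs$, putting~\eqref{e:genslowlinear} into the block-diagonal form~\eqref{e:genslowlinearDiag-old}, with $B_{11}$ the $2\times2$ center block. By the choice of weight, $B_{11}(\xi;\lambda,0)$ has vanishing trace, characteristic polynomial $\nu^2+b(\xi;\lambda)$ with $b:=\det B_{11}(\cdot;\cdot,0)$ a $C^1$ function, and eigenvalues $\pm\sqrt{-b}$ (real on $[0,\xi_\Airy)$) resp.\ $\pm\sqrt{b}$ (imaginary on $(\xi_\Airy,T]$). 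As $\xi_\Airy$ is a geometrically simple double root one has $b(\xi_\Airy(\lambda);\lambda)=0$ with $\partial_\xi b(\xi_\Airy(\lambda);\lambda)=:b_0>0$ — the double root being assumed simple — so the implicit function theorem gives $\xi_\Airy(\lambda)$ as a $C^1$ function of $\lambda$ and $b(\xi;\lambda)=b_0(\xi-\xi_\Airy(\lambda))+\calO((\xi-\xi_\Airy(\lambda))^2)$ near $\xi_\Airy$. On a neighborhood $[\xi_-,\xi_+]$ of $\xi_\Airy$ we pick a cyclic vector $v(\xi;\lambda)$ for $B_{11}(\xi;\lambda,0)$, i.e.\ one not in the (one-dimensional, at $\xi_\Airy$) eigenspace — this can be chosen $C^1$ and keeps $\{v,B_{11}v\}$ a basis throughout $[\xi_-,\xi_+]$ — and in that basis $B_{11}(\xi;\lambda,0)$ is the companion matrix $\left(\begin{smallmatrix}0&-b\\1&0\end{smallmatrix}\right)$ by Cayley--Hamilton and the vanishing trace. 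Conjugating the full $\eps$-dependent block by this change and absorbing the induced $\eps(\partial_\xi S)S^{-1}$ term exactly as in Lemma~\ref{l:diag} gives $B_{11}(\xi;\lambda,\eps)=\left(\begin{smallmatrix}0&-b\\1&0\end{smallmatrix}\right)+\calO(\eps)$ on $[\xi_-,\xi_+]$. A direct computation shows the constant conjugation $T=\tfrac1{\sqrt2}\left(\begin{smallmatrix}1&\omega\\-1&\omega\end{smallmatrix}\right)$ carries $\left(\begin{smallmatrix}0&-b\\1&0\end{smallmatrix}\right)$ to $\left(\begin{smallmatrix}0&-\omega\\\omega&0\end{smallmatrix}\right)$ for $\omega=\sqrt b$ (matching $A_{11}$ from~\eqref{e:oscMat} up to $\calO(\eps)$ on $\xi\geq\xi_+$) and to $\mathrm{diag}(-\omega,\omega)$, up to reordering, for $\omega=\sqrt{-b}$ (on $\xi\leq\xi_-$, with $\pm\omega$ the $\eps=0$ limits of $\nu_{i_\infty}-\kappa$, $\nu_{i_\infty+1}-\kappa$); splicing the three changes of coordinates over $[0,\xi_-]$, $[\xi_-,\xi_+]$, $[\xi_+,T]$ by a smooth partition of unity in $\xi$ produces the $C^1$ coordinate change claimed in the lemma.

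The main obstacle is the normalization near $\xi_\Airy$: there the two central spatial eigenvalues coalesce, so no continuous eigenbasis exists and the oscillatory scaling used in Lemma~\ref{l:diag} degenerates. The remedy is to normalize using only quantities that stay smooth through the collision — the trace, made to vanish identically by the choice of weight $\kappa$, and the determinant $b$ — which produces the companion form; the hypothesis that the pinched double root is \emph{geometrically} simple is precisely what guarantees that the $2\times2$ center block is a single Jordan block at $\xi_\Airy$ and hence smoothly conjugate to the companion matrix of $\nu^2+b$ on a whole neighborhood. A secondary, more routine point is checking that the trichotomy constants stay $\eps$-independent even though the center directions are hyperbolic on $[0,\xi_-]$: this is handled by working relative to the $\xi$-dependent weight $\kappa$, which keeps the two central directions symmetric about the imaginary axis and thus uniformly separated at rate $\delta$ from the strongly (un)stable ones.
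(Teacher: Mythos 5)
Your proposal is correct and follows essentially the same route as the paper's proof: trichotomy from the uniform spectral gap under the trace-zeroing weight, Palmer-type block-diagonalization, companion form for the center block near $\xi_\Airy$ (your cyclic-vector construction, with $v=e_1$ valid because $b_{21}\neq 0$ near the Jordan block, is exactly the paper's explicit change $S=\begin{pmatrix}1&-b_{22}\\0&b_{21}\end{pmatrix}$, since $\mathrm{tr}\,B_{11}=0$ gives $b_{11}=-b_{22}$), and the constant $T$-conjugation to match the oscillatory and diagonal forms on $\xi\geq\xi_+$ and $\xi\leq\xi_-$. The only minor imprecision is attributing $b_0=\partial_\xi b(\xi_\Airy;\lambda)>0$ to geometric simplicity alone — that gives the non-scalar Jordan block, while $b_0\neq 0$ (and its sign) comes from the transversal Airy-transition structure in Hypothesis~\ref{h:Airy}, which is asserted rather than derived in the lemma as well.
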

\begin{proof}
Up to the different normal form in the center part, the proof is the same as that of Lemma~\ref{l:diag}. At the Airy point, by Hypothesis~\ref{h:Airy}, the block on the center eigenspace has Jordan normal form $\begin{pmatrix}0&0\\1&0\end{pmatrix}$ so that we may choose coordinates where $B_{11}=\begin{pmatrix}b_{11}&b_{12}\\b_{21}&b_{22}\end{pmatrix}$ has $b_{21}\neq0$ for $\xi\in [\xi_-,\xi_+]$ \blue{and all $\xi_\pm\approx \xi_\Airy$,} $\xi_- < \xi_\Airy < \xi_+$. For $\xi\in [\xi_-,\xi_+]$, the pointwise $\xi$-dependent coordinate change given by 
$S=\begin{pmatrix}1&-b_{22}\\0&b_{21}\end{pmatrix}$ yields 
\begin{align}
S^{-1}B_{11}S = \begin{pmatrix}\mathrm{tr}(B_{11})&-\det(B_{11})\\1&0\end{pmatrix}
\end{align} in terms of trace and determinant. Let $W$ denote these coordinates and $X$ those of Lemma~\ref{l:diag}. At $\xi=\xi_+$ we can change coordinates to the form \eqref{e:oscMat} by setting %\JRx{Use same coordinates as needed in new Airy proof.}
\[
X_1 = (\sqrt{b}W_2\red{+}W_1)/\sqrt{2},\; X_2 = (\sqrt{b}W_2\red{-}W_1)/\sqrt{2}
\]
and analogously at $\xi=\xi_-$. The error terms now follow as in the proof of Lemma~\ref{l:diag} using $\det(B_{11})=0$ at $\xi=\xi_\Airy$ and $\mathrm{tr}(B_{11})\equiv 0$ due to the a priori chosen exponential weight; the dichotomy for $\xi\leq \xi_-$ allows a complete diagonalisation. 
\end{proof}

%In the following we focus on \eqref{e:genslowlinear} on $x\in[0,T_\epsilon]$ with $T_\epsilon=T/\eps$ for $T>0$ so that the slow part $I_\eps\subset [0,T_\epsilon]$ has $\Iz\subset[0,T]$.

\section{Eigenvalue accumulation along slow manifolds}\label{s:gentheory}

\blue{In this section, we use the exponential trichotomies constructed in the previous section to describe the accumulation of eigenvalues along the slow absolute spectrum for the  cases described in \S\ref{s:cases}. We begin in~\S\ref{s:accumulation_nolayer} with the base case of accumulation \emph{without} fast layers (in both the uniformly oscillating and Airy transition regimes), and then consider the necessary modifications to treat fast layers in~\S\ref{s:accumulation_withlayer}.}

\subsection{Eigenvalue accumulation without fast layers}\label{s:accumulation_nolayer}

\blue{Recall that, as discussed  in \S\ref{s:cases}, in this case the interval $[0,T/\eps]$ on the slow time scale can be written as
\begin{equation}\label{e:decompnew}
[0,T] = I_\slow = I_\Airy \cup I_\osc,
\end{equation}
 In the `Airy case', $I_\Airy$ is a non-trivial compact interval that contains an Airy point in its interior, and $I_\osc$ is a non-trivial compact interval with uniform oscillations, whereas $I_\Airy$ is taken to be empty in the uniformly oscillating case.

With $\Phi(\zeta,\tilde{\zeta};\lambda,\eps)$ the evolution operator of~\eqref{e:genslowlinear} on $[0,T/\eps]=\eps^{-1}I_\slow$, we consider the existence problem for an eigenfunction on the right boundary of the slow region, i.e., 
\begin{align}\label{e:condefct}
\Phi(T/\eps,0;\lambda,\eps)Q^-(\lambda,\eps)\cap  Q^+(\lambda,\eps) \neq \{0\},
\end{align}
under the assumptions on this boundary value problem as in~\S\ref{s:slowabssetup}.
% With $\Phi(\zeta,\tilde{\zeta};\lambda,\eps)$ the evolution operator of~\eqref{e:genslowlinear} on $[0,T/\eps]$, we consider the existence problem for an eigenfunction on the right boundary of the slow region, i.e., 
% \begin{align}\label{e:condefct}
% \Phi(T/\eps,0;\lambda,\eps)Q^-\cap \Phi(T_\epsilon-L_\eps,T_\epsilon;\lambda,\eps) Q^+ \neq \{0\}.
% \end{align}
% Our standing assumptions are $T_\eps=T/\eps$, $L_\eps = K|\log \eps|$ for some $T>0$, $K\geq 0$, and that the matrices $A(\zeta;\lambda,\eps)$ are continuously differentiable in $(\zeta,\lambda,\eps)\in [0,T_\eps] \times\C\times[0,\eps_0)$ for some $\eps_0>0$.
According to \eqref{e:condefct} we define the following point spectrum and refer to \cite{SSabs} for details on the relation to spectrum in case of travelling waves. %The fast layer is omitted for simplicity, but can be included with the estimates established in the following as remarked in \S\ref{s:cases} and below.

\begin{definition}
We say $\lambda\in \C$ is an eigenvalue for \eqref{e:genslowlinear} if and only if a \blue{nontrivial} solution of~\eqref{e:genslowlinear}, referred to as eigenfunction, exists such that $U(0)\in Q^-(\lambda,\eps)$ and $U(T/\eps)\in Q^+(\lambda,\eps)$. We refer to the set of eigenvalues $\specpt\subset\C$ as the point spectrum.
\end{definition}}

In contrast with the prototype example of \S\ref{s:proto}, here we may additionally have hyperbolic directions, which requires a non-degeneracy of the boundary spaces $Q^\pm(\lambda,\eps)$ at the boundaries $\xi=0,T$.
\red{
\begin{hypothesis}\label{h:osc_vector}
%Fix any small $\delta>0$. 
There exists $C>0$ such that for all $\eps\in(0,\eps_0]$, the following hold. 
\begin{enumerate}[(i)]
\item There exists a unit vector $v_-=v_-(\lambda,\eps)\in E_\slow^\cc(0;\lambda,\eps)\oplus E_\slow^{\ss}(0;\lambda,\eps)$ such that $Q^-(\lambda,\eps) = \mathrm{span}\{v_-\}\oplus\tilde{Q}^-$ where
\begin{align}\label{e:layerBndryLeft}
\tilde{Q}^-\oplus E_\slow^\cc(0;\lambda,\eps)\oplus E_\slow^{\ss}(0;\lambda,\eps)=\mathbb{C}^n,
\end{align}
and we can decompose any vector in $\mathrm{span}\{v_-\}$ as $v_-^\cc+v_-^\ss$ for some $v_-^\cc\in E^\cc(0;\lambda,\eps)$ and  $v_-^\ss\in E^{\ss}(0;\lambda,\eps)$ with $|v^\ss_-|\leq C|v^\cc_-|$. In the Airy case, we further assume 
%that there exists $\kappa>0$ such that  $|\mathcal{P}_{i_\infty}v_-|\geq \kappa |v_-|$ for $\eps\in(0,\eps_0]$
that  $|v_-|\leq C|\mathcal{P}_{i_\infty}v_-|$, where $\mathcal{P}_{i_\infty}$ is the spectral projection of $A_\mathrm{slow}(0;\lambda,0)$ onto the eigenspace corresponding to the eigenvalue $\nu_{i_\infty}$.

\item We similarly assume there exists a unit vector $v_+=v_+(\lambda,\eps)\in E_\slow^\cc(T;\lambda,\eps)\oplus E_\slow^{\uu}(T;\lambda,\eps)$ such that $Q^+(\lambda,\eps) = \mathrm{span}\{v_+\}\oplus\tilde{Q}^+$ where
\begin{align}\label{e:layerBndryRight}
\tilde{Q}^+\oplus E_\slow^\cc(T;\lambda,\eps)\oplus E_\slow^{\uu}(T;\lambda,\eps)=\mathbb{C}^n,
\end{align}
and we can decompose any vector in $\mathrm{span}\{v_+\}$ as $v^\cc_++v^\uu_+$ for some $v_+^\cc\in E_\slow^\cc(T;\lambda,\eps)$ and  $v_+^\uu\in E_\slow^{\uu}(T;\lambda,\eps)$ with $|v^\uu_+|\leq C|v^\cc_+|$.
\end{enumerate}
Furthermore, we can choose $v_\pm$ with these properties such that $v_\pm$ is continuous for $\eps\in(0,\eps_0]$, and $\partial_\lambda v_\pm$ is continuous for $\eps\in(0,\eps_0]$, satisfying $|\partial_\lambda v_\pm| = o(\eps^{-1})$. 
%\leq \frac{\delta}{\epsilon}$ uniformly in $\epsilon$. 
\end{hypothesis}

\begin{remark}
The conditions in Hypothesis~\ref{h:osc_vector} regarding the decompositions $v_-^\cc+v_-^\ss$ and $v^\cc_++v^\uu_+$ guarantee that the projections of the boundary subspaces $Q^\pm$ onto the center trichotomy spaces at the endpoints of the slow interval are nontrivial as $\eps \to0$. In the Airy case, we further require that the projection onto the eigenspace corresponding to the eigenvalue $\nu_{i_\infty}$ is nontrivial in this limit.
\end{remark}}

%\JR{Replace the estimate at the end by the slightly stronger $o(\eps^{-1})$?}
%\PC{That seems fine to me! We don't need anything this strong in any case for FHN.}
%\JR{Ok, it might read more easily with $o(\eps^{-1})$, but I have no strong opinion.}

%\begin{remark}\label{r:Cabs}
%As a consequence of the robustness of linear independence, if Hypothesis~\ref{h:osc} holds for $\lambda$ on differentiable curve $\tC_\abs\subset\C$, and Hypothesis~\ref{h:osc_vector} holds for a $\lambda_0\in \tC_\abs$, then Hypothesis~\ref{h:osc_vector} holds on a differentiable curve $C_\abs\subset\tC_\abs$ containing $\lambda_0$.
%\end{remark}

In the case without and with layer, for a given $\lambda_0$, the following quantity will be relevant, analogous to \S\ref{s:proto}.
\begin{equation}\label{e:delpr}
\Delta':=\left.\frac{\rmd}{\rmd \lambda}\right|_{\lambda=\lambda_0} \int_0^T \omega(\xi;\lambda)\rmd\xi.
\end{equation}

\subsubsection{Uniformly oscillating case.} 

\begin{proposition}\label{p:uni}
%Consider $T_\eps=T/\eps$, $[0,T]=I_\osc$ and a
Assume $[0,T]=I_\osc$ and Hypotheses~\ref{h:well}, \ref{h:osc},~\ref{h:osc_vector} hold for some $\lambda=\lambda_0\in\R$ and fix $r>0$. If $\Delta'\neq 0$, cf.\ \eqref{e:delpr}, then for all sufficiently small $\eps>0$, $\specpt\cap B_r(\lambda_0)\subset\C$ has cardinality $\calO(1/\eps)$, where $B_r(\lambda_0)$ is the disc around $\lambda_0$ with radius $r$.
\end{proposition}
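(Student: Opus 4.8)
The plan is to reduce the boundary value problem \eqref{e:condefct} to a scalar counting condition on the center block, exactly as in the prototype calculation of \S\ref{s:proto}, using the exponential trichotomy of Lemma~\ref{l:diag} to peel off the hyperbolic directions. First, I would apply Lemma~\ref{l:diag} to put \eqref{e:genslowlinear} into the block-diagonal form \eqref{e:genslowlinearDiag} on $\zeta\in\eps^{-1}I_\osc$, with the center block $A_{11}$ of the form \eqref{e:oscMat}. Working in the corresponding coordinates $(X,Y)$, I would use Hypothesis~\ref{h:osc_vector} to express the boundary conditions: the subspace $Q^-(\lambda,\eps)$ is spanned by $v_-$ (which has nontrivial center component $v_-^\cc$, with $|v_-^\ss|\leq C|v_-^\cc|$) together with $\tilde Q^-$, which is transverse to $E^\cc\oplus E^\ss$ and hence contains exactly the "extra" unstable directions; similarly at $\zeta=T/\eps$. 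The hyperbolic parts of $Q^\pm$ generically do not match under the evolution except on the center directions — more precisely, the stable/unstable evolutions contract/expand by $\calO(\rme^{-\mu T/\eps})$, so the matching condition \eqref{e:condefct} reduces, up to exponentially small corrections, to the condition that the center component of $\Phi^\cc(T/\eps,0)v_-^\cc$ be parallel to $v_+^\cc$ projected to the center space.

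Next, I would analyze the center block. Since $A_{11}=\begin{pmatrix}0&-\omega\\\omega&0\end{pmatrix}+\calO(\eps)$ with $\omega=\omega(\eps\zeta;\lambda)$ bounded away from zero on the compact set $I_\osc\times\Iabs$, passing to polar coordinates $(X_1,X_2)=r(\cos\theta,\sin\theta)$ yields an angular equation $\theta_\zeta=-\omega(\eps\zeta;\lambda)+\calO(\eps)$, exactly as in \eqref{e:psi}-style computations of \S\ref{s:proto}. Integrating, the total phase accumulated is
\begin{align*}
\theta(0)-\theta(T/\eps)=\int_0^{T/\eps}\omega(\eps\zeta;\lambda)\,\rmd\zeta+\calO(1)=\frac{1}{\eps}\int_0^T\omega(\xi;\lambda)\,\rmd\xi+\calO(1).
\end{align*}
The center-matching condition then reads $\theta(0)-\theta(T/\eps)\equiv\beta(\lambda,\eps)\bmod\pi$ for some bounded phase offset $\beta$ coming from the (fixed, bounded) boundary vectors $v_\pm^\cc$; equivalently, writing $\Delta(\lambda,\eps):=\eps(\theta(0)-\theta(T/\eps))$, the eigenvalue condition is $\Delta(\lambda,\eps)\in\eps\pi\Z+\calO(\eps)$, a grid of spacing $\eps\pi$. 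Since $\Delta(\lambda,\eps)\to\int_0^T\omega(\xi;\lambda)\,\rmd\xi$ as $\eps\to0$ uniformly on $\Iabs$, with $\lambda$-derivative converging to $\Delta'\neq0$, the map $\lambda\mapsto\Delta(\lambda,\eps)$ is a diffeomorphism onto an interval of $\calO(1)$ length for small $\eps$, so it hits $\calO(1/\eps)$ grid points. This gives real eigenvalues with the asserted count; a Rouché/implicit-function argument near each real root, using analyticity in $\lambda$ (Lemma~\ref{l:diag}(i)) and the nonvanishing derivative, shows there are no additional complex eigenvalues of \eqref{e:condefct} in $B_r(\lambda_0)$ beyond $\calO(1/\eps)$-many, and that each lies $\calO(\eps)$-close to the real axis.

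\textbf{Main obstacle.} The delicate point is controlling the reduction from the full matching condition \eqref{e:condefct} to the scalar center condition \emph{uniformly in $\lambda$ in a complex neighborhood} and with \emph{enough $\lambda$-regularity} to run the counting/Rouché step. Concretely, one must show that the "extra" hyperbolic contributions to $\Phi(T/\eps,0)Q^-\cap Q^+$ are not merely small but small with small $\lambda$-derivative — which is exactly why Hypothesis~\ref{h:osc_vector} demands $|\partial_\lambda v_\pm|=\rmo(\eps^{-1})$ and why Lemma~\ref{l:diag} records analyticity of the projections. One also has to track the $\calO(\eps)$ perturbation of the center block through the polar-coordinate reduction carefully enough that the induced error in $\Delta(\lambda,\eps)$ is genuinely $\calO(\eps)$ (so that it is absorbed into, rather than competing with, the $\eps\pi$ grid spacing). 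Assembling these estimates into a clean determinant/Evans-type function whose zeros are counted is the technical heart; the phase computation itself is routine once the setup is in place.
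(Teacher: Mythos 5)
Your overall route coincides with the paper's: block-diagonalize via the trichotomy of Lemma~\ref{l:diag}, use Hypothesis~\ref{h:osc_vector} to peel off the hyperbolic directions so that \eqref{e:condefct} reduces (up to small corrections) to matching the center components $\Phi^\cc(T/\eps,0)v_-^\cc$ against $v_+^\cc$, then pass to polar coordinates in the center block and count the crossings of the accumulated phase $\Delta(\lambda,\eps)=\eps(\theta(0)-\theta(T/\eps))$ against a grid of spacing $\eps\pi$. Up to that point your reduction and the counting scheme match the paper's proof.

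The gap is in the step you describe as routine. You assert that the $\lambda$-derivative of $\Delta(\lambda,\eps)$ converges to $\Delta'\neq0$, but this is exactly the nontrivial part, and your ``main obstacle'' paragraph misplaces the difficulty: the error in the \emph{value} of $\Delta(\lambda,\eps)$ being $\calO(\eps)$ is immediate, whereas the danger sits in the \emph{derivative}. Writing the angular equation as $\dot\theta=-\omega(\eps\zeta;\lambda)+\eps a(\eps\zeta,\theta;\lambda,\eps)$, the variational equation for $\theta_\lambda=\partial_\lambda\theta$ contains the coupling term $\eps\,\partial_\theta a\,\theta_\lambda$; since $\theta_\lambda=\calO(1/\eps)$ (the leading contribution $\int_0^{T/\eps}\partial_\lambda\omega\,\rmd\zeta$ is already of that size), this term is $\calO(1)$ and, integrated over an interval of length $T/\eps$, could a priori contribute at the same order as the main term, so naive bookkeeping does not give $\partial_\lambda\Delta(\lambda,\eps)\to\Delta'$. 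The paper resolves this by a variation-of-constants formula for $\theta_\lambda$ together with an averaging argument: the factor $\exp\bigl(\eps\int_{t/\eps}^{T/\eps}\partial_\theta a\,\rmd\zeta\bigr)$ tends to $1$ because $\partial_\theta a/\dot\theta$ is the $\theta$-derivative of a $2\pi$-periodic function, hence integrates to $\calO(\eps)$ over each of the $\calO(1/\eps)$ rotation periods; combined with $\eps\,\theta_\lambda(0)=o(1)$ --- which is precisely where $|\partial_\lambda v_\pm|=\rmo(\eps^{-1})$ from Hypothesis~\ref{h:osc_vector} is used --- this yields $\lim_{\eps\to0}\eps\,\theta_\lambda(T/\eps)=\int_0^T\partial_\lambda\omega(t)\,\rmd t=\Delta'$ at $\lambda_0$. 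Without such a uniform-in-$\eps$ derivative bound, $C^0$ convergence of $\Delta(\cdot,\eps)$ alone does not deliver the $\calO(1/\eps)$ cardinality (in either direction), since $\Delta(\cdot,\eps)$ could oscillate on the $\eps$ scale and hit the grid arbitrarily often or too rarely. Supplying this averaging (or an equivalent control of the variational equation over the $\calO(1/\eps)$ rotations) is what your proposal is missing; your concluding Rouch\'e remark for complex $\lambda$ is a reasonable supplement but does not substitute for it.
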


\begin{proof}
By robustness of linear independence, Hypothesis~\ref{h:osc_vector} holds on an interval containing $\lambda_0$, where also Hypothesis~\ref{h:osc} holds without loss of generality. Consider \red{$v_\pm^\cc\neq 0$}  from Hypothesis~\ref{h:osc_vector}. 
%\red{We write $v_- = v_-^\cc+v_-^\ss$ with $v_-^\cc\in E_\slow^\cc(0;\lambda,\eps)$ and  $v_-^\ss\in E_\slow^{\ss}(0;\lambda,\eps)$, and we similarly decompose $v_+ = v_+^\cc+v_+^\uu$. By Hypothesis~\ref{h:osc_vector} we have $v_\pm^\cc\neq 0$. }

\blue{The trichotomy projections give $\Phi(T/\eps,0;\lambda,\eps)Q^-$ as a direct sum of the evolved center, stable and unstable parts. By the linear independence in Hypothesis~\ref{h:osc_vector}, the projection of the left hand side of \eqref{e:condefct} onto the stable and unstable parts %has a unique solution, which lies on 
gives a curve of unique solutions for $0<\eps\ll 1$ by persistence of linear independence, which allows us to reduce the problem to solving the projection onto the center part. By possibly scaling $v_+$, to leading order this can be written as 
\begin{align}\label{e:unicenter}
 \Phi^\cc(T/\eps,0;\lambda,\eps) v_-^\cc =  v_+^\cc + \mathrm{h.o.t}.
\end{align}
We omit the details in deriving the leading order equation~\eqref{e:unicenter} for the center dynamics, as these are analogous to (and simpler than) in the proof of Theorem~\ref{t:layers} below.}

Starting from the coordinates of Lemma~\ref{l:diag} we change to polar coordinates $X=(X_1,X_2) = (r\cos\theta,r\sin\theta)$, which results in the system on $\zeta \in \eps^{-1}I_\osc$ given by
\begin{align}\begin{split}\label{e:slowpolar}
\dot{r}&=\calO(\eps r)\\
\dot{\theta}& =-\omega(\eps \zeta;\lambda)+ \calO(\eps).
%\\ \dot{Y} &= A_{22}(\eps x;\lambda, \eps)Y.
\end{split}
\end{align}
In the corresponding form of \eqref{e:unicenter} the radial equation is trivially solved, but is non-vanishing since $v_\pm^\cc\neq 0$. 
Solving the remaining  angular equation $\theta$ is completely analogous to \eqref{e:protocond} with interval length $T/\eps$, i.e., 
\[
\Delta_T(\lambda,\eps) = \eps(\theta(0)-\theta(T/\eps)) =\eps\int_{0}^{T/\epsilon}\omega(\eps \zeta;\lambda)d\zeta + \calO(\eps) = \int_0^T \omega(\xi;\lambda)\rmd\xi + \calO(\eps),
\]
and existence of an eigenvalue is equivalent to
\blue{
\begin{align}
\Delta_T(\lambda,\eps) = \eps (\theta_--\theta_+)\mod \pi,
\end{align}}
where $\theta_\pm$ denote the angular coordinates corresponding to $v^\cc_\pm$, compare~\eqref{e:protocond0}. 
%\red{\sout{The assumption $\partial_\lambda\Delta_T(\lambda,0) = \Delta'\neq0$ combined with the estimates on the derivatives of $v_\pm$ with respect to $\lambda$ from Hypothesis~\ref{h:osc_vector} thus imply a cardinality of $\calO(1/\eps)$ solutions near $\lambda_0$.}} 
\blue{We next show that $\Delta'\neq0$ gives $\partial_\lambda\Delta_T(\lambda,0)\neq 0$, i.e., the $\calO(\eps)$ remainder term in $\Delta_T$ is not relevant for this derivative. This implies the claimed cardinality of $\calO(1/\eps)$ solutions near $\lambda_0$, cf.\ \eqref{e:protocond0}. 

We write the remainder term for $\dot\theta$ in \eqref{e:slowpolar} as $\eps a(\xi,\theta;\lambda,\eps)$, where $a$ and its derivatives \red{with respect to $\xi, \theta, \lambda$} are bounded, in particular $a_\theta(\zeta):=\partial_\theta a(\eps\zeta,\theta(\zeta);\lambda,\eps)$ and $a_\lambda(\zeta):=\partial_\lambda a(\eps\zeta,\theta(\zeta);\lambda,\eps)$. The derivative $\theta_\lambda:=\partial_\lambda \theta$ solves the variational equation, whose solution can be written as
\[
\theta_\lambda(T/\eps) = \exp\left(\eps \int_0^{T/\eps} a_\theta(\zeta)\rmd\zeta\right)\theta_\lambda(0) + 
%\int_0^{T/\eps}\exp\left(\eps \int_t^{T/\eps} a'(\zeta)\rmd\zeta\right) \big(\partial_\lambda\omega(\eps t) + \eps a_\lambda(t)\big) \rmd t.
\eps^{-1}\int_0^{T}\exp\left(\eps \int_{t/\eps}^{T/\eps} a_\theta(\zeta)\rmd\zeta\right) \big(\partial_\lambda\omega(t) + \eps a_\lambda(t/\eps)\big) \rmd t.
\]
Let $\tilde\theta(\zeta)\in\R$ denote the invertible lift of $\theta(\zeta)\in [0,2\pi)\!\!\mod 2\pi$ and note that with $\psi=\tilde\theta(\zeta)$ we have
\[
\int_{t/\eps}^{T/\eps} a_\theta(\zeta)\rmd\zeta 
= \int_{\tilde\theta(t/\eps)}^{\tilde\theta(T/\eps)} \beta(\psi,\zeta,\eps)\rmd\psi\,,\quad 
\beta(\psi,\zeta,\eps):=\frac{a_\theta}{\dot{\theta}}(\psi,\zeta,\eps)= \sigma \eps^{-1}\frac{\rmd}{\rmd \psi}\ln\big|\eps a(\eps\zeta,\psi;\lambda,\eps)-\omega(\eps\zeta;\lambda)\big|,
%\frac{a'(\zeta)}{\eps a(\zeta)-\omega(\eps\zeta,\psi;\lambda,\eps)}.
\]
for suitable $\sigma = \pm 1$. With $k(\eps):=\lceil \tilde\theta(t/\eps)/(2 \pi)\rceil, K(\eps):=\lfloor \tilde\theta(T/\eps)/(2 \pi)\rfloor\in\N$ we have $K(\eps)-k(\eps) = \calO(1/\eps)$ and 
\[
\int_{\tilde\theta(t/\eps)}^{\tilde\theta(T/\eps)} \beta(\psi,\zeta,\eps)\rmd\psi = 
\int_{\tilde\theta(t/\eps)}^{2\pi k(\eps)} \beta(\psi,\zeta,\eps)\rmd\psi
+ \sum_{j=k(\eps)}^{K(\eps)-1} \int_{2\pi j}^{2\pi(j+1)} \beta(\psi,\zeta,\eps)\rmd\psi
+ \int_{2\pi K(\eps)}^{\tilde\theta(T/\eps)} \beta(\psi,\zeta,\eps)\rmd\psi,
\]
where the first and last integrals are bounded uniformly in $\eps$. We claim that each summand of the second term is $\calO(\eps)$ so that the sum over $\calO(1/\eps)$ of these is bounded as well. To show this, note that $\beta$ is the derivative of a $2\pi$-periodic function with respect to $\theta$ and therefore, for each fixed $\eps$ and $\zeta$,  has zero average in terms of $\theta$.  The claim now follows from expanding $\beta(\psi,\zeta,\eps) = \beta(\psi,2\pi j,0)+\calO(\eps)$ in each summand. 

Hence, $\eps\int_{t/\eps}^{T/\eps} a_\theta(\zeta)\rmd\zeta\to 0$ as $\eps\to 0$ uniformly in $t$. Together with $\eps\theta_\lambda(0)= o(1)$ from Hypothesis~\ref{h:osc_vector} we have 
\[
\lim_{\eps\to 0} \eps \theta_\lambda(T/\eps) =  \int_0^{T}\partial_\lambda\omega(t) \rmd t,
\]
which equals $\Delta'$ at $\lambda=\lambda_0$.
}
\end{proof}

We note that in the degenerate case $v_\pm^c=0$ the boundary conditions are contained in the hyperbolic parts of the trichotomies and thus no additional eigenvalues are expected. Indeed, in this case the radial variable vanishes so the angular equation is meaningless. 

\begin{remark}
Let us consider the condition $\Delta'\neq0$ in the FHN case, \blue{cf.\ \S\ref{sec:pulse_geometry}, }where the relevant \blue{range for $u_0\in (u^-_{\mathrm{A},0},u^+_{\mathrm{A},0})$} is $0<\lambda < f'(u_0)-\frac{c^2}{4}$, and the frequencies are
\[
\omega(u_0;\lambda) = \sqrt{\frac{c^2}{4}-f'(u_0)+\lambda},
\]
\blue{compare \S\ref{sec:slowabs_middleslowmanifold} below}, so that 
\[
\partial_\lambda \omega(u_0;\lambda) = \frac 1 2 \omega(u_0;\lambda)^{-1}.
\]
Since $\lambda$ is real this has fixed sign and the integral condition $\partial_\lambda\Delta_T(\lambda,0)\neq0$ is indeed satisfied.
\end{remark}

\subsubsection{Airy transition case}

We now turn to the situation in which the solution enters the uniformly oscillating region via an Airy transition along the slow manifold. 

\begin{proposition}\label{p:Airy}
Assume $[0,T]=I_\Airy\cup I_\osc$ and Hypotheses~\ref{h:well}--\ref{h:osc_vector} hold for some $\lambda=\lambda_0\in\R$, and fix $r>0$. If $\Delta'\neq 0$, cf.\ \eqref{e:delpr}, then for all sufficiently small $\eps>0$, $\specpt\cap B_r(\lambda_0)\subset\C$ has cardinality $\calO(1/\eps)$, where $B_r(\lambda_0)$ is the disc around $\lambda_0$ with radius $r$.
\end{proposition}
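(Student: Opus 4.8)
The plan is to treat this as a variant of Proposition~\ref{p:uni}: reduce the boundary value problem~\eqref{e:condefct} to a two-dimensional ``centre'' problem via the block-diagonalisation of Lemma~\ref{l:diagAiry}, and then analyse how the reduced centre solution passes through the Airy point $\xi_\Airy=\xi_\Airy(\lambda)$. The oscillatory portion $I_\osc$ is handled exactly as in Proposition~\ref{p:uni}; the only genuinely new work is the passage along the slow manifold across $\xi_\Airy$.

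First I would apply Lemma~\ref{l:diagAiry} to bring \eqref{e:genslowlinear} on $\zeta\in\eps^{-1}I_\slow$, relative to the weight $(\Re\nu_{i_\infty}+\Re\nu_{i_\infty+1})/2$, into block-diagonal form with a two-dimensional centre block $B_{11}$ and a hyperbolic block $B_{22}$, together with an exponential trichotomy with $\eps$-independent constants. As in the proof of Proposition~\ref{p:uni}, Hypothesis~\ref{h:osc_vector} and persistence of linear independence under the $\calO(\eps)$-perturbations of the trichotomy spaces allow one to solve the stable and unstable components of \eqref{e:condefct} uniquely for $0<\eps\ll1$, so that the eigenvalue condition reduces to the centre equation $\Phi^\cc(T/\eps,0;\lambda,\eps)\,v_-^\cc=v_+^\cc+\mathrm{h.o.t.}$ for the nonzero centre components $v_\pm^\cc$ of the boundary vectors $v_\pm$ from Hypothesis~\ref{h:osc_vector} (the degenerate case $v_\pm^\cc=0$ producing no eigenvalues exactly as before). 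On $\eps^{-1}I_\osc=[\xi_+/\eps,T/\eps]$ I would then run the polar-coordinate argument of Proposition~\ref{p:uni} verbatim, so what remains is to determine the polar angle with which the centre solution $\Phi^\cc(\cdot,0;\lambda,\eps)v_-^\cc$ enters $I_\osc$ at $\xi=\xi_+$.

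This is the heart of the argument. By Lemma~\ref{l:diagAiry}, on $I_\Airy$ the centre block is a slowly varying perturbation of $\bigl(\begin{smallmatrix}0 & -b(\xi;\lambda)\\ 1 & 0\end{smallmatrix}\bigr)$ with $b(\xi;\lambda)=b_0(\xi-\xi_\Airy(\lambda))+\calO((\xi-\xi_\Airy)^2)$, i.e.\ a slow Airy equation with turning point at $\xi_\Airy$: hyperbolic (relative to the weight) for $\xi<\xi_\Airy$ and oscillatory for $\xi>\xi_\Airy$. On $[0,\xi_-/\eps]$, where this block is uniformly hyperbolic with the $\nu_{i_\infty}$-eigendirection expanding, the extra assumption $|v_-|\le C|\mathcal{P}_{i_\infty}v_-|$ in Hypothesis~\ref{h:osc_vector} guarantees that $v_-^\cc$ has a nonvanishing expanding component, so by roughness the centre solution aligns with the expanding eigendirection at an exponential rate and its direction at $\xi=\xi_-$ converges, as $\eps\to0$, to a limit that is $C^1$ in $\lambda$. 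Across the turning region $[\xi_-/\eps,\xi_+/\eps]$ I would rescale near $\xi_\Airy$ (with $\xi-\xi_\Airy=\calO(\eps^{2/3})$), which reduces the centre block to the Airy equation to leading order; the classical Airy/WKB connection formula — equivalently, roughness applied to the rescaled equation — then shows that the expanding incoming solution emerges on the oscillatory side with a well-defined asymptotic phase, so that the centre solution enters $I_\osc$ with polar angle $\theta(\xi_+/\eps;\lambda,\eps)=\eps^{-1}\int_{\xi_\Airy(\lambda)}^{\xi_+}\omega(\xi;\lambda)\,\rmd\xi+\theta_0+\calO(\eps)$ for a constant $\theta_0$, continuous for $\eps\in(0,\eps_0]$, $C^1$ in $\lambda$, with $\eps\,\partial_\lambda\theta(\xi_+/\eps;\lambda,\eps)=\calO(1)$ converging to $\partial_\lambda\int_{\xi_\Airy}^{\xi_+}\omega\,\rmd\xi=\int_{\xi_\Airy}^{\xi_+}\partial_\lambda\omega\,\rmd\xi$ (using $\omega(\xi_\Airy;\lambda)=0$).

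Finally, combining this with the polar analysis on $I_\osc$ and using $\omega\equiv0$ on the hyperbolic part of $I_\Airy$, the scaled total phase $\Delta_T(\lambda,\eps):=\eps(\theta(0;\lambda,\eps)-\theta(T/\eps;\lambda,\eps))$ again satisfies $\Delta_T(\lambda,\eps)=\int_0^T\omega(\xi;\lambda)\,\rmd\xi+\calO(\eps)$, and $\lambda$ is an eigenvalue precisely when $\Delta_T(\lambda,\eps)\in\eps\pi\Z+\eps\,g(\lambda,\eps)$ for a bounded $g$ that is $C^1$ in $\lambda$ with $\partial_\lambda g=\calO(1)=o(\eps^{-1})$. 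Applying the variational-equation estimate from the proof of Proposition~\ref{p:uni} on $I_\osc$, with the $\calO(1)$ bound for $\eps\,\partial_\lambda\theta(\xi_+/\eps;\lambda,\eps)$ above as ``initial data'', yields $\partial_\lambda\Delta_T(\lambda,0)=-\Delta'\neq0$ — the purely hyperbolic part of $I_\Airy$ contributes $o(1)$ to $\eps\,\partial_\lambda\theta$, the near-turning-point WKB region contributes $\int_{\xi_\Airy}^{\xi_+}\partial_\lambda\omega\,\rmd\xi$, and the $I_\osc$-part contributes $\int_{\xi_+}^{T}\partial_\lambda\omega\,\rmd\xi$, together giving $\Delta'=\int_{\xi_\Airy}^{T}\partial_\lambda\omega\,\rmd\xi$. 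Since the right-hand side forms a grid of spacing $\calO(\eps)$, counting as in Proposition~\ref{p:uni} — the lower bound from the variation of $\Delta_T$ over $B_r(\lambda_0)\cap\R$, the upper bound including nonreal $\lambda$ from analyticity of $\Delta_T$ in $\lambda$ — gives $\#(\specpt\cap B_r(\lambda_0))=\calO(1/\eps)$. I expect the main obstacle to be the turning-point step: making rigorous the matching of the exponentially growing incoming solution to the outgoing oscillation and, crucially, controlling the $\lambda$-dependence of the emerging phase so that $\eps\,\partial_\lambda\theta(\xi_+/\eps;\lambda,\eps)$ stays bounded, which is what the counting relies on. This is cleanest via a rescaling to the Airy equation near $\xi_\Airy$ followed by roughness of exponential di-/trichotomies, in the same spirit as — but more simply than — the fast-layer analysis in the proof of Theorem~\ref{t:layers}.
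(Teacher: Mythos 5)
Your overall skeleton matches the paper's proof: reduce \eqref{e:condefct} to the two-dimensional centre block via Lemma~\ref{l:diagAiry} and Hypothesis~\ref{h:osc_vector}, use the full diagonalisation before $\xi_-$ so that the centre solution aligns exponentially fast with the $\nu_{i_\infty}$-direction (this is also how the paper suppresses the possibly $o(\eps^{-1})$-large $\lambda$-derivative of the entry data $v_-$), and then run the Proposition~\ref{p:uni} counting on $I_\osc=[\xi_+,T]$ with $\Delta'\neq0$. The place where your argument is not a proof is exactly the step you flag as the main obstacle: the passage of the phase and, crucially, of its $\lambda$-derivative through the turning point. You assert, via ``the classical Airy/WKB connection formula --- equivalently, roughness applied to the rescaled equation'', that the centre solution exits with phase $\eps^{-1}\int_{\xi_\Airy(\lambda)}^{\xi_+}\omega\,\rmd\xi+\theta_0+\calO(\eps)$ and that $\eps\,\partial_\lambda\theta(\xi_+/\eps;\lambda,\eps)$ is bounded and converges to $\int_{\xi_\Airy}^{\xi_+}\partial_\lambda\omega\,\rmd\xi$. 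Roughness is not available here: across $\xi_\Airy$ there is no uniform hyperbolic/elliptic splitting, which is the whole point of the Airy transition, so roughness of di-/trichotomies cannot produce the connection. And the WKB/Airy connection formulae from the classical literature give asymptotics of solutions of a model scalar equation; what is needed (and what constitutes essentially all of the paper's proof) is a quantitative estimate of $\partial_\lambda\theta$ for the specific solution selected by the $\eps$-dependent entry data, uniformly in $\eps$ and $\lambda$, including the contribution of the generic $\calO(\eps)$ remainder terms of $B_{11}$, which over the region adjacent to the turning point are only $\calO(\eps/\sqrt{b})$ and hence contribute $\calO(1)$, not $\calO(\eps)$, to the unscaled phase. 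The paper obtains this control by splitting $[\xi_-,\xi_+]$ into three zones at $\xi_\Airy\pm K\eps^{2/3}$, running a fixed-point argument for the angle on the hyperbolic side, a crude variational bound in the inner $\eps^{2/3}$-zone, and a variation-of-constants estimate with $\mathrm{Si}/\mathrm{Ci}$ asymptotics on the oscillatory side, yielding $|\partial_\lambda\theta(\eps^{-1}\xi_+)|\leq C(\xi_+)/\eps+\calO(e^{-\eta/\eps}|\partial_\lambda\theta_-|+\eps^{-2/3})$ with $C(\xi_+)\to0$ as $\xi_+\to\xi_\Airy$; only then can $\xi_+$ be adjusted so that the Airy contribution is dominated by $\Delta'$ and the oscillatory counting applies.

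Two further remarks. First, your claimed conclusion is stronger than needed and stronger than what the paper proves: convergence of $\eps\,\partial_\lambda\theta(\xi_+/\eps)$ to the definite integral (and an $\calO(\eps)$ phase error) is not required; a smallness bound after shrinking $|\xi_+-\xi_\Airy|$ suffices, and proving the sharper statement would require uniform, $\lambda$-differentiable error control in the turning-point matching that your sketch does not supply. Second, when you differentiate your phase formula in $\lambda$ you implicitly use that the $\lambda$-dependence of the incoming direction and of the connection constant $\theta_0$ is uniformly controlled; since Hypothesis~\ref{h:osc_vector} only gives $|\partial_\lambda v_-|=o(\eps^{-1})$, this needs the exponential contraction onto the $\nu_{i_\infty}$-direction to be tracked through the variational equation (the $e^{-\eta/\eps}|\partial_\lambda\theta_-|$ terms in the paper), not just through the statement that the limiting direction is $C^1$ in $\lambda$. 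So the proposal is the right strategy, but the turning-point derivative estimate --- the actual content of the paper's proof of Proposition~\ref{p:Airy} --- is asserted rather than proven, and the suggested shortcut via roughness would fail.
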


\begin{proof}
\blue{
As in the proof of Proposition~\ref{p:uni}, now using Lemma~\ref{l:diagAiry}, the relevant component of the boundary condition is $v_-^\cc\in E_\slow^\cc(0;\lambda,\eps)$, which by Hypothesis~\ref{h:osc_vector} satisfies $\mathcal{P}_{i_\infty}v_-^\cc\neq0$. Employing the full diagonalisation \red{of $B_{11}$ in \eqref{e:genslowlinearDiag-old}} on the interval $\xi\in [0,\xi_-]$, or equivalently $\zeta \in[0,\eps^{-1}\xi_-]$, $\Phi(\eps^{-1}\xi_-,0)v_-^\cc$ spans a subspace which is $\mathcal{O}(\eps)$-close to the eigenspace corresponding to the eigenvalue $\nu_{i_\infty}$ of $A_\mathrm{slow}(\xi_-;\lambda,0)$.

% lies in the slow stable space\red{, i.e.,} $\Phi(\eps^{-1}\xi_-,0)v_-^\cc=(0,v_{2,0})$, if and only if $v_{1,-}=0$ and otherwise, and generically, $\Phi^\cc(\eps^{-1}\xi_-,0)v_-^\cc=(v_{1,0},\calO(\rme^{-\eta \xi_-/\eps}))$ for some $\eta>0$ from the stable trichotomy rate. 

% \JR{I think the following highlighted in red is confusing. My suggestion is the green text.} \red{By construction, to leading order the coordinates on the interval $\xi\in [\xi_-,\xi_+]$ derived in the proof of Lemma~\ref{l:diagAiry} match the `Airy region'  to the oscillatory region at $\xi_+$, up to a fixed constant rotation. Therefore the eigenspace...}

\blue{In the coordinates on the interval $\xi\in [\xi_-,\xi_+]$ derived in the proof of Lemma~\ref{l:diagAiry}} the eigenspace corresponding to the \red{unstable} eigenvalue $\nu_{i_\infty}$ of $A_\mathrm{slow}(\xi_-;\lambda,0)$ corresponds to a fixed angle $\theta_{0,-}(\lambda)$, and we have that the angle corresponding to $\Phi^\cc(\zeta,0)v_-^\cc$ converges exponentially fast to $\theta_{0,-} + \calO(\eps)$ as $\zeta\to\eps^{-1}\xi_-$. Therefore at the entry to the Airy transition region $[\xi_-,\xi_+]$, the initial angle of $\Phi^\cc(\eps^{-1}\xi_-,0)v_-^\cc$ is given by $\theta_-=\theta_-(\lambda,\eps) = \theta_{0,-}(\lambda) + \calO(\eps)$. In these polar coordinates the equations for $X=(X_1,X_2)$ read 
\begin{align}\begin{split}\label{e:slowpolarAiry}
\dot{r}&=2(1-b)r\cos \theta \sin \theta+\calO(\eps r),\\
\dot{\theta}& = 1+(b-1)\sin^2\theta+ \calO(\eps)
\end{split}
\end{align}
and for $0<\eps\ll |\xi_+-\xi_-|$ we thus estimate
\begin{align}
 \theta_+(\lambda,\eps):=\theta(\eps^{-1} \xi_+) = \theta_-(\lambda,\eps) + \calO(\eps^{-1}|\xi_+-\xi_-|).
 \end{align}
  Hence, the scaled phase jump across the Airy region
  \begin{align}
   \Delta_\mathrm{Airy}(\lambda):=\eps(\theta_+-\theta_-)=\calO(|\xi_+-\xi_-|)
   \end{align} 
    from $\xi=\xi_-$ to $\xi=\xi_+$ is uniformly bounded. By taking $|\xi_+-\xi_-|$ sufficiently small, this phase jump is dominated by the phase jump $\Delta$ over the uniformly oscillating region. Therefore, provided the derivative $ \Delta_\mathrm{Airy}'(\lambda_0)$ \red{with respect to $\lambda$} is similarly dominated by the quantity $\Delta'(\lambda_0)$, the proof then proceeds as that of Proposition~\ref{p:uni} with scaled phase jump $\Delta$ under uniform oscillation with $I_\osc=[\xi_+,T]$, compare \S\ref{s:proto}. 

It remains to estimate the derivative $\Delta_\mathrm{Airy}'(\lambda_0)$ and show that $\Delta_\mathrm{Airy}'(\lambda_0)=o(\eps^{-1})$. Motivated by the blow-up rescaling analysis near Airy points as in~\cite[\S5]{CSbanana}, we split the Airy transition region $\zeta \in [\eps^{-1}\xi_-, \eps^{-1}\xi_+]=I_1\cup I_2\cup I_3$ into three regions
\begin{align*}
I_1&=[\eps^{-1}\xi_-, \eps^{-1}\xi_2^-], \quad I_2 = [\eps^{-1}\xi_2^-, \eps^{-1}\xi_2^+], \quad I_3 =  [\eps^{-1}\xi_2^+, \eps^{-1}\xi_+],
\end{align*}
where $\xi_2^\pm:=\xi_\mathrm{Airy}\pm K\eps^{2/3}$, and $K\gg1$ is an $(\eps,\lambda)$-independent constant (to be chosen). We then separately estimate the angular variational equation over each (sub)interval $I_j$. On each subinterval, we define a suitable angular coordinate $\theta_j(\zeta)$, which then evolves according to the equation
\begin{align}
\dot{\theta}_j = f_j(\theta_j, \zeta;\lambda,\eps), \qquad j=1,2,3,
\end{align}
where each function $f_j$ is $2\pi$-periodic in $\theta_j$. From this, we then determine the behavior of the associated variational equation for the derivative $\partial_\lambda \theta_j$, which can be written in the general form
\begin{align}\begin{split}\label{e:slowpolarAiryjvar}
\partial_\lambda \dot{\theta}_j& = a_j(\zeta)\partial_\lambda \theta_j+ r_j(\zeta),
\end{split}
\end{align}
where $a_j(\zeta):=\partial_{\theta_{\! j}}f_j(\theta_j(\zeta), \zeta;\lambda,\eps)$ and $r_j(\zeta):=\partial_\lambda f_j(\theta_j(\zeta), \zeta;\lambda,\eps)$. The solution of this variational equation can be estimated over each interval via the variation of constants formula
\begin{align}\begin{split}\label{e:slowpolarAiryjvarconsts}
\partial_\lambda \theta_j(\zeta)& = \exp \left(\int_{\zeta_j^\mathrm{in}}^\zeta a_j(s)\mathrm{d}s \right)\partial_\lambda \theta_j(\zeta_j^\mathrm{in})+ \int_{\zeta_j^\mathrm{in}}^\zeta \exp \left(\int_{t}^\zeta a_j(s)\mathrm{d}s \right) r_j(t)\mathrm{d}t,
\end{split}
\end{align}
where $\zeta_j^\mathrm{in}$ is taken to be the left endpoint of the interval $I_j$ for $j=1,2,3$. This allows us to estimate $\partial_\lambda \theta_j$ across each (sub)interval, provided we account for the $\lambda$-dependence in the change of angular coordinate at the endpoints of neighboring subintervals. We proceed by considering each interval in turn.

\textbf{Region $I_1$}: We begin with the region $I_1$. We recall that 
\begin{align}
b(\xi;\lambda) = b_0(\xi-\xi_\Airy(\lambda))+\mathcal{O}\left ( (\xi-\xi_\Airy(\lambda))^2\right)
\end{align}
where the $\lambda$-dependence of the location of the Airy point is captured within $\xi_\mathrm{\Airy}$, noting that $\xi_\Airy'(\lambda)$ is bounded due the smooth dependence of the location of the Airy point on $\lambda$ for $\epsilon=0$. Note that this is due to the fact that the Airy point is isolated, which follows from Hypothesis~\ref{h:Airy}. We therefore have that 
\begin{align}
|\partial_\lambda b(\xi;\lambda)| \leq C
\end{align}
for some $C>0$ uniformly on the interval $\xi \in[\xi_-, \xi_+]$ and $(\lambda,\eps)$ near $(\lambda_0,0)$. In the region $I_1$, we note that 
\begin{align}
b(\xi;\lambda) \in \left[b_0(\xi_--\xi_\Airy)+\mathcal{O}\left ( |\xi_\Airy-\xi_-|^2\right), -b_0K\eps^{2/3}+\mathcal{O}(\eps^{4/3})\right].
\end{align}
So that, in particular, $b$ is strictly negative and bounded away from $0$. In this region, we need to estimate the variational equation for the angular coordinate
\begin{align}\begin{split}\label{e:slowpolarAiry1}
\dot{\theta_1}& = f_1(\theta_1,\zeta;\lambda,\eps) = 1+(b-1)\sin^2\theta_1+ \calO(\eps).
\end{split}
\end{align}
The quantity $b(\eps \zeta;\lambda)$ evolves on the slow timescale and in the region where $\xi-\xi_\mathrm{Airy}\leq -\delta_0<0$ is negative and bounded away from zero uniformly in $\eps$, solutions converge exponentially to $\theta_1^*(\zeta)+\mathcal{O}(\eps)$, where 
\begin{align}
\theta_1^*(\zeta) &=\frac{\pi}{2}-\sqrt{-b}+\mathcal{O}\left( b^{3/2}\right),
\end{align}
satisfies $\sin^2\theta_1^* = \frac{1}{1-b}$. However, to determine how such solutions enter the region $I_2$, it is necessary to track these solutions up to the $\eps$-dependent endpoint $\xi=\xi_2^- = \xi_\mathrm{Airy}-K\eps^{2/3}$.

Within~\eqref{e:slowpolarAiry1}, setting $\theta_1(\zeta) = \theta_1^*(\zeta)+\tilde{\theta}_1(\zeta)$, we have that $\tilde{\theta}_1$ satisfies
\begin{align}\begin{split}\label{e:slowpolarAiry1tilde}
\dot{\tilde{\theta}}_1& =  -2\sqrt{-b}\tilde{\theta}_1+ \calO\left(\frac{\eps}{\sqrt{-b}}+ \tilde{\theta}_1^2\right)\\
\tilde{\theta}_1(\eps^{-1}\xi_-) &= \mathcal{O}\left(\eps\right)
\end{split}
\end{align}
A standard fixed point argument on the space of uniformly bounded continuous  functions on the interval $\zeta\in I_1$, for $K>0$ fixed sufficiently large independent of $\eps$, then shows that the solution of this initial value problem satisfies $|\tilde{\theta}_1(\zeta)| = \mathcal{O}(\eps^{1/3})$ on the interval $I_1$, so that we can write the solution $\theta_1(\zeta)$ as
\begin{align}\label{e:Airy1est}
\theta_1(\zeta) &=\theta_1^*(\zeta)+\mathcal{O}(\eps^{1/3}),
\end{align}
on $I_1$. Using this, we now turn to the variational equation for $\partial_\lambda \theta_1$, which is given by~\eqref{e:slowpolarAiryjvar} with $j=1$, where 
\begin{align*}
a_1(\zeta)&=2(b(\eps\zeta;\lambda)-1)\sin \theta_1\cos \theta_1+\mathcal{O}(\eps),\\
r_1(\zeta) &= \calO(\partial_\lambda b)=\calO(1).
\end{align*}
 Solving this system using variation of constants, we obtain~\eqref{e:slowpolarAiryjvarconsts} with $\zeta_1^\mathrm{in} = \eps^{-1}\xi_-$ and $\partial_\lambda \theta_1(\zeta_1^\mathrm{in})=\partial_\lambda \theta_-$. Using the expression~\eqref{e:Airy1est} for $\theta_1(\zeta)$, we can estimate the quantity
\begin{align}
\int_{t}^\zeta a_1(s)\mathrm{d}s = -\int_{t}^\zeta(2+\mathcal{O}(|\xi_\Airy-\xi_-|))\sqrt{\xi_\mathrm{Airy}-\eps s}+\mathcal{O}(\eps^{1/3})\mathrm{d}s
\end{align}
%\JR{In the previous, should it be $\mathcal{O}(|\xi_\Airy-\xi_-|)$ rather than $\mathcal{O}(\xi_-)$?\PC{Yes!} }
from which \eqref{e:slowpolarAiryjvarconsts} with $j=1$ gives %we obtain 
the estimate
\begin{align}\begin{split}\label{e:Airy1thetaest}
  \partial_\lambda \theta_1(\eps^{-1}\xi_2^-)& = \mathcal{O}\left(e^{-\eta/\eps}|\partial_\lambda \theta_-|+ \eps^{-1/3}\right) ,
\end{split}
\end{align}
provided $K>0$ is fixed sufficiently large.

\textbf{Region $I_2$}: In the region $I_2$, we change variables $x=\eps^{1/3} \tilde{x}$ and transform to polar coordinates with respect to the Cartesian coordinates $(\tilde{x},y)$, which results in the angular equation
\begin{align}\begin{split}\label{e:slowpolarAiry2}
\dot{\theta_2}& = f_2(\theta_2, \zeta;\lambda,\eps)= \eps^{1/3}+\left(\frac{b}{\eps^{1/3}}-\eps^{1/3}\right)\sin^2\theta_2+ \calO(\eps^{2/3}),
\end{split}
\end{align}
where we note that on the interval $I_2$, we have that $b(\eps \zeta;\lambda)=\mathcal{O}(\eps^{2/3})$. We write the variational equation for $\theta_2$, obtaining~\eqref{e:slowpolarAiryjvar} with $j=2$ and 
\begin{align}
\begin{split}\label{e:slowAiry2est}
a_2(\zeta)&=\left(2\left(\frac{b}{\eps^{1/3}}-\eps^{1/3}\right)\sin \theta_2 \cos \theta_2+ \calO(\eps^{2/3})\right)= \mathcal{O}(\eps^{1/3}),\\
 r_2(\zeta) &= \calO(\eps^{-1/3}\partial_\lambda b )=\calO(\eps^{-1/3}).
 \end{split}
\end{align}
Similarly to the region $I_1$, we solve this system using variation of constants to obtain~\eqref{e:slowpolarAiryjvarconsts} with $j=2$ and $\zeta_2^\mathrm{in} = \eps^{-1}\xi_2^- = \eps^{-1}\left(\xi_\mathrm{Airy}-K\eps^{2/3}\right)$. Using the estimates~\eqref{e:slowAiry2est}, this gives
\begin{align}\begin{split}
\partial_\lambda \theta_2(\eps^{-1}\xi_2^+)& = \mathcal{O}\left(|\partial_\lambda \theta_2(\eps^{-1}\xi_2^-)|+ \eps^{-2/3}\right).
\end{split}
\end{align}
Using the relation $e^{i\theta_2} = \left(\cos \theta_1+i\eps^{1/3} \sin\theta_1\right)\left(\cos^2\theta_1+\eps^{2/3}\sin^2\theta_1\right)^{-1/2}$, we find that 
\begin{align}
\begin{split}
\partial_\lambda \theta_2(\eps^{-1}\xi_2^-) &=  \left. \frac{\mathrm{d}\theta_2}{\mathrm{d} \theta_1}\right|_{\theta_1 = \theta_1(\eps^{-1}\xi_2^-)}\partial_\lambda\theta_1(\eps^{-1}\xi_2^-) %\\
= \mathcal{O}\left(e^{-\eta/\eps}|\partial_\lambda \theta_-|+ \eps^{-2/3}\right),
\end{split}
\end{align}
where we used~\eqref{e:Airy1thetaest}, possibly taking $\eta$ slightly smaller. From this we obtain
\begin{align}\begin{split}
\partial_\lambda \theta_2(\eps^{-1}\xi_2^+)& = \mathcal{O}\left(e^{-\eta/\eps}|\partial_\lambda \theta_-| + \eps^{-2/3}\right).
\end{split}
\end{align}

\textbf{Region $I_3$:} It remains to estimate over the region $I_3$. We change variables $x=\sqrt{b(\eps \zeta; \lambda)} \bar{x}$ and
% which results in the system
%\begin{align}
%\begin{split}
%\dot{\bar{x}} &= -\sqrt{b}y-\frac{\eps b'}{2b}\bar{x}+\mathcal{O}\left(\eps\bar{x}, \frac{\eps}{\sqrt{b}} y\right)\\
%\dot{y} &= \sqrt{b}\bar{x}+\mathcal{O}\left(\eps\sqrt{b}\bar{x},\eps y\right)
%\end{split}
%\end{align}
%We 
transform to polar coordinates with respect to the coordinates $(\bar{x},y)$, which results in the angular equation
\begin{align}\begin{split}\label{e:slowpolarAiry3}
\dot{\theta_3}& = f_3(\theta_3,\zeta;\lambda,\eps)=\sqrt{b}+\frac{ \dot{b}}{2b}\sin\theta_3 \cos\theta_3+ \calO\left(\frac{\eps}{\sqrt{b}}\right).
\end{split}
\end{align}
We first estimate $\theta_3(\zeta)$ satisfying $\theta_3(\eps^{-1}\xi_2^+)=\theta_{3,0}$, the transformed value of $\theta_2(\eps^{-1}\xi_2^+)$, by setting $\theta_3=\theta_3^*(\zeta)+\tilde{\theta}_3$, where 
\begin{align}
\theta_3^*(\zeta) &= \theta_{3,0}+\int_{\eps^{-1}\xi_2^-}^\zeta \sqrt{b(\eps s;\lambda)} \mathrm{d}s\\
&=\theta_{3,0}+ B(\zeta)-B(\eps^{-1}\xi_2^+)
\end{align}
where $B(\zeta)=B(\zeta;\lambda,\eps)$ is the antiderivative of $\sqrt{b(\eps \zeta;\lambda)}$, which satisfies the estimates
\begin{align*}
B(\zeta;\lambda,\eps)&=\frac{2\sqrt{b_0}\eps^{1/2}}{3}\left(\zeta-\eps^{-1}\xi_\mathrm{Airy}\right)^{3/2}\left(1+\mathcal{O}\left(|\eps \zeta-\xi_\mathrm{Airy}|\right)\right)\\
B(\eps^{-1}\xi_2^+;\lambda,\eps)&=\frac{2}{3}\sqrt{b_0}K^{3/2}(1+\calO(\eps^{2/3}))
\end{align*}
This results in the equation for \red{$\tilde{\theta}_3$ given by}
\begin{align}\begin{split}\label{e:slowpolarAiry3tilde}
\dot{\tilde{\theta}}_3& = \frac{ \dot{b}}{4b}\sin (2\theta_3^*+2\tilde{\theta}_3)+ \calO\left(\frac{\eps}{\sqrt{b}}\right),
\end{split}
\end{align}
\red{where $\tilde{\theta}_3(\eps^{-1} \xi_2^+)=0$.} Integrating this equation, we obtain 
\begin{align}\label{e:theta3est}
\tilde{\theta}_3(\zeta) = \mathcal{O}\left(\log K+\log\left|B(\zeta)\right|\right)
\end{align}
uniformly in $K$ sufficiently large and $|\xi_+-\xi_\Airy|$ sufficiently small for all $\zeta \in I_3$. 
We now consider the variational equation of~\eqref{e:slowpolarAiry3} associated with $\theta_3$, which is given by~\eqref{e:slowpolarAiryjvar} with $j=3$ and 
\begin{align}
\begin{split}\label{e:slowAiry3varterms}
a_3(\zeta)&=\frac{\dot{b}}{2b}\cos 2\theta_3+ \calO\left(\frac{\eps}{\sqrt{b}}\right),\\
 r_3(\zeta) &= \frac{\partial_\lambda b}{2\sqrt{b}}+\partial_\lambda\left[\frac{\dot{b}}{2b}\right] \sin \theta_3\cos\theta_3+\mathcal{O}\left(\frac{\eps\partial_\lambda b}{b^{3/2}}  \right).
 \end{split}
\end{align}
We solve using variation of constants to obtain~\eqref{e:slowpolarAiryjvarconsts} with $j=3$ and $\zeta_3^\mathrm{in} = \eps^{-1}\xi_2^+$. We now estimate the quantity
\begin{align}
\int_{t}^\zeta a_3(s)\mathrm{d}s &= \int_{t}^\zeta \frac{\dot{b}}{2b}\cos 2\theta_3(s)+ \calO\left(\frac{\eps}{\sqrt{b}}\right)\mathrm{d}s= \int_{t}^\zeta \frac{\dot{b}}{2b}\cos 2\theta_3(s)\mathrm{d}s+\calO\left(1\right)
\end{align}
for $t,\zeta\in I_3$ with $t<\zeta$. We have that
\begin{align}
\int_{t}^\zeta \frac{\red{\dot b}}{2b}\cos 2\theta_3(s)\mathrm{d}s &= \frac{1}{2}\int_{t}^\zeta \frac{\eps}{\eps s-\xi_\mathrm{Airy}}\cos 2\theta_3(s)\mathrm{d}s + \calO(1)
\end{align}
Changing variables $s\to \tilde{s}(s)$, where
\begin{align*}
\tilde{s}(s)&=2B(s)+2\tilde{\theta}_3(s)\\
&=2B(s)+\mathcal{O}\left(\log K+\log\left|B(s)\right|\right)
\end{align*}
using~\eqref{e:theta3est}, so that
$2\theta_3(s) = \tilde{s} +2\theta_{3,0}-2B(\eps^{-1}\xi_2^+)$, we have that
 \begin{align}
 \begin{split}\label{e:cossinintegral}
\int_{t}^\zeta \frac{\eps}{\eps s-\xi_\mathrm{Airy}}\cos 2\theta_3(s)\mathrm{d}s 
&= \int_{\tilde{s}(t)}^{\tilde{s}(\zeta)} \frac{2}{ 3s'}\cos \left( s' +2\theta_{3,0}-2B(\eps^{-1}\xi_2^+) \right)\mathrm{d}s' + \calO(1)\\
&=\cos \left( 2\theta_{3,0}-2B(\eps^{-1}\xi_2^+) \right)\int_{\tilde{s}(t)}^{\tilde{s}(\zeta)} \frac{2}{ 3s'}\cos s'\mathrm{d}s'\\
&\qquad -\sin \left(2\theta_{3,0}-2B(\eps^{-1}\xi_2^+) \right)\int_{\tilde{s}(t)}^{\tilde{s}(\zeta)} \frac{2}{ 3s'}\sin s'\mathrm{d}s' + \calO(1)
\end{split}
\end{align}
 Using boundedness of the trigonometric integral $\mathrm{Si}(x)=\int_0^x \frac{\sin x}{x}\mathrm{d}x$ and asymptotic estimates for the integral $\mathrm{Ci}(x)=-\int_x^\infty \frac{\cos x}{x}\mathrm{d}x$ for large $x\gg1$~\cite[\S5.2]{abramowitz1948handbook}, we have that~\eqref{e:cossinintegral} is bounded, uniformly in all $K$ sufficiently large and all $|\xi_+-\xi_\Airy|$ sufficiently small, for any $t,\zeta\in I_3$ with $t<\zeta$. Hence there exists a constant $C>0$ and a coefficient $C(\xi_+)$ such that %we obtain that 
\begin{align}\begin{split}
\left| \int_{\eps^{-1}\xi_2^+}^{\eps^{-1}\xi_+} \exp \left(\int_{t}^\zeta a_3(s)\mathrm{d}s \right) r_3(t)\mathrm{d}t\right| &\leq C  \int_{\eps^{-1}\xi_2^+}^{\eps^{-1}\xi_+} |r_3(t)|\mathrm{d}t\leq \frac{C(\xi_+)}{\eps},
\end{split}
\end{align}
uniformly for all sufficiently large $K>0$, where %the coefficient 
$C(\xi_+)\to 0$ as $|\xi_+-\xi_\mathrm{Airy}|\to0$. We therefore obtain that 
\begin{align}\begin{split}
|\partial_\lambda \theta_3(\eps^{-1}\xi_+)|& \leq \frac{C(\xi_+)}{\eps}+ \mathcal{O}\left(\partial_\lambda \theta_3(\eps^{-1}\xi_2^+)\right).
\end{split}
\end{align}
In order to estimate $\partial_\lambda \theta_3(\eps^{-1}\xi_2^+)$, we use the relation $e^{i\theta_3} = \left(\eps^{1/3}\cos \theta_2+i \sqrt{b} \sin\theta_2\right)\left(\eps^{2/3}\cos^2\theta_2+b\sin^2\theta_2\right)^{-1/2}$, from which we obtain
\begin{align}
\begin{split}
\partial_\lambda \theta_3(\eps^{-1}\xi_2^+) &=  \left. \frac{\partial \theta_3}{\partial b}\right|_{\zeta=\eps^{-1}\xi_2^+}\partial_\lambda b(\eps^{-1}\xi_2^+)+\left. \frac{\partial \theta_3}{\partial \theta_2}\right|_{\theta_2 = \theta_2(\eps^{-1}\xi_2^+)}\partial_\lambda\theta_2(\eps^{-1}\xi_2^+)\\
&= \mathcal{O}\left(e^{-\eta/\eps}|\partial_\lambda \theta_-|+|\eps^{-2/3}|\right).
\end{split}
\end{align}
We finally obtain
\begin{align}\begin{split}
|\partial_\lambda \theta_3(\eps^{-1}\xi_+)|& \leq \frac{C(\xi_+)}{\eps}+ \mathcal{O}\left(e^{-\eta/\eps}|\partial_\lambda \theta_-|+|\eps^{-2/3}|\right).
\end{split}
\end{align}
In particular, by adjusting $\xi_+$, the exit angle from the Airy region $I_3$, which becomes the entry angle to the oscillatory region $I_\mathrm{osc}$, satisfies $|\partial_\lambda \theta| = o(\eps^{-1})$ and we may proceed as in the oscillatory regime. Notably, the change from the coordinates in region $I_3$ to those in the uniformly oscillating region $\xi\geq\xi_+$ is given by $T\mathrm{diag}(\sqrt{b},1) = \sqrt{b/2}\begin{pmatrix}1&-1\\1&1\end{pmatrix}$, which (for the angle)  is simply a rotation by $\pi/4$.}
\end{proof}

We emphasize that, by reversing the direction of $\zeta$, the same proof applies to an Airy transition exit rather than an entry, i.e. $[0,T]= I_\osc\cup I_\Airy^-$, where on the reflection of $I_\Airy^-$ we have Hypothesis~\ref{h:Airy}. Moreover, in combination of these cases, the proof applies to the case $[0,T]=I_\Airy\cup I_\osc\cup I_\Airy^-$, i.e.,  Airy transitions at entry and exit, which is one of the cases in the FitzHugh--Nagumo system as discussed in \S\ref{sec:applytofhn}.

%\includegraphics[width=1\linewidth,trim=0 500 0 0, clip]{figs/sketchJens.pdf}
%\JR{This is a sketch of how I think of part of the FHN scenario. Here the red interval lines in the top right and bottom right show the `slow absolute spectrum' for two pulses. The first with $s=s_0$ jumps off before the absolute spectrum has reached its maximum extent. The second with $s=s_1$ jumps off after it has reached its maximum extent, and after $\lambda_0$ was a pinched double root for a second time.}

\subsection{Adding a fast exit layer}\label{s:accumulation_withlayer}
\blue{
In this section, we discuss the inclusion of a fast layer in which the solution leaves a neighborhood of the slow manifold $\mathcal{M}_\eps$ by ``jumping off" along one of the hyperbolic fast directions, and the exit boundary condition is prescribed at some small, but $\mathcal{O}(1)$, distance from the slow manifold. As mentioned in~\S\ref{s:cases}, we will restrict our attention to the case of a fast \emph{exit layer} from a uniformly oscillating region, as this is most relevant for our purposes with the FitzHugh--Nagumo system. However only minor modifications to the following arguments are needed in the case of an exit layer after an Airy transition, a fast entry layer, or the case of both fast entry and exit layers.

We consider the same setup and assumptions as in~\S\ref{s:slowabssetup} to hold on the slow interval 
\begin{equation}\label{e:decompnew2}
[0,T] = I_\slow = I_\Airy \cup I_\osc.
\end{equation}
However, to account for the fast layer, we append an interval of length $L_\eps=K|\log \eps|$ on the fast timescale for some $K>0$, that is we consider $\zeta\in[0,T_\eps]$, where $T_\eps:=T/\eps+L_\eps$. Equivalently, on the slow timescale, we consider the interval
\begin{equation}\label{e:decompnewfast}
\xi\in [0,\eps T_\eps] =  I_\Airy \cup I_\osc \cup I_\fast,
\end{equation}
as in Figure~\ref{f:intervals}, and we now prescribe the boundary condition $Q^+(\lambda,\eps)$ at $\xi=\eps T_\eps$, and the interval $I_\fast$ has length $\calO(\eps |\log\eps|)$. With $\Phi(\zeta,\tilde{\zeta};\lambda,\eps)$ the evolution operator of~\eqref{e:genslowlinear} now on $[0,T_\eps]$, analogously to~\S\ref{s:accumulation_nolayer} we consider the modified existence problem for an eigenfunction 
\begin{align}\label{e:condefct_layer}
\Phi(T_\eps-L_\eps,0;\lambda,\eps)Q^-(\lambda,\eps)\cap \Phi(T_\eps-L_\eps,T_\eps;\lambda,\eps) Q^+(\lambda,\eps) \neq \{0\},
\end{align}
by transporting the subspace $Q^+$ to $\zeta = T_\eps-L_\eps= T/\eps$.

We remark that in the case of a traveling wave solution $\phi_\eps(\zeta)$, due to the exponential decay along the corresponding fast fiber (due to normal hyperbolicity of the slow manifold), %imply that
the interval of length $L_\eps$ on the fast timescale (for sufficiently large $K>0$ fixed independent of $\eps$) 
is large enough to guarantee that the solution $\phi_\eps(T/\eps)$ is within $\mathcal{O}(\eps)$ of the corresponding reduced trajectory on the slow manifold; this in turn implies that the assumptions of~\S\ref{s:slowabssetup} indeed hold on the slow interval $I_\slow$, despite the boundary condition $Q^+(\lambda,\eps)$ now being prescribed at $\zeta=T_\eps$, where $\phi_\eps(T_\eps)$ is an $\mathcal{O}(1)$ distance from $\mathcal{M}_\eps$. With the inclusion of so-called corner-type estimates, which we describe in Hypothesis~\ref{h:bl} below, the case of a fast layer does not affect the accumulation of eigenvalues as in~\S\ref{s:accumulation_nolayer}.

For a fast boundary layer, the underlying solution leaves the slow manifold along a singular fast trajectory of the layer problem. For the singular limit $\phi_0$ this means there is a well-defined `take-off point' on the slow manifold, located in terms of  $\xi$ at $\xi=T$, the limit of the right endpoint of $I_\slow$ at $\eps=0$. Relevant for estimates is the linearisation in that point which we denote as $A_\fast^\infty(\lambda):= A_\slow(T;\lambda,0)$, cf.\ \eqref{e:genslowlinear0}. The layer interval $I_\fast = [T,T+\eps L_\eps]$ viewed on the $\zeta$-scale is $\eps^{-1} I_\fast = [T/\eps,T/\eps+L_\eps]$ and is parameterised by $\tzeta\in[-L_\eps,0]$; for the limiting problem we then have $\tzeta\in(-\infty,0]$ with $A_\fast^\infty$ approached as $\tzeta\to -\infty$. We thus consider $L_\eps=-K\log \eps$, for sufficiently large $K>0$ and assume the following.
\begin{hypothesis}\label{h:bl}
(Boundary layer) There exists $A_\fast(\tzeta;\lambda)$ such that
\begin{align}
|A(\tzeta+T_\eps; \lambda, \eps) - A_\fast(\tzeta;\lambda)| = \mathcal{O}(\eps \log \eps)
\end{align}
uniformly on the interval $\tzeta\in[-L_\eps,0]$ for all sufficiently small $\eps>0$. Further, there is $\tnu>0$ such that for $\tzeta \in (-\infty,0]$
\begin{align}\label{e:decayfront}
|A_\fast(\tzeta;\lambda)-A_\fast^\infty(\lambda)| = \mathcal{O}(e^{\tnu \tzeta}).
\end{align}
\end{hypothesis}

%\note{PC: This is slightly awkward, but for now I just formulate this as a hypothesis to keep things flexible. These estimates follow from, for instance,~\cite[Theorem 4.5]{CdRS}\\
%JR: This reference is ok, but still a bit indirect; is there a more direct one in the paper?}

%\JR{I think this is fine. Perhaps specify a value of $\tnu$ in terms of the spectral gap of $A_\fast^\infty(0)$?}
%\PC{Perhaps we could add a remark here that the exponential decay comes from the dependence on the fast front itself? I want to be careful about relating this to the spectral gap explicitly, as the system this is applied to for FHN is weighted~\eqref{e:fhnlin1st_0_ell}, so this might be confusing.}
%\JR{I removed the remarks and changed as discussed with Paul.}

The estimate \eqref{e:decayfront} relates to the convergence of the fast part of the travelling wave profile $\phi_\eps$ to its asymptotic state, and can be obtained in specific example systems (such as the FitzHugh--Nagumo system) through corner-type estimates, see e.g.~\cite{esz} or \cite[Theorem 4.3]{CdRS}. 
Under Hypothesis~\ref{h:osc}, and with the exponential weight $\Re\nu_{i_\infty}$, the matrix $A_\fast^\infty(\lambda)$ possesses a complex conjugate pair of eigenvalues $\pm \rmi \omega^\infty(\lambda)\neq 0$ for $\lambda\in\R$ near $\lambda_0$, and $A_\fast^\infty(\lambda)$ admits a spectral decomposition into stable, center, and unstable eigenspaces. These yield an exponential trichotomy on $\mathbb{R}^-$ for $U_\tzeta = B U$ with $B=A_\fast^\infty(\lambda)$ and, due to roughness, under Hypothesis~\ref{h:bl} also for $B=A_\fast(\tzeta;\lambda)$\blue{, cf.\ \cite{Pal82}}. We denote the corresponding subspaces by $E^{\mathrm{ss,c,uu}}_\fast(\tzeta,\lambda)$ and assume the following with respect to the boundary subspace $Q^+$.}

%\red{
%\JR{Removed the overall red colouring  of the following and just highlight the changes. In red what I just added.}
\begin{hypothesis}\label{h:bl_vector}
\red{There exists $C>0$ such that for all $\eps\in(0,\eps_0]$, the following holds. There exists a unit vector $v_+(\lambda,\eps)\in E^\cc_\fast(0;\lambda)\oplus E^{\uu}_\fast(0;\lambda)$ such that $Q^+(\lambda,\eps) = \mathrm{span}\{v_+\}\oplus\tilde{Q}^+$ where
\begin{align}%\label{e:layerBndryRight}
\tilde{Q}^+\oplus E^\cc_\fast(0;\lambda)\oplus E^{\uu}_\fast(0;\lambda)=\mathbb{C}^n,
\end{align}
%for any $\delta>0$ 
\blue{and} we can decompose any vector in $\mathrm{span}\{v_+\}$ as $v^\cc_++v^\uu_+$ for some $v_+^\cc\in E^\cc_\fast(0;\lambda)$ and  $v_+^\uu\in E^{\uu}_\fast(0;\lambda)$ with $|v^\uu_+|\leq C|v^\cc_+|$. Furthermore, we can choose $v_+$ with these properties, such that $v_+$ is continuous for $\eps\in(0,\eps_0]$, and $\partial_\lambda v_+$ is continuous for $\eps\in(0,\eps_0]$, and there exists $\kappa>0$ such that $|\partial_\lambda v_+| = o\left(\eps^{-1+\kappa}\right)$ uniformly in $\eps\in(0,\eps_0]$. }

The entry boundary subspace $Q^-$ is assumed to satisfy the conditions in Hypothesis~\ref{h:osc_vector}. 
\end{hypothesis}
\blue{
\begin{remark}
We comment briefly on the discrepancy between the estimates required on the derivative $|\partial_\lambda v_\pm|$ in Hypotheses~\ref{h:osc_vector} and~\ref{h:bl_vector}. The slightly stronger estimate in Hypothesis~\ref{h:bl_vector} accounts for deviations which may occur due to passage along the fast interval of length $L_\eps=\mathcal{O}(|\log \eps|)$.
\end{remark}
}

The eigenvalue accumulation in the layer case can now be formulated analogous to Proposition \ref{p:Airy}; the statement also holds with empty $I_\Airy$.
\begin{theorem}\label{t:layers}
Assume \blue{$[0,T+\eps L_\eps] = I_\Airy\cup I_\osc\cup I_\fast$} with $L_\eps = K|\log\eps|$, $K>0$, and Hypotheses~\ref{h:well}--\ref{h:bl_vector} hold for $\lambda=\lambda_0$, and fix $r>0$. If $\Delta'\neq 0$, cf.\ \eqref{e:delpr}, then for all sufficiently small $\eps>0$, $\specpt\cap B_r(\lambda_0)\subset\C$ has cardinality $\calO(1/\eps)$, where $B_r(\lambda_0)$ is the disc around $\lambda_0$ with radius $r$.
\end{theorem}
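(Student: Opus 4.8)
The plan is to reduce Theorem~\ref{t:layers} to the uniformly oscillating case of Proposition~\ref{p:uni} (or, when $I_\Airy\neq\emptyset$, to Proposition~\ref{p:Airy}) by showing that the fast exit layer contributes only a bounded phase jump with $\lambda$-derivative of order $o(\eps^{-1})$, so that it can be absorbed into the higher-order terms of the scaled phase function $\Delta_T$. The overall structure mirrors the proofs already given: use the exponential trichotomy of Lemma~\ref{l:diag} (or Lemma~\ref{l:diagAiry}) on $\eps^{-1}I_\slow$ together with the trichotomy of $A_\fast(\tzeta;\lambda)$ on $[-L_\eps,0]$ (Hypothesis~\ref{h:bl}), so that the existence condition~\eqref{e:condefct_layer} splits into a hyperbolic part, which has a unique solution by persistence of linear independence (Hypotheses~\ref{h:osc_vector} and~\ref{h:bl_vector}), and a center part, which reduces to an angular equation as in~\eqref{e:unicenter}.

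First I would set up the layer coordinates: on the fast interval, rescale to $\tzeta\in[-L_\eps,0]$, use roughness of exponential dichotomies/trichotomies under the $\mathcal{O}(\eps\log\eps)$ perturbation in Hypothesis~\ref{h:bl} to transfer the spectral trichotomy of $A_\fast^\infty(\lambda)$ to $A_\fast(\tzeta;\lambda)$, and match the center subspace $E^\cc_\fast$ at $\tzeta=-L_\eps$ with the center subspace $E^\cc_\slow(T;\lambda,\eps)$ at the slow endpoint. Because the slow manifold is normally hyperbolic with fast rates $\mathcal{O}(1)$ and the slow frequency is $\mathcal{O}(\eps)$-close to $\pm\rmi\omega^\infty(\lambda)$, the center block over the layer is a near-constant rotation at frequency $\omega^\infty(\lambda)+\mathcal{O}(e^{\tnu\tzeta})$, so in polar coordinates $(r,\theta)$ the radial equation stays $\mathcal{O}(1)$-bounded and nonvanishing (since $v_+^\cc\neq0$), while the angular equation gives a layer phase contribution $\theta(T/\eps)-\theta(T_\eps) = \int_{-L_\eps}^0 \omega^\infty(\lambda)\,\rmd\tzeta + \mathcal{O}(1) = \omega^\infty(\lambda)L_\eps + \mathcal{O}(1)$. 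The crucial point is the scaling: $\eps L_\eps = \mathcal{O}(\eps|\log\eps|)\to0$, so this layer phase, after multiplication by $\eps$, vanishes in the limit and contributes nothing to $\Delta_T(\lambda,0)$; likewise the transported subspace $\Phi(T/\eps,T_\eps;\lambda,\eps)Q^+$ projects onto a center direction whose angle $\theta_+^\mathrm{layer}(\lambda,\eps)$ differs from the angle $\theta_+$ of $v_+^\cc$ by $\mathcal{O}(1)$, which only shifts the eigenvalue grid~\eqref{e:protocond0} by a bounded amount and does not affect its $\mathcal{O}(1/\eps)$ cardinality.

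The main obstacle — and the reason Hypothesis~\ref{h:bl_vector} demands $|\partial_\lambda v_+| = o(\eps^{-1+\kappa})$ rather than merely $o(\eps^{-1})$ — is controlling $\partial_\lambda$ of the layer phase. Differentiating the angular equation over $[-L_\eps,0]$ in $\lambda$ yields a variational equation $\partial_\lambda\dot\theta = a(\tzeta)\partial_\lambda\theta + r(\tzeta)$ whose variation-of-constants solution, as in~\eqref{e:slowpolarAiryjvarconsts}, accumulates a factor from $\int a(\tzeta)\rmd\tzeta$ over an interval of length $L_\eps=K|\log\eps|$; the integrand $a$ is the $\theta$-derivative of a $2\pi$-periodic function along a trajectory rotating $\mathcal{O}(L_\eps)$ times, and the forcing $r$ comes from $\partial_\lambda\omega^\infty$ plus the $\mathcal{O}(e^{\tnu\tzeta})$ tail. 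The zero-average argument from the proof of Proposition~\ref{p:uni} must be reapplied here, together with the exponential decay~\eqref{e:decayfront}, to show $|\partial_\lambda\theta(T/\eps)| \le C L_\eps |\partial_\lambda\theta(T_\eps)| + \mathcal{O}(L_\eps) = \mathcal{O}(|\log\eps|\cdot o(\eps^{-1+\kappa})) + \mathcal{O}(|\log\eps|) = o(\eps^{-1})$; the extra $\eps^{\kappa}$ margin is precisely what absorbs the $|\log\eps|$ amplification. Once $|\partial_\lambda\theta|=o(\eps^{-1})$ holds at $\zeta=T/\eps$, this plays the role of the entry data $\eps\theta_\lambda(0)=o(1)$ in Proposition~\ref{p:uni}, and running that proof backward over $\eps^{-1}I_\osc$ (and, if present, over the Airy region via Proposition~\ref{p:Airy}) yields $\lim_{\eps\to0}\eps\,\partial_\lambda\theta = -\Delta'\neq0$. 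The eigenvalue condition is then $\Delta_T(\lambda,\eps)\in\{\eps k\pi + \eps(\theta_- - \theta_+^\mathrm{layer})\}$ with $\Delta_T(\lambda,0)=\int_0^T\omega(\xi;\lambda)\rmd\xi$ having nonzero $\lambda$-derivative $\Delta'$, so $\specpt\cap B_r(\lambda_0)$ has cardinality $\mathcal{O}(1/\eps)$, completing the proof. The remaining routine work — the block-diagonal reduction, the persistence of the hyperbolic matching, and the explicit corner-type estimates underlying Hypothesis~\ref{h:bl} in concrete systems — is deferred to the cited results and to~\S\ref{sec:applytofhn}.
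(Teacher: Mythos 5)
Your proposal is correct and takes essentially the same route as the paper's proof: transport $Q^+$ through the fast layer using the fast trichotomy from Hypothesis~\ref{h:bl}, reduce the matching condition~\eqref{e:condefct_layer} to its center part by solving the hyperbolic components through persistence/implicit-function arguments, and note that the layer contributes only a bounded phase shift whose $\lambda$-derivative is amplified by at most $\calO(|\log\eps|)$, which is exactly what the $o(\eps^{-1+\kappa})$ bound in Hypothesis~\ref{h:bl_vector} absorbs before invoking the argument of Propositions~\ref{p:uni} and~\ref{p:Airy}. The paper carries out the hyperbolic/center splitting as an explicit reduction (its maps $H_1$--$H_8$ leading to~\eqref{e:finallayerbcs}) and represents the center evolution across the layer by an asymptotic formula with $\calO(e^{\tnu\tzeta})$ corrections rather than by your angular variational equation, but these are the same argument in different technical packaging.
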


\begin{proof}
\blue{
Noting $P^\cc_\fast(0)v_+\neq 0$ by Hypothesis~\ref{h:bl_vector}, under the evolution $\Phi_\fast$ of the reduced system
\begin{align}\label{e:bl_reduced}
U_\tzeta = A_\fast(\tzeta;\lambda)U,
\end{align}
using for instance~\cite[Lemma 2.2]{KS} and \eqref{e:decayfront} we have
\begin{align}
\Phi_\fast(\tzeta,0;\lambda)v_+^\cc = v_+^\infty e^{\rmi \omega^\infty(\lambda)\tzeta}+\mathrm{c.c.} +\calO(e^{\tnu \tzeta}) , \qquad -\infty < \tzeta \leq 0\label{e:layerBndryRight2}
\end{align}
where $v_+^\infty$ 
%$v_\fast^\infty\in E^{\cc,\infty}_\fast$ 
is an eigenvector for the eigenvalue $\rmi \omega^\infty(\lambda)$ %and $E^{\cc,\infty}_\fast$ is 
in the center subspace of $A_\fast^\infty(\lambda)$ in the weighted space. \red{Without loss of generality, possibly taking a smaller value of $\tnu$ from Hypothesis~\ref{h:bl}, the exponents $\tnu$ in \eqref{e:layerBndryRight2} and in Hypothesis~\ref{h:bl} are equal.}

Combined with Hypothesis~\ref{h:bl}, using the reduced system~\eqref{e:bl_reduced} we next transport the boundary space $Q^+$ to the right endpoint $T_\eps-L_\eps$ of the rescaled slow interval $\eps^{-1}I_\slow$. Under the evolution of~\eqref{e:genslowlinear} we have 
\begin{align}
\Phi(T_\epsilon-L_\eps,T_\eps;\lambda,\eps)Q^+ &=\mathrm{span}\{ \Phi_\fast(-L_\eps,0;\lambda)v_+^\cc \}+ E^{\ss}_\fast(-L_\eps;\lambda)+\calO(\eps |\log\eps|^2), \label{e:layerBndryRight3}
\end{align}
%Error term from Hyp 5 $\eps\log\eps$ times evol time $|\log \eps|$!
where the error terms are in the sense of the distance between unit basis vectors. As in the proof of Proposition~\ref{p:uni}, decisive will be the projection onto the center space of the slow trichotomy.

Each solution on the interval $\zeta\in \eps^{-1}I_\slow$ can be written in terms of the trichotomy evolutions as
\begin{align}
\psi_\mathrm{slow}(\zeta)  = \Phi^\uu(\zeta,T_\eps-L_\eps;\lambda,\eps)\alpha+\Phi^\ss(\zeta,0;\lambda,\eps)\beta+ \Phi^\cc(\zeta,0;\lambda,\eps)\gamma
\end{align}
for some $\alpha\in E^\uu(T_\eps-L_\eps; \lambda, \eps), \beta \in E^\ss(0;\lambda, \eps), \gamma \in E^\cc(0;\lambda, \eps)$. We now match the solution  $\psi_\mathrm{slow}(\zeta)$ with the boundary subspaces by projecting onto each of the subspaces $E^{\uu,\ss,\cc}$ at $\zeta= 0$ and at $T_\eps-L_\eps$. 

At $\zeta=0$, by Hypothesis~\ref{h:bl_vector}, there exists $C>0$ such that we can decompose any vector in $\mathrm{span}\{v_-\}$ as $v_-^\cc+v_-^\ss$ with $v_-^\cc\in E^\cc(0;\lambda,\eps)$ and  $v_-^\ss\in E^{\ss}(0;\lambda,\eps)$, and $|v^\ss_-|\leq C|v^\cc_-|$. Thus, any vector $q_-$ in the space $Q^- $ can be written as $q_- = v_-^\cc+v^\ss_-+ \tilde{q}_-$ for some $\tilde{q}_-\in \tilde{Q}^-$ and $v^\cc_-, v^\ss_-$ as above.
%\JR{We could consider using the latter decomposition as a somewhat simpler way to state the Hypotheses for the b.c.} 
We project with each of $P^\mathrm{c,uu,ss}(0;\lambda,\eps)$ and obtain the three equations:
\begin{equation}\label{e:trichblL}
\Phi^\uu(0,T_\eps-L_\eps)\alpha = P^\uu(0)\tilde{q}_-,\quad
\beta = v^\ss_- +P^\ss(0)\tilde{q}_-,\quad
\gamma = v_-^\cc+P^\cc(0)\tilde{q}_-.
\end{equation}

Similarly, by Hypothesis~\ref{h:bl_vector}, there exists $C>0$ such that any vector in $\mathrm{span}\{v_+\}$ can be written as $v^\cc_++v^\uu_+$ where $v_+^\cc\in E_\fast^\cc(0;\lambda)$ and  $v_+^\uu\in E_\fast^{\uu}(0;\lambda)$ and $|v^\uu_+|\leq C|v^\cc_+|$. Using \eqref{e:layerBndryRight3}, at $\zeta=T_\eps-L_\eps$, any vector $q_+$ in the space $\Phi(T_\epsilon-L_\eps,T_\eps;\lambda,\eps)Q^+ $ can therefore be written as 
\[
q_+=  \Phi_\fast(-L_\eps,0)v_+^\cc + q^\mathrm{ss}_++\calO\left(\eps |\log\eps|^2(|v^\cc_+|+|q^\ss_+|)\right)
\]
%\JR{Explain that $q^\ss_+$ relates to the component of $q_+$ in $\tilde Q^+$?}
for some $q^\mathrm{ss}_+\in E^{\ss}_\fast(-L_\eps)$ and $v^\cc_+$ as above. We similarly project with each of $P^\mathrm{c,uu,ss}(T_\eps-L_\eps;\lambda,\eps)$. Using the proximity of the projections of the exponential trichotomy along $\eps^{-1}I_\mathrm{slow}$ with the spectral projections of $A(\zeta;\lambda, \eps)=A_\mathrm{slow}(\eps\zeta;\lambda, \eps)$ from \eqref{e:proj} of Lemma~\ref{l:diag}, we obtain the three equations
\begin{equation}\label{e:trichblR}
\begin{aligned}
\alpha &= H_1(v_+^\cc,q^\ss_+)\\
\Phi^\ss(T_\eps-L_\eps,0)\beta &= q^\mathrm{ss}_++H_2(v_+^\cc,q^\ss_+) \\
\Phi^\cc(T_\eps-L_\eps,0)\gamma&= \Phi_\fast(-L_\eps,0)v_+^\cc+H_3(v_+^\cc,q^\ss_+)
\end{aligned}
\end{equation}
where are $H_j, j=1,2,3$ are linear maps satisfying
\begin{align*}
H_j(v_+^\cc,q^\ss_+) &= \mathcal{O}\left(\eps|\log\eps|^2\left(|v^\cc_+|+|q^\ss_+|\right)\right), \quad j=1,2,3.
\end{align*}
% alpha: clear, beta: T_\eps - L_\eps = 1/eps(1-eps^2 K \log \eps), 
Plugging in the expression for $\alpha$ from \eqref{e:trichblR} into \eqref{e:trichblL}, and that for $\beta$ from \eqref{e:trichblL} into \eqref{e:trichblR}, we obtain 
\begin{equation}\label{e:hypbl}
P^\uu(0)\tilde{q}_- =H_4(v_+^\cc,q^\ss_+),\quad
q^\mathrm{ss}_+=H_5(v^\ss_-,v_+^\cc,v^\uu_+,q^\ss_+),
\end{equation}
where the linear maps $H_4,H_5$ satisfy, for sufficiently small $\eta>0$ related to the trichotomy rates, 
%note T_\eps-L_\eps = 1/\eps - log\eps so not exactly the gap rate, but any smaller value
\begin{align*}
H_4(v_+^\cc,v^\uu_+,q^\ss_+) &= \mathcal{O}\left(e^{-\eta/\eps}\left(|v^\mathrm{c}_+|+|q^\ss_+|  \right)\right),\\
H_5(v_+^\cc,v^\uu_+,v^\ss_-) &= \mathcal{O}\left(\eps|\log\eps|^2\left(|v_+^\cc|+|q^\ss_+|\right)+e^{-\eta/\eps}(|v^\cc_-|+|\tilde{q}_-|)\right).
\end{align*}
Noting that $P^\uu(0)$ is invertible on the space $\tilde{Q}^-$ due to Hypothesis~\ref{h:bl_vector}, we solve \eqref{e:hypbl} by the implicit function theorem to obtain
\[
\tilde{q}_-=H_6(v^\cc_-,v_+^\cc),\quad
q^\mathrm{ss}_+=H_7(v^\cc_-,v_+^\cc),
\]
where $H_6,H_7$ satisfy
\begin{align*}
H_6(v^\cc_-,v_+^\cc) &= \mathcal{O}\left(e^{-\eta/\eps}\left(|v^\cc_-|+|v^\mathrm{c}_+|\right)\right)\\
H_7(v^\cc_-,v_+^\cc) &= \mathcal{O}\left(\eps|\log\eps|^2|v_+^\cc|+e^{-\eta/\eps}|v^\cc_-| \right).
\end{align*}
Finally, substitution into the third equation of \eqref{e:trichblR} gives the remaining equation on the center subspace
\begin{align}
\begin{split}\label{e:finallayerbcs}
\Phi^\cc(T_\eps-L_\eps,0;\lambda,\eps)v^\mathrm{c}_-&= \Phi_\fast(-L_\eps,0)v_+^\cc+H_8(v^\cc_-,v_+^\cc),
\end{split}
\end{align}
where $H_8$ satisfies
\begin{align*}
H_8(v^\cc_-,v_+^\cc) &= \mathcal{O}\left(\eps|\log\eps|^2|v_+^\cc|+e^{-\eta/\eps}|v^\cc_-| \right).
\end{align*}
It remains to solve \eqref{e:finallayerbcs} to obtain eigenfunctions. This amounts to solving the boundary problem in the purely slow regime as in Proposition~\ref{p:Airy} with entry boundary condition $v^\mathrm{c}_-$ at $\zeta=0$, but with the exit boundary condition (now at $\zeta=T_\eps-L_\eps$) replaced by the right-hand-side of~\eqref{e:finallayerbcs} instead of $v^\mathrm{c}_+$. \red{We note that since the derivatives of $v_+^\cc$ with respect to $\lambda$ are by assumption all bounded as in Hypothesis~\ref{h:bl_vector}, the derivatives of the center subspace boundary conditions on the right hand side of~\eqref{e:finallayerbcs} with respect to $\lambda$ can be bounded by $o(\eps^{-1})$.} Hence the argument for the accumulation of eigenvalues proceeds as in the uniformly oscillatory case.}
\end{proof}

\section{Application to the FitzHugh--Nagumo system}\label{sec:applytofhn}

In this section, we apply our results to the FitzHugh--Nagumo system. Our aim is to explain the accumulation of eigenvalues which occurs along the $1$-to-$2$ pulse transition, and demonstrated by the numerical computations in Figures~\ref{f:fhn-eps0p01} and~\ref{f:fhn-eps0p02}. In particular, we will explain the appearance of $\mathcal{O}(1/\eps)$ eigenvalues as $\eps\to0$ as well as the phenomenon that the number of eigenvalues initially increases along the transition, reaching a maximum near the middle of the transition, then decreasing as the transition continues; see Remark~\ref{r:accumulation_s_vs_eps}. These phenomena are related to the time spent (in the traveling wave coordinate) of a given pulse near the middle slow manifold $\mathcal{M}^m_\eps$, which we will show exhibits slow absolute spectrum in the sense of~\S\ref{s:slowabsnew}. Based on the theory in~\S\ref{s:slowabsnew}-\ref{s:gentheory}, the number of accumulating eigenvalues is determined by the (spatial) time spent along such a manifold.

In preparation, the eigenvalue problem from linearising \eqref{e:fhn} about a traveling pulse solution $\phi(\zeta;\eps)=(u,w)(\zeta; \eps)$  reads
\begin{align}\label{e:fhnstab}
\lambda \begin{pmatrix}u\\w\end{pmatrix} = \begin{pmatrix}u_{\zeta\zeta} - cu_\zeta + f'(u(\zeta;\eps))u - w\\ -c w_\zeta + \eps(u-\gamma w)\end{pmatrix} 
\end{align}
again in the spatial comoving frame $\zeta= x+ct$. Equivalently, the first order ODE formulation reads
\begin{equation}\label{e:fhnlin1st}
\begin{aligned}
\dot{U} = A(\zeta;\lambda, \eps)U
\end{aligned}
\end{equation}
where $U=(u,u_\zeta,w)$ and
\begin{equation}
\begin{aligned}
A(\zeta;\lambda, \eps):=\begin{pmatrix}0 &1& 0 \\ \lambda-f'(u(\zeta;\eps)) & c & 1  \\ \frac{\eps}{c} & 0 & -\frac{(\lambda+\gamma \eps)}{c} \end{pmatrix}
\end{aligned}
\end{equation}
The eigenvalue problem~\eqref{e:fhnlin1st} is a first order, nonautonomous ODE which depends on the pulse solution through the term $f'(u(\zeta;\eps))$. As described in~\S\ref{sec:pulse_geometry}, this pulse solution is composed of a concatenated sequence of slow/fast orbits: the slow pieces are portions of slow manifolds which are perturbations of normally hyperbolic portions of the critical manifold $\mathcal{M}_0=\{(u,0,f(u)):u\in\mathbb{R}\}$. In the singular limit $\eps=a=0, c=1/\sqrt{2}$, we recall from~\S\ref{sec:pulse_geometry} that the critical manifold admits three normally hyperbolic segments
\begin{align}
\mathcal{M}^\ell_0 = \mathcal{M}_0\cap\{u<0\}, \quad \mathcal{M}^m_0 = \mathcal{M}_0\cap\{0<u<2/3\}, \quad \mathcal{M}^r_0 = \mathcal{M}_0\cap\{u>2/3\},
\end{align}
which meet at two nonhyperbolic fold points. Away from these fold points, compact segments of these branches of the critical manifold perturb to slow manifolds $\mathcal{M}^\ell_\eps, \mathcal{M}^m_\eps, \mathcal{M}^r_\eps$ which are $\mathcal{O}(\eps)$-perturbations of their singular counterparts. In addition to traversing these slow manifolds, the pulse solutions also follow fast jumps between these manifolds, which arise as perturbations of heteroclinic orbits of the singular fast subsystem~\eqref{e:layer}; see~\S\ref{sec:pulse_geometry}.

The precise details of this construction depend on the exact location along the single-to-double-pulse transition. To see how the slow absolute spectrum phenomenon arises in this context, we need to understand the spatial eigenvalue structure along the slow manifolds $\mathcal{M}^\ell_\eps, \mathcal{M}^m_\eps, \mathcal{M}^r_\eps$ relative to that of the asymptotic rest state $(u,w)=(0,0)$.

\subsection{Slow absolute spectrum in the FitzHugh--Nagumo system}\label{sec:slowabsmiddle}

We first note that here we are concerned only with the accumulation of \emph{unstable} eigenvalues, i.e. those in the right half plane. We show in the next subsection~\S\ref{sec:essFHN} that along the pulse transition, for $\gamma>3/2$, the associated essential spectrum, corresponding to the rest state, is stable, that is, contained entirely in the left half plane (see Lemma~\ref{l:fhness_gammabound} below). Therefore, in locating unstable eigenvalues, this automatically restricts our attention to values of $\lambda$ which lie to the right of the essential spectrum. 

In~\S\ref{sec:slowabs_middleslowmanifold}, we then show that the middle slow manifold $\mathcal{M}^m_\eps$ admits slow absolute spectrum, in the sense of Hypotheses~\ref{h:osc}--\ref{h:Airy}.

\subsubsection{Essential spectrum for transitional pulses in the FitzHugh--Nagumo system}\label{sec:essFHN}
In this section we prove the following lemma regarding the essential spectrum of transitional pulses in~\eqref{e:fhn}.
\begin{lemma}\label{l:fhness_gammabound}
Consider \eqref{e:fhn} with $\gamma>3/2$ and the associated stability problem~\eqref{e:fhnstab} for a transitional pulse $\phi(\zeta;s, \eps)$ with \blue{$(c,a)=(c,a)(s,\eps)$}. For all sufficiently small $\eps>0$, the essential spectrum $\Sigma_\mathrm{ess}$ is confined to the left half plane $\Re \lambda<0$.
\end{lemma}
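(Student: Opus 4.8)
The essential spectrum of a traveling pulse asymptotic to a homogeneous rest state is determined entirely by the rest state $(u,w)=(0,0)$, via the dispersion relation of the constant-coefficient linearization obtained by sending $\zeta\to\pm\infty$ in \eqref{e:fhnstab}. Concretely, $\lambda$ lies in the essential spectrum if and only if the matrix $A_\infty(\lambda,\eps;\nu)$ has a purely imaginary spatial eigenvalue $\nu=\rmi k$, $k\in\R$, i.e.\ the Fourier symbol of the linearization at the rest state has a purely imaginary point for some real wavenumber. So my plan is: (i) write down the $2\times2$ Fourier symbol $M(\rmi k)$ at the rest state; (ii) reduce the condition $\lambda\in\mathrm{spec}\,M(\rmi k)$ to an explicit curve in $\lambda$ parameterized by $k\in\R$; (iii) show this curve lies in $\{\Re\lambda<0\}$ for all $k\in\R$ provided $\gamma>3/2$ and $\eps>0$ is small (and $c=c(s,\eps)\approx1/\sqrt2$, $a=a(s,\eps)=\calO(\eps)$ as in Theorem \ref{thm:mainexistence}); and (iv) invoke the standard fact that the Fredholm borders coincide with this curve, so the essential spectrum is the union of this curve and the region it encloses to the left, all contained in $\{\Re\lambda<0\}$.

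For step (i)–(ii): linearizing \eqref{e:fhn} at $(0,0)$ and taking the ansatz $\rme^{\rmi k x}$ in the steady (lab) frame, or equivalently inspecting $A(\zeta;\lambda,\eps)$ with $f'(u)$ replaced by $f'(0)=-a$ and substituting $\nu=\rmi k$, gives the two branches of the dispersion relation as the roots $\lambda=\lambda_\pm(k)$ of
\begin{align}\label{e:disp-plan}
\bigl(\lambda + k^2 + a\bigr)\bigl(\lambda + \gamma\eps\bigr) \;=\; -\,\eps,
\end{align}
where I have already used the comoving-frame weight to remove the advection term $cu_\zeta$ (equivalently, the comoving-frame essential spectrum is the lab-frame spectrum shifted by $\rmi c k$, which does not affect real parts); here I am writing $a=a(s,\eps)=\calO(\eps)$. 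Solving the quadratic, $2\lambda_\pm(k) = -(k^2+a+\gamma\eps)\pm\sqrt{(k^2+a-\gamma\eps)^2-4\eps}$. For the ``$-$'' branch one has $\Re\lambda_-(k)\le -\tfrac12(k^2+a+\gamma\eps)<0$ for all $k$ once $\eps$ is small (so that $a+\gamma\eps>0$), with no further constraint on $\gamma$. The delicate branch is $\lambda_+(k)$ at $k=0$: there $\lambda_+(0)$ is the larger root of $(\lambda+a)(\lambda+\gamma\eps)=-\eps$, i.e.\ $\lambda_+(0)=\tfrac12\bigl(-(a+\gamma\eps)+\sqrt{(a-\gamma\eps)^2-4\eps}\bigr)$; expanding with $a=\calO(\eps)$ gives $\lambda_+(0) = -\,\eps\bigl(1/\gamma + o(1)\bigr) + \calO(\eps\cdot|a|/\eps)$, which is negative precisely when the $\calO(\eps)$ coefficient is, and this is where the constraint on $\gamma$ enters through the precise value of $a_*(\eps)=-\tfrac{3+2\gamma}{4\sqrt2}\eps + \calO(\eps^{3/2})$ from \eqref{e:maxcanard}: substituting $a=a_*(\eps)$ one checks $\lambda_+(0)<0$ iff $\gamma>3/2$ (at leading order in $\eps$), and the $\calO(e^{-q/\eps})$ spread of $a$ over the transition is negligible. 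Away from $k=0$ the $k^2$ term only pushes both branches further left, so the whole curve stays in $\{\Re\lambda<0\}$, uniformly in $k$, for all small $\eps$ and all $s$ along the transition.

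The main obstacle is step (iii): getting the sign of $\lambda_+(0)$ right requires tracking the $\calO(\eps)$-order term carefully, using the explicit asymptotics \eqref{e:maxcanard} for $a_*(\eps)$, and confirming that the threshold is exactly $\gamma=3/2$ rather than some nearby value — this is a short but sign-sensitive computation where the constant $(3+2\gamma)/(4\sqrt2)$ must be combined with the $1/\gamma$-type contribution from the reaction term in $w$, and one must verify the correction is genuinely subleading. A minor secondary point is to handle the passage from ``the curve of Fredholm borders'' to ``the full essential spectrum'': since for $\Re\lambda$ large the constant-coefficient matrix at the rest state is hyperbolic with a fixed Morse index (Hypothesis \ref{h:well}-type condition at $\xi$ corresponding to the rest state), the essential spectrum is contained in the closed region bounded by the curves $\lambda_\pm(k)$, which we have just placed in the open left half-plane; this is the standard Palmer/Henry characterization and I would simply cite \cite{sandstede2002stability} or \cite{SSabs}. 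Finally I note that $a(s,\eps)$ and $c(s,\eps)$ vary only by $\calO(e^{-q/\eps})$ along the transition by Theorem \ref{thm:mainexistence}, so all estimates are uniform in $s$, completing the proof.
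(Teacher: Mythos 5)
Your overall route is the same as the paper's: determine the essential spectrum from the dispersion relation of the rest state, observe that the most unstable point occurs at wavenumber $k=0$, substitute the maximal-canard asymptotics \eqref{e:maxcanard} (the $\calO(e^{-q/\eps})$ spread of $(c,a)$ along the transition being negligible), and read off a condition on $\gamma$; this is exactly what the paper does via Lemma~\ref{l:fhness}. The problem is that your decisive computation at $k=0$ --- the step you yourself flag as the main obstacle --- is carried out incorrectly. In the regime $a=\calO(\eps)$ the discriminant $(a-\gamma\eps)^2-4\eps$ equals $-4\eps\bigl(1+\calO(\eps)\bigr)<0$, so the two roots of $(\lambda+a)(\lambda+\gamma\eps)=-\eps$ are a complex conjugate pair with $\Re\lambda_\pm(0)=-\tfrac12(a+\gamma\eps)$ and $\Im\lambda_\pm(0)=\pm\sqrt{\eps}\,\bigl(1+\calO(\sqrt{\eps})\bigr)$. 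Your expansion $\lambda_+(0)=-\eps(1/\gamma+o(1))+\dots$, which treats the square root as real, is therefore invalid here, and your split into a harmless ``$-$'' branch and a delicate ``$+$'' branch collapses, since both roots have the same real part. The correct criterion is simply $a+\gamma\eps>0$, and inserting $a_*(\eps)=-\tfrac{3+2\gamma}{4\sqrt{2}}\eps+\calO(\eps^{3/2})$ yields $\gamma>\tfrac{3+2\gamma}{4\sqrt{2}}$, i.e.\ $\gamma>3/(4\sqrt{2}-2)\approx 0.82$. Hence your claim that ``$\lambda_+(0)<0$ iff $\gamma>3/2$ at leading order'' does not follow from the dispersion relation you wrote down; the lemma survives only because $\gamma>3/2$ implies the weaker correct condition, but as written the sign-sensitive step is wrong and must be redone as above.

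You should also reconcile your dispersion relation with the paper's. Lemma~\ref{l:fhness} works with $d(\lambda,\rmi k)=(\lambda-f'(0)+c\rmi k+k^2)(\lambda+c\rmi k+c\gamma\eps)+c\eps$, i.e.\ with $c\gamma\eps$ and $c\eps$ where your relation has $\gamma\eps$ and $\eps$; it is precisely these factors of $c\approx 1/\sqrt{2}$ that turn the stability criterion into $c\eps\gamma>-a$ and make the threshold come out as exactly $\gamma>3/2$ in the paper's proof. Your $c$-free relation is what one obtains by directly linearizing \eqref{e:fhnstab} at the rest state (the comoving-frame shift $\lambda\mapsto\lambda+\rmi ck$ indeed does not change real parts, though note that this is a frame change, not an exponential weight --- a weight would alter the essential spectrum), so under your relation $\gamma>3/2$ is sufficient but not sharp, whereas under the paper's $d$ it is the exact threshold. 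Finally, two smaller points: the assertion that increasing $k^2$ ``only pushes both branches further left'' needs a line of justification (the paper locates the maximal real part at a double root in $k$, namely $k=0$), and for a pulse with identical rest states at $\pm\infty$ the essential spectrum is just the union of the dispersion curves, so no argument about an enclosed region is needed.
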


The essential spectrum for homogeneous equilibria is determined by the dispersion relation as the set of solutions $\lambda\in \C$ to $d(\lambda, \rmi k)=0$ for $k\in \R$. Analogous to \cite[Prop. 4.1]{CdRS} we have 
\[
d(\lambda,\rmi k) = (\lambda-f'(0)+c \rmi k + k^2)(\lambda + c \rmi k+ c \gamma \eps)+ c\eps.
\]
Hence, to leading order the second factor gives the critical $\lambda=-\rmi c k$. We have the following.

\begin{lemma}\label{l:fhness}
Consider \eqref{e:fhn}. For $a\in(-1,0)$ the essential spectrum $\Sigma_\mathrm{ess}$ of the equilibrium $u=w=0$ in \eqref{e:fhn} is strictly stable if and only if 
$c\eps \gamma>-a$ and any $\lambda\in\Sigma_\mathrm{ess}$  satisfies 
\[
\Re(\lambda)\leq\lambda^*:=\frac 1 2 \left(-a-c\eps \gamma+\Re\left(\sqrt{(a+c\eps \gamma)^2-4c\eps (1+a)}\right)\right).
\]
The equality and thus the maximal real part is realised for wavenumber $k=0$, i.e. $d(\lambda^*,0)=0$.
\end{lemma}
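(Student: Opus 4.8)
The plan is to analyze the dispersion relation $d(\lambda,\mathrm{i}k)=0$ directly, using the factored form
\[
d(\lambda,\mathrm{i}k) = (\lambda-f'(0)+c\mathrm{i}k+k^2)(\lambda+c\mathrm{i}k+c\gamma\eps)+c\eps,
\]
and to show that every root $\lambda$ with $k\in\R$ satisfies $\Re\lambda\le\lambda^*$, with equality exactly at $k=0$. First I would fix $k\in\R$ and rewrite the relation as a quadratic in $\mu:=\lambda+c\mathrm{i}k$, namely $\mu(\mu-f'(0)+k^2)+c\eps\cdot\frac{\mu+c\gamma\eps}{\mu+c\gamma\eps}$ — more precisely, substituting $\lambda=\mu-c\mathrm{i}k$ gives $(\mu-f'(0)+k^2)(\mu+c\gamma\eps)+c\eps=0$, a genuine quadratic in $\mu$ with $k$-dependent coefficient only through $k^2\ge 0$ and (crucially) \emph{real} coefficients. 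Since $\Re\lambda=\Re\mu$, it suffices to bound $\Re\mu$ over all real $k$. Writing $P_k(\mu):=\mu^2+(c\gamma\eps-f'(0)+k^2)\mu+(c\gamma\eps(k^2-f'(0))+c\eps)$, the two roots $\mu_\pm(k)$ have $\Re(\mu_+(k)+\mu_-(k))=f'(0)-c\gamma\eps-k^2$ and product $c\gamma\eps(k^2-f'(0))+c\eps$; recalling $f'(0)=-a$ for $f(u)=u(u-a)(1-u)$ (so $f'(0)=-a>0$ for $a\in(-1,0)$), I would show both roots have real part $\le\lambda^*$.

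The key step is a monotonicity-in-$k^2$ argument. Set $t:=k^2\ge 0$ and consider the family of real quadratics $P_t(\mu)=\mu^2+(c\gamma\eps+a+t)\mu+(c\gamma\eps(a+t)+c\eps)$ (using $-f'(0)=a$). At $t=0$ this is exactly the quadratic $d(\lambda,0)=0$, whose roots are $\lambda^*$ and its conjugate partner $\tfrac12(-a-c\eps\gamma-\Re\sqrt{(a+c\eps\gamma)^2-4c\eps(1+a)})$, the larger real part being $\lambda^*$ precisely as stated (one checks $a+c\gamma\eps<0$ under the hypotheses so $\lambda^*$ is indeed the rightmost). For $t>0$ I would argue that the rightmost root's real part strictly decreases: differentiating the implicit relation $P_t(\mu(t))=0$ gives $\mu'(t)=-\mu/(2\mu+c\gamma\eps+a+t)$; when the roots are complex (conjugate pair), $\Re\mu=-(c\gamma\eps+a+t)/2$, which is manifestly decreasing in $t$; when the roots are real, I would use that increasing $t$ shifts the vertex of the parabola left and increases the constant term relative to the linear term in a way that pushes the larger root down — most cleanly verified by noting $P_t(\lambda^*)\ge P_0(\lambda^*)=0$ is \emph{not} quite what's needed, so instead I would directly compute $P_t(x)$ for $x\ge\lambda^*$ real and show $P_t(x)>0$, i.e. no real root exceeds $\lambda^*$, together with the complex-root bound above. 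Concretely, $\partial_t P_t(x)=x+c\gamma\eps$; since the rightmost root at $t=0$ satisfies $x_+(0)=\lambda^*$ and the other root $x_-(0)<-c\gamma\eps$ (as their product is $c\gamma\eps a+c\eps=c\eps(1+a)$ — wait, need care with signs), I would instead just verify $\lambda^*\ge -c\gamma\eps$, equivalently $-a+\Re\sqrt{(a+c\gamma\eps)^2-4c\eps(1+a)}\ge c\gamma\eps$, which for small $\eps$ reduces to $-a\ge 0$, true; hence $\partial_t P_t(x)=x+c\gamma\eps>0$ for $x\ge\lambda^*$ and all small $\eps$, so $P_t(x)\ge P_0(x)>0$ for $x>\lambda^*$, giving no real root to the right of $\lambda^*$.

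Finally I would assemble: for each $k\ne0$ any root $\mu$ of $P_{k^2}$ has $\Re\mu<\lambda^*$ (strictly, since the above inequalities are strict for $t>0$), while at $k=0$ the rightmost root is exactly $\lambda^*$, which gives the claimed characterization including that the maximal real part is attained at $k=0$; translating back, $\sup_{\lambda\in\Sigma_\mathrm{ess}}\Re\lambda=\lambda^*$, attained at $k=0$. The strict-stability criterion $c\eps\gamma>-a$ then follows by evaluating $d(\cdot,0)=0$: the product of its two roots is $c\eps(1+a)>0$ (as $a\in(-1,0)$) and their sum is $a-c\gamma\eps$, so both roots have negative real part iff the sum is negative, i.e. $c\gamma\eps>a$, which is automatic, \emph{and} — more restrictively — $\lambda^*<0$ iff $-a-c\gamma\eps+\Re\sqrt{(a+c\gamma\eps)^2-4c\eps(1+a)}<0$; squaring (valid once the left factor sign is pinned down) this is equivalent to $c\eps\gamma>-a$, i.e. $c\gamma\eps>-a=f'(0)$. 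The main obstacle I anticipate is the bookkeeping of signs — pinning down which of the two roots of $d(\cdot,0)=0$ has larger real part, verifying $a+c\gamma\eps<0$ and $\lambda^*\ge -c\gamma\eps$ for small $\eps$, and handling the branch of the square root carefully — rather than any deep analytic difficulty; the reduction to a real quadratic in $\mu$ via the shift $\lambda\mapsto\lambda+c\mathrm{i}k$ is what makes everything elementary.
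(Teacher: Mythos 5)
Your proposal follows essentially the same route as the paper: both exploit that $\Re\lambda=\Re(\lambda+c\mathrm{i}k)$ and pass to the shifted variable, which turns the dispersion relation into a real quadratic whose coefficients depend on $k$ only through $k^2$, then split according to the sign of the discriminant, with the complex-pair real part $-\tfrac12(a+c\gamma\eps+k^2)$ manifestly decreasing in $k^2$ and the rightmost real root pinned at $k=0$, giving $\lambda^*$ and the criterion via $\lambda^*<0$. The one step where you genuinely diverge is the real-root case: the paper maximizes $\tilde\lambda^+(k)$ over $k$ by a critical-point/double-root argument in $K=k^2$ (hinging on $\partial_K\tilde d=\tilde\lambda+c\gamma\eps$), whereas you compare quadratics directly, $P_t(x)=P_0(x)+t\,(x+c\gamma\eps)>0$ for real $x>\lambda^*$, which is cleaner and buys a fully elementary, global argument; note that the inequality you need, $\lambda^*+c\gamma\eps>0$, holds unconditionally (no smallness of $\eps$) since $\lambda^*+c\gamma\eps=\tfrac12\bigl(-a+c\gamma\eps+\Re\sqrt{(a+c\gamma\eps)^2-4c\eps(1+a)}\bigr)$ with $-a>0$, so your ``for small $\eps$'' caveat is unnecessary. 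Two sign-level slips to fix, neither fatal: the sum of the roots of $d(\cdot,0)$ is $-a-c\gamma\eps$, not $a-c\gamma\eps$, and with the product positive this already gives strict stability iff $c\gamma\eps>-a$ without any squaring detour (this is exactly how the paper concludes); and the constant term you quote, $c\eps(1+a)$, is what the lemma's formula for $\lambda^*$ implies but is not what the displayed $d(\lambda,\mathrm{i}k)$ literally yields (that gives $c\gamma\eps\,a+c\eps=c\eps(1+a\gamma)$) — this traces to an inconsistency between the paper's displayed dispersion relation and its formula for $\tilde\lambda^\pm(k)$ rather than to a flaw in your argument, but you should state explicitly which constant term you are carrying, since the positivity of the product for $a\in(-1,0)$ is used in the final step.
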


\begin{proof}
Since the real part of $\lambda$ is that of $\tlambda:=\lambda + c \rmi k$ we solve the dispersion relation for $\tlambda$, which yields
\[
\tlambda^\pm (k) = \frac 1 2 \left(f'(0)-c\eps \gamma -k^2 \pm\sqrt{(f'(0)-c\eps \gamma-k^2)^2-4c\eps (k^2-f'(0)+1)}\right).
\]
It follows that the essential spectrum is unstable if $f'(0)-c\eps \gamma=-a-c\eps\gamma>0$ (since $f'(0)=-a$). 

On the other hand, if $-a-c\eps \gamma<0$ we use $a\in(-1,0)$ so that $k^2-f'(0)+1>0$, and with $c>0$ this means 
\[
\Re(\tlambda^-(k)) \leq \Re(\tlambda^+(k))<0,% < f'(0)-c\eps \gamma-k^2 \leq -a-c\eps \gamma.
\]
so that the essential spectrum is strictly stable. More precisely, if the discriminant is negative
\[
\Re(\tlambda^+(k)) = \frac 1 2 (f'(0)-c\eps \gamma -k^2) \leq \frac 1 2 (-a-c\eps \gamma)
\]
and if it is positive we argue as follows. Since $\tlambda$ is real in this case, the maximum real part lies at a double root with respect to $k$ of the modified dispersion relation with $\tlambda$ (which is a function of $K=k^2$) given by
\[
\tilde d(\tlambda,K) = (\tlambda-f'(0) + K)(\tlambda + c \gamma \eps)+ c\eps.
\]
A double root satifsies $2k \partial_K \tilde d(\tlambda,K)=2k(\tlambda+c\gamma\eps)=0$ so that either $\tlambda = -c\eps\gamma$ for some $k$ or $k=0$. In the latter case the above formula for $\tlambda^+(0)$ and $f'(0)=-a>0$ implies that $\Re(\tlambda^+(0))>-c\gamma\eps$. Finally, since $k=0$ always is a double root, the maximal real part lies at the real part of  $\tlambda^+(0)$ as claimed.
\end{proof}
From this, we immediately deduce the results of Lemma~\ref{l:fhness_gammabound}.
\begin{proof}[Proof of Lemma~\ref{l:fhness_gammabound}]
We recall that the transitional pulses occur in an exponentially thin interval in parameter space, so that all transitional pulses occur at parameter values $(c,a)$ exponentially close to the maximal canard values $(c_*,a_*)(\eps)$ defined in~\eqref{e:maxcanard}. Therefore, using Lemma~\ref{l:fhness} and in particular the stability condition $c\eps \gamma>-a$, we compute that $\Sigma_\mathrm{ess}$ is stable precisely when $\gamma>3/2$, and $\eps>0$ is taken sufficiently small.
\end{proof}

\subsubsection{Slow absolute spectrum along the middle slow manifold $\mathcal{M}^m_\eps$}\label{sec:slowabs_middleslowmanifold}

To the right of the essential spectrum, we can determine the Morse index of the asymptotic rest state by setting $\lambda=0$, and computing the spatial eigenvalues of the asymptotic matrix 
\begin{equation}
\begin{aligned}
A_{\pm\infty}(\lambda, \eps)=\lim_{\zeta\to\pm\infty}A(\zeta;\lambda, \eps):=\begin{pmatrix}0 &1& 0 \\ \lambda-f'(0) & c & 1  \\ \frac{\eps}{c} & 0 & -\frac{(\lambda+\gamma \eps)}{c} \end{pmatrix}
\end{aligned}
\end{equation}
 in the linear system~\eqref{e:fhnlin1st}, which admits a single eigenvalue of positive real part, and two eigenvalues of negative real part, so that Hypothesis~\ref{h:well} is satisfied with $i_\infty=1$. 
 
For the remainder of the analysis, it is convenient to replace~\eqref{e:fhnlin1st} with the weighted eigenvalue problem
\begin{equation}\label{e:fhnlin1st_weighted}
\begin{aligned}
\dot{U} = A_\eta(\zeta;\lambda, \eps)U, \qquad 
A_\eta(\zeta;\lambda, \eps):=\begin{pmatrix}-\eta(\zeta) &1& 0 \\ \lambda-f'(u(\zeta;\eps)) & c-\eta(\zeta) & 1  \\ \frac{\eps}{c} & 0 & -\frac{(\lambda+\gamma \eps)}{c} -\eta(\zeta) \end{pmatrix}
\end{aligned}
\end{equation}
where the weight function $\eta(\zeta)\geq 0$ is $\zeta$-dependent and bounded. Weighting the eigenvalue problem has multiple effects: firstly, this shifts the essential spectrum to the left, away from the imaginary axis, and secondly it shifts the spatial eigenvalues of the matrix $A(\zeta;\lambda,\eps)$ by an amount $\eta(\zeta)$, which allows for the construction of exponential dichotomies/trichotomies along the various slow manifolds which are uniform in $\eps$. Provided we choose the weight function to be smooth and bounded, and we fix the limits $\eta_\infty^\pm:=\lim_{\zeta\to\pm \infty}\eta(\zeta)$ so that the asymptotic matrices $A_{\eta,\pm\infty}(\lambda, \eps)=\lim_{\zeta\to\pm\infty}A_\eta(\zeta;\lambda, \eps)$ have the same splitting as $A_{\pm\infty}(\lambda, \eps)$ with $i_\infty=1$,  then any eigenvalues for the weighted problem, which lie to the right of the essential spectrum, are also eigenvalues of the unweighted problem, up to weighting the corresponding eigenfunction by a factor 
\begin{align}
U(\zeta)\to e^{\int_0^\zeta \eta(\tilde{\zeta})\mathrm{d} \tilde{\zeta}}U(\zeta).
\end{align}

We have the following from~\cite{CdRS}, which is deduced from a straightforward calculation.
\begin{proposition}{\cite[Proposition 6.5]{CdRS}}\label{p:leftrightsplit}
Fix $\delta>0$ sufficiently small, and consider the weighted system~\eqref{e:fhnlin1st_weighted} along any interval $I_0$ for which $\{u(\zeta;\eps):\zeta \in I_0\}\subseteq [-1-\delta, \delta]\cup[2/3-\delta, 1+\delta]$. There exists a constant $\eta_0>0$ such that if the weight function $\eta(\zeta)$ satisfies $\eta(\zeta)=\eta_0$ for $\zeta\in I_0$, for any $\lambda$ satisfying $\mathrm{Re}\lambda\geq0$ the matrices $A_\eta(\zeta;\lambda, \eps), \zeta \in I_0$ have the same spectral splitting as $A_{\pm\infty}(\lambda, \eps)$, that is, one eigenvalue of positive real part, and two of negative real part. 
\end{proposition}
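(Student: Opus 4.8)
\textit{Proof proposal.} The plan is to freeze the spatial variable and reduce everything to a statement about a compact family of $3\times3$ matrices, exploit the block structure at $\eps=0$, and then perturb. First, observe that on $I_0$ the matrix $A_\eta(\zeta;\lambda,\eps)$ depends on $\zeta$ only through the scalar $u=u(\zeta;\eps)$, which by hypothesis ranges over the compact set $K:=[-1-\delta,\delta]\cup[2/3-\delta,1+\delta]$. So it suffices to prove the splitting for the frozen matrix $B(u,\lambda,\eps):=A_\eta$ with $\eta\equiv\eta_0$, uniformly over $(u,\lambda,\eps)\in K\times\{\Re\lambda\ge0\}\times[0,\eps_0]$. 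At $\eps=0$ the matrix $B$ is block triangular: the $(u,u_\zeta)$-block has eigenvalues $\nu_\pm=\tilde\nu_\pm-\eta_0$ with $\tilde\nu_\pm=\tfrac c2\pm\sqrt{\tfrac{c^2}{4}+\lambda-f'(u)}$, and the remaining eigenvalue is $-\lambda/c-\eta_0$, which has real part $\le-\eta_0<0$ whenever $\Re\lambda\ge0$.

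For the pair $\nu_\pm$, note that $f'$ attains its maximum over each component of $K$ at the inner endpoints $u=\delta$ and $u=2/3-\delta$, and these maxima are $O(\delta)+O(a)=O(\delta)+O(\eps)$ because $f'(0)=-a$ and $f'(2/3)=a/3$ are $O(\eps)$; hence $f'_{\max}:=\max_{u\in K}f'(u)$ can be made smaller than $3c^2/16$ by fixing $\delta$ and $\eps_0$ small. Then $z:=\tfrac{c^2}{4}+\lambda-f'(u)$ satisfies $\Re z\ge\tfrac{c^2}{4}-f'_{\max}>0$ for all admissible $(u,\lambda,\eps)$, so by the elementary bound $\Re\sqrt z\ge\sqrt{\Re z}$ (valid for $\Re z>0$) one has $\Re\tilde\nu_+\ge\tfrac c2$ and $\Re\tilde\nu_-\le\tfrac c2-\sqrt{\tfrac{c^2}{4}-f'_{\max}}<\tfrac c4$. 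Choosing $\eta_0:=c/4$, we get $\Re\nu_+\ge c/4>0$ and $\Re\nu_-\le -(\sqrt{\tfrac{c^2}{4}-f'_{\max}}-\tfrac c4)<0$, both with a gap bounded below by some $\delta_1>0$ independent of $(u,\lambda)$. Thus at $\eps=0$ the splitting is one eigenvalue with $\Re\nu>0$ and two with $\Re\nu<0$, with a uniform spectral gap from $\mathrm{i}\R$; taking $u=0$ gives the same statement for $A_{\eta,\pm\infty}$, which is consistent with $i_\infty=1$.

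For $\eps>0$ I would split according to $|\lambda|\le M$ versus $|\lambda|>M$. On the compact region $\{|\lambda|\le M,\ \Re\lambda\ge0\}\times K\times\{0\}$ the eigenvalues are bounded away from $\mathrm{i}\R$ by $\delta_1$ and have the desired splitting; since $\|B(u,\lambda,\eps)-B(u,\lambda,0)\|=O(\eps)$ uniformly (the only $\eps$-dependence is the $\eps/c$ and $-\gamma\eps/c$ entries), continuity of eigenvalues and of the associated spectral projections preserves both the splitting and hyperbolicity for $\eps<\eps_0(M)$. For $|\lambda|>M$, the two eigenvalues of the $(u,u_\zeta)$-block have real parts $\tfrac c2-\eta_0\pm\Re\sqrt z$ diverging to $\pm\infty$ as $|\lambda|\to\infty$, so the block is uniformly spectrally separated from the third eigenvalue near $-(\lambda+\gamma\eps)/c-\eta_0$ (which always has $\Re\le-\eta_0$); a standard invariant-subspace perturbation argument, or a direct factorization of the cubic characteristic polynomial peeling off the root $\nu_3=-(\lambda+\gamma\eps)/c-\eta_0+O(\eps/\lambda)$, then yields exactly one eigenvalue with $\Re>0$ and two with $\Re<0$, uniformly in $u\in K$, $\Re\lambda\ge0$, $\eps\le\eps_0$, once $M$ is large enough. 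Combining the two ranges of $\lambda$ finishes the proof.

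The hard part will be precisely this uniformity in $\lambda$ as $|\lambda|\to\infty$: eigenvalues are continuous but not uniformly Lipschitz in the matrix entries when the matrix is unbounded, so the bare ``$O(\eps)$-perturbation of the $\eps=0$ matrix'' reasoning does not by itself cover all of $\{\Re\lambda\ge0\}$, and one genuinely needs the explicit asymptotic separation of the $(u,u_\zeta)$-block eigenvalues from the $w$-eigenvalue, which is what forces the split into $|\lambda|\le M$ and $|\lambda|>M$. A cleaner alternative I would consider, but not carry out in detail, is a connectedness argument: the parameter region is connected and the splitting is known at one point (e.g.\ $u=0$, $\lambda$ large real, $\eps=0$), so it suffices to exclude eigenvalues of $B(u,\lambda,\eps)$ on $\mathrm{i}\R$ everywhere in the region; by the weight shift this is equivalent to showing $A(u,\lambda,\eps)$ never has an eigenvalue with real part exactly $\eta_0$ for $u\in K$ and $\Re\lambda\ge0$, which reduces to the same kind of dispersion-relation computation as in Lemma~\ref{l:fhness} applied to the frozen homogeneous state $u$.
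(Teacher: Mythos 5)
Your proposal is correct, and it is essentially the argument the paper has in mind: the paper does not prove this proposition itself but imports it from \cite{CdRS} as a ``straightforward calculation,'' namely exactly the frozen-coefficient eigenvalue computation you carry out (block-triangular structure at $\eps=0$, the roots $\tfrac c2\pm\sqrt{\tfrac{c^2}4+\lambda-f'(u)}$ and $-\lambda/c$, a constant weight shifting the splitting to $(1,2)$ uniformly over $u\in K$ and $\Re\lambda\ge0$, then perturbation in $\eps$). The only incomplete step you flag, uniformity as $|\lambda|\to\infty$, is easily closed by your own ``cleaner alternative'': since your $\eps=0$ bounds give a gap from $\rmi\R$ that is uniform in $\lambda$, the factorization $\det(\rmi k-A_\eta)=Q(\rmi k)\bigl(\rmi k+\eta_0+\tfrac{\lambda+\gamma\eps}{c}\bigr)-\tfrac{\eps}{c}$ with $|Q(\rmi k)|\geq \tfrac c4\,\delta_1$ and $\bigl|\rmi k+\eta_0+\tfrac{\lambda+\gamma\eps}{c}\bigr|\geq\eta_0$ excludes imaginary-axis eigenvalues for all $\Re\lambda\ge0$ at once when $\eps$ is small, and then local constancy of the eigenvalue count on the connected parameter region, anchored at $\eps=0$, gives the splitting without any $|\lambda|\le M$ versus $|\lambda|>M$ case distinction.
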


The primary consequence of Proposition~\ref{p:leftrightsplit} is that neither of Hypotheses~\ref{h:osc}--\ref{h:Airy} can be satisfied along either of the left or right slow manifolds $\mathcal{M}^\ell_\eps$, $\mathcal{M}^r_\eps$. However, the situation is different for pulses which traverse portions of the middle branch $\mathcal{M}^m_\eps$. We have the following.
\begin{proposition}\label{p:fhn_slowabs}
Consider the eigenvalue problem~\eqref{e:fhnlin1st} for fixed $\lambda \in [ 0,5/24 )$ and sufficiently small $\eps>0$, where $u=u(\zeta;\eps)$ is viewed as a parameter and $\left|(c,a)-(1/\sqrt{2},0)\right|\leq C\eps$ for some fixed $C>0$. There exist
\begin{align}
u^-_\mathrm{A}(\lambda,\eps) =\frac{1}{3}-\frac{\sqrt{10-48\lambda}}{12}+\mathcal{O}(\eps), \qquad u^+_\mathrm{A}(\lambda,\eps) =\frac{1}{3}+\frac{\sqrt{10-48\lambda}}{12}+\mathcal{O}(\eps)
\end{align}
such that for any closed interval $I_u \subset\left(u^-_\mathrm{A}(\lambda,\eps),u^+_\mathrm{A}(\lambda,\eps)\right)$, Hypothesis~\ref{h:osc} is satisfied along the portion of the middle branch $\mathcal{M}^m_\eps\cap \{u\in I_u\}$.
The points at $u=u^\pm_\mathrm{A}(\lambda,\eps)$ correspond to Airy points, and Hypothesis~\ref{h:Airy} is satisfied in a neighborhood of each of these endpoints, up to a reversal of \blue{the interval} in the case of $u^+_\mathrm{A}(\lambda, \eps)$.
\end{proposition}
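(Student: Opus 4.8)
The plan is to reduce the verification of Hypotheses~\ref{h:osc} and~\ref{h:Airy} to an explicit computation of the three spatial eigenvalues of the frozen-coefficient matrix obtained from~\eqref{e:fhnlin1st} by treating $u$ as a parameter, and then to use roughness to pass from $\eps=0$ to $0<\eps\ll1$. Freezing $u$, the relevant matrix is
\[
A_\slow(u;\lambda,\eps)=\begin{pmatrix}0&1&0\\ \lambda-f'(u)&c&1\\ \eps/c&0&-(\lambda+\gamma\eps)/c\end{pmatrix},
\]
and (working, as in \S\ref{sec:essFHN}, with the weighted problem~\eqref{e:fhnlin1st_weighted} so that Hypothesis~\ref{h:well} holds with $i_\infty=1$; a constant weight shifts all spatial eigenvalues equally and so affects none of the orderings below) I would first observe that at $\eps=0$ this $3\times3$ matrix is block triangular: the $w$-line decouples with eigenvalue $\nu_w=-\lambda/c$, while the $(u,v)$-block has characteristic polynomial $\nu^2-c\nu+(f'(u)-\lambda)$, so that
\[
\nu_\pm=\frac c2\pm\sqrt{\frac{c^2}{4}-f'(u)+\lambda}.
\]
Letting $\Re\lambda\to+\infty$ confirms $i_\infty=1$ (then $\nu_+\sim\sqrt\lambda$, $\nu_-\sim-\sqrt\lambda$, $\nu_w\sim-\lambda/c$).

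Next I would identify the oscillatory region. The pair $\nu_\pm$ is genuinely complex conjugate exactly when $f'(u)>c^2/4+\lambda$; substituting the singular values $c=1/\sqrt2$, $a=0$, $f'(u)=2u-3u^2$ and solving the resulting quadratic in $u$ yields $u\in\big(u^-_{\mathrm{A},0}(\lambda),u^+_{\mathrm{A},0}(\lambda)\big)$ with $u^\pm_{\mathrm{A},0}(\lambda)=\tfrac13\pm\tfrac1{12}\sqrt{10-48\lambda}$, which is a nontrivial interval precisely for $\lambda<5/24$. On any closed subinterval $I_u$ of this interval the imaginary part $\omega(u;\lambda):=\sqrt{f'(u)-c^2/4-\lambda}$ is bounded below by some $\delta>0$, so $\nu_{i_\infty}=\tfrac c2+\rmi\omega=\overline{\nu_{i_\infty+1}}$ while $\Re\nu_w=-\lambda/c\le0<\tfrac c2$; hence $\Re\nu_{i_\infty}=\Re\nu_{i_\infty+1}>\Re\nu_{i_\infty+2}$ with $\Im(\nu_{i_\infty}-\nu_{i_\infty+1})=2\omega\neq0$, which is exactly~\eqref{e:spatialevals} (the condition involving $\nu_{i_\infty-1}$ being vacuous here since $n=3$, $i_\infty=1$). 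This gives Hypothesis~\ref{h:osc} at $\eps=0$ along $\mathcal M^m_0\cap\{u\in I_u\}$; since $\partial_\lambda\omega$ is nonzero with fixed sign on $I_u$, the nondegeneracy condition $\Delta'\neq0$ of~\eqref{e:delpr} required by the accumulation results also holds.

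For the endpoints I would check that $u=u^\pm_{\mathrm{A},0}(\lambda)$ are Airy points in the sense of Hypothesis~\ref{h:Airy}. There the discriminant of the $(u,v)$-block vanishes, so $\nu_+=\nu_-=c/2$ is a double eigenvalue of $A_\slow$; it is distinct from $\nu_w=-\lambda/c$ for $\lambda\in[0,5/24)$, and it is geometrically simple because the $(u,v)$-block is in companion form and is not a scalar matrix, so its Jordan form is $\begin{pmatrix}0&0\\1&0\end{pmatrix}$, as assumed in Lemma~\ref{l:diagAiry}. Transversality of the crossing follows from $\partial_u\big(f'(u)-c^2/4-\lambda\big)=f''(u)=2-6u$, which equals $\mp\tfrac12\sqrt{10-48\lambda}\neq0$ at $u^\pm_{\mathrm{A},0}$; thus as $u$ increases through $u^-_{\mathrm{A},0}$ the block passes from two distinct real eigenvalues $\tfrac c2\pm\sqrt{\,\cdot\,}$ with $\Re\nu_{i_\infty}>\Re\nu_{i_\infty+1}$ (the node side, playing the role $\xi<\xi_\Airy$) through the double root to a complex conjugate pair (the focus side, $\xi>\xi_\Airy$) — precisely~\eqref{e:spatialevalsAiry} together with~\eqref{e:spatialevals} for $\xi>\xi_\Airy$ — and this fixes the sign $b_0>0$ in $b(\xi;\lambda)=b_0(\xi-\xi_\Airy(\lambda))+\mathcal O((\cdot)^2)$ of Lemma~\ref{l:diagAiry} up to the positive slow-time rescaling. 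At $u^+_{\mathrm{A},0}$ the node and focus sides are exchanged, so the $\xi$-interval must be traversed in the opposite direction; this is the reversal noted in the statement.

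Finally, the passage to $0<\eps\ll1$ is a roughness/implicit-function argument. Since $|(c,a)-(1/\sqrt2,0)|\le C\eps$ and the explicit $\eps$-entries of $A_\slow$ are $\mathcal O(\eps)$, the matrix $A_\slow(u;\lambda,\eps)$ is $C^1$ in $(u,\lambda)$ and $\mathcal O(\eps)$-close, uniformly, to its value at $(\eps,c,a)=(0,1/\sqrt2,0)$; the strict inequalities in~\eqref{e:spatialevals} and~\eqref{e:spatialevalsAiry} and the nonvanishing of $\Im(\nu_{i_\infty}-\nu_{i_\infty+1})$ are open conditions and hence persist on $I_u$, while the Airy point, being a simple zero in $u$ of the smooth discriminant (as $\partial_u(\cdot)\neq0$ there), persists and satisfies $u^\pm_{\mathrm{A}}(\lambda,\eps)=u^\pm_{\mathrm{A},0}(\lambda)+\mathcal O(\eps)$, with the Jordan structure and transversality inherited from $\eps=0$. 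I expect the only genuinely delicate point to be this last step near the Airy points: confirming geometric simplicity and transversal splitting of the double spatial eigenvalue uniformly in $\eps$ and along the exponentially thin window in which $(c,a)(s,\eps)$ varies, and matching this cleanly to the blow-up normal form of Lemma~\ref{l:diagAiry}. The oscillatory-interior part, by contrast, is immediate from the closed-form eigenvalues.
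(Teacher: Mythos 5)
Your proof is correct and follows essentially the same route as the paper's: compute the three spatial eigenvalues of the frozen $\eps=0$ linearization along $\mathcal{M}^m_0$ via the block-triangular factorization, read off that $\nu_{1,2}$ are a genuine complex-conjugate pair exactly where $f'(u)>c^2/4+\lambda$ (giving the stated interval for $\lambda<5/24$), identify the endpoints as Airy points, and invoke $\mathcal{O}(\eps)$-closeness of $A_\slow$ to perturb. You add explicit checks of geometric simplicity of the double root (from the companion form of the $(u,v)$-block) and of transversality via $f''(u^\pm_{\mathrm{A},0})\neq0$, which the paper's proof leaves implicit; these are exactly what Hypothesis~\ref{h:Airy} and Lemma~\ref{l:diagAiry} require, so filling them in is a welcome refinement rather than a departure.
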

%\JR{The statement is indeed ok, but the conclusion is independent of $\phi$. I would replace the first sentence by something like: Consider the eigenvalue problem~\eqref{e:fhnlin1st} with $u(\zeta,\eps)$ viewed as a parameter. There exist...}
%\JR{I think it might be good to explicitly write $\slowabs=[0,\lambda(s)]$ for the singular limit of the pulses $\phi$ as a function of $s$. Maybe at the end of this subsection, and explain what $\lambda(s)$ is.}
\begin{remark}
In light of Theorem~\ref{thm:mainexistence}, the results of Proposition~\ref{p:fhn_slowabs} are valid in particular for any of the transitional pulses $\phi(\cdot;s,\eps)$, in the sense that there exists a portion of the middle slow manifold $\mathcal{M}^m_\eps$ along which Hypothesis~\ref{h:osc} is satisfied. However, depending on the value of $s$, the transitional pulse in question may not actually traverse this portion of $\mathcal{M}^m_\eps$, and hence will not accumulate eigenvalues. A detailed discussion of which transitional pulses satisfy the requisite boundary conditions and the dependence on the parameter $s$ follows in~\S\ref{sec:fhnaccumulation}.
\end{remark}
%\JR{I am confused: the values of $\lambda$ should be in the interval of slow absolute spectrum. But the right end point of this depends on the take-off point of the pulse, i.e., on $s$. However, in contradiction to this, in the proposition $s$ and $\lambda$ can be chosen independently. It seems to me that the $5/24$ is the maximum possible location of the right endpoint, and in general it is some $\lambda(s)\leq 5/24$ with $\lambda(0)=0$. Or am I missing something? This would also affect the formulations of the results below, see my remark at the theorem in the end.}
\begin{proof}[Proof of Proposition~\ref{p:fhn_slowabs}]
In the singular limit $(c,a,\eps)=(1/\sqrt{2},0,0)$, the middle branch is defined by the set $\mathcal{M}^m_0:=\{(u,0,f(u)): u\in(0,2/3)\}$. Away from the fold points, the perturbed slow manifold $\mathcal{M}^m_\eps$ lies within $\mathcal{O}(\eps)$ of this set. We examine the linearized system~\eqref{e:fhnlin1st} along this set in the limit $\eps=0$. The corresponding spatial eigenvalues $\nu=\nu(\lambda,u;\eps)$ satisfy
\begin{align}
(\nu(\lambda,u;0)(\nu(\lambda,u;0)-c)-\lambda+f'(u))\left(\frac{\lambda}{c}-\nu(\lambda,u;0)\right)=0,
\end{align}
whence we compute the three spatial eigenvalues $\nu_i, i=1,2,3$ as\blue{
\begin{align}\label{e:slowevals}
\nu_{1}(\lambda,u;0)=\frac{c+\sqrt{c^2+4\lambda-4f'(u)}}{2}, \quad \nu_{2}(\lambda,u;0)=\frac{c-\sqrt{c^2+4\lambda-4f'(u)}}{2}, \quad \nu_3(\lambda,u;0)=-\frac{\lambda}{c}.
\end{align}}

From this, we see that in the region $\Re \lambda\geq0$, Hypotheses~\ref{h:osc} is satisfied only when \blue{$\Re (\nu_1)= \Re (\nu_2)$}, which occurs when 
\begin{align}
c^2+4\lambda-4f'(u)<0.
\end{align}
Using $c=1/\sqrt{2}$, and $f'(u)=2u-3u^2$, we find that for each fixed $\lambda \in [ 0,5/24 )$, this condition is satisfied for the interval
\begin{align}
u\in\left(\frac{1}{3}-\frac{\sqrt{10-48\lambda}}{12}, \frac{1}{3}+\frac{\sqrt{10-48\lambda}}{12}\right)\subset \left(0,2/3\right).
\end{align} 
At the endpoints of this interval, the eigenvalues \blue{$ \nu_{1,2}(\lambda,u;0)$} coincide at pinched double roots, or Airy points. We remark that for $\lambda=0$, these coincide precisely with the Airy points $u^\pm_{\mathrm{A},0}$ from the existence analysis (see~\S\ref{sec:pulse_geometry}), as expected.

The result is then is a direct consequence of the above discussion regarding the singular system, noting that the perturbed manifold $\mathcal{M}^m_\eps$ is $\mathcal{O}(\eps)$ close to $\mathcal{M}^m_0$.
\end{proof}
\begin{remark}
As the manifold $\mathcal{M}^m_\eps$ is a $C^1-\mathcal{O}(\eps)$-perturbation of the critical manifold $\mathcal{M}^m_0$, given as the graph $\{v=0, w=f(u), u\in(0,2/3)\}$, away from the fold points, the manifold $\mathcal{M}^m_\eps$ can also be represented as a graph, with a one-to-one correspondence between the $u$ and $w$ coordinates $w=f(u)+\mathcal{O}(\eps)$. In particular, we can equivalently identify the Airy points via their $w$-coordinates $w=w^\pm_\mathrm{A}(\lambda, \eps)$ along $\mathcal{M}^m_\eps$ where $w^\pm_\mathrm{A}(\lambda, 0) = f(u^-_\mathrm{A}(\lambda,0))$ and 
\begin{align}
w^-_\mathrm{A}(\lambda, \eps) = f(u^-_\mathrm{A}(\lambda,\eps))+\mathcal{O}(\eps), \qquad w^+_\mathrm{A}(\lambda, \eps) =f(u^+_\mathrm{A}(\lambda,\eps))+\mathcal{O}(\eps).
\end{align}
\end{remark}
An immediate consequence of Proposition~\ref{p:fhn_slowabs} is the following. Suppose the solution $\phi(\zeta;s,\eps)$ passes along any such portion of the middle slow manifold $\mathcal{M}^m_\eps$ for some interval $\zeta\in I_m$. Then by fixing the exponential weight $\eta(\zeta)=\frac{1}{2\sqrt{2}}+\mathcal{O}(\eps)$ for $\zeta\in I_m$, chosen so that the eigenvalues \blue{$\nu_{1,2}$} are shifted onto the imaginary axis (see Figure~\ref{f:fhn_allmiddle}), by Lemma~\ref{l:diagAiry} the system~\eqref{e:fhnlin1st} admits exponential trichotomies on $I_m$ with stable, unstable, and center trichotomy projections \blue{ $P^{\ss,\uu,\cc}(\zeta;\lambda,\eps)=P^{\ss,\uu,\cc}_\slow(\eps\zeta;\lambda,\eps)$, subspaces $E^{\ss,\uu,\cc}(\zeta;\lambda,\eps)=E^{\ss,\uu,\cc}_\slow(\eps\zeta;\lambda,\eps)$, and evolutions $\Phi^{\ss,\uu,\cc}(\zeta,\tilde{\zeta};\lambda,\eps)=\Phi^{\ss,\uu,\cc}_\slow(\eps\zeta;\lambda,\eps)$,  analytic in $\lambda$ and satisfying 
\begin{align}
\left| P^{\ss,\uu,\cc}(\zeta;\lambda,\eps) - \mathcal{P}^{\ss,\uu,\cc}(\zeta;\lambda,\eps)  \right| = \mathcal{O}(\eps)
\end{align}
where $\mathcal{P}^{\ss,\uu,\cc}(\zeta;\lambda,\eps)$ denote the spectral projections associated with the matrix $A(\zeta;\lambda,\eps)$. In particular, in this case, the subspace $E^{\ss}(\zeta;\lambda,\eps)$ is one-dimensional, with projection $P^{\ss}(\zeta;\lambda,\eps)$ which is $\mathcal{O}(\eps)$-close to the spectral projection $\mathcal{P}^{\ss}(\zeta;\lambda,\eps)$ onto the eigenspace $\mathcal{E}^{\ss}(\zeta;\lambda,\eps)$ associated with the eigenvalue $\nu_3$, while the subspace $E^{\cc}(\zeta;\lambda,\eps)$  is two-dimensional, with projection $P^{\cc}(\zeta;\lambda,\eps)$ which is $\mathcal{O}(\eps)$-close to the spectral projection $\mathcal{P}^{\cc}(\zeta;\lambda,\eps)$ onto the eigenspace $\mathcal{E}^{\cc}(\zeta;\lambda,\eps)$ associated with the eigenvalues $\nu_{1,2}$. The projection $P^{\uu}(\zeta;\lambda,\eps)$ is trivial.} The block-diagonalization formula~\eqref{e:genslowlinearDiag-old} is valid, with the particular form of the matrix $B_{11}$ valid near each of the Airy points $u=u^\pm_\mathrm{A}$, up to a reversal of time near the upper Airy point at $u=u_\mathrm{A}^+$.

\subsection{Accumulation of unstable eigenvalues for transitional pulses}\label{sec:fhnaccumulation}

Based on the discussion in~\S\ref{sec:slowabsmiddle}, and in particular Proposition~\ref{p:fhn_slowabs}, to determine which transitional pulses admit an accumulation of unstable eigenvalues due to the presence of slow absolute spectrum, we must determine which pulses pass along the absolutely unstable portion of the middle slow manifold $\mathcal{M}^m_\eps$. To this end, we briefly review some aspects of the existence analysis and geometric construction of the pulses, in particular their parametrization by $s$, followed by a statement of our main result on unstable eigenvalue accumulation, Theorem~\ref{thm:fhnaccumulation}.

\subsubsection{Parameterization of the transitional pulses}

In order to determine which of the transitional pulses pass along the absolutely unstable portion of the middle slow manifold $\mathcal{M}^m_\eps$, we first review some relevant details of their construction. It was shown rigorously in~\cite{CSbanana} that each transitional pulse can be constructed in essentially three pieces: a primary excursion, a secondary excursion, and an oscillatory tail. The primary excursion is the same for all transitional pulses; this trajectory traverses the right and left slow manifolds $\mathcal{M}^{\ell,r}_\eps$ and the two fast jumps $\phi_{f,b}$ between these in the sequence $\phi_f\to \mathcal{M}^r_\eps \to \phi_b \to\mathcal{M}^\ell_\eps$ -- see Figure~\ref{f:singular_limit} for a diagram of the singular limiting geometry. 

Next each transitional pulse completes a secondary excursion which traverses a canard trajectory near the origin, continuing along a portion of the repelling middle branch $\mathcal{M}^m_\eps$; the length of this portion depends on the specific transitional pulse under consideration (see Figure~\ref{f:slowabs_cases}). Pulses early on in the transition leave $\mathcal{M}^m_\eps$ via a fast jump $\phi_\ell$ to the left slow manifold $\mathcal{M}^\ell_\eps$, while further along the transition the secondary excursion continues all the way up $\mathcal{M}^m_\eps$ to the upper right fold point, and those later on in the transition leave $\mathcal{M}^m_\eps$ via a fast jump $\phi_r$ to the right slow manifold $\mathcal{M}^r_\eps$. Along the transition, it is this secondary excursion which essentially undergoes a canard explosion to complete the transition to a double pulse solution. 
\begin{remark}\label{r:parameterization}
The parameter $s$ in Theorem~\ref{thm:mainexistence} which parameterizes the transition for fixed $\eps$ is related to the height or $w$-coordinate of the fast jump $\phi_\ell$ or $\phi_r$ which is traversed. In~\cite{CSbanana}, the pulses are paramterized by $s\in(0,2w^\dagger)$, where $w^\dagger=4/27$ is the $w$-coordinate of the upper right fold point. Transitional pulses $\phi(\zeta;s,\eps)$ for $s\in(0,4/27)$ therefore leave $\mathcal{M}^m_\eps$ via the fast jump $\phi_\ell(\cdot;w)$ in the plane $w=s$, and those for $s\in(4/27,8/27)$ therefore leave $\mathcal{M}^m_\eps$ via a fast jump $\phi_r(\cdot; w)$ in the plane $w=8/27-s$. We note that near the endpoints $s=0,8/27$ and near the point $s=4/27$ where the pulses switch from ``left-jumping" to ``right-jumping", the relation between $s$ and the height $w$ of the corresponding fast jump is blurred, due to the nonhyperbolic dynamics near the fold points. However, for our purposes, this particular ambiguity will not cause any issues.
\end{remark}
Finally the secondary excursion is followed by an oscillatory tail, which is contained in a two-dimensional normally repelling center-stable manifold $\mathcal{Z}_\eps$ of the fixed point at the origin. 

Sample pulses are depicted in Figure~\ref{f:transition} which shows results of numerical continuation along the single-to-double pulse transition. Each pulse completes the sequence $I,II,III,IV$ which comprises the primary excursion, followed by a colored trajectory which passes near the origin along a canard trajectory, and then traverses a portion of the middle branch, before jumping off to either the right or left branch; this corresponds to the secondary excursion. Finally, each pulse eventually enters the locally invariant center-stable manifold $\mathcal{Z}_\eps$ of the origin, containing the oscillatory tails. Schematic diagrams of these different pulses are also depicted in Figure~\ref{f:slowabs_cases}, with colors chosen to match those from the numerical plots in Figure~\ref{f:transition}. 

\begin{figure}
\begin{subfigure}{.48 \textwidth}
\centering
\includegraphics[width=1\linewidth]{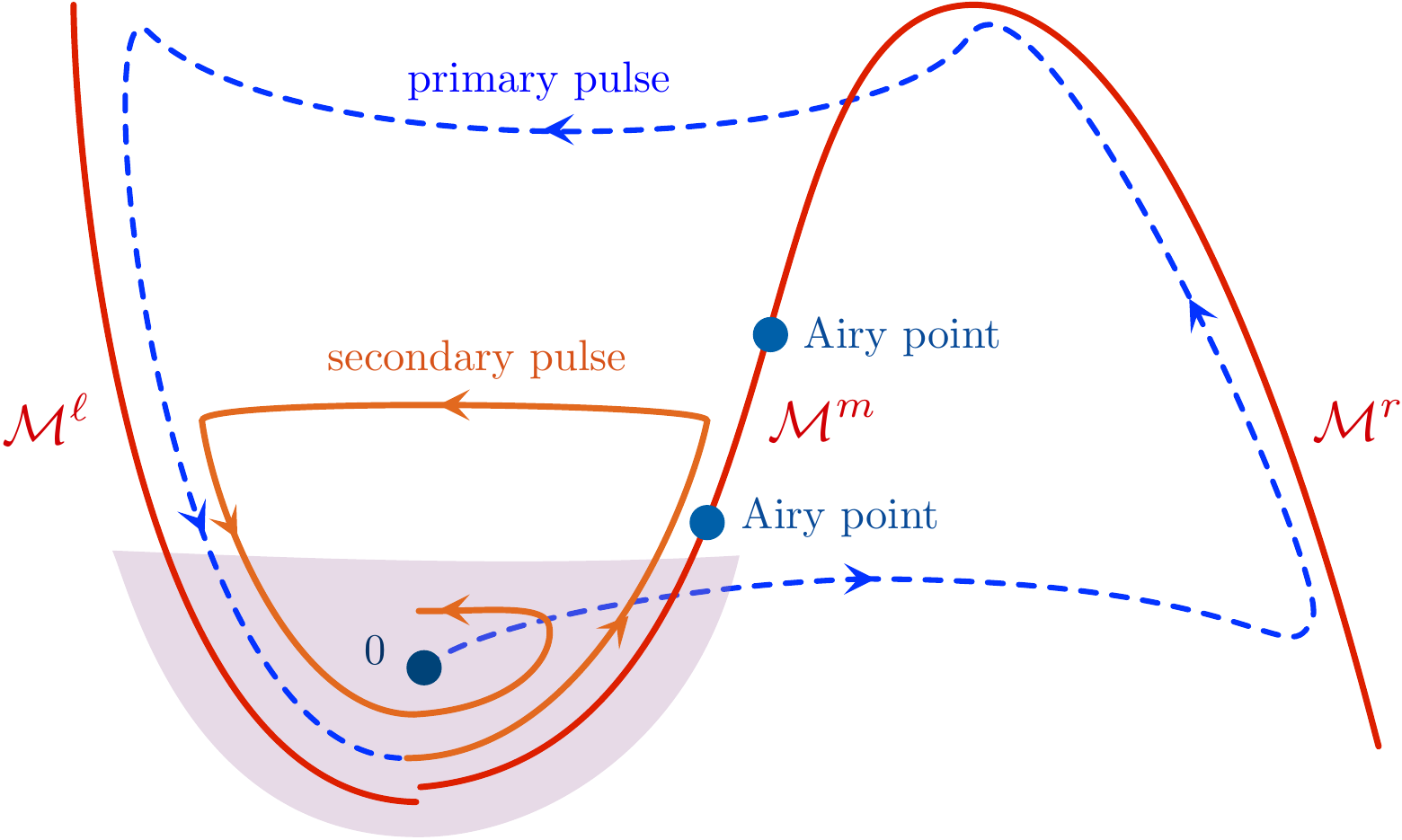}
\caption{Transitional pulse which jumps to $\mathcal{M}^\ell_\eps$ beyond the first Airy point $u=u^-_\mathrm{A}(\lambda, \eps)$ but before reaching the second Airy point at $u=u^+_\mathrm{A}(\lambda,\eps)$.}
\label{f:fhn_left}
\end{subfigure}
\hspace{0.04 \textwidth}\begin{subfigure}{.48 \textwidth}
\centering
\includegraphics[width=1\linewidth]{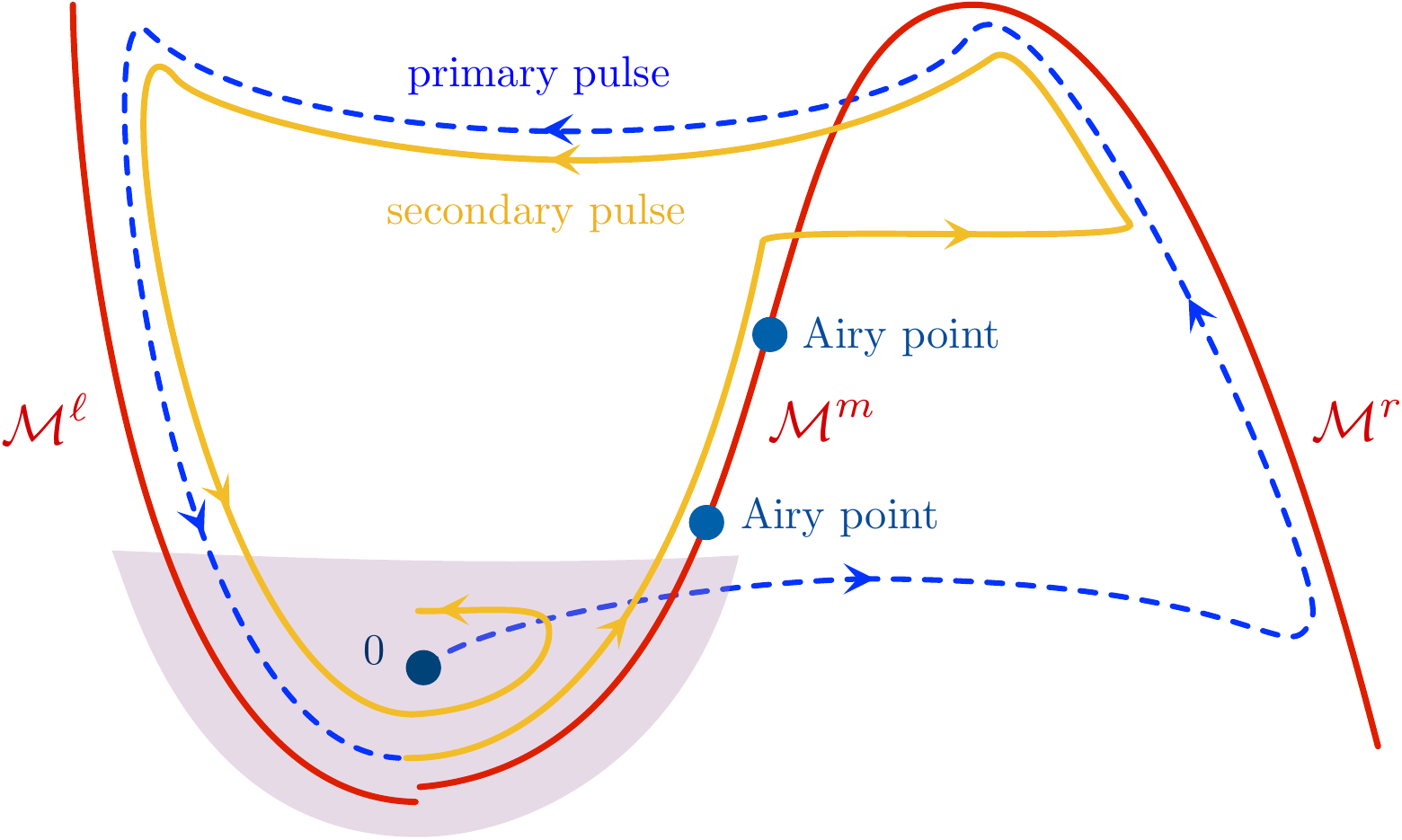}
\caption{Transitional pulse which passes through both Airy points $u=u^\pm_\mathrm{A}(\lambda,\eps)$.}
\label{f:fhn_doubleairy}
\end{subfigure}
\vspace{10pt} \begin{subfigure}{.48 \textwidth}
\centering
\includegraphics[width=1\linewidth]{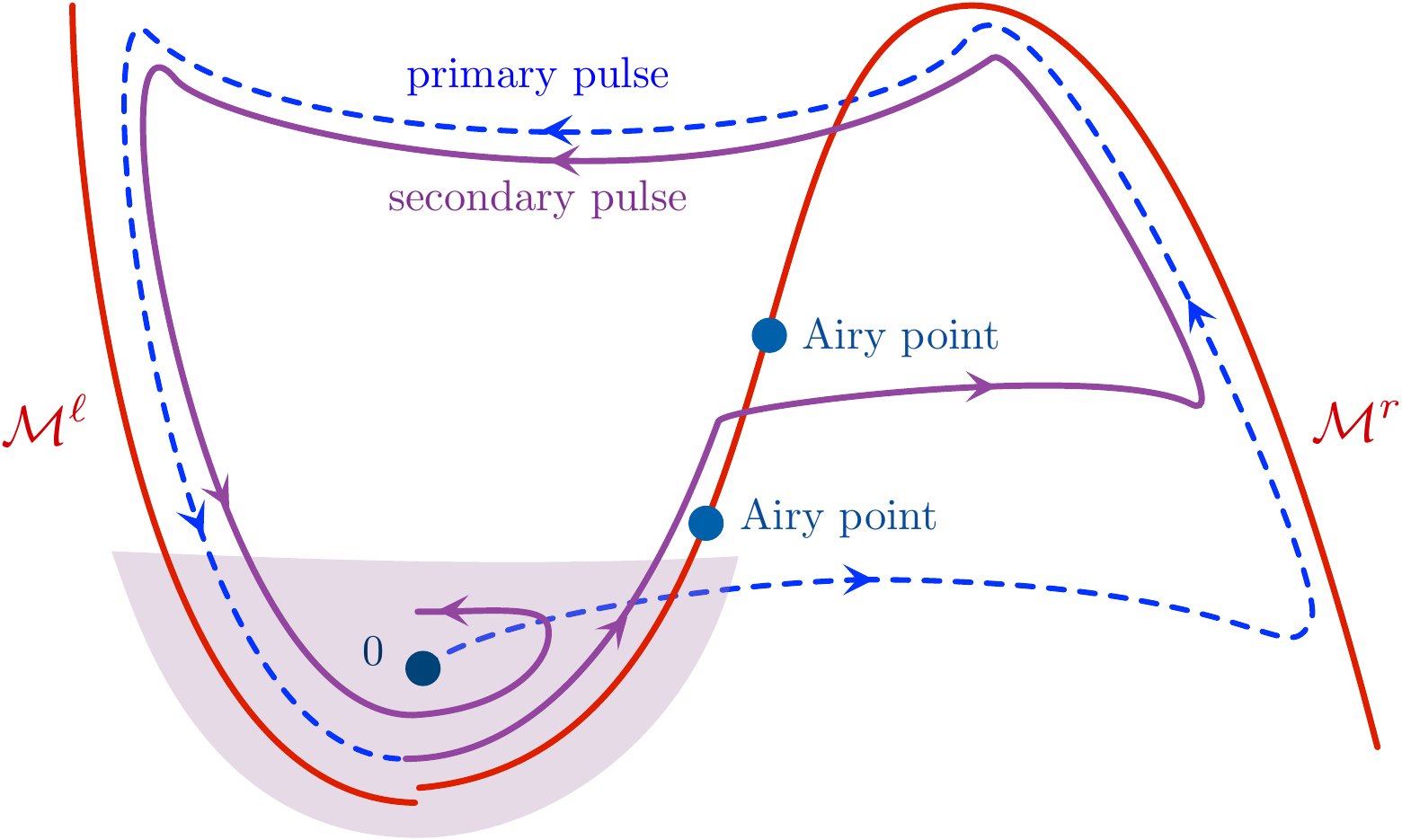}
\caption{Transitional pulse which jumps to $\mathcal{M}^r_\eps$ beyond the first Airy point $u=u^-_\mathrm{A}(\lambda, \eps)$ but before reaching the second Airy point at $u=u^+_\mathrm{A}(\lambda,\eps)$.}
\label{f:fhn_right}
\end{subfigure}
\hspace{0.04 \textwidth}
\begin{subfigure}{.48 \textwidth}
\centering
\includegraphics[width=1\linewidth]{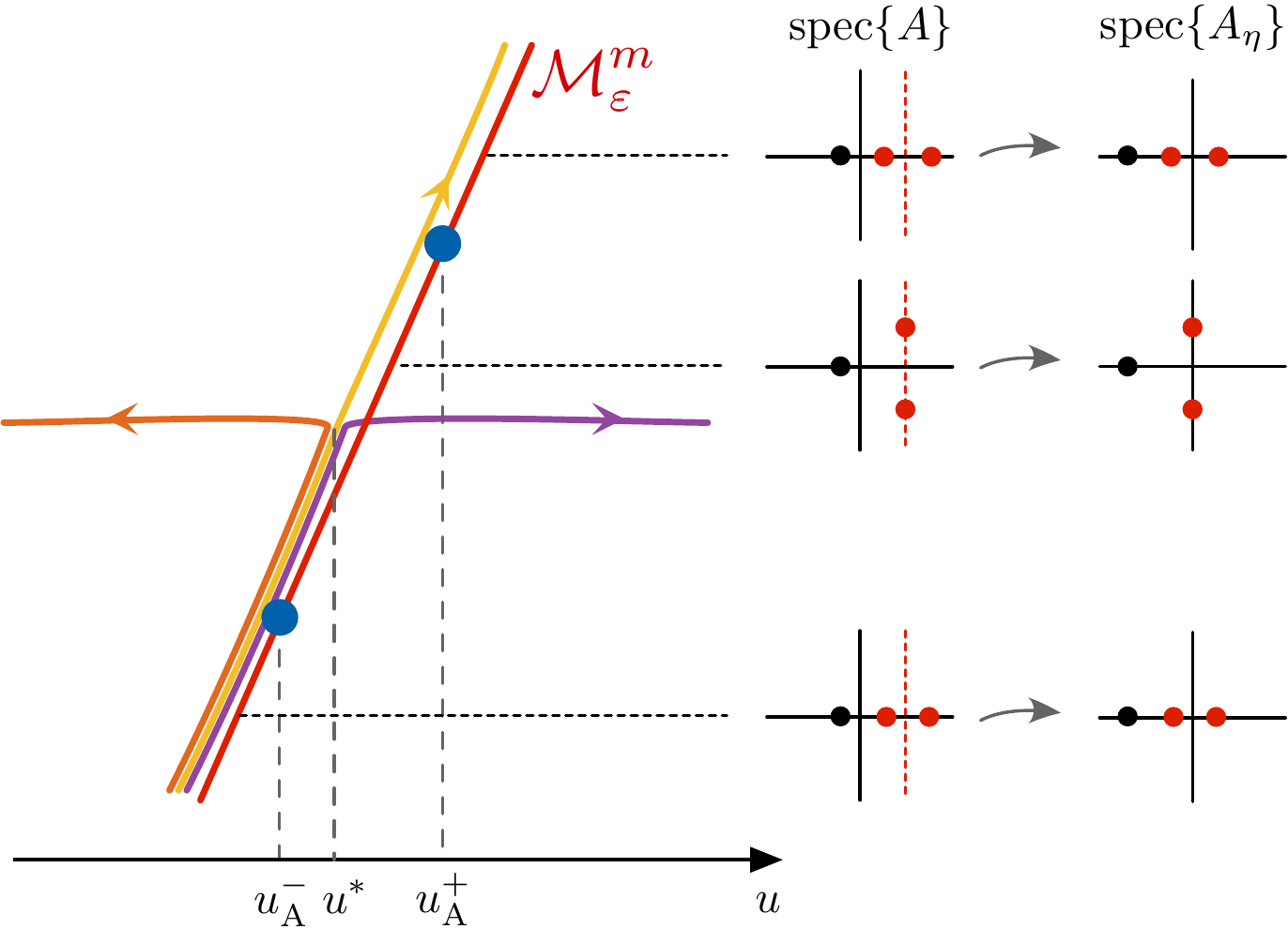}
\caption{Schematic depicting the behavior of each of the pulses (a), (b), (c), along the middle slow manifold $\mathcal{M}^m_\eps$. Also shown is the spatial eigenvalue structure of the matrices $A,A_\eta$ along this portion of $\mathcal{M}^m_\eps$.}
\label{f:fhn_allmiddle}
\end{subfigure}
\caption{Shown are the different possible schematic behaviors for pulses which pass near the region of the middle slow manifold $\mathcal{M}^m_\eps$ which admits slow absolute spectrum relative to the rest state $(u,w)=(0,0)$ for values of $\lambda \in (0,5/24)$. Compare the geometry of the pulses with the numerically computed profiles from Figure~\ref{f:transition}.}
\label{f:slowabs_cases}
\end{figure}

%\begin{theorem}
%Consider the eigenvalue problem~\eqref{e:fhnstab} associated with a transitional pulse $\phi(\zeta; s, \eps)$ of the FitzHugh--Nagumo PDE~\eqref{e:fhn} for $s\in(w^-_\mathrm{A}(\lambda^*,0),8/27-w^-_\mathrm{A}(\lambda^*,0))$ where $\lambda^* \in (0, 5/24)$. There exists $\delta_\lambda>0$ such that for all sufficiently small $\eps>0$, the eigenvalue problem~\eqref{e:fhnstab} admits $\mathcal{O}(\eps^{-1})$ eigenvalues in the interval $(\lambda^*-\delta_\lambda, \lambda^*+\delta_\lambda)$. 
%\end{theorem}
%\begin{proof}
%The result follows from Theorem~\ref{t:layers} applied to~\eqref{e:fhnlin1st_weighted} on the interval $[0,Z^*_\eps]$, in combination with Proposition~\ref{p:fhn_slowabs} for verification of Hypothesis~\ref{h:osc} as well as Hypothsis~\ref{h:Airy} near the Airy points $w=w^\pm_\mathrm{A}(\lambda,\eps)$ and Proposition~\ref{p:fhn_entry} for verification of the entry boundary condition in Hypothesis~\ref{h:osc_vector} for the entry subspace $Q^-(0;\lambda,\eps)$. For the exit subspace $Q^+(Z^*_\eps;\lambda,\eps)$, Propositions~\ref{p:fhn_exit1}--\ref{p:fhn_exit3} provide verification for the relevant subset of Hypotheses~\ref{h:osc_vector}--\ref{h:bl_vector} which apply in each case.
%\end{proof}

\subsubsection{Accumulation of unstable eigenvalues}
For a fixed transitional pulse $\phi(\cdot;s,\eps)$ for $s\in(0,8/27)$, we determine the corresponding unstable slow absolute spectrum $\Sigma^\mathrm{slow}_\mathrm{abs}(s)$ as follows. For fixed $\lambda\in(0,5/24)$, this pulse will traverse a portion of the middle slow manifold $\mathcal{M}^m_\eps$ between the two Airy points $w=w^\pm_\mathrm{A}(\lambda,\eps)$ provided $s\in(w^-_\mathrm{A}(\lambda,\eps),8/27-w^-_\mathrm{A}(\lambda,\eps))$. 

Equivalently, if we track the locations of the Airy points as a function of $\lambda$ for $\eps=0$, we see that for $\lambda=0$, the Airy points are furthest apart at 
\begin{align}
w^\pm_\mathrm{A}(0,0) = \frac{64\pm19\sqrt{10}}{864},
\end{align}
and approach each other for $\lambda\in(0,5/24)$, before finally colliding and disappearing at $w^\pm_\mathrm{A}(5/24,0) = 2/27$. Thus for values of $s\in(0,w^-_\mathrm{A}(0,0))$, in the singular limit the corresponding transitional pulses do not follow any absolutely unstable portion of the middle branch $\mathcal{M}^m_0$; by a similar reasoning, we conclude the same for values of $s\in(8/27-w^-_\mathrm{A}(0,0),8/27)$.

For values of $s\in(w^-_\mathrm{A}(0,0),8/27-w^-_\mathrm{A}(0,0))$, some absolutely unstable portion of $\mathcal{M}^m_0$ is traversed. For $s\in(2/27,2/9)$, the corresponding transitional pulse traverses some portion of $\mathcal{M}^m_0$ between the two Airy points for \emph{any} $\lambda\in(0,5/24)$. For $s\in(w^-_\mathrm{A}(0,0),2/27)$, the pulse traverses an absolutely unstable portion of $\mathcal{M}^m_0$ provided $\lambda \in (0,\lambda_{\mathrm{max},\ell}(s))$ where 
\begin{align}
\lambda_{\mathrm{max},\ell}:[w^-_\mathrm{A}(0,0),2/27]\to [0,5/24]
\end{align}
denotes the unique value of $\lambda$ such that $s=w^-_\mathrm{A}(\lambda_{\mathrm{max},\ell}(s),0)$; note that $\lambda_{\mathrm{max},\ell}(s)$ is an increasing function of $s$ with $\lambda_{\mathrm{max},\ell}(w^-_\mathrm{A}(0,0))=0$ and $\lambda_{\mathrm{max},\ell}(2/27)=5/24$. Similarly for $s\in(2/9, 8/27-w^-_\mathrm{A}(0,0))$, the pulse traverses an absolutely unstable portion of $\mathcal{M}^m_0$ provided $\lambda \in (0,\lambda_{\mathrm{max},r}(s))$ where
\begin{align}
\lambda_{\mathrm{max},r}:[2/9,8/27-w^-_\mathrm{A}(0,0)]\to [0,5/24]
\end{align}
 is the unique value of $\lambda$ such that $s=8/27-w^-_\mathrm{A}(\lambda_{\mathrm{max},r}(s),0)$.  Note that by symmetry, we have 
 \begin{align}
\lambda_{\mathrm{max},\ell}(s) = \lambda_{\mathrm{max},r}(8/27-s)
\end{align}
for $s\in(w^-_\mathrm{A}(0,0),2/27)$.

We therefore conclude that for values of $s\in(0,w^-_\mathrm{A}(0,0))\cup(8/27-w^-_\mathrm{A}(0,0),8/27)$, the corresponding transitional pulses do not follow any absolutely unstable portion of the middle branch $\mathcal{M}^m_0$. However, for $s\in(w^-_\mathrm{A}(0,0),8/27-w^-_\mathrm{A}(0,0))$, the corresponding transitional pulse will traverse a portion of the middle slow manifold $\mathcal{M}^m_\eps$ between the two Airy points \emph{provided} $\eps$ is sufficiently small and $\lambda\in\Sigma^\mathrm{slow}_\mathrm{abs}(s):= (0,\lambda_\mathrm{max}(s))$ where $\lambda_\mathrm{max}(s)$ is defined as
\begin{align}\label{e:lambda_max}
\lambda_\mathrm{max}(s) = \begin{cases} \lambda_{\mathrm{max},\ell}(s) & s\in (w^-_\mathrm{A}(0,0),2/27) \\ 5/24 & s\in [2/27,2/9]\\ \lambda_{\mathrm{max},r}(s) & s\in (2/9,8/27-w^-_\mathrm{A}(0,0)) \end{cases},
\end{align}
and satisfies the symmetry $\lambda_\mathrm{max}(s) = \lambda_\mathrm{max}(8/27-s)$ for $s\in(w^-_\mathrm{A}(0,0),8/27-w^-_\mathrm{A}(0,0))$. We have the following (see Figure~\ref{f:lambda_max}).

\begin{figure}
\centering
\includegraphics[width=0.6\textwidth]{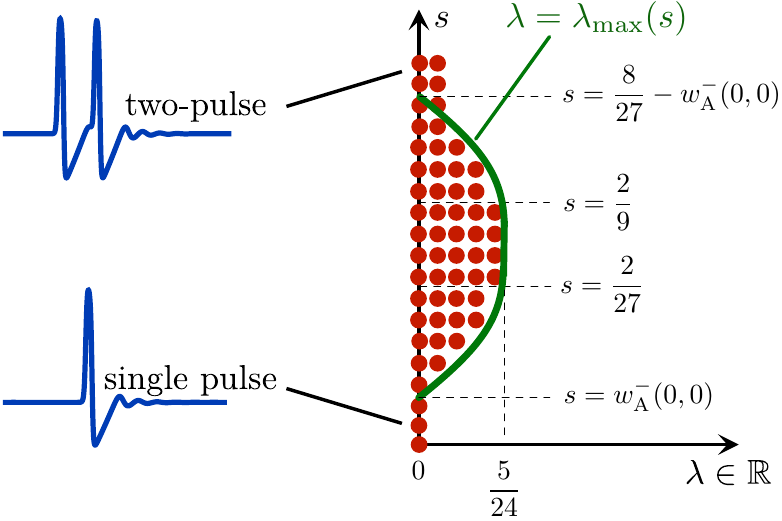}
\caption{Depicted are the results of Theorem~\ref{thm:fhnaccumulation}. For fixed $0<\eps\ll1$, as the single pulse transitions to a double pulse along the homoclinic banana as $s$ varies from $s=0$ to $s=8/27$, $\mathcal{O}(1/\eps)$ eigenvalues accumulate on the positive real axis on the interval $(0,\lambda_\mathrm{max}(s))$ where the endpoint $\lambda_\mathrm{max}$ is defined by~\eqref{e:lambda_max}. At the end of the transition, numerical evidence suggests that a single positive $\mathcal{O}(\eps)$ eigenvalue remains~\cite{CSosc}, and the resulting $2$ pulse is unstable. A translational eigenvalue at $\lambda=0$ is present throughout the transition. }
\label{f:lambda_max}
\end{figure}

\begin{theorem}\label{thm:fhnaccumulation}
Consider the eigenvalue problem~\eqref{e:fhnstab} associated with a transitional pulse $\phi(\zeta; s, \eps)$ of the FitzHugh--Nagumo PDE~\eqref{e:fhn} for $s\in(w^-_\mathrm{A}(0,0),8/27-w^-_\mathrm{A}(0,0))$ and $\lambda=\lambda^* \in \Sigma^\mathrm{slow}_\mathrm{abs}(s)$, and fix any $\delta_\lambda>0$.  For all sufficiently small $\eps>0$, the eigenvalue problem~\eqref{e:fhnstab} admits $\mathcal{O}(\eps^{-1})$ eigenvalues in the interval $(\lambda^*-\delta_\lambda, \lambda^*+\delta_\lambda)$. 
\end{theorem}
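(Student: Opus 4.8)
The plan is to deduce Theorem~\ref{thm:fhnaccumulation} directly from the abstract accumulation results of \S\ref{s:gentheory}, so that almost all of the work is in verifying the hypotheses of Theorem~\ref{t:layers} and of the combined Airy-entry/Airy-exit version of Proposition~\ref{p:Airy} for the concrete eigenvalue problem \eqref{e:fhnstab}. First I would pass to the weighted problem \eqref{e:fhnlin1st_weighted}, choosing the weight $\eta(\zeta)$ to equal $\tfrac{1}{2\sqrt{2}}+\calO(\eps)$ on the stretch of $\zeta$ where $u(\zeta;\eps)$ lies in the absolutely unstable window $(u^-_\mathrm{A}(\lambda^*,\eps),u^+_\mathrm{A}(\lambda^*,\eps))$ of the middle branch, so that the spatial eigenvalues $\nu_{1,2}$ of Proposition~\ref{p:fhn_slowabs} sit on the imaginary axis, to equal the constant $\eta_0$ of Proposition~\ref{p:leftrightsplit} on the outer branches, and to interpolate smoothly in between with fixed limits $\eta^\pm_\infty$ so that $A_{\eta,\pm\infty}$ keeps the splitting of $A_{\pm\infty}$. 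Since $\gamma>3/2$, Lemma~\ref{l:fhness_gammabound} puts $\Sigma_\mathrm{ess}$ strictly in the left half plane, so eigenvalues of the weighted problem near $\lambda^*>0$ are genuine eigenvalues of \eqref{e:fhnstab}, and Hypothesis~\ref{h:well} holds with $i_\infty=1$ by the spatial-eigenvalue count of \S\ref{sec:slowabs_middleslowmanifold}.

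Next I would fix the interval decomposition according to the case, cf.\ Figure~\ref{f:slowabs_cases}. For $s$ in the stated range and $\lambda^*\in(0,\lambda_\mathrm{max}(s))$, Proposition~\ref{p:fhn_slowabs} together with the geometric construction of the secondary excursion reviewed in \S\ref{sec:fhnaccumulation} shows that, for all sufficiently small $\eps$, the pulse drifts up $\mathcal{M}^m_\eps$ through the lower Airy point $u^-_\mathrm{A}$, giving an Airy-entry interval $I_\Airy$ followed by a uniformly oscillating interval $I_\osc$, on which Hypotheses~\ref{h:osc} and~\ref{h:Airy} hold; it then either passes through the upper Airy point $u^+_\mathrm{A}$ (yielding $[0,T]=I_\Airy\cup I_\osc\cup I_\Airy^-$, possibly followed by a saddle-type stretch of $\mathcal{M}^m_\eps$ covered by Proposition~\ref{p:leftrightsplit}), or jumps off along the fast fiber $\phi_\ell$ or $\phi_r$ before reaching $u^+_\mathrm{A}$ (yielding $[0,T+\eps L_\eps]=I_\Airy\cup I_\osc\cup I_\fast$ with $L_\eps=K|\log\eps|$). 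In either configuration Lemmas~\ref{l:diag} and~\ref{l:diagAiry} supply the block-diagonal normal forms over $\eps^{-1}I_\slow$; from the explicit spatial eigenvalues \eqref{e:slowevals} one reads off $\omega(u;\lambda)=\sqrt{c^2/4-f'(u)+\lambda}$, so that $\partial_\lambda\omega=\tfrac12\omega^{-1}>0$ and hence $\Delta'=\int_0^T\partial_\lambda\omega(\xi;\lambda^*)\,\rmd\xi>0$, cf.\ \eqref{e:delpr}.

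The remaining and \emph{hardest} task is to check the boundary-subspace hypotheses \ref{h:osc_vector}, \ref{h:bl} and \ref{h:bl_vector}. Here $Q^-(\lambda,\eps)$ is the one-dimensional unstable eigenspace of $A_{\eta,-\infty}$ transported forward along the primary excursion ($\phi_f$, $\mathcal{M}^r_\eps$, $\phi_b$, $\mathcal{M}^\ell_\eps$) and through the canard passage near the origin up to the entry on $\mathcal{M}^m_\eps$, while $Q^+(\lambda,\eps)$ is the two-dimensional stable eigenspace of $A_{\eta,+\infty}$ transported backward through the oscillatory tail in the center-stable manifold $\mathcal{Z}_\eps$ and, in the fast-layer cases, along the outer slow manifold to the $\calO(1)$-distance take-off point. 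I would show that the projections of $Q^\pm$ onto the relevant center trichotomy spaces (and, for the Airy entry, that $v_-$ projects nontrivially onto the $\nu_{i_\infty}$-eigenspace at $\zeta=0$, which is natural since $Q^-$ is precisely the unstable direction and $i_\infty=1$) are nontrivial, with the center components dominating, uniformly in $\eps$, by combining: exponential convergence of the evolved (un)stable subspaces to the fastest direction along normally hyperbolic slow manifolds, corner-type estimates across the fast jumps (cf.\ \cite{esz}, \cite[Theorem~4.3]{CdRS}), and the blow-up estimates near the canard point (cf.\ \cite[\S5]{CSbanana}). Hypothesis~\ref{h:bl} --- exponential convergence of $A(\tzeta+T_\eps;\lambda,\eps)$ to $A_\fast(\tzeta;\lambda)$ and thence to $A_\fast^\infty(\lambda)=A_\slow(T;\lambda,0)$ --- is exactly the exponential convergence of the exit jump $\phi_\ell$ or $\phi_r$ to its take-off and landing points, again furnished by the corner estimates. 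Analyticity in $\lambda$ together with $\eps$-uniform bounds away from $\Sigma_\mathrm{ess}$ gives the continuity of $v_\pm$, and the derivative bounds $|\partial_\lambda v_\pm|=\rmo(\eps^{-1})$ (resp.\ $\rmo(\eps^{-1+\kappa})$) follow by tracking $\partial_\lambda$ through the same estimates: the only potential growth comes from the $\calO(|\log\eps|)$-length fast layer or the canard window and is at most a fixed algebraic power of $\eps^{-1}$, which the margin in the hypotheses absorbs.

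With Hypotheses~\ref{h:well}--\ref{h:bl_vector} in place and $\Delta'\neq0$, I would conclude by applying Theorem~\ref{t:layers} in the fast-layer cases and the combined version of Proposition~\ref{p:Airy} in the double-Airy case, with $\lambda_0=\lambda^*$ and radius $r\leq\delta_\lambda$ chosen small enough that $B_r(\lambda^*)$ stays to the right of $\Sigma_\mathrm{ess}$ and within the $\lambda$-range where the construction is valid: this produces $\calO(\eps^{-1})$ elements of $\specpt$ in $B_r(\lambda^*)$, which by Lemma~\ref{l:fhness_gammabound} are genuine eigenvalues of \eqref{e:fhnstab}, and since they arise from the real phase condition $\Delta_T(\lambda,\eps)\in\eps\pi\Z$ with $\Delta_T$ real for real $\lambda$, they lie on the real axis, hence in $(\lambda^*-\delta_\lambda,\lambda^*+\delta_\lambda)$, as in the uniformly oscillating case of Proposition~\ref{p:uni}. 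The main obstacle, as indicated, is the bookkeeping for $Q^\pm$: tracking a one-dimensional unstable and a two-dimensional stable subspace through the full concatenation of fast jumps, saddle-type outer slow manifolds, the non-hyperbolic canard point at the origin, and the center-stable oscillatory tail, while keeping the center-subspace projections bounded below with $\rmo(\eps^{-1})$-controlled $\lambda$-derivatives --- a geometric-singular-perturbation argument in the spirit of, and building on, the corner and exchange-lemma estimates of \cite{CSbanana,CdRS}.
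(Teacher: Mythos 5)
Your proposal follows essentially the same route as the paper: reduce to the weighted problem \eqref{e:fhnlin1st_weighted}, use Lemma~\ref{l:fhness_gammabound} to confine $\Sigma_\mathrm{ess}$, invoke Proposition~\ref{p:fhn_slowabs} for Hypotheses~\ref{h:osc} and \ref{h:Airy} on $\mathcal{M}^m_\eps$, compute $\Delta'\neq 0$ from \eqref{e:slowevals}, verify the entry/exit boundary-subspace hypotheses via the dichotomy-tracking arguments along the primary excursion and tail (these are the content of the paper's Propositions~\ref{p:fhn_entry}--\ref{p:fhn_exit3}), split into the Airy--Airy and Airy--fast-layer cases of \S\ref{sec:exittracking}, and feed everything into Theorem~\ref{t:layers}.

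The one thing your proposal does not address, and the paper does explicitly, is the two degenerate values $s=w^+_\mathrm{A}(\lambda^*,0)$ and $s=8/27-w^+_\mathrm{A}(\lambda^*,0)$, where for the fixed $\lambda^*$ the transitional pulse departs $\mathcal{M}^m_\eps$ along a fast jump \emph{exactly at the upper Airy point}, so that none of the three exit Propositions \ref{p:fhn_exit1}--\ref{p:fhn_exit3} applies as stated. Your remark about choosing $r\leq\delta_\lambda$ small enough that the construction remains valid does not resolve this, since the problem is tied to the fixed pair $(s,\lambda^*)$ rather than to the radius. The paper's fix is simple but necessary: replace $\lambda^*$ by $\lambda^*+\delta$ for some $0<\delta<\delta_\lambda/2$, which moves the Airy point and puts $(s,\lambda^*+\delta)$ back into one of the three generic cases, and then apply the accumulation result on the smaller interval $(\lambda^*,\lambda^*+2\delta)\subset(\lambda^*-\delta_\lambda,\lambda^*+\delta_\lambda)$. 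Adding this one-line argument would make your proposal match the paper's proof in substance.
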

Theorem~\ref{thm:fhnaccumulation} guarantees the accumulation of unstable eigenvalues on the set $\Sigma^\mathrm{slow}_\mathrm{abs}(s)$. We note that the length of this interval is initially an increasing function of $s$ as pulses traverse longer portions of $\mathcal{M}^m_\eps$, reaching a maximal set $\Sigma^\mathrm{slow}_\mathrm{abs}(s)=(0,5/24)$ for $s\in [2/27,2/9]$, before eventually decreasing in $s$. Therefore, for pulses which traverse longer portions of $\mathcal{M}^m_\eps$, we expect to see a larger accumulation set of eigenvalues, as well as larger maximum eigenvalues.

In order to apply the general theory from~\S\ref{s:gentheory} to prove Theorem~\ref{thm:fhnaccumulation}, we must verify the relevant boundary conditions satisfied by the relevant entry/exit spaces along the critical region of the absolutely unstable slow manifold $\mathcal{M}^m_\eps$. These conditions are examined in detail in~\S\ref{sec:boundaryverification}, followed by a brief conclusion of the proof of Theorem~\ref{thm:fhnaccumulation} in~\S\ref{sec:fhnaccumulationproof}.

\subsection{Verification of boundary condition hypotheses}\label{sec:boundaryverification}

Based on the discussion in the previous section, we expect that unstable eigenvalues should accumulate for any transitional pulse which traverses a portion of the middle branch $\mathcal{M}^m_\eps$ between the Airy points. It remains to determine the appropriate boundary condition spaces and verify the entry/exit conditions in Hypotheses~\ref{h:osc_vector}--\ref{h:bl_vector}, where different hypotheses will be relevant depending on the pulse being considered; for this we present a sketch of the argument, referring to~\cite{CdRS,CSbanana} for techniques which provide the details, without repeating the content of these prior works.

We first note that all transitional pulses which traverse the critical region of the middle slow manifold $\mathcal{M}^m_\eps$ between the two Airy points do so by entering this region via an Airy transition through the lower Airy point at $u=u_\mathrm{A}^-$. However, the manner of exit differs depending on exactly which pulse is being considered. Pulses during the early part of this transition  jump to the left slow manifold $\mathcal{M}^\ell_\eps$ before reaching the higher Airy point at $u=u_\mathrm{A}^+$, and those during the later part of the transition jump to the right slow manifold $\mathcal{M}^r_\eps$ before reaching the higher Airy point; in each of these cases, the manner of exit is therefore a \emph{fast layer}. In contrast, the pulses during the middle part of the transition spend the longest time along $\mathcal{M}^m_\eps$ and actually pass through the second, higher Airy point, and therefore exit the critical region through a second Airy transition; see Figure~\ref{f:slowabs_cases}.

In each of these cases, we are concerned with eigenvalues, i.e. values of $\lambda$ to the right of $\Sigma_\mathrm{ess}$ such that the eigenvalue problem~\eqref{e:fhnlin1st} admits an exponentially localized solution. The asymptotic matrix $A_{\pm\infty}(\lambda,\eps)$ has a two-dimensional stable eigenspace $Q^+_\infty(\lambda,\eps)$, and a one-dimensional unstable eigenspace $Q^-_\infty(\lambda, \eps)$. Therefore, the space of solutions of~\eqref{e:fhnlin1st} which decay as $\zeta\to\infty$ is two-dimensional; at a given value of $\zeta\in\mathbb{R}$, we denote this subspace by $Q^+(\zeta;\lambda,\eps)$, and we note that $Q^+(\zeta;\lambda,\eps)\to Q^+_\infty(\lambda,\eps)$ as $\zeta\to\infty$, \blue{where we say that a sequence of subspaces converges if their unit spheres converge in the symmetric Hausdorff distance}. Similarly the space of solutions which decay as $\zeta\to-\infty$ is one-dimensional, and we denote this space by $Q^-(\zeta;\lambda,\eps)$, noting that $Q^-(\zeta;\lambda,\eps)\to Q^-_\infty(\lambda, \eps)$ as $\zeta\to-\infty$. Eigenfunctions then correspond to intersections of these two subspaces, and we will construct unstable eigenvalue/eigenfunction pairs which arise from the slow absolute spectrum phenomenon by tracking the subspace $Q^-(\zeta;\lambda,\eps)$ (resp. $Q^+(\zeta;\lambda,\eps)$) to the corresponding entry (resp. exit) point along the absolutely unstable middle slow manifold $\mathcal{M}^m_\eps$, and showing how the resulting boundary value problem falls into the framework established in~\S\ref{s:slowabsnew}-\ref{s:gentheory}.

\subsubsection{Tracking the entry subspace $Q^-(\zeta;\lambda,\eps)$}\label{s:fhn_entryspace}

We now argue that the asymptotic boundary subspaces can be tracked to the endpoints of this critical region for each of the transitional pulses, and that the appropriate boundary condition hypotheses can be verified in each case. The incoming boundary subspace $Q^-(\zeta;\lambda,\eps)$ can be tracked similarly for all transitional pulses. This subspace must be tracked along the primary excursion, through the canard trajectory near the origin and up along the middle branch $\mathcal{M}^m_\eps$. Fixing a value of $\lambda \in (0,5/24)$, we consider a transitional pulse which jumps from the middle branch $\mathcal{M}^m_\eps$ at some value of $w$ beyond the Airy point, that is, $w>w^-_\mathrm{A}(\lambda,\eps)$. We then track the subspace $Q^-(\zeta;\lambda,\eps)$ to a location just below the Airy point.

\begin{proposition}\label{p:fhn_entry}
Consider the eigenvalue problem~\eqref{e:fhnstab} associated with a transitional pulse $\phi(\zeta; s, \eps)$ of the FitzHugh--Nagumo PDE~\eqref{e:fhn} for $s\in(w^-_\mathrm{A}(\lambda^*,0),8/27-w^-_\mathrm{A}(\lambda^*,0))$ where $\lambda^* \in (0, 5/24)$. Fix $\delta>0$, and fix a translate of the pulse such that $\zeta=0$ is the $\zeta$ value at which the pulse $\phi(\zeta; s, \eps)$ first reaches the section $\{w=w^-_\mathrm{A}(\lambda^*,0)-\delta\}$ after completing the primary excursion. \red{Then  there exists $\epsilon_0>0$ such that for all $\epsilon\in(0,\epsilon_0]$  the subspace $Q^-(0;\lambda^*,\eps)$ satisfies Hypothesis~\ref{h:osc_vector}.}
\end{proposition}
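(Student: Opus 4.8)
## Proof proposal for Proposition~\ref{p:fhn_entry}

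The plan is to track the one-dimensional subspace $Q^-(\zeta;\lambda^*,\eps)$ from $\zeta=-\infty$, where it equals the unstable eigenspace $Q^-_\infty(\lambda^*,\eps)$ of $A_{-\infty}(\lambda^*,\eps)$, forward through the geometric pieces comprising the primary excursion and the canard passage near the origin, until it reaches the section $\{w=w^-_\mathrm{A}(\lambda^*,0)-\delta\}$ just below the lower Airy point on $\mathcal{M}^m_\eps$. Since $i_\infty=1$, the subspace $Q^-$ is a line, and the content of Hypothesis~\ref{h:osc_vector}(i) for a line is that the unit vector $v_-$ spanning $Q^-(0;\lambda^*,\eps)$ has nontrivial projection onto the center trichotomy space $E^\cc_\slow(0;\lambda^*,\eps)$ (with the stable component controlled, $|v^\ss_-|\le C|v^\cc_-|$), and — because we are in the Airy case — that it has nontrivial projection onto the $\nu_{i_\infty}$-eigenspace of $A_\slow(0;\lambda^*,0)$, i.e. $|v_-|\le C|\mathcal{P}_{i_\infty}v_-|$. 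The $\tilde Q^-$ transversality condition \eqref{e:layerBndryLeft} is automatic once $v_-$ has nonzero center component, since then $Q^-=\mathrm{span}\{v_-\}$ is itself the whole space $\tilde Q^-$ can avoid. We also need continuity of $v_-$ and the bound $|\partial_\lambda v_-|=o(\eps^{-1})$; in fact we expect the much stronger $|\partial_\lambda v_-|=\mathcal{O}(1)$, since $Q^-$ is obtained by evolving a fixed eigenspace over an $\mathcal{O}(1/\eps)$ interval only along \emph{normally hyperbolic} pieces away from the oscillatory region, where the $\lambda$-variation does not get amplified exponentially but only linearly in the interval length — and even that is tamed by exponential contraction toward the relevant slow subbundle.

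The key steps, in order: (1) Start at $\zeta\to-\infty$: $Q^-_\infty(\lambda^*,\eps)$ is the span of the unique spatial eigenvector of $A_{-\infty}$ with positive real part; this is analytic in $\lambda$ and $C^1$ (indeed smooth) in $\eps$ down to $\eps=0$. (2) Evolve along the oscillatory tail and the left slow manifold $\mathcal{M}^\ell_\eps$: by Proposition~\ref{p:leftrightsplit}, in the weighted problem these carry the saddle-type splitting $1$-dim unstable $/$ $2$-dim stable, so $Q^-$ is forward-attracted (exponentially, with $\eps$-uniform rate via the slowly-varying exponential dichotomy) to the one-dimensional unstable subbundle of $\mathcal{M}^\ell_\eps$; hence by the time we reach the first fast jump $\phi_f$, $v_-$ is $\mathcal{O}(\eps)$-close to (a $\lambda$-analytic, $\eps\to0$-continuous) fixed line, with bounded $\lambda$-derivative. (3) Pass through the fast jumps $\phi_f$, $\phi_b$ and the intervening visit to $\mathcal{M}^r_\eps$: here one uses the transversality/exchange-lemma-type estimates from \cite{CdRS,CSbanana} — the $1$-dim subspace gets mapped by the jump to a subspace transverse to the stable fibers, then again attracted along $\mathcal{M}^r_\eps$ and $\mathcal{M}^\ell_\eps$. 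The relevant statement is that $v_-$ stays $\mathcal{O}(\eps)$ (or $\mathcal{O}(\eps|\log\eps|)$, from corner estimates) close to the appropriate singular object with $\lambda$-derivatives bounded. (4) The canard passage near the origin: this is where $\mathcal{M}^\ell_\eps$ (saddle) connects to $\mathcal{M}^m_\eps$ (repelling) through the canard point. The fold/canard analysis of \cite[\S5]{CSbanana} (blow-up near the nonhyperbolic point) shows the relevant subspace enters a neighborhood of $\mathcal{M}^m_\eps$ aligned with the strong-unstable direction that, on $\mathcal{M}^m_\eps$ below the Airy point, is precisely the $\nu_{i_\infty}$-direction (the larger of the two real eigenvalues $\nu_{1,2}$ of \eqref{e:slowevals}). (5) Track along $\mathcal{M}^m_\eps$ from just above the canard point up to $\{w=w^-_\mathrm{A}(\lambda^*,0)-\delta\}$: on this segment the linearization has real eigenvalues $\nu_{i_\infty-1}>\nu_{i_\infty}>\nu_{i_\infty+1}>\nu_{i_\infty+2}$ (no center block yet — Airy point not reached), so $Q^-$ is forward-attracted exponentially fast, $\eps$-uniformly, to the line spanned by the $\nu_{i_\infty}$-eigenvector. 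This gives $|v_--\text{(unit $\nu_{i_\infty}$-eigenvector)}|=\mathcal{O}(\eps)$ at $\zeta=0$, which is exactly the Airy-case condition $|v_-|\le C|\mathcal{P}_{i_\infty}v_-|$; and once one crosses the Airy point and $\nu_{i_\infty}$ becomes the real part of a complex-conjugate pair, that eigenspace is the center trichotomy space, giving $v^\cc_-\ne0$ with $|v^\ss_-|\le C|v^\cc_-|$ for the center/stable split of Lemma~\ref{l:diagAiry}.

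I expect the main obstacle to be step (4) — controlling the subspace $Q^-(\zeta;\lambda^*,\eps)$ \emph{and its $\lambda$-derivative} through the canard passage near the (nonhyperbolic) fold/canard point at the origin, and verifying that it emerges aligned with the $\nu_{i_\infty}$ strong-unstable direction on $\mathcal{M}^m_\eps$ rather than with a stable or weaker direction. This requires invoking the blow-up construction of \cite[\S5]{CSbanana} at the linearized level, i.e. understanding how the relevant invariant subbundle behaves in the rescaling charts, and checking that the $\lambda$-dependence introduced there (the location of the canard point and the passage map both depend on $\lambda$, but smoothly and with bounded derivatives for $\eps=0$, exactly as exploited in the $I_2$-region estimates of the proof of Proposition~\ref{p:Airy}) produces at worst $o(\eps^{-1})$ — in fact $\mathcal{O}(1)$ — derivative bounds. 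The other pieces (steps 1,2,3,5) are standard exponential-dichotomy/exchange-lemma bookkeeping: the subspace is a line that is exponentially attracted to a $\lambda$-analytic, $\eps$-continuous reference line along every normally hyperbolic segment, so both the $\mathcal{O}(\eps)$-closeness and the bounded-derivative statements follow from roughness of dichotomies together with the $|A_\slow(\cdot;\cdot,\eps)-A_\slow(\cdot;\cdot,0)|\le C\eps$ hypothesis and the smoothness of the singular-limit heteroclinics established in \cite{CdRS,CSbanana}. Once $v_-$ and $\partial_\lambda v_-$ are controlled, Hypothesis~\ref{h:osc_vector} follows by reading off the center/stable decomposition from Lemma~\ref{l:diagAiry}, and continuity in $\eps\in(0,\eps_0]$ is inherited from the continuity of all the constituent dichotomy/connection data.
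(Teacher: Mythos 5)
Your proposal follows essentially the same route as the paper: construct $\eps$-uniform exponential dichotomies for the weighted problem along each fast and slow piece of the profile, paste them (the paper does this via Lemma~\ref{lem:paste} and Proposition~\ref{p:fhn_primary_ed}), extend them through the canard passage and up $\mathcal{M}^m_\eps$ to the section $\{w=w^-_\mathrm{A}(\lambda^*,0)-\delta\}$, and conclude that the line $Q^-$ is exponentially attracted to the unstable dichotomy subbundle, which is $\mathcal{O}(\eps)$-close to the $\nu_{i_\infty}$-eigendirection, giving all conditions of Hypothesis~\ref{h:osc_vector} with $\lambda$-regularity inherited from the spectral projections. Two corrections, though. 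First, your geometric bookkeeping of the pieces is garbled: the oscillatory tail sits at the $\zeta\to+\infty$ end and is relevant only for the exit space $Q^+$ (Proposition~\ref{p:fhn_exit1}); tracking $Q^-$ forward from $\zeta=-\infty$ the order is $\phi_f\to\mathcal{M}^r_\eps\to\phi_b\to\mathcal{M}^\ell_\eps\to$ canard passage $\to\mathcal{M}^m_\eps$, not ``tail and $\mathcal{M}^\ell_\eps$ before the first fast jump.'' The set of dichotomy constructions you need is the same, so this does not break the argument, but as written the itinerary does not match the pulse. Second, your anticipated main obstacle --- a linearized blow-up analysis at the fold/canard point --- is not needed and is not what the paper does: for fixed $\lambda^*\in(0,5/24)$ the weighted linearization retains a uniform one-unstable/two-stable spectral splitting through the fold region (the discriminant $c^2+4\lambda^*-4f'(u)$ stays bounded away from zero there), and the coefficients still vary slowly along the canard segment, so the standard slowly-varying-coefficient dichotomy construction together with the pasting lemma carries $Q^-$ through; the loss of normal hyperbolicity at the fold is an issue for the existence problem, not for the eigenvalue problem at this fixed $\lambda^*$. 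With those two adjustments your outline coincides with the paper's proof.
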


In order to prove Proposition~\ref{p:fhn_entry}, we call on results regarding the behavior of the primary excursion, or primary pulse. In particular, it was shown in~\cite{CdRS} that for any $\lambda$ with $\Re(\lambda)>0$, it is possible to construct exponential dichotomies along the primary excursion, independently of $\eps>0$ sufficiently small. This is (part of) the argument which guarantees the stability of the $1$-pulses in~\cite{CdRS}. In particular, any eigenvalues which arise for the primary pulse must be nearby eigenvalues of the two fast jumps $\phi_{f,b}$ in the singular limit. Along the entire primary excursion, we fix the weight $\eta(\zeta)=\eta_0$, where $\eta_0$ is the constant from Proposition~\ref{p:leftrightsplit}.

Setting $\eps=0$ in the eigenvalue problem~\eqref{e:fhnlin1st_weighted}, yields the limiting eigenvalue problems
\begin{equation}\label{e:fhnlin1st_0}
\begin{aligned}
\dot{U} = A_{\eta,j}(\zeta;\lambda, 0)U, \qquad 
A_{\eta,j}(\zeta;\lambda, \eps):=\begin{pmatrix}-\eta_0 &1& 0 \\ \lambda-f'(u_j(\zeta)) & c-\eta_0 & 1  \\ 0 & 0 & -\frac{\lambda}{c}-\eta_0 \end{pmatrix}
\end{aligned}
\end{equation}
along the fast jumps $\phi_j(\zeta) = (u_j,u'_j)(\zeta), j=f,b$, which are solutions to the layer problem~\eqref{e:layer} for the values $w=0$ in the case of $\phi_f$, and $w=4/27$ in the case of $\phi_b$. The upper triangular block system obtained by restricting this system to the subspace spanned by the first two components is precisely the system one obtains by considering the reduced eigenvalue problem for the fast jumps in the layer problem~\eqref{e:layer}
\begin{equation}
\begin{aligned}\label{e:fhnlin1st_reduced}
\dot{Y} = C_{\eta, j}(\zeta;\lambda, 0)Y, \qquad
C_{\eta,j}(\zeta;\lambda, 0):=\begin{pmatrix}-\eta_0 &1 \\ \lambda-f'(u_j(\zeta)) & c-\eta_0 \end{pmatrix}, \qquad j=f,b, 
\end{aligned}
\end{equation}
where $Y=(u,v)$.

\blue{The full system~\eqref{e:fhnlin1st_0} can be solved from the lower dimensional reduced system~\eqref{e:fhnlin1st_reduced} using variation of constants formulae. The two-dimensional eigenvalue problems~\eqref{e:fhnlin1st_reduced} are of Sturm-Liouville-type and admit no eigenvalues in the right half plane, and therefore exponential dichotomies with $\eps$-independent constants can be constructed for the reduced systems~\eqref{e:fhnlin1st_reduced}, which can be extended to the full systems~\eqref{e:fhnlin1st_0} using variation of constants formulae~\cite[Proposition 6.18]{CdRS}, and finally extended to~\eqref{e:fhnlin1st_weighted} using roughness.

Along the right and left slow manifolds, the weighted eigenvalue problem has slowly varying coefficients, and by Proposition~\ref{p:leftrightsplit}, the spatial eigenvalues $\nu_i, i=1,2,3$ admit a consistent splitting with two eigenvalues of negative real part, and one of positive real part. It is then straightforward using classical methods~\cite{Cop78} to construct exponential dichotomies for~\eqref{e:fhnlin1st_weighted} along these slow manifolds, also independently of $\eps>0$ sufficiently small~\cite[Proposition 6.5]{CdRS}. 

Taken together, these fast and slow pieces divide the primary excursion into four disjoint (sub)intervals, along each of which the system admits $\eps$-independent exponential dichotomies. We now make use of the following lemma.
\begin{lemma}\cite{Cop78,bdrthesis}\label{lem:paste}
Let $-\infty\leq \zeta_0<\zeta_1<\zeta_2\leq \infty$ and suppose the system
\begin{align}\label{e:pasteeqn}
U_\zeta=A(\zeta)U
\end{align}
admits exponential dichotomies on the intervals $J_1=[\zeta_0,\zeta_1]$ and $J_2=[\zeta_1,\zeta_2]$, with constants $C,\mu$, projections $P^{\mathrm{u,s}}_i(\zeta),i=1,2$, and subspaces $E^{\mathrm{u,s}}_i(\zeta),i=1,2$. If $E^\mathrm{u}_1(\zeta_1)\oplus E^\mathrm{s}_2(\zeta_1)=\mathbb{C}^n$, then~\eqref{e:pasteeqn} admits exponential dichotomies on $J=J_1\cup J_2$ with constants $\tilde{C},\mu$, where $\tilde{C}$ depends only on $C$ and $\|P\|$, where $P$ is the projection onto $E^\mathrm{s}_2(\zeta_1)$ along $E^\mathrm{u}_1(\zeta_1)$.
\end{lemma}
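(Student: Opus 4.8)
The plan is to build the dichotomy on $J=J_1\cup J_2$ by hand, propagating the splitting that the transversality hypothesis supplies at $\zeta_1$, and then to show that the resulting projections are bounded in terms of $C$ and $\|P\|$ only; the dichotomy estimates then drop out of a short case analysis on the location of $\zeta,\tilde\zeta$ relative to $\zeta_1$. First I would let $P$ be the projection onto $E^{\mathrm s}_2(\zeta_1)$ along $E^{\mathrm u}_1(\zeta_1)$, which is well defined precisely because $E^{\mathrm u}_1(\zeta_1)\oplus E^{\mathrm s}_2(\zeta_1)=\mathbb{C}^n$; note this forces $\dim E^{\mathrm s}_1=\dim E^{\mathrm s}_2$, so the stable ranks on the two intervals already agree. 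Set $\tilde P^{\mathrm s}(\zeta_1):=P$, $\tilde P^{\mathrm u}(\zeta_1):=1-P$, and propagate them as $\tilde P^{\mathrm{s,u}}(\zeta):=\Phi(\zeta,\zeta_1)\tilde P^{\mathrm{s,u}}(\zeta_1)\Phi(\zeta_1,\zeta)$, so that properties (i) and (ii) of the dichotomy hold by construction. Using invariance of the original dichotomy subspaces one checks that $\tilde E^{\mathrm s}(\zeta):=\Phi(\zeta,\zeta_1)E^{\mathrm s}_2(\zeta_1)$ equals $E^{\mathrm s}_2(\zeta)$ for $\zeta\in J_2$, and $\tilde E^{\mathrm u}(\zeta):=\Phi(\zeta,\zeta_1)E^{\mathrm u}_1(\zeta_1)$ equals $E^{\mathrm u}_1(\zeta)$ for $\zeta\in J_1$; these yield the algebraic identities $\tilde P^{\mathrm s}(\zeta)P^{\mathrm s}_1(\zeta)=\tilde P^{\mathrm s}(\zeta)$ on $J_1$, $P^{\mathrm s}_2(\zeta)\tilde P^{\mathrm s}(\zeta)=\tilde P^{\mathrm s}(\zeta)$ on $J_2$, and their unstable counterparts, which are the workhorses below.

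The step I expect to be the main obstacle is the length-independent bound $\sup_{\zeta\in J}\|\tilde P^{\mathrm{s,u}}(\zeta)\|\le K$ with $K=K(C,\|P\|)$. For $\zeta\in J_2$ I would take a test vector $w$ and split $\Phi(\zeta_1,\zeta)w$ into its $P^{\mathrm s}_2(\zeta)$- and $P^{\mathrm u}_2(\zeta)$-components; the (potentially growing) stable component, transported to $\zeta_1$, lies in $E^{\mathrm s}_2(\zeta_1)=\mathrm{range}(P)$, on which $P$ is the identity, so applying $\Phi(\zeta,\zeta_1)$ returns exactly $P^{\mathrm s}_2(\zeta)w$, of norm $\le C|w|$; the unstable component is bounded by $\|P\|$ and two further stable/unstable decays, contributing $\le C^2\|P\|\,\mathrm{e}^{-2\mu(\zeta-\zeta_1)}|w|$. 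Hence $\|\tilde P^{\mathrm s}(\zeta)\|\le C+C^2\|P\|$ on $J_2$; the bound on $J_1$ is symmetric with $\zeta$ and $\zeta_1$ interchanged and $P$ replaced by $1-P$, and the complementary projections are then handled via $\tilde P^{\mathrm u}=1-\tilde P^{\mathrm s}$. This is the crucial point: the cancellation above is exactly what keeps $K$ (and hence the final constant) independent of $|J_1|,|J_2|$.

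With the projection bound in hand, I would verify the estimates in property (iii) by cases. For the stable bound with $\tilde\zeta\le\zeta$: if both lie in $J_1$, write $\Phi(\zeta,\tilde\zeta)\tilde P^{\mathrm s}(\tilde\zeta)=\Phi(\zeta,\tilde\zeta)\tilde P^{\mathrm s}(\tilde\zeta)P^{\mathrm s}_1(\tilde\zeta)=\tilde P^{\mathrm s}(\zeta)\Phi^{\mathrm s}_1(\zeta,\tilde\zeta)$, bounded by $KC\,\mathrm{e}^{-\mu(\zeta-\tilde\zeta)}$; if both lie in $J_2$, use $P^{\mathrm s}_2(\tilde\zeta)\tilde P^{\mathrm s}(\tilde\zeta)=\tilde P^{\mathrm s}(\tilde\zeta)$ to get $\Phi^{\mathrm s}_2(\zeta,\tilde\zeta)\tilde P^{\mathrm s}(\tilde\zeta)$; if $\tilde\zeta\in J_1$ and $\zeta\in J_2$, split at $\zeta_1$, apply the $J_1$-case on $[\tilde\zeta,\zeta_1]$ (the result then lies in $\tilde E^{\mathrm s}(\zeta_1)=E^{\mathrm s}_2(\zeta_1)$) and then $\Phi^{\mathrm s}_2(\zeta,\zeta_1)$, so the two exponential factors multiply to $\mathrm{e}^{-\mu(\zeta-\tilde\zeta)}$. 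The unstable bound is entirely symmetric, integrating backward and swapping the roles of $J_1$ and $J_2$. In every case the rate is the original $\mu$ and the constant is of the form $C^2K(C,\|P\|)$, giving the claim. As this is the classical pasting lemma, I would present these cocycle manipulations compactly and refer to \cite{Cop78} for the routine details rather than reproduce them in full.
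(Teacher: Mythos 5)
Your proposal is correct and uses exactly the construction the paper uses: define $\tilde P^{\mathrm s}(\zeta)=\Phi(\zeta,\zeta_1)P\Phi(\zeta_1,\zeta)$ and verify the dichotomy estimates via the usual case analysis. The paper's proof simply defers that verification to the cited reference, whereas you carry it out; incidentally, you implicitly fix a typo in the paper's formula (which writes $\Phi(\zeta,0)P\Phi(0,\zeta)$ instead of $\Phi(\zeta,\zeta_1)P\Phi(\zeta_1,\zeta)$).
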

\begin{proof}
As in the lemma statement, we define $P$ to be the projection onto $E^\mathrm{s}_2(\zeta_1)$ along $E^\mathrm{u}_1(\zeta_1)$, and we set $P^\mathrm{s}(\zeta)=\Phi(\zeta,0)P\Phi(0,\zeta)$ and $P^\mathrm{u}(\zeta)=1-P^\mathrm{s}(\zeta)$, where $\Phi(\zeta,\bar{\zeta})$ is the evolution operator associated with~\eqref{e:pasteeqn}. It is straightforward to verify that this results in an exponential dichotomy with the desired properties; see~\cite[Lemma 4.11]{bdrthesis}.
\end{proof}

In particular, Lemma~\ref{lem:paste} applies if the projections associated with exponential dichotomies on neighboring intervals are close to each other; then these dichotomies can be pasted to form exponential dichotomies on the union of these intervals. Using this, the exponential dichotomies along the four pieces of the primary pulse can then be pasted together to build exponential dichotomies along the entire primary pulse. This is the content of the following.

\begin{proposition}\label{p:fhn_primary_ed}
Fix any $M>0$, and any sufficiently small $\delta>0$. Consider the eigenvalue problem~\eqref{e:fhnlin1st_weighted} for a transitional pulse $\phi(\zeta;s,\eps)$ for $s\in(0,8/27)$, and let $\zeta=\zeta_0$ denote the $\zeta$-value at which the pulse passes through the section $\{u=0\}$ after completing the primary excursion. There exists $\eps_0>0$ such that for any $\eps\in(0,\eps_0)$, the system~\eqref{e:fhnlin1st_weighted} admits exponential dichotomies on $(-\infty, \zeta_0]$ with constants $C,\mu$ independent of $\epsilon\in(0,\epsilon_0)$ and $\lambda$ satisfying $\delta<|\lambda|<M$ and $\Re\lambda>-\delta$.
\end{proposition}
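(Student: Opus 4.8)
The plan is to decompose the half-line $(-\infty,\zeta_0]$ into finitely many consecutive subintervals, construct on each an exponential dichotomy for the weighted system~\eqref{e:fhnlin1st_weighted} with constants uniform in $\eps\in(0,\eps_0)$ and in the stated $\lambda$-range, and then paste these dichotomies together with Lemma~\ref{lem:paste}. Since the transitional pulse has parameters $(c,a)=(c,a)(s,\eps)$ within $\mathcal{O}(\eps)$ of $(1/\sqrt 2,0)$ and is homoclinic to the origin, after completing the primary excursion its trace over $(-\infty,\zeta_0]$ lies $\mathcal{O}(\eps)$-close to the singular concatenation of: (a) an asymptotic tail near the rest state $(u,w)=(0,0)$; (b) the fast front $\phi_f$ (at $w=0$); (c) a compact piece of the right slow manifold $\mathcal{M}^r_\eps$; (d) the fast back $\phi_b$ (at $w=4/27$); and (e) a compact piece of the left slow manifold $\mathcal{M}^\ell_\eps$ up to the section $\{u=0\}$ at $\zeta_0$. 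Throughout this piece I would keep the weight $\eta(\zeta)\equiv\eta_0$ of Proposition~\ref{p:leftrightsplit}; after shrinking $\delta$, the splitting asserted there persists by continuity for all $\lambda$ with $\Re\lambda>-\delta$ and $\delta<|\lambda|<M$, uniformly in $\eps$, so that along the slow pieces and at $\pm\infty$ the weighted spatial eigenvalues split consistently into one of positive and two of negative real part.

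For piece (a) I would use that $A_\eta(\zeta;\lambda,\eps)\to A_{\eta,-\infty}(\lambda,\eps)$ exponentially as $\zeta\to-\infty$, that this limit is hyperbolic with the splitting just described, and conclude by roughness~\cite{Cop78,Pal82} a dichotomy on a left half-line with $\eps$- and $\lambda$-uniform constants. On the slow pieces (c) and (e) the coefficients vary on the slow scale and, by Proposition~\ref{p:leftrightsplit}, have the consistent splitting along the $u$-ranges $[2/3-\delta,1+\delta]$ and $[-1-\delta,\delta]$ respectively; classical estimates for slowly varying systems~\cite{Cop78} then give exponential dichotomies with $\eps$-independent constants, exactly as in~\cite[Proposition~6.5]{CdRS}. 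On the fast pieces (b) and (d) I would pass first to the reduced two-dimensional Sturm--Liouville problems~\eqref{e:fhnlin1st_reduced} along $\phi_{f,b}$, which have no eigenvalues with $\Re\lambda\ge 0$ and, by a uniform spectral gap, none with $\Re\lambda>-\delta$, $|\lambda|>\delta$; they therefore admit exponential dichotomies on all of $\mathbb{R}$ with $\eps$-independent constants. These lift to the full limiting systems~\eqref{e:fhnlin1st_0} by the variation-of-constants construction of~\cite[Proposition~6.18]{CdRS}, and extend to the $\eps>0$ weighted system~\eqref{e:fhnlin1st_weighted} by roughness.

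To paste, at each of the four junctions $\zeta_j$ I would check that $E^\mathrm{u}_{\mathrm{left}}(\zeta_j)\oplus E^\mathrm{s}_{\mathrm{right}}(\zeta_j)=\mathbb{C}^3$ with the associated projection bounded uniformly in $\eps$ and $\lambda$. In the singular limit $\eps=0$ these transversality conditions amount to saying that $\lambda$ is not an eigenvalue of the classical fast $1$-pulse assembled from $\phi_f$, $\phi_b$ and the two outer slow manifolds; since that pulse is spectrally stable and, in the weighted space, has $\lambda=0$ as its only eigenvalue in $\{\Re\lambda\ge 0\}$ (\cite{jones1984stability,CdRS}), the restrictions $\delta<|\lambda|<M$, $\Re\lambda>-\delta$ make all four conditions hold at $\eps=0$, and roughness carries them, with uniform bounds, to $0<\eps<\eps_0$. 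Lemma~\ref{lem:paste} applied successively across the four junctions then yields an exponential dichotomy for~\eqref{e:fhnlin1st_weighted} on $(-\infty,\zeta_0]$ with constants $C,\mu$ depending only on the uniform individual constants and the uniformly bounded junction projections, hence independent of $\eps\in(0,\eps_0)$ and of $\lambda$ in the stated range.

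The hard part will be the $\eps$-uniform control of the dichotomy subspaces at the take-off and touch-down points of the fast jumps, near the fold/canard point $u=0$ and the fold at $u=2/3$: although normal hyperbolicity of the critical manifold degenerates there, the weighted spatial splitting of Proposition~\ref{p:leftrightsplit} remains consistent on the full $u$-ranges quoted above, so the slow dichotomies can be continued right up to these sections, and their subspaces then match the fast dichotomy subspaces uniformly in $\eps$ by the estimates of~\cite{CdRS,CSbanana}---which I would invoke rather than reproduce, consistent with the stated aim of presenting only a sketch of the argument.
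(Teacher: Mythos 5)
Your overall architecture is the same as the paper's: split the primary excursion into the two fast jumps and the two outer slow-manifold portions (plus the asymptotic tail), obtain $\eps$-uniform exponential dichotomies on each piece --- slowly varying coefficients together with the splitting of Proposition~\ref{p:leftrightsplit} on the slow pieces, and the Sturm--Liouville argument for \eqref{e:fhnlin1st_reduced} lifted to \eqref{e:fhnlin1st_0} by variation of constants and to \eqref{e:fhnlin1st_weighted} by roughness on the fast pieces --- and then paste with Lemma~\ref{lem:paste}. This is precisely the paper's proof, which supplies the junction conditions by citing \cite[Proposition~6.20]{CdRS}.

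The one step you should repair is the assertion that, at $\eps=0$, the four junction transversality conditions ``amount to saying that $\lambda$ is not an eigenvalue of the classical fast $1$-pulse.'' That identification is not correct: the pasting condition at a junction is a local statement about two specific dichotomy subspaces at one corner, whereas an eigenvalue of the assembled $1$-pulse is a global condition (nontrivial intersection of $Q^-$ and $Q^+$ transported to a common point), and neither implies the other. The global spectral stability of the fast $1$-pulse is neither needed nor sufficient here. What is actually used is, first, that the individual front and back Sturm--Liouville problems have no eigenvalues in the region $\Re\lambda>-\delta$, $|\lambda|>\delta$ (this is what gives dichotomies for the fast pieces on all of $\mathbb{R}$, and is why the ball around $\lambda=0$, where the weighted translation eigenvalue sits, must be excluded --- you do state this correctly earlier); and second, that at each corner the unstable subspace inherited from the left interval and the stable subspace inherited from the right interval are both close (in the singular limit, respectively $\mathcal{O}(\eps)$-close) to complementary spectral subspaces of the same hyperbolic matrix $A_\eta$ evaluated at the corner point on $\mathcal{M}^{\ell,r}_\eps$ or at the rest state, so transversality with uniformly bounded projections follows from hyperbolicity alone. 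Your closing paragraph in effect falls back on exactly these corner estimates from \cite{CdRS,CSbanana}, which is also what the paper does via \cite[Proposition~6.20]{CdRS}; once the misattributed justification is replaced by that subspace-matching argument, your proof coincides with the paper's.
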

\begin{proof}
As described above, the primary excursion can be decomposed into four sub-intervals (two fast jumps and two portions of slow manifolds), and along each of these intervals~\eqref{e:fhnlin1st_weighted} admits exponential dichotomies with $\eps$-independent constants. The analysis in~\cite[Proposition 6.20]{CdRS} shows that at the endpoints of each neighboring subinterval, the projections associated with the neighboring exponential dichotomies are close. Then Lemma~\ref{lem:paste} allows these exponential dichotomies to be pasted, resulting in exponential dichotomies along the entire primary excursion.
\end{proof}
}

We denote the corresponding projections by $P^\mathrm{s,u}_\mathrm{p}(\zeta;\lambda,\eps)$ and subspaces by $E^\mathrm{s,u}_\mathrm{p}(\zeta;\lambda,\eps)$. Using this, we can now complete the proof of Proposition~\ref{p:fhn_entry}.

\begin{proof}[Proof of Proposition~\ref{p:fhn_entry}]
We argue that the dichotomies guaranteed by Proposition~\ref{p:fhn_primary_ed} along the primary excursion can be extended for the transitional pulses which pass near the equilibrium along a canard trajectory and continue up the middle branch $\mathcal{M}^m_\eps$. Along the middle branch, below the first Airy point $w=w^-_\mathrm{A}(\lambda,\eps)$, it still holds that \blue{$\Re \nu_1>\Re\nu_2$. Below the section $\{w=w^-_\mathrm{A}(\lambda^*,0)-\delta\}$, i.e. a small $\mathcal{O}(1)$-distance away from the Airy point, this spectral gap is uniform in all $\eps$ sufficiently small.} Hence by possibly shifting $\zeta_0$ and adjusting the exponential weight $\eta(\zeta)$ on the interval $[\zeta_0,0]$, along this portion of $\mathcal{M}^m_\eps$, the eigenvalue problem~\eqref{e:fhnlin1st_weighted} has slowly varying coefficients and admits exponential dichotomies with constants $C,\mu$ independent of $\eps>0$ sufficiently small, with projections close to the spectral projections of the matrix $A_\eta$. The \blue{(weighted)} stable subspace is two-dimensional and close to the eigenspace corresponding to the eigenvalues \blue{$\nu_{2,3}$} while the unstable subspace is one-dimensonal and aligned $\mathcal{O}(\eps)$-close to the eigenspace corresponding to the eigenvalue \blue{$\nu_1$}. 

\blue{Using Lemma~\ref{lem:paste},} these dichotomies can be pasted to those coming from the primary excursion, and hence we continue to denote the projections and spaces by $P^\mathrm{s,u}_\mathrm{p}(\zeta;\lambda,\eps)$ and $E^\mathrm{s,u}_\mathrm{p}(\zeta;\lambda,\eps)$, respectively, on the full interval $\zeta\in \mathbb{R}^-$. This gives a means of tracking the one-dimensional subspace $Q^-(\zeta;\lambda,\eps)$ along the middle branch up to a neighborhood of the Airy point. In particular, since $Q^-$ is the subspace of solutions which decay as $\zeta\to-\infty$, $Q^-(\zeta;\lambda,\eps)$ is aligned in the direction of the unstable subspace $E^\mathrm{u}_\mathrm{p}(\zeta;\lambda,\eps)$. Furthermore, as the dichotomy projections $P^\mathrm{s,u}_\mathrm{p}(\zeta;\lambda,\eps)$ are close to the spectral projections of $A_\eta$, $Q^-(\zeta;\lambda,\eps)$ is therefore aligned close to the eigenspace associated with the eigenvalue \blue{$\nu_1$} upon approaching a neighborhood of the Airy point.

From here, we see that the conditions of Hypothesis~\ref{h:osc_vector} are satisfied.  The exponential dichotomies above can be constructed with constants independent of $\eps>0$ sufficiently small, and since the projections are tied to the spectral projections of the matrix $A_\eta$ along the slow manifolds $\mathcal{M}^{\ell,m,r}_\eps$, these depend in a regular fashion on $\lambda$. \red{Along the middle slow manifold, the projection $P^\mathrm{c}(\zeta;\lambda, \eps)$ is $\mathcal{O}(\eps)$-close to the spectral projection onto the eigenspace $\mathcal{E}^{\cc}(\zeta;\lambda,\eps)$ associated with the eigenvalues $\nu_{1,2}$ and is therefore also close to the projection $P^\mathrm{u}_\mathrm{p}(\zeta;\lambda,\eps)$, while the projection $P^\mathrm{ss}(\zeta;\lambda, \eps)$ is similarly close to $P^\mathrm{s}_\mathrm{p}(\zeta;\lambda,\eps)$. The one-dimensional subspace $Q^-$ projects nontrivially onto the one-dimensional eigenspace associated with the eigenvalue $\nu_1$, and thus also the two-dimensional subspace $E^{\cc}(\zeta;\lambda,\eps)$, which is itself $\mathcal{O}(\eps)$-close to $\mathcal{E}^{\cc}(\zeta;\lambda,\eps)$.}
\end{proof}

\subsubsection{Tracking the exit subspace $Q^+(\zeta;\lambda,\eps)$}\label{sec:exittracking}
Tracking the exit subspace is more challenging, due to the fact that the manner of exit from the slow manifold $\mathcal{M}^m_\eps$ depends on which transitional pulse is being considered. However, we identify three primary cases for the fate of the pulse as it passes along $\mathcal{M}^m_\eps$; see Figure~\ref{f:slowabs_cases} for a visualization of the different cases. The first (see Figure~\ref{f:fhn_left}) is that it departs along a fast jump to the left slow manifold $\mathcal{M}^\ell_\eps$ before reaching the second Airy point at $w=w^+_\mathrm{A}(\lambda, \eps)$; the second (see Figure~\ref{f:fhn_doubleairy}) is that the pulse continues along the middle branch beyond the second Airy point, and the third (see Figure~\ref{f:fhn_right}) is that the pulse jumps to the right slow manifold $\mathcal{M}^r_\eps$ before reaching $w=w^+_\mathrm{A}(\lambda, \eps)$.
\paragraph{Case 1:}
We focus on the first case -- see Figure~\ref{f:fhn_left}. In this case, the pulse departs $\mathcal{M}^m_\eps$ via the fast jump $\phi_\ell(\zeta;s)$ to the left slow manifold $\mathcal{M}^\ell_\eps$. The subspace $Q^+(\zeta;\lambda,\eps)$ corresponds to the two-dimensional space of solutions which decay as $\zeta\to\infty$. This space must be tracked backwards along the tail of the pulse, contained in the invariant two-dimensional center-stable manifold of the origin, then tracked up the left slow manifold $\mathcal{M}^\ell_\eps$, and back across the corresponding fast jump $\phi_\ell(\cdot; s)$ from $\mathcal{M}^m_\eps$. We have the following.

\begin{proposition}\label{p:fhn_exit1}
Consider the eigenvalue problem~\eqref{e:fhnstab} associated with a transitional pulse $\phi(\cdot; s, \eps)$ of the FitzHugh--Nagumo PDE~\eqref{e:fhn} for $s\in(w^-_\mathrm{A}(\lambda^*,0),w^+_\mathrm{A}(\lambda^*,0))$ where $\lambda^* \in (0, 5/24)$. For all sufficiently small $\eps>0$, let $\zeta = T_\eps$ denote the $\zeta$ value at which the pulse $\phi(\zeta; s, \eps)$ first reaches the section $\{u=0\}$ after jumping from the absolutely unstable middle slow manifold $\mathcal{M}^m_\eps$. \red{Then there exists $\epsilon_0>0$ such that for all $\epsilon\in(0,\epsilon_0]$,~\eqref{e:fhnstab} satisfies Hypothesis~\ref{h:bl} at $\zeta=T_\eps$, and the subspace $Q^+(T_\eps;\lambda^*,\eps)$ satisfies Hypothesis~\ref{h:bl_vector}.}
\end{proposition}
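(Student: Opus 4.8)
The plan is to verify the two boundary-layer hypotheses in turn, in a manner parallel to the treatment of the entry subspace in Proposition~\ref{p:fhn_entry}. For Hypothesis~\ref{h:bl}, the fast exit layer is the jump $\phi_\ell(\cdot;s)$ from the middle slow manifold $\mathcal{M}^m_\eps$ to the left slow manifold $\mathcal{M}^\ell_\eps$, and $A_\fast(\tzeta;\lambda)$ is the $\eps\to0$ limit of the weighted coefficient matrix in~\eqref{e:fhnlin1st_weighted} along this jump: one replaces $u(\zeta;\eps)$ by the singular front profile $u_\ell(\tzeta;s)$ of the layer problem~\eqref{e:layer} at height $w=s$ and drops the explicit $\eps$-terms in the $w$-row. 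The estimate $|A(\tzeta+T_\eps;\lambda,\eps)-A_\fast(\tzeta;\lambda)|=\mathcal{O}(\eps|\log\eps|)$ has two sources: the explicit $\eps$-terms contribute $\mathcal{O}(\eps)$, and $|f'(u(\zeta;\eps))-f'(u_\ell(\tzeta;s))|=\mathcal{O}(\eps|\log\eps|)$ because the pulse leaves $\mathcal{M}^m_\eps$ along a Fenichel fiber whose base point is $\mathcal{O}(\eps)$ from the take-off point and which shadows $\phi_\ell$, with the logarithm produced by the $\mathcal{O}(|\log\eps|)$ time the profile spends in the corner region; these are precisely the exchange-lemma and corner estimates used in the existence analysis of~\cite{CSbanana} (see also~\cite{CdRS}). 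The exponential convergence~\eqref{e:decayfront} of $A_\fast(\tzeta;\lambda)$ to $A_\fast^\infty(\lambda)=A_\slow(T;\lambda,0)$ as $\tzeta\to-\infty$ follows because $\phi_\ell$ leaves the take-off equilibrium $p_m\in\mathcal{M}^m_0$ along its layer unstable manifold, so $u_\ell(\tzeta;s)\to u_m$ exponentially; one may take $\tnu$ slightly below $\Re\nu_{i_\infty}$ at $p_m$. With Hypothesis~\ref{h:bl} in place, roughness of exponential trichotomies applied to $A_\fast^\infty(\lambda)$ — which, in the weight $\Re\nu_{i_\infty}=c/2$ active along $\mathcal{M}^m_\eps$, has a purely imaginary pair $\pm\rmi\omega^\infty(\lambda)$ and one strictly stable eigenvalue $\nu_3-c/2=-\lambda/c-c/2<0$, cf.~\eqref{e:slowevals} — yields the subspaces $E^{\ss,\cc,\uu}_\fast(\tzeta;\lambda)$ on $\mathbb{R}^-$, with $E^\uu_\fast$ trivial, $E^\cc_\fast$ two-dimensional, and $E^\ss_\fast$ one-dimensional.

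For Hypothesis~\ref{h:bl_vector} the remaining task is to locate the two-dimensional exit subspace $Q^+(\zeta;\lambda,\eps)$ of solutions of~\eqref{e:fhnlin1st_weighted} decaying as $\zeta\to+\infty$, evaluated at $\zeta=T_\eps$, and compare it with $E^\cc_\fast(0;\lambda)$. First I would construct exponential dichotomies with $\eps$-independent constants for~\eqref{e:fhnlin1st_weighted} along the part of the pulse following $T_\eps$: along the oscillatory tail, which lies in the two-dimensional center–stable manifold $\mathcal{Z}_\eps$ of the origin, the construction of~\cite{CdRS} supplies such dichotomies; along the left slow manifold $\mathcal{M}^\ell_\eps$, Proposition~\ref{p:leftrightsplit} gives a consistent splitting (two stable, one unstable spatial eigenvalue, $i_\infty=1$), so dichotomies with $\eps$-uniform constants follow by classical theory~\cite{Cop78} as in~\cite[Proposition 6.5]{CdRS}; and the same holds along the terminal portion of $\phi_\ell$. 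Pasting these via Lemma~\ref{lem:paste} — the relevant projections at consecutive endpoints being $\mathcal{O}(\eps)$-close, exactly as in Proposition~\ref{p:fhn_primary_ed} — identifies $Q^+(T_\eps;\lambda,\eps)$ and shows it is $\mathcal{O}(\eps)$-close to the fixed two-dimensional subspace obtained by tracking the stable asymptotic eigenspace back through the singular orbit. Since $\dim Q^+(T_\eps;\lambda,\eps)+\dim E^\cc_\fast(0;\lambda)=2+2>n=3$, the intersection is at least one-dimensional; the essential point is that it is \emph{exactly} one-dimensional, i.e.\ $Q^+(T_\eps;\lambda,\eps)\neq E^\cc_\fast(0;\lambda)$.

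I would verify this nondegeneracy in the singular limit and propagate it by roughness. At $\eps=0$ the system~\eqref{e:fhnlin1st_0} along $\phi_\ell$ is block upper-triangular: the $w$-row decouples with eigenvalue $\nu_3-c/2<0$, and the complementary center block of $A_\fast^\infty$ sits in the plane $\{w=0\}$, so (transported along $\phi_\ell$) $E^\cc_\fast(0;\lambda)$ meets $\{w=0\}$ in a plane while $E^\ss_\fast(0;\lambda)$ is transverse to it. On the other hand $Q^+(T_\eps)$, built from the oscillatory tail, projects nontrivially onto the $\nu_3$-direction, since the tail is genuinely driven by the $u$–$w$ coupling; hence $Q^+(T_\eps)$ is not contained in the transported $\{w=0\}$ plane, and $Q^+(T_\eps)\neq E^\cc_\fast(0;\lambda)$ at $\eps=0$, with a definite gap that persists for small $\eps>0$. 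One then sets $v_+$ to span $Q^+(T_\eps;\lambda,\eps)\cap E^\cc_\fast(0;\lambda)$ and $\tilde Q^+$ to be any complement of $\mathrm{span}\{v_+\}$ in $Q^+$; the complementarity $\tilde Q^+\oplus E^\cc_\fast(0;\lambda)\oplus E^\uu_\fast(0;\lambda)=\mathbb{C}^n$ is automatic since $\mathrm{span}\{v_+\}$ is precisely $Q^+\cap E^\cc_\fast$, and the decomposition $v_+=v_+^\cc+v_+^\uu$ with $|v_+^\uu|\le C|v_+^\cc|$ holds trivially because $E^\uu_\fast$ is trivial (so $v_+^\uu=0$, $v_+^\cc=v_+\neq0$, and $P^\cc_\fast(0)v_+\neq0$ as needed in Theorem~\ref{t:layers}). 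Continuity of $v_+$ in $\eps$ and the bound $|\partial_\lambda v_+|=o(\eps^{-1+\kappa})$ follow from the regularity of the dichotomy and trichotomy projections: along the slow pieces they are $\mathcal{O}(\eps)$-close to spectral projections of a slowly varying matrix and have $\lambda$-derivatives bounded uniformly in $\eps$, while passage along the fast interval $\phi_\ell$ of $\zeta$-length $L_\eps=K|\log\eps|$ can amplify $\lambda$-derivatives by at most a power of $|\log\eps|$, which is $o(\eps^{-\kappa})$ for every $\kappa>0$. Finally, the entry subspace $Q^-$ satisfies Hypothesis~\ref{h:osc_vector} by Proposition~\ref{p:fhn_entry}.

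The step I expect to be the main obstacle is the nondegeneracy $Q^+(T_\eps;\lambda,\eps)\neq E^\cc_\fast(0;\lambda)$: this is a transversality statement between the subspace incoming from the asymptotic tail and the center structure of the fast front, and while it is morally clear, making it rigorous requires a careful singular-limit analysis using the explicit geometry of the layer heteroclinic $\phi_\ell$, the structure of the layer problem~\eqref{e:layer} at height $w=s$, and the linearization at the left-branch equilibrium $p_\ell$. A secondary technical point is establishing the corner estimate underlying Hypothesis~\ref{h:bl} with the sharp $\mathcal{O}(\eps|\log\eps|)$ error, uniformly over the exit layer, which is where the existence analysis of~\cite{CSbanana} must be invoked.
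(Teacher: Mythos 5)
Your overall route is the same as the paper's: verify Hypothesis~\ref{h:bl} via corner estimates along the exit front $\phi_\ell(\cdot;s)$ together with the exponential convergence of the front to the take-off point, build $\eps$-uniform exponential dichotomies along the remainder of the pulse to locate $Q^+(T_\eps;\lambda,\eps)$, and exploit the upper-triangular structure of the singular layer linearization (the invariant plane $\{w=0\}$) to obtain the complementarity in Hypothesis~\ref{h:bl_vector}. However, there are two genuine gaps. First, the dichotomies along the tail are not supplied by~\cite{CdRS}: for a transitional pulse the tail does \emph{not} simply sit near the equilibrium but executes a finite sequence of canard segments along $\mathcal{M}^\ell_\eps$ and $\mathcal{M}^m_\eps$ and fast jumps $\phi_\ell(\cdot;w_j)$ inside $\mathcal{Z}_\eps$ before settling near the origin. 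The proof must construct dichotomies along this sequence: along each tail front $\phi_\ell(\cdot;w_j)$ one uses the Sturm--Liouville structure of the reduced problem (the weighted front derivative is a sign-definite eigenfunction, so there are no eigenvalues in the right half plane), along the slow pieces one uses the uniform spectral splitting with the weight $\eta_m$, which holds precisely because $w_j<w^-_\mathrm{A}(0,0)+\eps^{2/3}<w^-_\mathrm{A}(\lambda^*,0)$ for $\lambda^*\in(0,5/24)$, i.e.\ the tail never re-enters the absolutely unstable region for the given $\lambda^*$, and then one pastes via Lemma~\ref{lem:paste} a number of times that is bounded independently of $\eps$. None of this is in your proposal, and without it the identification of $Q^+(T_\eps;\lambda,\eps)$ is not justified.

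Second, the transversality $Q^+(T_\eps;\lambda,\eps)\neq E^\cc_\fast(0;\lambda)$, which you correctly flag as the main obstacle, is left at the level of a heuristic (``the tail is genuinely driven by the $u$--$w$ coupling''). It can be closed with exactly the objects you already set up: tracked backwards, $Q^+$ is $\mathcal{O}(\eps)$-close to the two-dimensional stable eigenspace of the weighted linearization at the landing point on $\mathcal{M}^\ell_0$, which is the $\zeta\to+\infty$ limit of the stable dichotomy space $E^\mathrm{s}_+(\zeta;\lambda)$ of the singular problem~\eqref{e:fhnlin1st_0_ell} on $\R^+$; since the $w$-equation decouples in~\eqref{e:fhnlin1st_0_ell}, a nonzero $w$-component persists under the backward evolution, so $E^\mathrm{s}_+(0;\lambda)$, and hence $Q^+(T_\eps;\lambda,\eps)$ up to $\mathcal{O}(\eps)$, contains a vector with nonzero $w$-component, whereas $E^\cc_\fast(0;\lambda)$ coincides with the evolved space $E^\mathrm{u}_-(0;\lambda)$ built from the reduced upper-triangular block and therefore lies in $\{w=0\}$. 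This yields the exactly one-dimensional intersection spanning $v_+$ and the complement $\tilde{Q}^+$ required by Hypothesis~\ref{h:bl_vector}; the $\lambda$-regularity and the $o(\eps^{-1+\kappa})$ bound then follow, as you indicate, from smooth $\lambda$-dependence of the dichotomy subspaces and the $\mathcal{O}(|\log\eps|)$ length of the fast interval. With these two pieces supplied, your argument coincides with the paper's proof.
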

\begin{proof}
To track the subspace $Q^+(\zeta;\lambda,\eps)$ backwards from the equilibrium $(u,w)=(0,0)$ along the tail of the pulse, we argue as follows. This oscillatory tail is contained in the two dimensional invariant center stable manifold $\mathcal{Z}_\eps$ of the equilibrium. Within a small neighborhood of the equilibrium, the system~\eqref{e:fhnlin1st_weighted} has slowly varying coefficients (due to the proximity to the equilibrium) and admits exponential dichotomies with constants $C,\mu$ independent of $\eps>0$ sufficiently small, with projections close to the spectral projections of the matrix $A_\eta$. We denote these projections and spaces by $P^\mathrm{s,u}_\mathrm{tail}(\zeta;\lambda,\eps)$ and $E^\mathrm{s,u}_\mathrm{tail}(\zeta;\lambda,\eps)$, respectively. As the subspace $Q^+(\zeta;\lambda,\eps)$ corresponds to the two-dimensional space of solutions which decay as $\zeta\to\infty$, we have that $Q^+(\zeta;\lambda,\eps)=E^\mathrm{s}_\mathrm{tail}(\zeta;\lambda,\eps)$. 

Outside of this neighborhood, the system~\eqref{e:fhnlin1st_weighted} no longer has consistently slowly varying coefficients; this is only the case along the slow manifolds $\mathcal{M}^\ell_\eps$ and $\mathcal{M}^m_\eps$. Hence in order to extend the exponential dichotomies for the tail of the pulse, we must examine the slow/fast structure of the tail. 

The tail is contained entirely in the two-dimensional extended center manifold $\mathcal{Z}_\eps$. The manifold $\mathcal{Z}_\eps$ is formed as a perturbation of the set formed by the union of all of the singular heteroclinic orbits $\mathcal{Z}_0:=\{\phi_\ell(\cdot; w):w\in(0,w^-_\mathrm{A}(0,0)\}$. Since the middle branch $\mathcal{M}^m_0$ has the structure of a repelling node for each $w<w^-_\mathrm{A}(0,0)=f(u^-_\mathrm{A}(0,0))$, each of these heteroclinic orbits connects the weak unstable manifold of the repelling node from the middle branch $\mathcal{M}^m_0$ to the stable manifold of the saddle fixed point on the left branch $\mathcal{M}^\ell_0$ in the layer problem~\eqref{e:layer}. The singular set $\mathcal{Z}_0$ then defines a normally repelling invariant manifold from which the center manifold $\mathcal{Z}_\eps$ can be constructed; see~\cite[\S3]{CSbanana} for details of this construction. 

Within the manifold $\mathcal{Z}_\eps$, the pulse completes a sequence of alternating canard segments from $\mathcal{M}^\ell_\eps$ to $\mathcal{M}^m_\eps$ and fast jumps $\phi_\ell(\cdot; w_j)$ from $\mathcal{M}^m_\eps$ to $\mathcal{M}^\ell_\eps$, for monotonically decreasing values of $w_j>0$, approaching nearer the equilibrium after each such jump; see Figure~\ref{f:tail_sequence}. In~\cite{CSbanana}, it was shown that $w_j<w^-_A(0,0)+\eps^{2/3}$ for all $j$. \blue{Furthermore, given any sufficiently small fixed neighborhood $\mathcal{V}$ of the equilibrium, chosen independently of $\eps$ sufficiently small, there exists $N\in \mathbb{N}$ also independent of $\eps$, such that after $N$ such jumps the tail of the pulse is confined to $\mathcal{V}$ as $\xi\to\infty$~\cite[Theorem 2.2(ii)]{CSbanana}. } 

\begin{figure}
\centering
\includegraphics[width=0.6\linewidth]{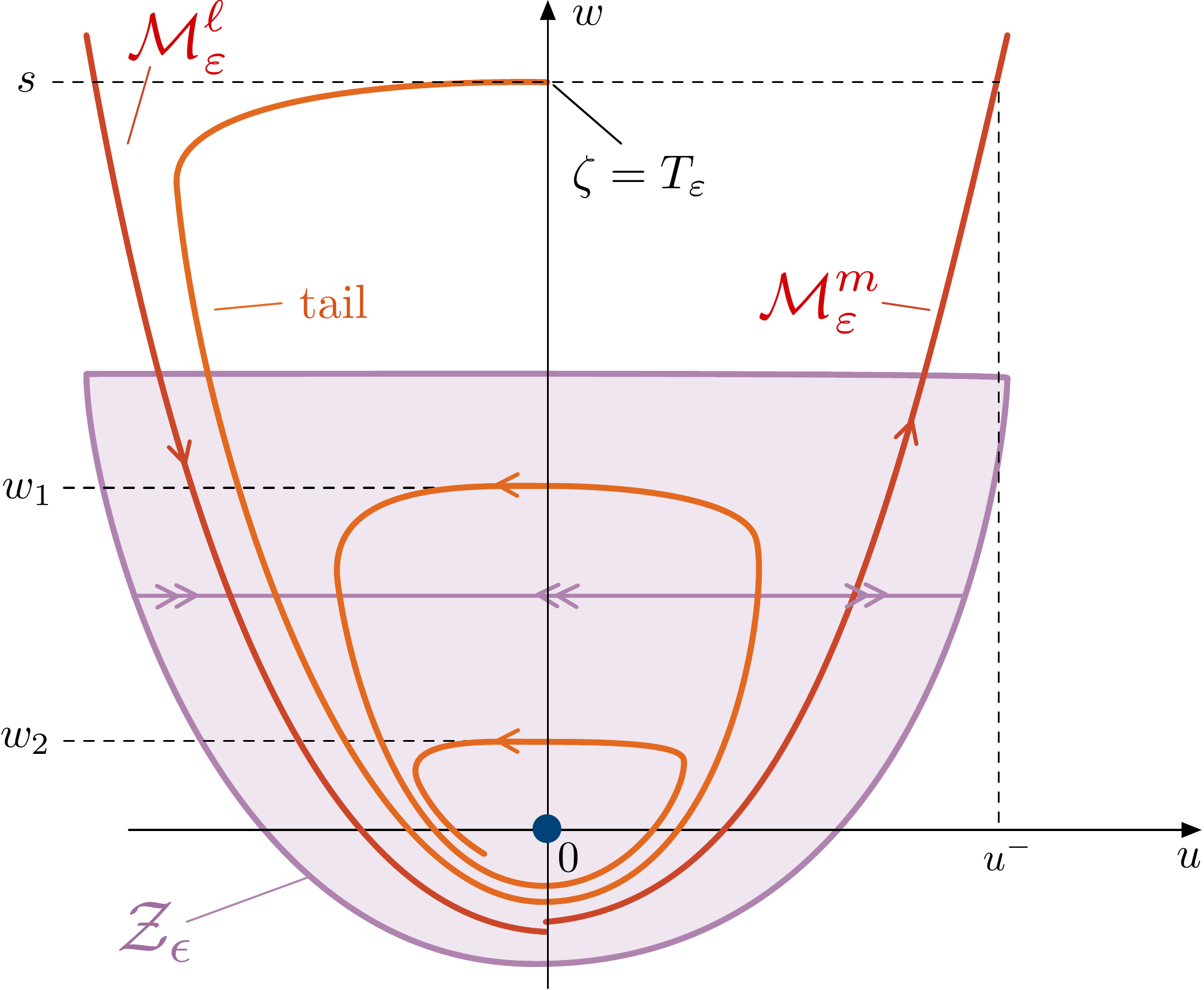}%figs/homoclinic_banana/tail_sequence}
\caption{Shown are the dynamics of the tail of the transitional pulse $\phi(\zeta;s,\eps)$ in the center manifold $\mathcal{Z}_\eps$.}
\label{f:tail_sequence}
\end{figure}

As argued in~\S\ref{s:fhn_entryspace}, the exponential dichotomies along the slow manifold $\mathcal{M}^\ell_\eps$ can be extended through the canard transition along the slow manifold $\mathcal{M}^m_\eps$, due to the consistent spatial eigenvalue splitting of $A_\eta$ for any $u<u^-_\mathrm{A}(\lambda,\eps)$, i.e. below the first Airy point: \blue{fixing $\lambda \in (0,5/24)$, we have that $u^-_\mathrm{A}(\lambda,0)>u^-_\mathrm{A}(0,0)$, and likewise $w^-_\mathrm{A}(\lambda,0)>w^-_A(0,0)+\mathcal{O}(\eps^{2/3})$ for all sufficiently small $\eps$, so that this spatial eigenvalue splitting is preserved along $\mathcal{M}^m_\eps$ above the `highest' fast jump which occurs in the tail sequence. Due to the fact that~\eqref{e:fhnlin1st_weighted} has slowly varying coefficients along the slow manifolds $\mathcal{M}^\ell_\eps$ and $\mathcal{M}^m_\eps$, these exponential dichotomies can be extended along the sequence of fast tail jumps, provided the weight function $\eta(\zeta)$ is adjusted appropriately along each segment. }

In fact, recalling the computation~\eqref{e:slowevals} of the spatial eigenvalues $\nu_i$ of the linearized system~\eqref{e:fhnlin1st}, fixing $\lambda\in (0,5/24)$ and fixing $\delta>0$ sufficiently small, we find that the matrix $A_\eta(\zeta;\lambda,\eps)$ of the weighted system~\eqref{e:fhnlin1st_weighted} admits two eigenvalues of negative real part, and one of positive real part for any $\zeta$ such that $\eta(\zeta) = \eta_m := \frac{1}{2\sqrt{2}}$ and 
\begin{align}
u(\zeta;s,\eps)<u^-_\mathrm{A}(\lambda,\eps)-\delta.
\end{align}
The corresponding spectral gap can be bounded away from zero uniformly for all $\eps>0$ sufficiently small. In particular, this means that it suffices to choose $\eta(\zeta)=\eta_m$ along the entire interval $\zeta\in[T_\eps,\infty)$ to guarantee this splitting  along the slow manifolds $\mathcal{M}^\ell_\eps$ and $\mathcal{M}^m_\eps$.

Furthermore, each of the fast jumps are perturbations from singular fronts $\phi_\ell(\zeta;w_j), j=1,2,\ldots$ in the layer problem~\eqref{e:layer} for $w=w_j$ for a decreasing sequence of values $w=w_1>w_2>\ldots$, each of which is contained in the center manifold $\mathcal{Z}_\eps$. Similarly to the fronts $\phi_{f,b}$, setting $\eps=0$ in the eigenvalue problem~\eqref{e:fhnlin1st_weighted}, yields the limiting eigenvalue problems
\begin{equation}\label{e:fhnlin1st_wj}
\begin{aligned}
\dot{U} = A_{\eta,j}(\zeta;\lambda, 0)U, \qquad 
A_{\eta,j}(\zeta;\lambda, \eps):=\begin{pmatrix}-\eta_m &1& 0 \\ \lambda-f'(u_\ell(\zeta;w_j)) & c-\eta_m & 1  \\ 0 & 0 & -\frac{\lambda}{c}-\eta_m\end{pmatrix}, \qquad j=1,2,\ldots
\end{aligned}
\end{equation}
along the fast jumps $\phi_\ell(\zeta;w_j) = (u_\ell,u'_\ell)(\zeta; w_j)$ for values of $w=w_j$, and this system can be solved using the lower-dimensional reduced systems
\begin{equation}
\begin{aligned}\label{e:fhnlin1st_wj_reduced}
\dot{Y} = C_{\eta, j}(\zeta;\lambda, 0)Y, \qquad
C_{\eta,j}(\zeta;\lambda, 0):=\begin{pmatrix}-\eta_m &1 \\ \lambda-f'(u_\ell(\zeta;w_j)) & c-\eta_m \end{pmatrix}, \qquad j=1,2,\ldots 
\end{aligned}
\end{equation}
where $Y=(u,v)$. These systems correspond to the linearization of the layer problem~\eqref{e:layer} about the fronts $\phi_\ell(\zeta;w_j)$, and are Sturm-Liouville-type eigenvalue problems.
Because each such front $u_\ell(\zeta;w_j)$ is monotonically decreasing, the weighted derivative $e^{ \eta_m \zeta}(u_\ell',u_\ell'')(\zeta; w_j)$ of the front corresponds to an eigenfunction with no zeros, and hence $\lambda=0$ is the largest eigenvalue. Hence in the right half plane, we can construct exponential dichotomies along each such front, which can be extended to the full system~\eqref{e:fhnlin1st_wj} using variation of constants formulae. By roughness, these persist for $\eps>0$.

By a similar argument as in Proposition~\ref{p:fhn_primary_ed}, the dichotomies along the slow manifolds $\mathcal{M}^\ell_\eps$ and $\mathcal{M}^m_\eps$ can be pasted to those across each of the fronts, which allows for the extension of the exponential dichotomies with projections and spaces $P^\mathrm{s,u}_\mathrm{tail}(\zeta;\lambda,\eps)$ and $E^\mathrm{s,u}_\mathrm{tail}(\zeta;\lambda,\eps)$ along the entirety of the tail of the pulse. \blue{We note that this pasting procedure is only performed finitely many times, since the tail completes only finitely many such jumps (independent of $\eps$) before entering and remaining in a small neighborhood of the equilibrium, within which~\eqref{e:fhnlin1st_weighted} admits a uniform exponential dichotomy.} For portions of the tail passing near each of the slow manifolds $\mathcal{M}^\ell_\eps$ and $\mathcal{M}^m_\eps$ the corresponding dichotomy subspaces $E^\mathrm{s,u}_\mathrm{tail}(\zeta;\lambda,\eps)$ are $\mathcal{O}(\eps)$-close to the spectral projections of the matrix $A_\eta$.

Tracking the subspace $Q^+$ backwards along the tail of the pulse, this ensures that $Q^+$ is aligned with $E^\mathrm{s}_\mathrm{tail}(\zeta;\lambda,\eps)$, and therefore also aligned with the stable eigenspace of the matrix $A_\eta$ along $\mathcal{M}^\ell_\eps$ when approaching the jump $\phi_\ell(\cdot;s)$. Along the fast jump $\phi_\ell(\cdot;s)$, the eigenvalue problem~\eqref{e:fhnlin1st_weighted} is a perturbation of the singular system
\begin{equation}\label{e:fhnlin1st_0_ell}
\begin{aligned}
\dot{U} = A_{\eta,\ell}(\zeta;\lambda)U, \qquad
A_{\eta,\ell}(\zeta;\lambda):=\begin{pmatrix}-\eta_m &1& 0 \\ \lambda-f'(u_\ell(\zeta;s)) & c-\eta_m & 1  \\ 0 & 0 & -\frac{\lambda}{c}-\eta_m \end{pmatrix}
\end{aligned}
\end{equation}
%Similarly to the case of the fronts $\phi_{f,b}$, the upper triangular block system obtained by restricting this system to the subspace spanned by the first two components is precisely the system one obtains by considering the reduced eigenvalue problem
%\begin{equation}
%\begin{aligned}\label{e:fhnlin1st_reduced_ell}
%\dot{Z} = C_{\eta, \ell}(\zeta;\lambda, 0)Z, \qquad
%C_{\eta,\ell}(\zeta;\lambda, 0):=\begin{pmatrix}-\eta &1 \\ \lambda-f'(u_\ell(\zeta)) & c-\eta \end{pmatrix}.
%\end{aligned}
%\end{equation}
where we choose a translate of the front $\phi_\ell(\zeta;s)= (u_\ell(\zeta;s),u_\ell'(\zeta;s))$ to be such that $u_\ell(0;s)=0$. We similarly choose $\zeta=T_\eps$ to be the first $\zeta$-value such that the pulse intersects the section $\{u=0\}$ after jumping from the middle slow manifold $\mathcal{M}^m_\eps$. It is now straightforward to see that Hypothesis~\ref{h:bl} is satisfied as follows.

Standard corner-type estimates~\cite[Theorem 4.5]{CdRS} imply that $|u(\zeta+T_\eps;s,\eps)-u_\ell(\zeta;s)|=\mathcal{O}(\eps \log \eps)$ on the interval $[-L^+_\eps,0]$, where $L^+_\eps = \mathcal{O}(\log \eps)$, which in turn gives the estimate
\begin{align}
|A_{\eta}(\zeta+T_\eps; \lambda, \eps) -A_{\eta,\ell}(\zeta;\lambda)|=\mathcal{O}(\eps \log \eps)
\end{align}
on the same interval, where $A_{\eta,\ell}(\zeta;\lambda)$ plays the role of $A_\mathrm{fast}(\zeta, \lambda)$ in Hypothesis~\ref{h:bl}. Furthermore, the solution $u_\ell(\zeta;s)$ decays exponentially to its limit $u^-:=\lim_{\zeta \to -\infty}u_\ell(\zeta;s)$, so that 
\begin{align}
|A_{\eta,\ell}(\zeta;\lambda)-A^{-}_{\eta,\ell}(\lambda)|=\mathcal{O}(e^{\tilde{\nu}\zeta})
\end{align}
for some $\tilde{\nu}>0$, as required, where 
\begin{align}
A^{-}_{\eta,\ell}(\lambda)=\lim_{\zeta\to -\infty}A_{\eta,\ell}(\zeta;\lambda).
\end{align}

Provided $\lambda$ is such that $u^-:=\lim_{\zeta \to -\infty}u_\ell(\zeta;s)$ satisfies $u^-\in(u^-_\mathrm{A}(\lambda,0), u^+_\mathrm{A}(\lambda,0))$, the Morse indices $i^\pm_\ell$ of the limiting matrices  $A_{\eta, \ell}^\pm:= \lim_{\zeta \to \pm \infty} A_{\eta,\ell}(\zeta;\lambda, 0)$ differ; in particular $i^-_\ell=2$ while $i^+_\ell=1$. The system~\eqref{e:fhnlin1st_0_ell} therefore admits exponential dichotomies on $\mathbb{R}^\pm$ with dichotomy spaces $E^\mathrm{s,u}_\pm(\zeta,\lambda)$ where $\mathrm{dim} E^\mathrm{u}_-(0,\lambda)=2=\mathrm{dim} E^\mathrm{s}_+(0,\lambda)$, and these subspaces generically intersect transversely due to the fact that $E^\mathrm{u}_-(0,\lambda)$ is constructed from a reduced upper triangular block system as in~\eqref{e:fhnlin1st_wj_reduced} and thus has trivial component in the $w$-direction, while $E^\mathrm{s}_+(0,\lambda)$ has a nontrivial $w$-component. We deduce that the singular problem~\eqref{e:fhnlin1st_0_ell} admits a one-dimensional space of bounded solutions $\psi_\ell(\zeta;\lambda)$ where $\mathrm{sp}\{\psi_\ell(0;\lambda)\}=E^\mathrm{u}_-(0,\lambda)\cap E^\mathrm{s}_+(0,\lambda)$.

Futhermore, the subspace $E^\mathrm{s}_+(\zeta,\lambda)$ converges to the stable eigenspace of the limiting matrix \begin{align}
A^{+}_{\eta,\ell}(\lambda)=\lim_{\zeta\to +\infty}A_{\eta,\ell}(\zeta;\lambda).
\end{align}
which itself represents the linearization~\eqref{e:fhnlin1st_weighted} at $\eps=0$ at the point $u^+=\lim_{\zeta \to \infty}u_\ell(\zeta;s)$ on the critical manifold $\mathcal{M}^\ell_0$. Hence this subspace is $\mathcal{O}(\eps)$-close to the stable subspace of the exponential dichotomy along the slow manifold $\mathcal{M}^\ell_\eps$, which is itself aligned with $Q^+$. It follows that $Q^+$ intersects $E^\mathrm{u}_-(0,\lambda)$ in a one-dimensional subspace. \red{The space $E^\mathrm{u}_-(\zeta,\lambda)$ converges to the eigenspace of $A^-_{\eta,\ell}$ corresponding to the eigenvalues $\nu_{1,2}(\lambda,u_-;0)$, so that $E^\mathrm{u}_-(0,\lambda)$ plays the role of $E^\cc_\mathrm{fast}$ in Hypothesis~\ref{h:bl_vector}, while $E^\uu_\mathrm{fast}$ is empty. Thus Hypothesis~\ref{h:bl_vector} is satisfied by the boundary space $Q^+$.} The statements regarding the derivatives of this intersection with respect to $\lambda$ follow from the fact that all of the above subspaces depend smoothly on $\lambda$ \blue{in combination with standard geometric singular perturbation estimates}.
\end{proof}

\paragraph{Cases 2 \& 3:} For the other two cases, we have the following results, which are proved similarly to Proposition~\ref{p:fhn_exit1}, and we omit the details of the proofs. For transitional pulses with secondary excursions which continue along $\mathcal{M}^m_\eps$ beyond the second Airy point (see Figure~\ref{f:fhn_doubleairy}), we have the following.
\begin{proposition}\label{p:fhn_exit2}
Consider the eigenvalue problem~\eqref{e:fhnstab} associated with a transitional pulse $\phi(\zeta; s, \eps)$ of the FitzHugh--Nagumo PDE~\eqref{e:fhn} for $s\in(w^+_\mathrm{A}(\lambda^*,0), 8/27-w^+_\mathrm{A}(\lambda^*,0))$ where $\lambda^* \in (0, 5/24)$. Fix $\delta>0$ sufficiently small, and let $\zeta = T_\eps$ denote the $\zeta$ value at which the pulse $\phi(\zeta; s, \eps)$ first reaches the section $\{w=w^+_\mathrm{A}(\lambda^*,0)+\delta\}$ after completing the primary excursion. \red{Then there exists $\epsilon_0>0$ such that for all $\epsilon\in(0,\epsilon_0]$ the subspace $Q^+(T_\eps;\lambda^*,\eps)$ satisfies Hypothesis~\ref{h:osc_vector}.}
\end{proposition}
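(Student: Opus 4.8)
The plan is to follow the scheme of the proof of Proposition~\ref{p:fhn_exit1}, modified to account for the fact that a Case~2 pulse leaves the absolutely unstable part of $\mathcal{M}^m_\eps$ \emph{through the upper Airy point} $u=u^+_\mathrm{A}(\lambda^*,\eps)$ before jumping off $\mathcal{M}^m_\eps$ along $\phi_\ell$ (if $s<4/27$) or $\phi_r$ (if $s>4/27$) at the $\mathcal{O}(1)$ height $w=s$, resp.\ $w=8/27-s$, which under the hypothesis $s\in(w^+_\mathrm{A}(\lambda^*,0),8/27-w^+_\mathrm{A}(\lambda^*,0))$ lies strictly above the upper Airy value. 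Hence the exit boundary space $Q^+$ is evaluated at a point $\zeta=T_\eps$ lying \emph{on} the slow manifold (at $\{w=w^+_\mathrm{A}(\lambda^*,0)+\delta\}$), the slow interval for the associated boundary value problem has the double-Airy form $I_\slow=I_\Airy\cup I_\osc\cup I_\Airy^-$ handled by the reversed-time version of Lemma~\ref{l:diagAiry} (cf.\ the remark following Proposition~\ref{p:Airy}), and the relevant exit condition is that of Hypothesis~\ref{h:osc_vector}(ii) rather than of Hypothesis~\ref{h:bl_vector}. Concretely, I would track the two-dimensional subspace $Q^+(\zeta;\lambda^*,\eps)$ of solutions of~\eqref{e:fhnlin1st_weighted} that decay as $\zeta\to+\infty$ backwards from the equilibrium to $\zeta=T_\eps$, and verify Hypothesis~\ref{h:osc_vector}(ii) there.

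First I would carry $Q^+$ backwards through the oscillatory tail exactly as in the proof of Proposition~\ref{p:fhn_exit1}: near the origin $Q^+=E^\mathrm{s}_\mathrm{tail}$; along the alternating canard segments on $\mathcal{M}^\ell_\eps$ and $\mathcal{M}^m_\eps$ below the lower Airy point, where the spatial splitting of $A_\eta$ (one unstable, two stable eigenvalues) is uniform, and across the monotone Sturm--Liouville fronts $\phi_\ell(\cdot;w_j)$ at the small tail heights $w_j$, the weighted problem admits $\eps$-uniform exponential dichotomies with projections close to the spectral projections of $A_\eta$, which patch via Lemma~\ref{lem:paste} (only finitely many jumps occur) into a single dichotomy, so that approaching the secondary jump-off $Q^+$ is aligned with the two-dimensional stable eigenspace of $A_\eta$ on $\mathcal{M}^\ell_\eps$ (resp.\ $\mathcal{M}^r_\eps$). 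Next I would continue $Q^+$ backwards across the fast jump $\phi_\ell(\cdot;s)$ (resp.\ $\phi_r$): because its take-off point $u_2(s)$ on $\mathcal{M}^m_0$ satisfies $u_2(s)>u^+_\mathrm{A}(\lambda^*,0)$, the spatial eigenvalues $\nu_{1,2}(\lambda^*,u_2(s);0)$ of~\eqref{e:slowevals} are real and distinct and the (weighted) Morse indices at the two ends of the jump coincide, so the front linearization has exponential dichotomies on $\mathbb{R}^\pm$ with $\dim E^\mathrm{s}_\pm=2$ that paste to the slow dichotomies, and — using that these fast fronts have no point spectrum to the right of the essential spectrum — the two-dimensional stable subspace passes transversely through the jump and limits onto the eigenspace of $A_\eta$ associated with $\{\nu_2,\nu_3\}$ at $u_2(s)$. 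Finally, continuing down $\mathcal{M}^m_\eps$ from $u_2(s)$ to $\{w=w^+_\mathrm{A}(\lambda^*,0)+\delta\}$, where the spatial splitting persists with an $\eps$-uniform gap above the upper Airy point, $Q^+(T_\eps;\lambda^*,\eps)$ stays $\mathcal{O}(\eps)$-close to the eigenspace $\mathcal{E}(\nu_2)\oplus\mathcal{E}(\nu_3)$ of $A_\eta$.

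To verify Hypothesis~\ref{h:osc_vector}(ii) I would then invoke the reversed-time form of Lemma~\ref{l:diagAiry} on $\zeta\in\eps^{-1}I_\slow$ with right endpoint $\xi=T$ at $\{w=w^+_\mathrm{A}(\lambda^*,0)+\delta\}$. There the center trichotomy subspace is the two-dimensional $E^\cc_\slow(T;\lambda^*,\eps)=\mathcal{E}(\nu_1)\oplus\mathcal{E}(\nu_2)+\mathcal{O}(\eps)$ — the continuation of the pair that coalesces at the upper Airy point — while $E^\ss_\slow(T;\lambda^*,\eps)=\mathcal{E}(\nu_3)+\mathcal{O}(\eps)$ is one-dimensional and $E^\uu_\slow(T;\lambda^*,\eps)=\{0\}$, since $\nu_1$ is the most unstable spatial eigenvalue and is absorbed into the center block of~\eqref{e:genslowlinearDiag-old} (consistent with the triviality of $P^\uu$ along $\mathcal{M}^m_\eps$ noted in~\S\ref{sec:slowabs_middleslowmanifold}). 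Hence $Q^+(T_\eps;\lambda^*,\eps)\cap\bigl(E^\cc_\slow(T)\oplus E^\uu_\slow(T)\bigr)=Q^+\cap E^\cc_\slow(T)=\mathcal{E}(\nu_2)+\mathcal{O}(\eps)$ is one-dimensional with a lower bound uniform in $\eps$; taking $v_+$ a unit vector in this line and $\tilde Q^+$ a complement of $\mathrm{span}\{v_+\}$ in $Q^+$ (necessarily $\mathcal{O}(\eps)$-close to $E^\ss_\slow(T)$) yields $Q^+=\mathrm{span}\{v_+\}\oplus\tilde Q^+$ with $\tilde Q^+\oplus E^\cc_\slow(T)\oplus E^\uu_\slow(T)=\mathbb{C}^3$, while $v_+=v^\cc_++v^\uu_+$ has $v^\uu_+=0$, so $|v^\uu_+|\le C|v^\cc_+|$ holds trivially; moreover $v_+$ projects nontrivially onto $\mathcal{E}(\nu_2)$, which is the eigenspace relevant for the reversed-time Airy transition at $\xi=T$. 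The continuity of $v_+$ in $\eps$ and the bound $|\partial_\lambda v_+|=o(\eps^{-1})$ follow, as in Proposition~\ref{p:fhn_exit1}, from the $C^1$-dependence on $\lambda$ of all the dichotomy and trichotomy subspaces involved together with standard geometric singular perturbation estimates.

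The step I expect to be the main obstacle is the passage of $Q^+$ backwards across the fast jump $\phi_\ell(\cdot;s)$ (resp.\ $\phi_r$): one must rule out that this picks up a component along the spatial-unstable direction at the take-off point, i.e.\ that the front carries a point eigenvalue at $\lambda=\lambda^*$ to the right of the essential spectrum. When $u_2(s)$ lies in the nodal range of the middle branch this is immediate from monotonicity of the front and the elementary Sturm--Liouville bound, but for $\lambda^*>0$ small the take-off height may fall in the focal range, where $\phi_\ell(\cdot;s)$ is not monotone and the absence of unstable point spectrum must instead be imported from the spectral analysis of the fast fronts in~\cite{CdRS,CSbanana}. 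A secondary point requiring care is the correct identification, in the double-Airy configuration, of $E^\cc_\slow(T)$ with the coalescing pair $\{\nu_1,\nu_2\}$ and of $E^\uu_\slow(T)$ as trivial — which is precisely what makes the decomposition in Hypothesis~\ref{h:osc_vector}(ii) hold automatically once $Q^+$ has been tracked to $T_\eps$.
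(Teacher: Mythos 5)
The paper omits this proof, stating only that it is ``proved similarly to Proposition~\ref{p:fhn_exit1}.'' Your proposal is precisely the adaptation the authors presumably had in mind, and it gets the essential structural changes right: the exit boundary lies \emph{on} $\mathcal{M}^m_\eps$ at $\{w=w^+_\mathrm{A}(\lambda^*,0)+\delta\}$, so one must verify Hypothesis~\ref{h:osc_vector}(ii) rather than Hypothesis~\ref{h:bl_vector}; the slow interval has the double-Airy form $I_\Airy\cup I_\osc\cup I_\Airy^-$; and the secondary jump $\phi_\ell(\cdot;s)$ (resp.\ $\phi_r$) connects two points of equal weighted Morse index~$1$ (in contrast to Case~1, where the indices are $2$ and $1$). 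You also correctly identify that in the Lemma~\ref{l:diagAiry} trichotomy at $\xi=T$, the center block is the two-dimensional block containing the pair $\{\nu_1,\nu_2\}$ that coalesces at the upper Airy point, $E^\ss_\slow(T)$ is spanned (up to $\mathcal O(\eps)$) by $\mathcal E(\nu_3)$, and $E^\uu_\slow(T)=\{0\}$; hence $Q^+\cap E^\cc_\slow(T)\approx\mathcal E(\nu_2)$ furnishes the required $v_+$ with $v^\uu_+=0$.

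On your flagged obstacle: you are right that one cannot always invoke monotonicity of $\phi_\ell(\cdot;s)$ --- for $u_2(s)\in\bigl(u^+_\mathrm{A}(\lambda^*,0),u^+_{\mathrm{A},0}\bigr)$ the front spirals and the reduced $2\times2$ problem has $\lambda=0$ inside its essential spectrum, so the elementary Sturm--Liouville bound does not apply. However, the argument that is actually used in the proof of Proposition~\ref{p:fhn_exit1} is not the Sturm--Liouville one: for the main jump the paper invokes the triangular structure of the singular system ($E^\mathrm{u}_-$ has trivial $w$-component, $E^\mathrm{s}_+$ does not) and concludes by generic transversality. In Case~2 the analogous statement is $E^\mathrm{u}_-(0;\lambda^*)\oplus E^\mathrm{s}_+(0;\lambda^*)=\mathbb C^3$ for the singular front problem~\eqref{e:fhnlin1st_0_ell}, i.e., the absence of a bounded solution of the front linearization at $\lambda^*$; as you note this does not follow from monotonicity, and like the Case~1 transversality it is a codimension-one condition that the paper treats as generic. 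If one does not wish to impose it as a standing assumption, the remedy is the same one the paper already uses at the end of the proof of Theorem~\ref{thm:fhnaccumulation}: shift $\lambda^*$ by an amount smaller than $\delta_\lambda/2$ to avoid the (discrete) set of $\lambda$ at which the front linearization at height $s$ has a point eigenvalue, which does not affect the eigenvalue count on $(\lambda^*-\delta_\lambda,\lambda^*+\delta_\lambda)$. With that understanding, your proof is complete and matches the intended argument.

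One minor omission: for $s\in(4/27,8/27-w^+_\mathrm{A}(\lambda^*,0))$, the secondary excursion lands on $\mathcal{M}^r_\eps$ and the subsequent tail passes through an additional $\phi_b$-type jump and a stretch of $\mathcal{M}^\ell_\eps$ before entering $\mathcal{Z}_\eps$; this only adds more (index-preserving) pieces to the pasting chain that carries $Q^+$ backwards, handled exactly as in Proposition~\ref{p:fhn_primary_ed}, so nothing essential changes.
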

%\JR{As above, I tend to think of the range of $\lambda$ as a function of $s$. Then in this previous proposition this range is independent of $s$ as the pulses traverse the maximum possible absolutely unstable range, right?}

Lastly, for transitional pulses with secondary excursions which jump to the right slow manifold $\mathcal{M}^r_\eps$ along a fast jump $\phi_r$ before reaching the second Airy point (see Figure~\ref{f:fhn_right}), we have the following.
\begin{proposition}\label{p:fhn_exit3}
Consider the eigenvalue problem~\eqref{e:fhnstab} associated with a transitional pulse $\phi(\cdot; s, \eps)$ of the FitzHugh--Nagumo PDE~\eqref{e:fhn} for $s\in(8/27-w^+_\mathrm{A}(\lambda^*,0),8/27-w^-_\mathrm{A}(\lambda^*,0))$ where $\lambda^* \in (0, 5/24)$. For all sufficiently small $\eps>0$, let $\zeta = T_\eps$ denote the $\zeta$ value at which the pulse $\phi(\zeta; s, \eps)$ first reaches the section $\{u=2/3\}$ after jumping from the absolutely unstable middle slow manifold $\mathcal{M}^m_\eps$. \red{Then there exists $\epsilon_0>0$ such that for all $\epsilon\in(0,\epsilon_0]$,~\eqref{e:fhnstab} satisfies Hypothesis~\ref{h:bl} at $\zeta=T_\eps$, and the subspace $Q^+(T_\eps;\lambda^*,\eps)$ satisfies Hypothesis~\ref{h:bl_vector}.}
\end{proposition}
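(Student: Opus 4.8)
The plan is to transcribe the proof of Proposition~\ref{p:fhn_exit1}, replacing the left slow manifold $\mathcal{M}^\ell_\eps$ by the right slow manifold $\mathcal{M}^r_\eps$, the fast jump $\phi_\ell$ by $\phi_r$, and the section $\{u=0\}$ by $\{u=2/3\}$. The only new feature is the itinerary after take-off from $\mathcal{M}^m_\eps$: the secondary excursion of such a right-jumping pulse completes $\phi_r\to\mathcal{M}^r_\eps\to\phi_b\to\mathcal{M}^\ell_\eps$ — a second copy of the primary excursion — before entering the oscillatory tail contained in $\mathcal{Z}_\eps$. As before I would work with the weighted problem~\eqref{e:fhnlin1st_weighted}, track the two-dimensional subspace $Q^+(\zeta;\lambda^*,\eps)$ of solutions decaying as $\zeta\to+\infty$ backwards to a neighbourhood of $\phi_r$, and then verify Hypotheses~\ref{h:bl} and~\ref{h:bl_vector} at the crossing $T_\eps$ of $\{u=2/3\}$.

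First I would build $\eps$-independent exponential dichotomies for~\eqref{e:fhnlin1st_weighted} along each piece traversed, working backwards from $\zeta=+\infty$: a uniform dichotomy near the equilibrium; dichotomies along the oscillatory tail in $\mathcal{Z}_\eps$ exactly as in Proposition~\ref{p:fhn_exit1} (using that the tail completes only finitely many $\phi_\ell$-jumps, that $w_j<w^-_\mathrm{A}(0,0)+\mathcal{O}(\eps^{2/3})$, and that the spatial splitting $\nu_{1,2},\nu_3$ is consistent along $\mathcal{M}^\ell_\eps$ and along portions of $\mathcal{M}^m_\eps$ below the first Airy point for a suitable weight); dichotomies up $\mathcal{M}^\ell_\eps$ and down $\mathcal{M}^r_\eps$ from Proposition~\ref{p:leftrightsplit}; and dichotomies across the primary back-front $\phi_b$ (including the passage near the upper fold) exactly as in the analysis of the primary excursion in~\cite{CdRS}, via the Sturm--Liouville reduced $(u,v)$-block and variation of constants, robust in $\eps$. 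Since at the joints of neighbouring pieces the associated projections are $\mathcal{O}(\eps)$-close — or close in the singular limit, as for the primary fronts — Lemma~\ref{lem:paste} pastes all of these (finitely many times) into dichotomies $P^\mathrm{s,u}_\mathrm{tail},E^\mathrm{s,u}_\mathrm{tail}$ along the whole interval from $+\infty$ down to the $\phi_r$-jump, with $Q^+=E^\mathrm{s}_\mathrm{tail}$ throughout; in particular $Q^+$ arrives $\mathcal{O}(\eps)$-aligned with the stable eigenspace of $A_\eta$ along $\mathcal{M}^r_\eps$.

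Next I would analyse the fast jump $\phi_r(\cdot;s)$, normalising the translate so that $u_r(0;s)=2/3$ and $T_\eps$ is the first crossing of $\{u=2/3\}$ after take-off. Corner-type estimates~\cite[Theorem~4.5]{CdRS} give $|u(\zeta+T_\eps;s,\eps)-u_r(\zeta;s)|=\mathcal{O}(\eps\log\eps)$ on an interval of length $\mathcal{O}(|\log\eps|)$, hence $|A_\eta(\zeta+T_\eps;\lambda,\eps)-A_{\eta,r}(\zeta;\lambda)|=\mathcal{O}(\eps\log\eps)$ there; combined with the exponential convergence of $u_r(\zeta;s)$ to its limit on $\mathcal{M}^m_0$ as $\zeta\to-\infty$ this is Hypothesis~\ref{h:bl}, with $A_\fast=A_{\eta,r}$ and $A_\fast^\infty$ the linearisation at the take-off point, which lies between the two Airy points because $8/27-s\in(w^-_\mathrm{A}(\lambda^*,0),w^+_\mathrm{A}(\lambda^*,0))$ and therefore carries the complex-conjugate spatial pair demanded by Hypothesis~\ref{h:osc}. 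For Hypothesis~\ref{h:bl_vector}, the $\eps=0$ problem along $\phi_r$ has limiting Morse indices $i^-_r=2$ and $i^+_r=1$, so it admits dichotomies on $\mathbb{R}^\pm$ with $\dim E^\mathrm{u}_-(0,\lambda)=\dim E^\mathrm{s}_+(0,\lambda)=2$; since the reduced $(u,v)$-block is upper-triangular, $E^\mathrm{u}_-(0,\lambda)$ has trivial $w$-component while $E^\mathrm{s}_+(0,\lambda)$ has a nontrivial one through the $\nu_3=-\lambda/c$ direction, so they meet transversely in exactly one dimension. Because $E^\mathrm{s}_+(\zeta,\lambda)$ is $\mathcal{O}(\eps)$-close to the stable subspace along $\mathcal{M}^r_\eps$ — which is where $Q^+$ points — it follows that $Q^+$ meets $E^\mathrm{u}_-(0,\lambda)$ in a one-dimensional subspace; $E^\mathrm{u}_-(0,\lambda)$ plays the role of $E^\cc_\fast$, $E^\uu_\fast$ is empty, and the bounds on $\partial_\lambda v_+$ follow from smooth $\lambda$-dependence of all these subspaces together with standard geometric singular perturbation estimates, the $o(\eps^{-1+\kappa})$ slack absorbing the $\mathcal{O}(|\log\eps|)$-long passage along the fast layer. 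Feeding the resulting center boundary data into Theorem~\ref{t:layers} then yields the claimed $\mathcal{O}(\eps^{-1})$ eigenvalues.

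The pasting of dichotomies along the tail and the second primary copy is routine bookkeeping — a transcription of Case~1 with one extra fast front ($\phi_b$) and one extra slow branch ($\mathcal{M}^r_\eps$). The one genuinely delicate step, exactly as in Proposition~\ref{p:fhn_exit1}, is the transversality of $E^\mathrm{u}_-(0,\lambda)$ and $E^\mathrm{s}_+(0,\lambda)$ across $\phi_r$; here it is forced by the decoupling of the $w$-equation in the singular layer problem~\eqref{e:layer}, which keeps the intersection one-dimensional and identifies it with the nontrivial center projection required by Hypothesis~\ref{h:bl_vector}.
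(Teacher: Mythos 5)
Your proposal is correct and is essentially the proof the paper has in mind: the paper states that Proposition~\ref{p:fhn_exit3} is ``proved similarly to Proposition~\ref{p:fhn_exit1}'' and omits the details, and you have carried out exactly that transcription — replacing $\mathcal{M}^\ell_\eps,\phi_\ell,\{u=0\}$ with $\mathcal{M}^r_\eps,\phi_r,\{u=2/3\}$, correctly identifying the extra itinerary $\phi_r\to\mathcal{M}^r_\eps\to\phi_b\to\mathcal{M}^\ell_\eps$ that must be covered by additional pastings via Lemma~\ref{lem:paste}, and reproducing the corner estimates and $w$-component transversality argument across the fast layer for Hypotheses~\ref{h:bl} and~\ref{h:bl_vector}.
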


\subsection{Proof of Theorem~\ref{thm:fhnaccumulation}}\label{sec:fhnaccumulationproof}

Based on the discussion above, and the analysis from~\S\ref{s:gentheory}, we briefly conclude the proof of our main result regarding eigenvalue accumulation for the transitional pulses along the homoclinic banana in the FitzHugh--Nagumo system.
\begin{proof}[Proof of Theorem~\ref{thm:fhnaccumulation}]
The restriction $\lambda^* \in \Sigma^\mathrm{slow}_\mathrm{abs}(s)$ implies that $w^-_\mathrm{A}(\lambda^*,0)<s<8/27-w^-_\mathrm{A}(\lambda^*,0)$, and therefore this scenario falls into one of the three cases outlined in~\S\ref{sec:exittracking}. The result follows from Theorem~\ref{t:layers} applied to~\eqref{e:fhnlin1st_weighted} on the slow interval $[0,\eps T_\eps]$, in combination with Proposition~\ref{p:fhn_slowabs} for verification of Hypothesis~\ref{h:osc} as well as Hypothesis~\ref{h:Airy} near the Airy points $w=w^\pm_\mathrm{A}(\lambda,\eps)$ and Proposition~\ref{p:fhn_entry} for verification of the entry boundary condition in Hypothesis~\ref{h:osc_vector} for the entry subspace $Q^-(0;\lambda,\eps)$. For the exit subspace $Q^+(T_\eps;\lambda,\eps)$, Propositions~\ref{p:fhn_exit1}--\ref{p:fhn_exit3} provide verification for the relevant subset of Hypotheses~\ref{h:osc_vector}--\ref{h:bl_vector} which apply in each case \blue{(in the case of Proposition~\ref{p:fhn_exit2}, we treat $L_\eps=0$, so that $T_\eps=T/\eps$, as the fast layer does not play a role in this case)}. 

We note that there are two critical cases which are excluded from Propositions~\ref{p:fhn_exit1}--\ref{p:fhn_exit3}, that is when $s=w^+_\mathrm{A}(\lambda^*,0)$ and $s=8/27-w^+_\mathrm{A}(\lambda^*,0)$ when the transitional pulse departs $\mathcal{M}^m_\eps$ along a fast jump precisely at the upper Airy point. However by adjusting $\lambda$ by an amount $\delta<\delta_\lambda/2$, so that  $\lambda=\lambda^*+\delta$ and applying the argument to the smaller interval $(\lambda^*, \lambda^*+2\delta)$, we obtain $\mathcal{O}(\eps^{-1})$ eigenvalues on this interval, and hence on the larger interval as well.
\end{proof}

\section{Discussion} \label{s:disc}
In this paper, we examined the stability of traveling pulses in the FitzHugh--Nagumo equation~\eqref{e:1} along a parametric single-to-double pulse transition which occurs along the so-called homoclinic banana (Figure~\ref{f:ccurvebanana}). The pulses are individual traveling wave profiles, and the transition from the single pulse to double pulse occurs in an $\mathcal{O}(e^{-q/\eps})$-neighborhood in parameter space. The initially stable one-pulse solutions lose stability as a sequence of eigenvalues crosses into the right half plane along the real axis. As $\eps\to0$, these eigenvalues accumulate on an interval of the real axis which increases as the transition curve is traversed, with the 'most unstable' pulses admitting $\mathcal{O}(1/\eps)$ eigenvalues on the interval $\lambda\in (0,5/24)$. In particular, the most unstable eigenvalues are of $\mathcal{O}(1)$ size with respect to the singular perturbation parameter $\eps$.

We showed that this accumulation of eigenvalues is due to the presence of unstable `slow absolute spectrum' along the middle slow manifold $\mathcal{M}^m_\eps$ which is traversed by the transitional pulses, with the pulses in the middle of the transition traversing the longest portion of this manifold, and thus accumulating the most eigenvalues. This slow absolute spectrum is generated due to the pulse spending long `times' (in the traveling wave coordinate) near this slow manifold, along which the linearization admits a different relative spatial Morse index than that of the asymptotic rest state $(u,w)=(0,0)$. \blue{We only consider when this occurs for real values of the eigenvalue parameter. On the one hand, we are not aware of a natural example with complex absolute spectrum in this context. On the other hand, treating the complex case appears to be technically challenging. A major obstacle is a lack of rigidity in the sense that complex absolute spectra at any two points along the slow manifold might be disjoint.}

While our results are motivated by, and applied to, the FitzHugh--Nagumo equation, the results in~\S\ref{s:gentheory} regarding eigenvalue accumulation are formulated in much greater generality. When considering an eigenvalue problem with a small parameter controlling the rate of passage along such a slow manifold, the accumulation is reduced to understanding a linear boundary value problem along this manifold, and we outlined several possible cases which arise generically. Our analysis focused on two cases, namely (1) Airy transitions where the change in spatial Morse index occurs along the slow manifold itself and (2) fast entry/exit layers where the traveling wave jumps directly onto such a slow manifold. 

We remark that the instability mechanism described here is generic and likely occurs in many systems with timescale separation, such as the Klausmeier equation~\cite{BCDKlausmeier} and the Oregonator model~\cite{Rademacher_thesis}. In particular, traveling waves that traverse canard segments which involve long portions near a repelling slow manifold are likely to develop such instabilities as the fast eigenvalue structure evolves along the slow manifold. While it requires some effort to verify the respective boundary conditions stated in  Hypotheses~\ref{h:osc_vector}--\ref{h:bl_vector} in a given specific system, the results nevertheless provide a geometric explanation for the accumulation phenomenon; furthermore, we performed this analysis in~\S\ref{sec:boundaryverification} in the case of the FitzHugh--Nagumo system as a proof of concept.

\begin{figure}
\centering
\includegraphics[width=0.7\linewidth]{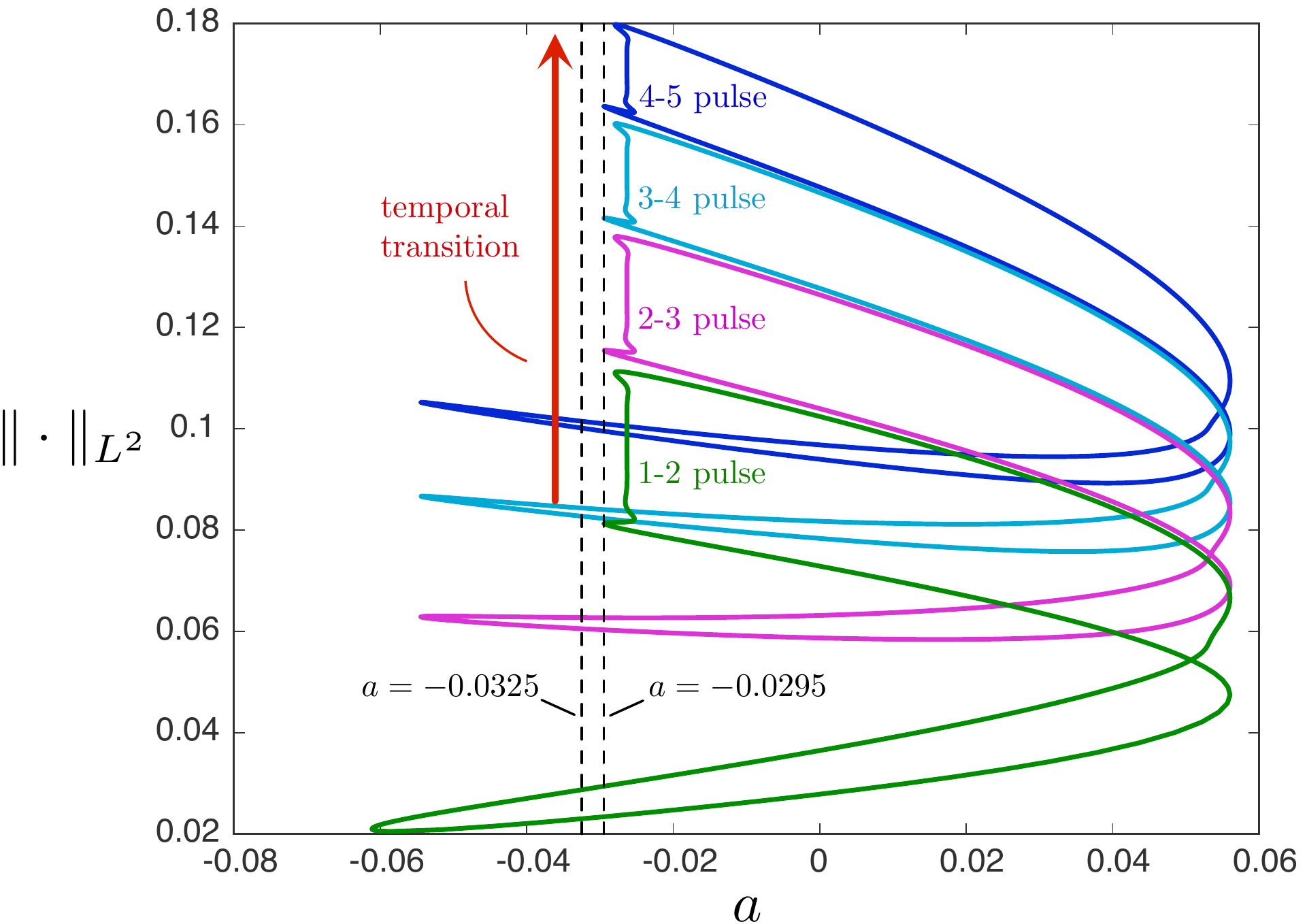}
\caption{Shown are the `stacked homoclinic bananas' obtained by numerical continuation for fixed $\eps=0.015, \gamma=2$. Each closed curve or banana is the result of numerical continuation in the parameters $(a,c)$, and the upper left corner of each banana contains the parametric transition from a spectrally stable $n$-pulse to an unstable $(n+1)$-pulse. We depict the bananas for the $1$-to-$2$-pulse transition (green), as well as those for $2$-to-$3$-pulse (magenta), $3$-to-$4$-pulse (cyan), and $4$-to-$5$-pulse (blue).}
\label{f:stacked_bananas}
\end{figure}

These results on linear (in)stability of the transitional pulses raise questions on the resulting temporal dynamics in the PDE. Though the transitional pulses (except for the initial single pulse) are all unstable, direct numerical simulations nevertheless suggest that solutions track the parametric transition, with additional pulses appearing out of the tail of the primary pulse as time increases. As the pulses all exist at numerically indistinguishable parameter values, this suggests the presence of an invariant manifold containing all of the intermediate traveling waves, and the temporal transition amounts to drift along this manifold. This idea is further evidenced by the existence of many such `stacked' homoclinic bananas, shown in Figure~\ref{f:stacked_bananas}, connecting $n$-pulses to $(n+1)$-pulses along analogous parametric transitions to the single-to-double pulse banana; this sequence of parametric transitions could guide the continual growth of additional pulses in the temporal transition in the PDE, as depicted in Figure~\ref{f:temporal_replication}.

\begin{figure}
\hspace{.01 \textwidth}
\begin{subfigure}{.45 \textwidth}
\centering
\includegraphics[width=1\linewidth]{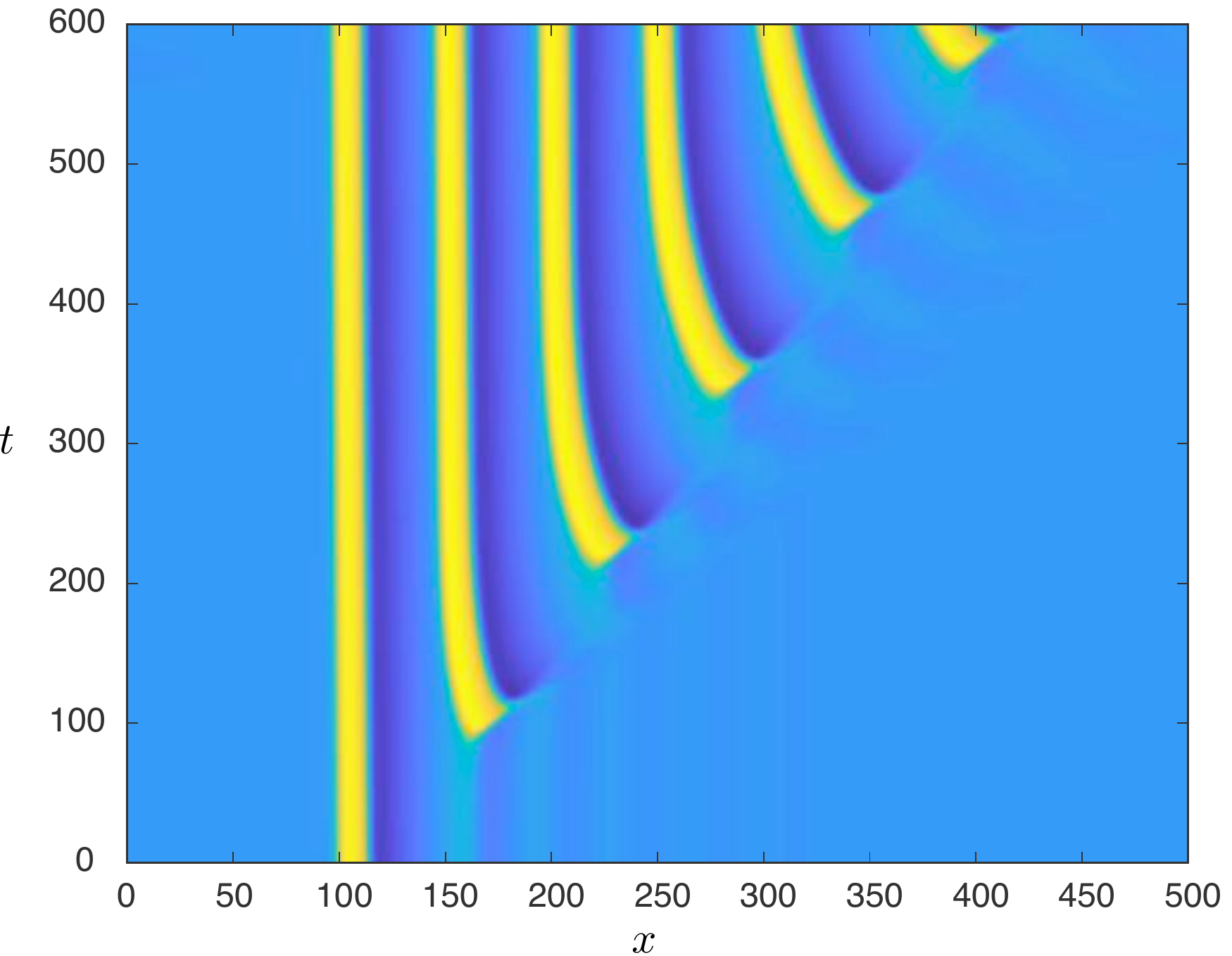}
\end{subfigure}
\hspace{.01 \textwidth}
\begin{subfigure}{.45 \textwidth}
\centering
\includegraphics[width=1\linewidth]{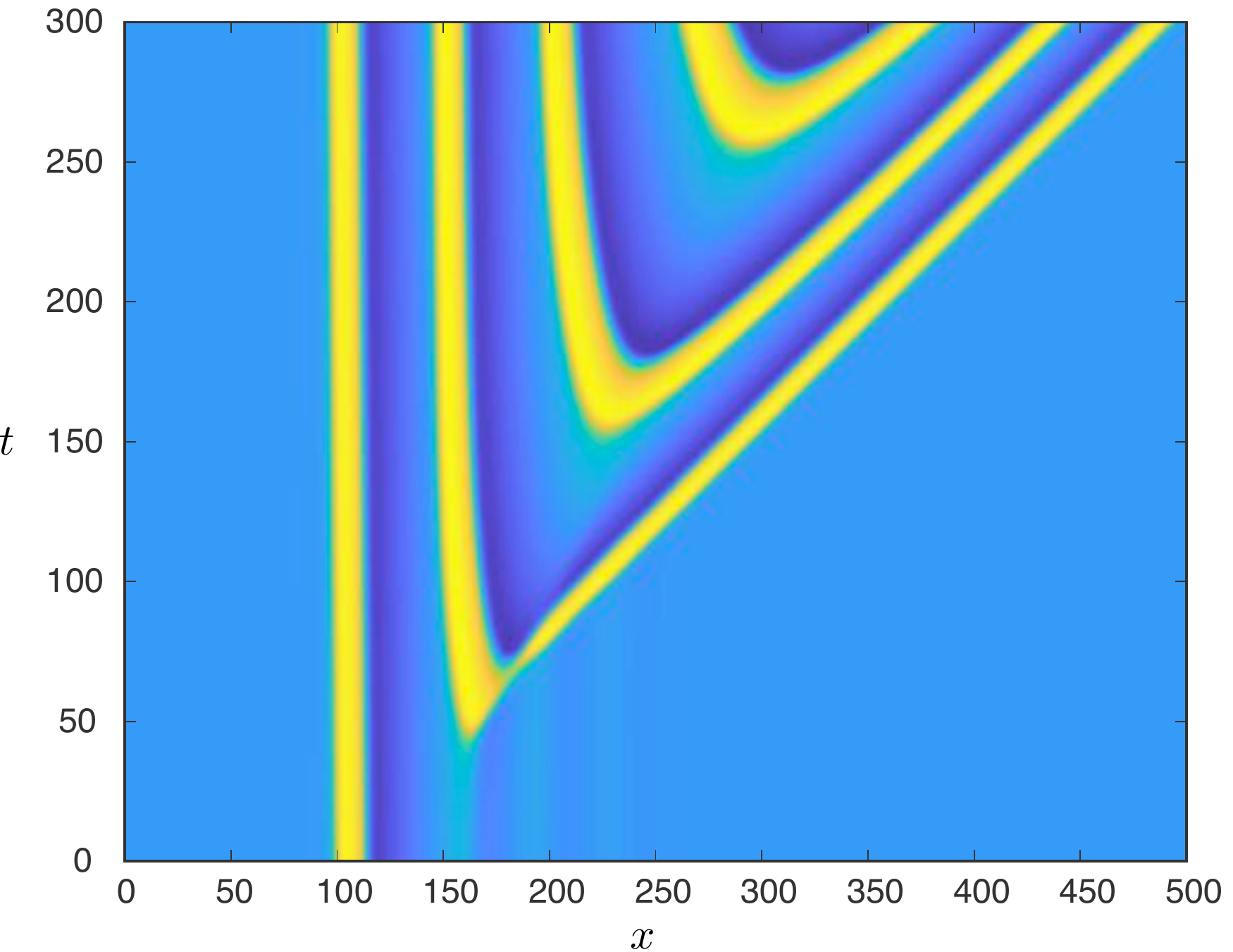}
\end{subfigure}
\hspace{.01 \textwidth}
\begin{subfigure}{.045 \textwidth}
\centering
\includegraphics[width=1\linewidth]{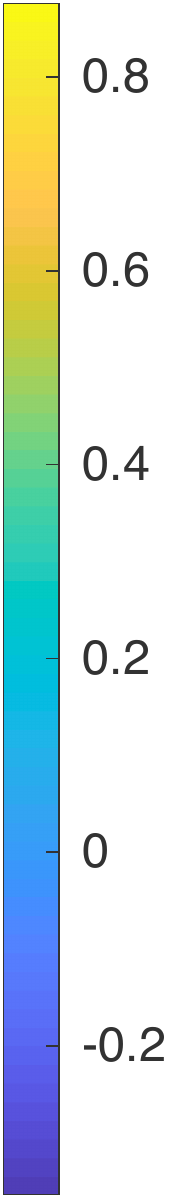}
\vspace{0.5pt}
\end{subfigure}
\caption{Shown are results of direct numerical simulations in the FitzHugh--Nagumo PDE~\eqref{e:1}. The system was initialized with an initial $u$-profile given by the oscillatory $1$-pulse depicted in the first panel of Figure~\ref{f:temporal_replication}, and evolved forward in time in a co-moving frame in which the primary pulse is approximately stationary, with periodic boundary conditions and parameter values $\eps=0.015, \gamma=2$ and $a=-0.0295$ (left panel), $a=-0.0325$ (right panel); \blue{see Figure~\ref{f:stacked_bananas} for the location of these parameter values in relation to the stacked homoclinic bananas.} \emph{Left panel:} space-time plot of $u(x,t)$ corresponding to the pulse-adding dynamics shown in Figure~\ref{f:temporal_replication}. \emph{Right panel:} space-time plot of $u(x,t)$ in a parameter regime in which~\eqref{e:1} exhibits pulse-splitting. As each pulse is added to the tail of the primary pulse (similar to the pulse-adding scenario), an additional pulse ``splits off", traveling to the right.  }
\label{f:splitting_vs_adding}
\end{figure}

Our results show that the instability along the parametric transition is rather severe in the sense that the number of eigenvalues increases without bound as $\eps\to0$ and that the exponential growth rates extend an $\mathcal{O}(1)$-distance into the right half plane. Thus, it is not clear why solutions would follow the parametric transition curve $\Gamma_\eps$. One possible explanation relies on the fact that the initial single pulses are spectrally stable: solutions that start sufficiently close to $\Gamma_\eps$ converge therefore exponentially fast to this curve, and the subsequently emerging unstable eigenvalues may not be strong enough to repel the solutions away from $\Gamma_\eps$ before these eigenvalues stabilize again and the solution resembles a double pulse. If this explanation is correct, then we would expect that solutions for values of $a$ that are further away to the left from the parameters corresponding to $\Gamma_\eps$ to be stronger affected by the instability along $\Gamma_\eps$: as demonstrated in Figure~\ref{f:splitting_vs_adding}, this is indeed the case as solutions for smaller values of $a$ experience pulse-splitting instead of pulse-adding in the tail of the primary pulse. The explanation proposed here draws on the analogy with the dynamics near a slow passage through a bifurcation point which is known to lead to an $\mathcal{O}(1)$-delay of the onset of instability. In our case, a large number of eigenvalues becomes unstable, and the construction of such an invariant manifold near the singular limit is therefore likely a very challenging problem.

\paragraph{Acknowledgments.}
\blue{We thank the reviewers for their insightful comments that helped improve the manuscript.} Carter gratefully acknowledges support through NSF Grant DMS-2016216 (formerly DMS-1815315). Rademacher gratefully acknowledges support by the German Research Fund (DFG) through grant RA 2788/1-1. Sandstede gratefully acknowledges support by the National Science Foundation through grant DMS-1714429.

\bibliographystyle{abbrv}
\bibliography{my_bib}

\end{document}